\def\notes{0}
\def\neurips{0}

\ifnum\neurips=0

\documentclass[11pt]{article}
\usepackage{graphicx,macros}
\usepackage{mathtools}
\usepackage[margin=1in]{geometry}
\usepackage{multirow}
\usepackage{array}
\usepackage{booktabs}

\title{Privately Estimating Monotone Statistics in Polynomial Time}
\author{
Gavin Brown\thanks{University of Wisconsin--Madison, \texttt{gavin.brown@wisc.edu}} \
\and Ephraim Linder\thanks{Boston University, \texttt{ejlinder@bu.edu}. Supported in part by NSF grant BCS-2218803 and a grant from the Sloan Foundation.}  
\and Mahbod Majid\thanks{MIT Mathematics, \texttt{mahbod@mit.edu}} 
\and Vikrant Singhal\thanks{University of Copenhagen, \texttt{vikrant.singhal@di.ku.dk}. Supported in part by NSF grant BCS-2218803, the Digital Data Design Trustworthy AI Lab at Harvard, a grant from the Sloan Foundation, and the Novo Nordisk Foundation grant NNF24OC0087820.} 
}
\date{}

\fi

\ifnum\neurips=1
\documentclass{article}
\usepackage{neurips/neurips_2026, neurips/neurips_macros}
\usepackage{mathtools}
\usepackage{times}
\usepackage[utf8]{inputenc} %
\usepackage[T1]{fontenc}    %
\usepackage{hyperref}       %
\usepackage{url}            %
\usepackage{booktabs}       %
\usepackage{amsfonts}       %
\usepackage{nicefrac}       %
\usepackage{microtype}      %
\usepackage{xcolor}         %

\usepackage{multirow}
\usepackage{array}
\title{Privately Estimating Monotone Statistics in Polynomial Time}

\author{%
 \Name{Gavin Brown} \Email{gavin.brown@wisc.edu}\\
 \addr University of Wisconsin--Madison.
 \AND
 \Name{Ephraim Linder} \Email{xyz@sample.com}\\
 \addr Boston University. 
 \AND
 \Name{Mahbod Majid} \Email{mahbod@mit.edu}\\
 \addr MIT Mathematics.
 \AND
 \Name{Vikrant Singhal} \Email{vikrant.singhal@di.ku.dk}\\
 \addr Datalogisk Institut, Universitetsparken 1, 2100 K{\o}benhavn, Denmark
}
\fi

\begin{document}

\maketitle

\begin{abstract}
    We study efficient differentially private algorithms for  estimating monotone statistics, i.e., statistics that are monotone under the addition of new observations. The starting point for our investigation is \emph{subsample-and-aggregate}: a classical paradigm that partitions the dataset into blocks, estimates the statistic on each block, and then privately aggregates the estimates.
    While practical and generically applicable, this approach is quite data-hungry. We improve upon this framework for the class of monotone statistics---compared to subsample-and-aggregate, our algorithms save a factor of $t$ in sample complexity and pay a factor of $e^t$ in running time, where $t>0$ is a tunable parameter.
    We complement our results with a query-complexity lower bound, showing that our algorithms are essentially optimal for this task.
    As an application, we obtain improved results for private eigenvalue estimation, private loss estimation, and privately estimating a single parameter of a high-dimensional model, e.g., in linear regression.
\end{abstract}

\ifnum\neurips=0
\newpage
\tableofcontents
\newpage
\fi

\ifnum\notes=1
\begin{center}
    {\color{red} \Large{Notes are on!}}
\end{center}

\fi

\section{Introduction}

In this work, we consider the task of generically producing differentially private (DP)~\cite{DworkMNS06} algorithms from non-private estimators.
The literature on differentially private statistics supplies a toolkit for a variety of fundamental tasks such as mean estimation, linear regression, and principal component analysis.
For all of these tasks and more, the literature provides efficient algorithms that achieve near-optimal accuracy.
These algorithms, however, require a bespoke design and considerable care in their analysis.
Even the general-purpose statistical tools built on the inverse sensitivity mechanism seem to require considerable technical effort in their application \cite{AsiD20,HopkinsKMN23,AsiUZ2023}.
Building on prior work with the same goal, we develop differentially private algorithms that treat the base estimator as a black box.

Specifically, we start with any function $f$ that takes in a dataset and returns a (univariate) quantity. 
As a motivating example, consider estimating the first parameter of a larger regression problem.
We seek to produce differentially private algorithms that, given a dataset $Z$ and query access\footnote{The algorithm can specify a query in the form of a dataset $X$ and receives response $f(X)$. The \emph{query complexity} of an algorithm is the maximum number of queries it makes when given query access to any function $f$.} to $f$, return an output with accuracy comparable to the non-private estimator.
~
Since $f$ is an arbitrary function, the measure of accuracy is context dependent. 
To begin our informal discussion, we suppose there is a benchmark number of samples $N$ under which we expect an ``adequate'' non-private estimator. 
For example, for least squares in $d$ dimensions we require $N \approx d$ for non-trivial estimation.
We will evaluate the private estimator in terms of the number of samples $n$ it needs in order to compete with the non-private estimator on $N$ samples. 

One classic approach to this problem is \emph{subsample-and-aggregate} (S\&A)~\cite{NissimRS07}.
Algorithms in this framework proceed as follows: split the data into $\tau$ disjoint buckets $B_1,\dots, B_\tau$, and use a private aggregation method to release an approximation to some center (such as the average or median) of the $f(B_i)$'s. 
This approach is computationally efficient, as we make $\tau$ queries to $f$, but suffers a blowup in sample complexity---in order for any one $f(B_i)$ to be meaningful, we must have $|B_i|\gtrsim N$. 
Many standard settings require $\tau\gtrsim \frac 1\eps\log\frac1\delta$ buckets, and thus the private sample complexity is at least $N \cdot \frac{\log1/\delta}{\eps}$, a multiplicative factor larger than the non-private counterpart. 

Recent work of \cite{LinderRSS25} shows that one can do considerably better: They construct an $(\eps,\delta)$-differentially private mechanism requiring only $n\approx  N+\frac{1}{\eps}\log\frac1\delta$ samples---that is, an additive overhead in sample complexity. Unfortunately, their mechanism requires exponential time---it makes $n^{\Theta\paren{\frac1\eps\log\frac 1\delta}}$ queries to $f$. %
Shortly thereafter, \cite{SteinkeS25} demonstrated a tradeoff between sample complexity and query complexity: compared to subsample-and-aggregate, one can save a factor of $\trade$ in sample complexity at the cost of a blowup of %
$e^{O(\trade)}$ in query complexity\footnote{The main result of \cite{SteinkeS25} provides %
a general tradeoff in terms of the size of a certain combinatorial object, which, upon close inspection, yields the bound $e^{O(\trade)}$ (ignoring log factors). For further discussion see \Cref{sec:ss25}}.  
However, despite making fewer queries to $f$, the mechanism of \cite{SteinkeS25} nevertheless has runtime $n^{O\paren{\frac1\eps\log\frac 1\delta}}$. 

In this work, we provide an efficient algorithm that improves on the sample complexity of subsample-and-aggregate for the class of \emph{monotone} functions, i.e., functions whose output does not decrease when a new observation is added to the input. In the context of privacy, monotone functions were first studied by \cite{FangDY22} who design an algorithm to privately evaluate a monotone function $f$ with finite range $\kappa$. In the language above, their algorithm outputs an accurate estimate with probability at least $1-\beta$ and achieves sample complexity $n\approx N + \frac{1}{\eps}\log\frac\kappa\beta$ and runtime $n^{\Theta\paren{\frac1\eps\log\frac\kappa\beta}}$. Moreover, the tools they develop for the monotone function setting play a key role in the development of algorithms for general (i.e., non-monotonic) functions in \cite{LinderRSS25,SteinkeS25}. We revisit the setting of monotone functions with an eye towards computational efficiency. In particular, we show that for all $\trade\leq \frac{\log1/\delta}{2\eps}$ one can achieve sample complexity $\frac N \trade \cdot \frac{\log1/\delta}{\eps}$ while only incurring a blowup of $e^{O(\trade)}$ in \emph{both} query complexity and runtime. In \Cref{tab:black-box-dp-comparison_v2}, we highlight several settings of $\trade$ that demonstrate our improvement over the aforementioned baselines.

\newcolumntype{C}[1]{>{\centering\arraybackslash}m{#1}}

\newcommand{\startup}{\ensuremath{k_{\eps,\delta}}}

\begin{table}[H]
\caption{Comparison of our approach for monotone functions with black-box $(\eps,\delta)$-DP estimators. 
Here $\startup := \frac {1}{\eps} \log \frac 1 \delta$ and
$N$ denotes the non-private sample benchmark. We suppress absolute constants and assume that evaluating the underlying function takes unit time, and we compare with \cite{FangDY22,SteinkeS25} in the text.}
\label{tab:black-box-dp-comparison_v2}
\centering
\begin{tabular}{C{3cm} C{2.1cm} C{2.3cm} C{3.1cm}}
\toprule
\textbf{Approach} 
& \textbf{Setting} 
& \textbf{Samples $n$} 
& \textbf{Time} \\
\midrule
S\&A \cite{NissimRS07}
& -
& $N \cdot\startup$
& $\startup$ \\
\midrule
\cite{LinderRSS25}
& -
& $N + \startup$
& $n^{\startup}$\\
\midrule
\multirow{3}{3cm}{\centering{This work \\ (monotone functions)}}
& any $\trade\le \startup$
& $N\cdot \frac{\startup}{\trade} $
& $e^{\trade}\cdot \mathrm{poly}(\startup)$ \\
\cmidrule{2-4}
& $\trade = \startup$ 
& $N + \startup$
& $(1/\delta)^{1/\eps}\cdot \mathrm{poly}(\startup)$ \\
\cmidrule{2-4}
& $\trade = \log(\startup)$ 
& $N \cdot\frac{\startup}{\log(\startup)}$
& $\mathrm{poly}(\startup)$ \\
\bottomrule
\end{tabular}
\end{table}

\ifnum\neurips=1
\paragraph{Organization}
We next informally present our main results: new algorithms for monotone functions, a lower bound on query complexity, and applications to learning problems. 
We then discuss additional related work. In \Cref{sec:median_of_quantiles}, we present and analyzes our median-of-quantiles mechanism, which illustrates our central algorithmic advancements. The rest of our contributions, including the average-of-quantiles mechanism---our sister mechanism of median-of-quantiles, which obtains a different flavor of guarantee---are deferred to the appendix.
\fi

\subsection{Our results}
\label{sec:results_intro}

We consider functions and distributions that satisfy the following assumption, which allows us to meaningfully compare our results to S\&A, and briefly defer our more general results.

\begin{assumption}
    \label{ass:intro_concentration}
    For function $f$ and distribution $\cD$ there exist $N:\R\to\R$ and $\nu\in\R$ such that for all $n\geq N(\alpha) + \frac{\log1/\beta}{\alpha^2}$, we have
    \[
    \Pr_{Z\sim\cD^n}\brackets{\abs{\frac{f(Z)}{n} - \nu}\geq \alpha}\leq \beta,
    \]
    where $\cD^n$ denotes the distribution over $n$ i.i.d.\ samples from $\cD$.
\end{assumption}

As an example, if $f(Z) = \sum z_{i}$ and $\cD$ is subgaussian, then we can take $\nu = \Ex_{z\sim\cD}[z]$. While the exact dependence on lower order terms varies depending on the private aggregation method used, for functions $f$ and distributions $\cD$ that satisfy \Cref{ass:intro_concentration}, S\&A requires at least
\[
n=\Omega\paren{\frac{N(\alpha)\log1/\delta}{\eps}}
\]
samples to estimate $\nu$ up error $\alpha$. We note that if the function has finite range $[\kappa]$ for some $\kappa\in\N$, then $1/\delta$ can be replaced with $\kappa/\beta$. Our results focus on improving the dependence on the privacy parameters in this leading term of the sample complexity bound.

\subsubsection{Privately evaluating monotone statistics}

Our general results apply to functions that do not decrease when new data is added to input.
While our main results provide accuracy guarantees in terms of the value of $f$ on random subsets of the input dataset, we now state a simplified version that applies to functions that satisfy \Cref{ass:intro_concentration}. 
We assume a computational model where queries to $f$ take unit time. 
For a function $f$, a dataset $Z$, and an algorithm $\cA$, we let $\cA^f$ denote algorithm $\cA$ with query access to $f$, and we let random variable $\cA^f(Z)$ denote the output of algorithm $\cA^f$ on input $Z$.

In this setting, our algorithm \emph{average-of-quantiles} gives the following guarantees.
\begin{theorem}[Informal Corollary of \Cref{thm:avg-of-q}]
    \label{thm:avg_of_q-intro}
    Fix privacy parameters $\eps,\delta>0$ and $\trade\leq \frac1\eps\log\frac1\delta$. There exists a mechanism $\cM$ such that for all monotone, real-valued functions $f$ mechanism $\cM^f$ is $(\eps,\delta)$-DP and has query complexity and runtime $e^{O\paren{\trade}}\poly\paren{\frac{\log1/\delta}{\eps}}$. Additionally, if $f$ and $\cD$ satisfy \Cref{ass:intro_concentration}, then for all $\alpha,\beta>0$ and  
    \[
    n = \Omega\paren{\frac{N(\alpha)\log1/\delta}{\trade\eps} + \frac{\log1/\delta \log1/\beta}{\trade\alpha^2\eps} + \frac{\log1/\delta}{\alpha^2\eps}}
    \]
    we have
    \[
    \Pr_{Z\sim\cD^n}\brackets{\abs{\cM^f(Z,\alpha) - \nu}\geq \alpha}\leq \beta.
    \]
\end{theorem}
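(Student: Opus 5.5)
We would derive this as a corollary of the general average-of-quantiles guarantee (\Cref{thm:avg-of-q}), which we expect to have the following shape. The mechanism receives a dataset $Z$ of size $n$. It fixes a block size $m$ and a number of "rounds" $k \approx \frac{\log 1/\delta}{\eps\trade}$. For each of $e^{O(\trade)}\poly(k)$ random subsets $S$ of size roughly $\trade m$, it privately selects a quantile of $f$ over subsets of $S$. It then privately averages these quantiles. Privacy and runtime come directly from the theorem: each query set is chosen by the mechanism, so the query complexity and runtime are $e^{O(\trade)}\poly(\frac{\log 1/\delta}{\eps})$ regardless of $f$.

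Monotonicity enters as follows. Changing one record changes $f$ on a subset $X$ to a value sandwiched between $f(X\setminus\{z\})$ and $f(X\cup\{z'\})$. A private quantile, or "shifted inverse" style, selection over nested subsets therefore has low sensitivity in rank. Accuracy in \Cref{thm:avg-of-q} is then stated as: with high probability, the output lies between $f$ evaluated on random subsets of size $n - O(\frac{\log 1/\delta}{\eps})$ (scaled by $n$) and $f$ on random subsets of size about $n/\bigl(\frac{\log 1/\delta}{\eps\trade}\bigr)$. More precisely, the output lies between a low quantile and a high quantile of $\{f(X)/|X|\}$ over random subsets $X\subseteq Z$ of size $m \approx \frac{n\trade\eps}{\log 1/\delta}$, plus an averaging error.

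The corollary step reduces to checking that all the relevant subsets concentrate. A uniformly random size-$m$ subset of $Z\sim\cD^n$ is itself distributed as $\cD^m$. So \Cref{ass:intro_concentration} with $m \ge N(\alpha)+\frac{\log 1/\beta'}{\alpha^2}$ gives $|f(X)/m-\nu|\le\alpha$ except with probability $\beta'$. Setting $m=\frac{n\trade\eps}{\log 1/\delta}$ produces the first two terms of the sample-size bound. For the quantiles, we need the fraction of bad subsets to fall below the quantile level, roughly $1/2$ or a constant. By Markov on the expected fraction of bad subsets we get this with probability $1-O(\beta')$, so $\beta' \approx \beta$ suffices. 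The third term, $\frac{\log 1/\delta}{\alpha^2\eps}$, should cover the noise in the private average. The quantile values are clipped to $[\nu\pm O(\alpha)]$ or to a range learned privately, and averaging over the blocks with Laplace noise of scale $\frac{\text{range}}{n\eps}$ per normalized unit then needs $n \gtrsim \frac{\log 1/\delta}{\alpha\eps}$. Alternatively the third term comes from the private range-finding step, which I would take as given from the theorem.

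The main obstacle is the translation between the theorem's subset-based accuracy notion and \Cref{ass:intro_concentration}. The theorem's subsets are random but correlated, being subsets of common sets $S$, and the normalization by $|X|$ differs across sizes. I would handle this with a union bound over the polynomially many, i.e.\ $e^{O(\trade)}\poly$, sampled evaluation points. Since the queried subsets may number $e^{O(\trade)}$, this inflates $\log 1/\beta$ by $O(\trade)$. Dividing by $\trade$ in the second term absorbs exactly this inflation, which is why $\beta$ appears there as $\frac{\log 1/\beta}{\trade}$ per block.
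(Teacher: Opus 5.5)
Your skeleton is the intended one: take $p\approx\trade\eps/\log(1/\delta)$, invoke \Cref{thm:avg-of-q}, and check its hypothesis using \Cref{ass:intro_concentration}. However, the form you guessed for \Cref{thm:avg-of-q} is not what the paper proves, and the step that verifies its hypothesis fails as written. The theorem has no quantile slack. It draws $T=e^{O(\trade)}\poly(\frac{\log 1/\delta}{\eps})$ Bernoulli subsamples $S_1,\dots,S_T\sim\cS_p(Z)$. It guarantees an output within $\alpha$ of $[\min_i f(S_i),\max_i f(S_i)]$ \emph{only if} $\max_i f(S_i)-\min_i f(S_i)\le\alpha$. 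So every one of the $T$ subsamples must be good simultaneously. A Markov bound that makes the bad fraction a constant with $\beta'\approx\beta$ does not give this. Indeed, the quantiles averaged by \Cref{alg:avg-of-q} sit at levels as small as $e^{-\Theta(\trade)}$, so a constant fraction of low outliers already spoils them. The needed step is a union bound with $\beta'=\beta/T$. It should be applied to the monotone function $f/(pn)$, not to $f(X)/|X|$, which is not monotone. You also need a Chernoff bound on $|S_i|\sim\Bin(n,p)$, as in \Cref{claim:approx_loss}, to pass between $f(S_i)/(pn)$ and $f(S_i)/|S_i|$.

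Your sample-size bookkeeping is also off. The requirement is $pn\gtrsim N(\alpha)+\frac{\log(T/\beta)}{\alpha^2}$, where $\log T=O(\trade)$ up to lower-order logarithmic terms. Dividing by $p=\Theta(\trade\eps/\log(1/\delta))$ turns the $\log(1/\beta)$ part into the second term. It turns the $O(\trade)$ part into exactly $\frac{\log 1/\delta}{\alpha^2\eps}$, which is the third term. So the union-bound cost is not ``absorbed'' by the second term; it \emph{is} the third term. That term does not come from private averaging or range finding. \Cref{alg:avg-of-q} has no range-finding step. Its truncated Laplace noise is at most $\frac{16\alpha\log(1/\delta)}{\tau\eps}=\alpha$, because $\tau=\frac{16}{\eps}\log\frac1\delta$ (with the algorithm's rescaled $\eps,\delta$), so it costs no samples. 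Your own noise estimate $\frac{\log 1/\delta}{\alpha\eps}$ also has the wrong $\alpha$-dependence. Your final bound matches the statement only because you included the third term for the wrong reason.
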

Average-of-quantiles requires some prior information about the concentration of the non-private estimator (in the above setting, it requires knowledge of $\alpha$).
We analyze a second algorithm, \emph{median-of-quantiles}, which does not use such information but requires functions that have a bounded range. 
We state its slightly different guarantees in \cref{thm:med-of-q}.

We also prove a query complexity lower bound, which implies that our results are essentially tight for privately estimating black-box monotone statistics. The version we state below applies to functions with unbounded range, however the full version (\Cref{thm:query_lb}) also applies to the setting where $f$ has finite range, and it shows that our results are essentially tight for that setting as well. In fact, our lower bound holds for mechanisms that satisfy a much weaker accuracy guarantee: informally, if for all $n\geq N$, we have $f(Z)=\nu$ with probability $1$ when $Z\sim\cD^n$, then the mechanism should output $y\approx \nu$ with probability at least $2/3$ when given $n=\Omega\paren{\frac{N\log1/\delta}{\trade\eps}}$ samples from $\cD$. We define this weaker guarantee more formally in \Cref{sec:query_lb}, and present an informal version of the lower bound below.

\begin{theorem}[Informal Corollary of \Cref{thm:query_lb}]
    Fix $\eps,\delta\in(0,1)$ and $\trade\leq \frac1\eps\log\frac1\delta$. Suppose $\cM^f$ is $(\eps,\delta)$-DP for all monotone functions $f$. Additionally, suppose that for all monotone $f$ and distributions $\cD$ that are ``eventually constant'', %
    mechanism $\cM$ has the following accuracy guarantee:
    \[
    ~~\text{If}~~ n= \Omega\paren{\frac{N(1)\log1/\delta}{\trade\eps}} ~~\text{then}~~
   \Pr_{Z\sim\cD^{n}}\brackets{\cM^f(Z) \approx \nu}\geq \frac 23.
    \]
    Then $\cM$ has query complexity $e^{\Omega\paren{\trade}}$.      
\end{theorem}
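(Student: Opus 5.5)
We prove the lower bound with a hard family of ``eventually constant'' monotone functions. The idea is that a private mechanism must locate a hidden subset, and with few queries it cannot. Throughout, set $k := \frac{\log 1/\delta}{\eps}$ and work with $n \approx \frac{Nk}{\trade}$ samples.

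\textbf{Step 1: the hard family.} Each dataset element gets a hidden ``good'' or ``bad'' status. For a hidden good set $G$, define
\[
f_G(X) = \begin{cases} \nu & \text{if } |X \cap G| \geq N,\\ 0 & \text{otherwise.}\end{cases}
\]
With the convention $\nu>0$, this function is monotone. Realize it distributionally: let $\cD$ output a good point with probability $p$ and a bad point otherwise. Choose $p$ so that $n$ samples contain about $pn \approx N\cdot(\text{const})$ good points. Then $f(Z)=\nu$ holds on the full sample, and on any superset of size at least $n$, since good points only accumulate; this gives the ``eventually constant'' property with $N(1)\approx N/p$.

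\textbf{Step 2: privacy forces robustness, which forces query success.} Fix a dataset $Z$ in which exactly $m$ good points are present, with $m$ slightly above $N$. Replacing $m-N+1 \approx \Theta(k)$ good points by bad ones yields a neighbor-chain dataset $Z'$ on which $f\equiv 0$ on every subset. There, accuracy plus symmetry force the output to be far from $\nu$ with constant probability.

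Group privacy over a chain of length $\ell$ gives $e^{\ell\eps}$ blowup plus $\ell e^{\ell\eps}\delta$. The chain length is $\Theta(k)$, and we tune it to be at most $\frac{\log(1/\delta)}{2\eps}$ or so. This keeps the blowup small enough that the output on $Z$ cannot reach $\nu$ with probability $2/3$ unless the mechanism ``sees'' the difference.

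The key point is that the two datasets differ only in the answers to queries $X$ that contain at least $N$ of the good points of $Z$. The remaining answers are identical as oracles. So, by a hybrid/simulation argument, if with probability above roughly $1/3 - e^{O(\ell\eps)}(\cdot)$ the mechanism never queries such an $X$, its transcripts on $Z$ and $Z'$ coincide. This contradicts the accuracy requirement on $Z$ together with privacy.

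\textbf{Step 3: counting.} Hide the good points uniformly at random among the $n$ positions. The mechanism must hit a set containing $N$ of the $m\approx N+\Theta(\trade\cdot\text{stuff})$ good points. We pick parameters so that the good fraction is $q \approx m/n \approx \trade/k$, scaled appropriately.

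Any fixed query $X$ of size $s$ is a hypergeometric draw. Because $|X\cap G|\ge N$ requires sampling good points at rate above the base density, a Chernoff/hypergeometric tail bound gives success probability per query at most $e^{-\Omega(\trade)}$. The intuition is that we need the query to capture nearly all good points, or to oversample by a constant factor on $N\approx$ expected-count-times-$k/\trade$ scale.

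Adaptive queries only learn one bit (whether the threshold was hit), and they learn nothing before the first hit. So by a union bound, $Q$ queries succeed with probability at most $Q e^{-\Omega(\trade)}$. Forcing this to be $\Omega(1)$ gives $Q\ge e^{\Omega(\trade)}$.

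\textbf{Main obstacle.} The main obstacle is the parameter balancing in Steps 2–3. The chain length $m-N$ must be small, namely $O(k)$, for group privacy to bite. At the same time, the relative slack $(m-N)/m$ must be $\Theta(\trade/k)\cdot$const so that the hypergeometric tail is exactly $e^{-\Theta(\trade)}$ rather than something smaller. I would first try $m=N+ck$ with $N\approx k^2/\trade$ per unit, and check that the exponent $(m-N)^2/m \approx \trade$ works out.
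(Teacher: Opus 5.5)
There is a genuine gap, and it comes from the choice of hard function, not only from the parameter balancing. In Step~3 you bound, by a hypergeometric tail, the probability that a query contains at least $N$ of the $m$ good points. But the mechanism may query any set, including $Z$ itself, which contains all $m\ge N$ good points. More generally, any query of size at least about $nN/m$ succeeds with constant probability. With a count-threshold function the large queries are exactly the informative ones, so no $e^{-\Omega(\trade)}$ per-query bound is possible.

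The paper's construction works the other way around. $f$ is $0$ on sets of size $<N$, $y_0$ on larger sets contained in $B_0$, and $y_1$ on larger sets inside $B_0\cup B_1$ that meet $B_1$. The planted dataset $X$ has only $\tau$ points from $B_1$ and $n-\tau$ from $B_0$. Small queries then carry no information. A query of size at least $N=pn$ returns the same answer it would on a genuine $\cD_1$-sample unless it avoids all $\tau$ hidden $B_1$-points, which happens with probability roughly $(1-p)^\tau\le e^{-p\tau}$. That is where $e^{\Omega(p\tau)}$ comes from.

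Step~2 also misplaces where query complexity enters. Group privacy bounds the output distribution on $Z$ by that on $Z'$ no matter what the mechanism ``sees.'' So if accuracy were really required on both $Z$ and $Z'$, you would contradict privacy for every mechanism, with no query bound at all. Accuracy is only assumed for i.i.d.\ samples from a distribution under which $f$ is eventually constant, and only once $n\gtrsim N(1)\log(1/\delta)/(\trade\eps)$. In the formal definition ``eventually constant'' means with probability one, which your Bernoulli good/bad mixture does not satisfy. In your parameters $N(1)\approx N/p$ and $n\approx\mathrm{const}\cdot N/p$; with your Step~3 choice, $n/N(1)\approx m/N=1+O(\trade\eps/\log(1/\delta))$. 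So nothing is assumed at that sample size. If you raise $n$ to the required scale, typical samples carry about $N\log(1/\delta)/(\trade\eps)$ good points. For your $N$, a dataset on which $f\equiv 0$ is then far beyond group-privacy distance, and a dataset with $N+\Theta(\log(1/\delta)/\eps)$ good points is atypical, so accuracy still says nothing about it.

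The missing ingredient is an indistinguishability step, which is the paper's route:
\begin{itemize}
\item[(i)] A mechanism with $e^{o(p\tau)}$ queries has nearly the same view, in total variation, on the atypical planted $X$ as on a genuine sample from $\cD_1^n$. This requires reducing adaptive to nonadaptive queries, as in \Cref{prop:adaptivity}.
\item[(ii)] Hence accuracy transfers, and the mechanism outputs $y_1$ on $X$ with probability about $1-\beta$.
\item[(iii)] On $X'\subset B_0$, at distance $\tau$, its view matches a sample from $\cD_0$ in which $y_1$ is hidden and uniformly random. So it outputs $y_1$ there with probability at most $\min(\beta,2/\kappa)$.
\item[(iv)] Only then does group privacy over distance $\tau$ give the contradiction.
\end{itemize}
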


\subsubsection{Application to eigenvalue estimation}

We apply our tools towards the problem of privately estimating the $\ord{i}$ eigenvalue $\lambda_{i}(\Sigma)$ of the covariance $\Sigma$ of a subgaussian distribution $\cD$. Our mechanism witnesses a tradeoff between sample and time complexity in line with \Cref{thm:avg_of_q-intro}, providing a multiplicative approximation to $\lambda_{i}$.

\begin{theorem}[Informal version of \Cref{thm:eigenvalue_est}]
    \label{thm:eigenvalue_est_intro}
    Fix $\eps,\delta,\beta,\alpha\in(0,1)$ and $\trade\leq \frac{1}{\eps}\log\frac1\delta$. There exists an $(\eps,\delta)$-DP mechanism $\cM$ that gets as input $Z\sim\cD^n$ and for each $i\in[d]$ satisfies the following guarantee: 
    if $\cD$ is subgaussian and 
    \[
    n=\Omega\paren{\frac{d\log1/\delta}{\alpha^2\trade\eps} + \frac{\log1/\delta\log1/\beta}{\alpha^2\trade\eps} + \frac{\log1/\delta}{\alpha^2\eps}}
    \]
    then
    \[
    \Pr_{Z\sim\cD^n}\brackets{1-\alpha\leq \frac{\cM(Z,i)}{ \lambda_{i}(\Sigma)} \leq 1+\alpha}\geq 1-\beta.
    \]
    Moreover, $\cM$ has runs in time $e^{O\paren{\trade}}\poly(n,d)$.
\end{theorem}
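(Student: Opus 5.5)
The plan is to reduce \Cref{thm:eigenvalue_est_intro} to \Cref{thm:avg_of_q-intro}, applied to a suitably normalized monotone statistic. For a dataset $X=(x_1,\dots,x_m)$ define
\[
f_i(X) = \lambda_i\Big(\sum_{j\le m} x_j x_j^\top\Big),
\]
the $\ord{i}$ largest eigenvalue of the \emph{unnormalized} second-moment matrix. Adding an observation adds a PSD rank-one term, so by Weyl's inequality (Courant--Fischer) $f_i$ does not decrease. Hence $f_i$ is monotone and real-valued, and \Cref{thm:avg_of_q-intro} applies. Each query costs $\poly(n,d)$ time, which gives the stated runtime $e^{O(\trade)}\poly(n,d)$. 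Privacy is inherited directly from that theorem.

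For accuracy we need \Cref{ass:intro_concentration}, and it is multiplicative, so we first normalize. We have $f_i(Z)/n = \lambda_i(\hat\Sigma_n)$, where $\hat\Sigma_n$ is the empirical covariance (assume mean zero, or pair up samples to symmetrize). Standard subgaussian covariance concentration gives
\[
(1-\alpha)\Sigma \preceq \hat\Sigma_n \preceq (1+\alpha)\Sigma
\quad\text{w.p. } 1-\beta \text{ once } n \gtrsim \frac{d+\log(1/\beta)}{\alpha^2}.
\]
By monotonicity of eigenvalues under the Loewner order, this yields $|\lambda_i(\hat\Sigma_n)-\lambda_i(\Sigma)|\le \alpha\lambda_i(\Sigma)$. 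So \Cref{ass:intro_concentration} holds with $\nu=\lambda_i(\Sigma)$, $N(\alpha)\approx d/\alpha^2$, and additive accuracy $\alpha\lambda_i$.

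The main obstacle is that \Cref{thm:avg_of_q-intro} needs the accuracy parameter as an input, in additive scale, whereas $\lambda_i$ is unknown. I would handle this by rescaling in one of two ways.

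\textbf{Option (a): log-scale estimation.} Privately estimate $\log\lambda_i$ using the monotone statistic $g(X)=\log f_i(X)$. Monotonicity is preserved, but $g(Z)$ is not of the form $n\nu$. The fix is to inspect the full \Cref{thm:avg-of-q}, whose guarantee is stated in terms of $f$ on random subsets. We only need that quantiles of $f$ over random subsamples of size $m\approx N\startup/\trade$-ish lie in $[(1-\alpha)m\lambda_i,(1+\alpha)m\lambda_i]$, and the aggregation then works in multiplicative/log scale.

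\textbf{Option (b): two-stage scaling.} First obtain a crude constant-factor estimate of $\lambda_i$ using the same mechanism with constant $\alpha$ in log scale, spending half the budget. Then rescale the data by that estimate so that additive error $\alpha$ corresponds to relative error $O(\alpha)$. Composition gives $(\eps,\delta)$-DP overall.

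The remaining work is bookkeeping, and it is where the stated $n$ comes from. Substitute $N(\alpha)=d/\alpha^2$, the $\log(1/\beta)/\alpha^2$ term, and the aggregation overhead $\log(1/\delta)/(\alpha^2\eps)$ into the bound of \Cref{thm:avg_of_q-intro}. This yields exactly
\[
n=\Omega\paren{\frac{d\log1/\delta}{\alpha^2\trade\eps}+\frac{\log1/\delta\log1/\beta}{\alpha^2\trade\eps}+\frac{\log1/\delta}{\alpha^2\eps}}.
\]
Finally, take a union bound over the failure probabilities of the two stages.
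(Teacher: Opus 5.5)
Your Option (a) is essentially the paper's proof. The paper runs the average-of-quantiles mechanism (\Cref{thm:avg-of-q}) with additive accuracy $\alpha$ on $\log h_{np}(S,i)$, where $h_{np}(S,i)=\frac{1}{np}\lambda_i\bigl(\sum_j S_jS_j^\top\bigr)$ differs from your $\log f_i$ only by the public shift $\log(np)$; it verifies the accuracy hypothesis via covariance concentration on each subsample plus a Chernoff bound on $|S|$, so Option (b) is unnecessary. One bookkeeping correction: the subsamples have size $\approx pn$, on the order of $(d+\log(T/\beta))/\alpha^2$ and not $\approx n$, and the $\frac{\log(1/\delta)}{\alpha^2\eps}$ term comes from union-bounding over all $T=\exp\bigl(O(\frac{p}{\eps}\log\frac1\delta)\bigr)$ drawn subsamples rather than from a separate aggregation cost.
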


\subsubsection{Applications around linear regression}

Taking as inspiration the work of \cite{AsiDT25}, we consider the problem of privately estimating a single parameter in a larger parametric task. As a highlight, in the introduction, we will focus on the setting of linear regression in $d$ dimensions with least squares loss.
In \Cref{sec:m-estimation}, we discuss applications to more general M-estimation tasks.

Our results apply to estimating a single coordinate of the population minimizer. This problem arises naturally in scientific analyses when attempting to infer the effect of some variable $x_1$ on an outcome $y$, while controlling for $d-1$ additional variables $x_2,\dots,x_d$. 

In the following theorem, we assume the data $Z=((x_1,y_1),\dots, (x_n, y_n))$ satisfy $y_i = x_i^\top\theta + e_i$ for some $\theta\in\R^d$, noise $e_i\sim\cN(0,\sigma^2)$ and $x_i\sim\cN(0,\Sigma)$. 
Let $\cD$ denote this distribution over $(x_i,y_i)$. Our first result gives a private mechanism for testing if $\theta_1$ is positive or negative.

\begin{theorem}[Informal Corollary of \Cref{thm:param_testing}]
    Fix $\eps,\delta\in(0,1)$, and $\trade\leq \frac1\eps\log\frac1\delta$. There exists an $(\eps,\delta)$-DP mechanism $\cM$ that gets as input $Z\sim\cD^n$, runs in time $e^{O\paren{\trade}}\poly(n,d)$, and for all $\alpha>0$ and 
    \[
    n = \Omega\paren{\frac{d\log1/\delta}{\trade\eps} + \poly\paren{\frac{1}{\alpha\eps}\log\frac n\delta}},
    \]
    satisfies the following guarantees:
    \begin{itemize}
        \item If $|\theta_1| > 2\alpha\sqrt{\Sigma^{-1}_{11}}$ then $\cM(Z)$ outputs $\sign(\theta_1)$ with probability at least $1-\delta$.
        \item If $|\theta_1| < \alpha\sqrt{\Sigma^{-1}_{11}}$ then $\cM(Z)$ outputs $0$ with probability at least $1-\delta$.
    \end{itemize}
\end{theorem}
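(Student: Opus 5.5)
The plan is to reduce parameter testing to privately evaluating two monotone statistics built from the least-squares objective and then apply the median- or average-of-quantiles machinery (\Cref{thm:avg_of_q-intro}) to each. The natural monotone statistics are optimal losses: for a dataset $X$ let $L(X)=\min_{\theta}\sum_{(x,y)\in X}(y-x^\top\theta)^2$ be the unrestricted least-squares loss, and let $L_{\le}(X)$ (resp.\ $L_{\ge}(X)$) be the minimum over the constrained set $\{\theta_1\le 0\}$ (resp.\ $\{\theta_1\ge 0\}$). Each is a minimum of a sum of nonnegative terms, so adding a point cannot decrease it; thus each is monotone, and so is any difference such as $g_-(X)=L_{\le}(X)-L(X)$? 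Differences need not be monotone, so instead I would privately estimate the normalized losses $L(X)/|X|$, $L_{\le}(X)/|X|$ and $L_{\ge}(X)/|X|$ separately with error $\ll \alpha^2\sigma^2$ (hence the $\poly(1/\alpha)$ terms), and base the test on their gaps. All are convex programs, computable in $\poly(n,d)$ time, which gives the claimed $e^{O(\trade)}\poly(n,d)$ runtime; privacy follows by basic composition over three calls with $\eps/3,\delta/3$.

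The statistical step is to verify \Cref{ass:intro_concentration} for each statistic with $N\approx d$ and to compute the limits $\nu$. For Gaussian design, $L(Z)/n\to\sigma^2$, and $L_{\le}(Z)/n\to \sigma^2+\theta_1^2/\Sigma^{-1}_{11}$ when $\theta_1>0$ (projecting onto the halfspace in the $\Sigma$-geometry costs exactly $\theta_1^2/(e_1^\top\Sigma^{-1}e_1)$), and $\to\sigma^2$ otherwise; symmetrically for $L_{\ge}$. Concentration at rate $\alpha$ once $n\gtrsim d+\log(1/\beta)/\alpha^2$ follows from standard bounds on $\chi^2$ variables and on the smallest singular value of a Gaussian design, since the excess loss $\sum_i (x_i^\top(\hat\theta-\theta))^2$ is governed by a projection of the noise of rank $\le d$. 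I would normalize by $\sigma^2$ or estimate $\sigma^2$ through the same private loss estimate, so that the thresholds compare with $\theta_1^2/\Sigma^{-1}_{11}$ relative to $\sigma^2$. In the statement $\sigma$ is apparently absorbed, so I would allow the thresholds to be measured in units of the privately estimated $L/n$.

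The decision rule then works with the gaps $\Delta_\le=\widehat{L_\le}-\widehat L$ and $\Delta_\ge=\widehat{L_\ge}-\widehat L$. If $\max(\Delta_\le,\Delta_\ge)>2.5\alpha^2$ times the scale, output the sign indicated by the larger gap ($+$ if $\Delta_\le$ is large); otherwise output $0$. With estimation error below $0.5\alpha^2$, the case $|\theta_1|>2\alpha\sqrt{\Sigma^{-1}_{11}}$ gives a gap of about $4\alpha^2$, and the case $|\theta_1|<\alpha\sqrt{\Sigma^{-1}_{11}}$ gives a gap below $\alpha^2$, so both cases are correct. Taking $\beta=\delta$ in the accuracy guarantee gives the $1-\delta$ success probability and produces the $\log(n/\delta)$ terms. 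Substituting accuracy $\alpha^2$ into \Cref{thm:avg_of_q-intro} puts the leading term at $d\log(1/\delta)/(\trade\eps)$, with everything else polynomial in $1/(\alpha\eps)$ and $\log(n/\delta)$.

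I expect the main obstacle to be the concentration analysis of the constrained loss $L_\le$. One must show that its excess over $L$ concentrates around $\theta_1^2/\Sigma^{-1}_{11}$ at an additive rate that does not scale with $d/n$ in the leading term, so that $N(\alpha)$ stays $\approx d$. My first attempt would write the excess in closed form as $(\hat\theta_1)_+^2/(X^\top X)^{-1}_{11}$ and control $\hat\theta_1$ and $(X^\top X)^{-1}_{11}$ separately using Wishart concentration.
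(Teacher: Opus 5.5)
There is a genuine gap in your statistical step, and it sits where the paper has to do its real work.

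\textbf{The concentration claim fails.} You claim each loss satisfies \Cref{ass:intro_concentration} with $N\approx d$, i.e.\ that $L(Z)/n$ is within $\alpha$ of $\nu=\sigma^2$ once $n\gtrsim d+\log(1/\beta)/\alpha^2$. This is false. For Gaussian design, $L(Z)/n$ (with your unnormalized $L$) is distributed as $\frac{\sigma^2}{n}\chi^2_{n-d}$, whose mean is $\sigma^2(1-d/n)$. The rank-$d$ projection of the noise you mention is therefore a systematic bias of size $\sigma^2 d/n$, not a lower-order fluctuation. Hence $N(\alpha')\gtrsim\sigma^2 d/\alpha'$. Since you need accuracy $\alpha'\approx\alpha^2$, \Cref{thm:avg_of_q-intro} gives a leading term $\sigma^2 d\log(1/\delta)/(\alpha^2\trade\eps)$. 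That is exactly the dimension-dependent accuracy cost the statement avoids.

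Concretely, the mechanism only evaluates losses on subsamples of size $\approx pn$. In the regime of interest $pn=\Theta(d)$, so the bias $\sigma^2 d/(pn)$ is a constant fraction of $\sigma^2$, far above $\alpha^2$. The same bias affects $L_{\le}$ and $L_{\ge}$. Moreover, the excess $L_{\le}-L$ is itself shrunk by a factor $\approx 1-d/(pn)$, so your ``gap of about $4\alpha^2$'' holds only up to a constant factor at this sample size.

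\textbf{The obstacle is misidentified.} It is not only the excess $L_\le-L$. Because you privately estimate three statistics separately, each against a fixed limit $\nu$, their common bias never gets a chance to cancel. The missing idea is to drop fixed targets. The formal guarantee of \Cref{thm:avg-of-q} is relative to $\min_i f(S_i)$ and $\max_i f(S_i)$. So it suffices that, on every drawn subsample, the competing constrained losses lie in $O(\alpha^2)$-windows around a common \emph{data-dependent} center, and are separated by $\Omega(\alpha^2)$.

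This is exactly \Cref{lem:subsampling_accuracy}. The paper takes $h(Z)\approx\Ex_{S\sim\cS_p(Z)}[\wt L(S)]$. It proves dimension-free concentration of $\wt L(S)$ around this center (\Cref{claim:L_concentration}, via Warnke's typical bounded-differences inequality and leave-one-out stability). It absorbs the $1-d/(pn)$ shrinkage into the factor-$4$ band of \Cref{ass:tssc}, which is why its thresholds are $2t$ versus $t/2$. It then compares the three constrained losses $\wt L^{(\pi_j)}$ directly and outputs the argmin, never estimating the unconstrained loss.

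\textbf{A cheaper repair for Gaussian design.} Each subsample is marginally i.i.d.\ from $\cD$. So $L(S)\sim\sigma^2\chi^2_{|S|-d}$, and the squared norm of the first design column after projecting out the others is distributed as $\chi^2_{|S|-d+1}/\Sigma^{-1}_{11}$. This gives fluctuations $O(\sigma^2\sqrt{\log(T/\delta)/(pn)})$ around centers like $\sigma^2(1-d/(pn))$, and a union bound over the $T$ subsamples finishes the argument. This works provided you center there rather than at the limits $\nu$, and leave constant-factor slack in the thresholds.

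\textbf{A smaller point on thresholds.} The thresholds in the statement are absolute. Rescaling them by the privately estimated $L/n$ changes the test; $\sigma$ should enter only through the sample complexity.
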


Our second result states that we can obtain a similar tradeoff for the problem of estimating $\theta_1$.
For simplicity, we state our result in the regime where $\alpha$ and $\sigma^2$ are constant.

\begin{theorem}[Informal Corollary of \Cref{thm:theta1_estimation}]
    Assume $\theta_1\in[\pm 1]$. Fix $\eps,\delta,\beta\in(0,1)$, and $\trade\leq \frac1\eps\log\frac1\beta$. There exists an $(\eps,\delta)$-DP mechanism $\cM$ that gets as input $Z\sim\cD^n$, runs in time $e^{O\paren{\trade}}\poly(n,d)$, and if 
    \[
    n = \Omega\paren{\frac{d\log1/\beta}{\trade\eps} + \poly\paren{\frac{1}{\eps}\log\frac n\beta\log\log\frac1\delta}},
    \]
    outputs an estimate $\wh\theta_1$ such that $\abs{\wh\theta_1-\theta_1}\leq 0.01\paren{1+ \sqrt{\Sigma^{-1}_{11}}}$ with probability at least $1-\beta$. 
\end{theorem}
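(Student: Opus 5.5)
Since $\theta_1\in[\pm1]$, I would turn estimation into a sequence of private sign tests and apply the testing result (\Cref{thm:param_testing}) inside a noisy binary search. For a shift $c\in[-1,1]$, define the reparametrized response $y_i - c\,x_{i,1}$. This equals $x_i^\top(\theta - c e_1) + e_i$, a linear model of the same form with first coordinate $\theta_1 - c$ and the same $\Sigma$ and $\sigma^2$. Running the tester on the shifted data reports $\sign(\theta_1-c)$ whenever $|\theta_1 - c| > 2\alpha\sqrt{\Sigma^{-1}_{11}}$. It reports $0$ whenever $|\theta_1-c|<\alpha\sqrt{\Sigma^{-1}_{11}}$. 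In the gap between these two regimes, any of $\{-1,0,+1\}$ may come out. I would fix $\alpha$ as a small constant times $0.01$, so every "bad" answer can only occur when $c$ is already within $O(\alpha)\sqrt{\Sigma^{-1}_{11}}$ of $\theta_1$.

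The search maintains an interval $[a,b]\subseteq[-1,1]$ containing $\theta_1$ up to slack $2\alpha\sqrt{\Sigma^{-1}_{11}}$, and tests at the midpoint $c$. On $+1$ or $-1$ it recurses into the corresponding half, and on $0$ it outputs $c$. It stops after $O(\log(1/\alpha))=O(1)$ rounds and outputs the final midpoint. The additive $0.01$ in the error accounts for the final interval width. The $0.01\sqrt{\Sigma^{-1}_{11}}$ term accounts for gap-region answers: a sign answer that is not forced still has the correct sign whenever it is nonzero, or it confines $\theta_1$ to within $2\alpha\sqrt{\Sigma^{-1}_{11}}$ of the recursed interval. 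I need to check this case analysis so that misdirection costs only additive $O(\alpha\sqrt{\Sigma^{-1}_{11}})$ and does not compound. I expect it to work, because the slack is added once and intervals only shrink.

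\textbf{Privacy and failure probability.} The tester is run with privacy parameters $(\eps/r,\delta/r)$ on the full dataset in each of $r=O(1)$ rounds, and basic composition gives $(\eps,\delta)$-DP with a constant-factor loss. Alternatively I could split the data into $r$ disjoint parts, which costs only a constant factor in $n$. The theorem asks for failure probability $\beta$ rather than $\delta$ and requires $\trade\le\frac1\eps\log\frac1\beta$. For this reason I would invoke the general (non-informal) \Cref{thm:param_testing}, in which the accuracy confidence $\beta$ is decoupled from $\delta$ and the leading term reads $\frac{d\log1/\beta}{\trade\eps}$. The $\log\log\frac1\delta$ factor in the lower-order term presumably comes from the private range or scale-finding step for the noise level and $\Sigma^{-1}_{11}$ inside that theorem. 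A union bound over the $r$ rounds gives overall success probability $1-\beta$. Runtime is $r\cdot e^{O(\trade)}\poly(n,d)$.

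\textbf{Main obstacle.} The hardest step is this $\delta$ versus $\beta$ decoupling: the informal tester only guarantees success with probability $1-\delta$ and needs $\log1/\delta$ in the leading term. I would first try to show that the median/average-of-quantiles aggregation behind \Cref{thm:param_testing} has a leading term that depends only on the accuracy confidence. The dependence on $\delta$ would then enter only through the lower-order $\poly$ term, for instance through a stability/propose-test-release wrapper whose cost is polylogarithmic.
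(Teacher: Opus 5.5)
\textbf{Gap: the leading term.} The gap is the step you call the main obstacle, and \Cref{thm:param_testing} does not close it. That theorem has no separate confidence parameter. More importantly, it is built on average-of-quantiles (\Cref{alg:avg-of-q}), whose quantile list has length $\tau=\Theta(\frac1\eps\log\frac1\delta)$. This is forced: the sensitivity-one test on $t^*$ adds truncated Laplace noise with truncation $\tau/8$, which is $(\eps,\delta)$-DP only if $\tau/8\gtrsim\frac1\eps\log\frac1\delta$.

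So the tester runs in time $\exp(O(\frac p\eps\log\frac1\delta))$. To make this $e^{O(\trade)}$ you need $p=\Theta(\trade\eps/\log\frac1\delta)$. The leading term $d/p$ then becomes $\frac{d\log(1/\delta)}{\trade\eps}$, which is the intro's testing corollary, not the claimed $\frac{d\log(1/\beta)}{\trade\eps}$. Setting the accuracy failure probability to $\beta$ does not help, because the obstruction is the exponent $p\tau$. A propose-test-release wrapper also points the wrong way: the $t^*$-test already is one, and it is exactly where $\log\frac1\delta$ enters the exponent. Your guess about $\log\log\frac1\delta$ is also off. There is no scale-finding step; the term comes from union bounding \Cref{lem:subsampling_accuracy} over the $T\propto e^{O(p\tau)}p^{-2}\log\frac1\delta$ subsamples drawn by \Cref{lem:quantile_finder}.

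\textbf{Missing idea: median-of-quantiles.} Clip and rescale the constrained losses and round them to the grid $\alpha^2\mathbb{Z}$ (the functions $f^{(\pi)}$), so they take finitely many values. The exponential mechanism of \Cref{thm:med-of-q} is then pure DP on interleaved quantile lists. Its $\tau$ depends on $\beta$ and the range size but not on $\delta$. Now $\delta$ enters only as the interleaving failure probability of \Cref{lem:quantile_finder}, through the number $m\propto\log\frac1\delta$ of subsamples. That costs a $\log\frac1\delta$ factor in time and $\log\log\frac1\delta$ in $n$.

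This is the paper's route. \Cref{alg:theta1_est} runs median-of-quantiles on $f^{(\pi_i)}$ for each of $\kappa=O(1)$ intervals partitioning $[-1,1]$. It selects the minimizer by private selection from private candidates (\Cref{thm:pspc}), repeats this $O(\log\frac1\beta)$ times at privacy $\eps/O(\log\frac1\beta)$ each, and outputs the median interval, with $p=\Theta(\trade\eps/\log\frac1\beta)$.

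\textbf{Repairing your route.} If you rebuild your sign tester from median-of-quantiles (argmin of the three outputs, justified by \Cref{lem:subsampling_accuracy}), your binary search works and is a legitimate alternative for constant $\kappa$. A wrong turn requires $\theta_1$ to be within the gap width of the midpoint. Hence the invariant that $\theta_1$ lies within that width of the current interval persists without compounding, and $O(1)$ rounds need only basic composition. One more fix: \Cref{thm:param_testing} requires $t>20\alpha\mu^{-1/2}$ (here $\mu^{-1/2}=\sqrt{\Sigma^{-1}_{11}}$), which is unknown. Argue from \Cref{lem:subsampling_accuracy} directly instead: for any $t$, it shows that a wrong or zero answer forces $|\theta_1-c|\le t+10\alpha\mu^{-1/2}$.
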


In addition to our application to estimating a single parameter, we obtain similar results for estimating $\sigma^2$, and more generally, for estimating the population loss of a class of hypotheses.

\subsection{Our techniques}

\paragraph{Subsampling quantiles for privately evaluating monotone functions.} In order to privately evaluate a monotone function $f$ at a dataset $Z$, we develop a novel analysis of the distribution of $f(S)$ where $S\sim\cS_p(Z)$ is the distribution given by subsampling each $z_i\in Z$ with probability $p$. In a slight abuse of notation, we let  $f(\cS_p(Z))$ denote the distribution of $f(S)$ where $S\sim\cS_p(Z)$, and let $Q_Z(\alpha)$ denote the $\alpha$ quantile of the distribution $f(\cS_p(Z))$. Our privacy analysis shows that, on adjacent datasets $Z$ and $Z'$, the subsampling quantiles $Q_Z$ and $Q_{Z'}$ are \emph{interleaved} in the following sense:
\[
Q_{Z}\paren{\alpha(1-p)}\leq Q_{Z'}\paren{\alpha} \leq Q_{Z}\paren{\frac{\alpha}{1-p}}.
\]
As in \cite{FangDY22,LinderRSS25}, the interleaving relationship allows us to transform the problem of evaluating $f$ into the problem of releasing an interior point from a sequence of increasing quantiles of $f(\cS_p(Z))$. While \cite{FangDY22,LinderRSS25} construct two interleaved sequences by performing a brute force search over all ``large'' subsets of the dataset, our approach can be implemented efficiently via subsampling. Crucially, we show that one can efficiently compute an appropriate sequence of empirical quantiles via subsampling, and that with high probability over the subsampling, the empirical quantiles that are computed on neighbors $Z$ and $Z'$ will satisfy the interleaving relationship. Since each subsample has size approximately $pn$, this approach gives an approximation to the value of $f$ on subsets $S$ with $|S|\approx pn$. The bottleneck in the running time of our algorithms is the number of subsamples required to accurately estimate very small quantiles, which we show is approximately $\exp\paren{{O\paren{\frac{p}{\eps}\log\frac1\delta}}}$. Notice that smaller $p$ translates into a faster runtime but a worse accuracy guarantee---the estimation is on smaller subsets. Thus, our mechanism witnesses a tradeoff between sample complexity and runtime for privately estimating monotone statistics. 

\paragraph{Estimating a single parameter via private loss comparison.} Our algorithm for estimating the $\ord{i}$ coordinate of a high dimensional model works by privately computing estimates of the empirical loss of the minimizer when the $\ord{i}$ coordinate is fixed to different candidates $w\in\cC$ for some set of candidates $\cC\subset\R$, and outputting the candidate $w$ that achieves the smallest empirical loss. We leverage the tools we develop for privately evaluating monotone functions to estimate the loss for each candidate. In order to argue that our mechanism is accurate, we analyze the distribution of the minimum empirical loss on subsampled data. Specifically, we show that for a typical dataset the empirical loss of subsampled data concentrates around its mean at a dimension independent rate. The fast rate of concentration under subsampling allows us to argue that the private loss comparisons are accurate without paying a dimension dependent accuracy term (e.g., $d/\alpha^2$) in the sample complexity.

\paragraph{Query complexity lower bound.} We build on the approaches of \cite{LinderRSS25, SteinkeS25} and prove a query complexity lower bound for privately estimating monotone statistics. %
We combine the central ideas behind the constructions of \cite{LinderRSS25,SteinkeS25} with a technique from \cite{LangeLRV25} used to prove a lower bound in the context of property testing. At a high level, the proof of our lower bound proceeds as follows: we construct two families distributions $\set{\cD_0}$ and $\set{\cD_1}$ and a family of monotone functions $\set{f}$, such that for each $f$, there is a corresponding $\cD_0$ and $\cD_1$ such that for all $n\geq N$, the following holds with probability 1: If $Z'\sim\cD_0^N$ then $f(Z')=y_0$, and if $Z\sim\cD_1^N$ then $f(Z)=y_1$, where $y_0$ and $y_1$ are chosen uniformly from $[\kappa]$. Thus, any algorithm that satisfies a weak accuracy guarantee should output $y_0$ when the data comes from $\cD_0$, and $y_1$ when the data comes from $\cD_1$. We then construct a distribution over pairs of datasets $(X,X')$ at distance $\tau\approx \frac{1}{\eps}\log\min\paren{\frac\kappa\beta, \frac1\delta}$, and show that any algorithm, which has sample complexity $n=N/p$ and query complexity less than $e^{p\tau}$, cannot distinguish $Z$ from $X$ or $Z'$ from $X'$. Thus, if an algorithm outputs $y_1$ on input $Z$ and $y_0$ on input $Z'$, it will also output $y_1$ on input $X$ and $y_0$ on input $X'$. However, since $X$ and $X'$ are at distance $\tau$, this contradicts group privacy, and we obtain that any private and accurate mechanism must make at least $e^{p\tau}$ queries.

\subsection{Related work}

\paragraph{Local sensitivity and subsample-and-aggregate.} The \emph{local sensitivity} of a function $f$ at a dataset $Z$ is the maximum over all $X$ in some neighborhood around $Z$ of $\abs{f(X) - f(Z)}$. 
In order to privately evaluate a function with error proportional to the local sensitivity, \cite{NissimRS07} develop a smoothed version of local sensitivity, called \emph{smooth sensitivity}, and introduce subsample-and-aggregate as a framework for constructing estimators with bounded smooth-sensitivity. Given the generality of subsample-and-aggregate it is no surprise that is has found application in many statistical estimation tasks: see \cite{smith2011privacy, kazan2023test, chadha2024resampling, SinghalS21, tsfadia2024differentially} for examples and further discussion.

\paragraph{Privately evaluating black-box functions.} 
A long line of work explicitly considers the problem of privately evaluating a black-box function $f$. The \emph{inverse sensitivity mechanism} is the most prominent example: it adds noise proportional to the local sensitivity of the function.
\cite{AsiD20} introduced the study of its instance optimality; see \cite{McSherry09,cormode2012differentially} for early applications.
In black-box settings, the mechanism apparently requires a brute-force search over the space of all possible datasets (and thus it is not clear how to even compute it when data domain is infinite).
Work of \cite{JhaR13,LangeLRV25} study the connections between privately release of black-box functions and sublinear-time algorithms.
A related line of work \cite{CummingsD20,FangDY22,KohliL23,LinderRSS25,SteinkeS25} 
 focuses on \emph{down-local} algorithms---that is, algorithms that only query $f$ on subsets of the dataset. 
These algorithms are computable even when the data domain is infinite.

We refer the reader to \cite{LinderRSS25,SteinkeS25} for a more extensive discussion of the benefits of down-local algorithms, and we note that all of our algorithms satisfy this constraint. While the algorithms in this line of work suffer minimal blowup in sample complexity for the setting of \Cref{ass:intro_concentration}, they are unfortunately all inefficient---they run in time at least $n^{\Omega\paren{\log n}}$ in general. Additionally, an important building block in both \cite{LinderRSS25} and \cite{SteinkeS25}, who focus on privately evaluating arbitrary functions, is a technique developed in \cite{FangDY22} for privately evaluating monotone functions. 
Cast in this light, we hope our results will lead toward the development of efficient algorithms for arbitrary functions.

\paragraph{Single-parameter release for GLMs.}
A central motivation of our work aligns with the recent work of \cite{AsiDT25}, who provide algorithms for privately estimating a single parameter from a larger model. 
Operating on a large subset of generalized linear models (GLMs), they show how to privately certify the stability of the empirical estimate and, when that certification passes, adds Gaussian noise at a scale they show is instance-optimal.
However, it appears that their techniques only show that this certification succeeds with high probability once the sample size is quite large; larger than the number of samples needed for adequate non-private estimation.

\paragraph{Privacy and Robustness.} 
A recurring theme in differentially private algorithm design is to use robust estimation as a starting point.
Early work such as propose--test--release \cite{DworkL09} formalized how to privatize estimators that are stable on typical datasets by first privately certifying a sensitivity bound and then releasing a suitably noised estimate.
Most recently, multiple results give general-purpose blueprints for converting robust (or stable) algorithms into private ones \cite{liu2022differential,kothari2022private,hopkins2022efficient,HopkinsKMN23,AsiUZ2023}. 

Going in this direction, efficient algorithms for estimating mean and covariance in the presence of arbitrary outliers have been used to obtain sample-optimal private algorithms for estimating a Gaussian distribution in total variation distance \cite{HopkinsKMN23}.
More broadly, these results and other reductions between robustness and privacy \cite{AsiUZ2023} suggest that privacy and robustness can often be viewed as complementary stability guarantees, and that robust estimators can serve as useful primitives for obtaining private ones.

\ifnum\neurips=1

\subsection{Notation}

In the interest of space, we defer the rest of the preliminaries to \Cref{sec:prelim}, but for clarity define our notation here. 

Let $\cZ$ denote the set of all data points (or data universe). A dataset $Z\in\cZ^n$ is a tuple of $n$ elements where $z_i\in\cZ$ for each $i\in[n]$. Let $\cZ^*=\bigcup_{n\in\N} \cZ^n$. We say two datasets $Z$ and $Z'$ are \emph{neighbors} if there exists exactly one $i\in[n]$ such that $z_i\neq z'_i$. Our results leverage subsampling from a dataset $Z$, which we define as follows: Let $\cZ_\bot=\cZ\cup\set{\bot}$ and let $\cZ^*_\bot = \bigcup_{n\in\N} \cZ^n_\bot$. Suppose $Z\in\cZ^n_\bot$, then a subsample $S\subset Z$ is a tuple of $n$ elements $(s_1,\dots,s_n)$ such that $z_i=\bot\implies s_i=\bot$, and $s_i\in\set{z_i,\bot}$ otherwise. 
We let $|S| = |\set{i\in[n] \mid s_i\neq \bot}|$, and for all $z\in\cZ_\bot$ we let $S^{j\gets z}$ denote $S$ with $s_j\gets z$ and we let $S_{-j}$ denote $S^{j\gets \bot}$.

With this notation, we can now formally define our notion of monotonicity.\footnote{We remark that the literature supplies several distinct notions with the name ``monotonicity.'' The definition we consider here is, in particular, distinct from the notion of \emph{sample-monotone functions} used in \cite{AsiD20}.}

\begin{definition}[Monotone functions] 
  A function $f:\cZ^*_\bot\to\R$ is {\em monotone} if $f(S)\leq f(Z)$ for all $Z\in\cZ^*_\bot$ and $S\subset Z$.
\end{definition}  

\fi

\ifnum\neurips=0

\section{Preliminaries}
\label{sec:prelim}

Let $\cZ$ denote the set of all data points (or data universe). A dataset $Z\in\cZ^n$ is a tuple of $n$ elements where $z_i\in\cZ$ for each $i\in[n]$. Let $\cZ^*=\bigcup_{n\in\N} \cZ^n$. We say two datasets $Z$ and $Z'$ are \emph{neighbors} if there exists exactly one $i\in[n]$ such that $z_i\neq z'_i$. We define subsampling from a dataset $Z$ as follows: let $\cZ_\bot=\cZ\cup\set{\bot}$ and let $\cZ^*_\bot = \bigcup_{n\in\N} \cZ^n_\bot$. Suppose $Z\in\cZ^n_\bot$, then a subsample $S\subset Z$ is a tuple of $n$ elements $(s_1,\dots,s_n)$ such that $z_i=\bot\implies s_i=\bot$, and $s_i\in\set{z_i,\bot}$ otherwise. 
We let $|S| = |\set{i\in[n] \mid s_i\neq \bot}|$, and for all $z\in\cZ_\bot$ we let $S^{j\gets z}$ denote $S$ with $s_j\gets z$ and we let $S_{-j}$ denote $S^{j\gets \bot}$.\gbnote{Definitions and lemmas we need: subgaussian \& pointer to Vershynin}

With this notation, we can now formally define our notion of monotonicity.\footnote{We remark that the literature supplies several distinct notions with the name ``monotonicity.'' The definition we consider here is, in particular, distinct from the notion of \emph{sample-monotone functions} used in \cite{AsiD20}.}

\begin{definition}[Monotone functions] 
  A function $f:\cZ^*_\bot\to\R$ is {\em monotone} if $f(S)\leq f(Z)$ for all $Z\in\cZ^*_\bot$ and $S\subset Z$.
\end{definition}

Next, we present the definition of differential privacy as well as a few standard results that will be useful throughout the paper.

\begin{definition}[Differential privacy \cite{DworkMNS06}]\label{def:dp}
    For random variables $X$ and $Y$ we say $X\approx_{\eps,\delta} Y$ if for all measurable sets $E\subset\supp(X)\cup\supp(Y)$
    \[
    \Pr[X\in E]\leq e^{\eps}\Pr[Y\in E] + \delta,
    \]
    and the same inequality holds with $X$ and $Y$ swapped. If $\delta=0$ we write $X\approx_{\eps} Y$. A randomized algorithm $\cM$ is \emph{$(\eps,\delta)$-differentially private} (DP) if $\cM(Z)\approx_{\eps,\delta} \cM(Z')$ for all neighboring datasets $Z,Z'\in\cZ^n$. 
\end{definition}

\begin{fact}[Composition]
\label{fact:composition}
Fix $\eps_1,\eps_2>0$ and $\delta_1,\delta_2\in(0,1)$. Suppose $\cM_1$ and $\cM_2$ are (respectively) $(\eps_1,\delta_1)$-DP and $(\eps_2,\delta_2)$-DP. Then, the mechanism that, on input $x$, outputs $(\cM_1(x),\cM_2(x))$ is $(\eps_1+\eps_2,\delta_1+\delta_2)$-DP. 
\end{fact}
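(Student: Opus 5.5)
The claim to prove is basic composition: the pair $(\cM_1(x),\cM_2(x))$ is $(\eps_1+\eps_2,\delta_1+\delta_2)$-DP. I take $\cM_1$ and $\cM_2$ to use independent internal randomness. Fix neighbors $x,x'$ and a measurable $E$ in the product space. For each first coordinate $a$, write the section $E_a=\set{b : (a,b)\in E}$. By independence,
\[
\Pr[(\cM_1(x),\cM_2(x))\in E]=\Ex_{a\sim\cM_1(x)}\brackets{\Pr[\cM_2(x)\in E_a]}.
\]
The plan is to apply the two privacy guarantees one at a time: first change the second coordinate from $x$ to $x'$, then the first.

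\textbf{Step 1 (second coordinate).} For every fixed $a$, the guarantee for $\cM_2$ gives
\[
\Pr[\cM_2(x)\in E_a]\le e^{\eps_2}\Pr[\cM_2(x')\in E_a]+\delta_2 .
\]
Because a probability is at most one, this also yields
\[
\Pr[\cM_2(x)\in E_a]\le \min\paren{1,\,e^{\eps_2}g(a)}+\delta_2, \qquad g(a):=\Pr[\cM_2(x')\in E_a].
\]

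\textbf{Step 2 (first coordinate).} Set $h(a)=\min(1,e^{\eps_2}g(a))$, which takes values in $[0,1]$. Write its expectation as $\int_0^1\Pr_{a}[h(a)>s]\,ds$. Applying the $(\eps_1,\delta_1)$ guarantee to each level set $\set{a: h(a)>s}$ and integrating over $s$ gives
\[
\Ex_{a\sim\cM_1(x)}[h(a)]\le e^{\eps_1}\,\Ex_{a\sim\cM_1(x')}[h(a)]+\delta_1 .
\]
Since $h\le e^{\eps_2}g$, the right-hand side is at most $e^{\eps_1+\eps_2}\Ex_{a\sim\cM_1(x')}[g(a)]+\delta_1$. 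This last expectation equals $\Pr[(\cM_1(x'),\cM_2(x'))\in E]$.

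Combining the two steps gives total additive slack $\delta_1+\delta_2$ and multiplicative factor $e^{\eps_1+\eps_2}$. The reverse inequality follows by swapping the roles of $x$ and $x'$.

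The main obstacle is Step 2. The $(\eps,\delta)$ guarantee is stated for events, but here it must be applied to the expectation of a function of the output. The $\delta_1$ term does not scale correctly unless that function is bounded by $1$, which is why the truncation $\min(1,\cdot)$ is introduced before integrating over level sets. Measurability of $a\mapsto g(a)$ is routine by Fubini.
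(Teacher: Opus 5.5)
Your proof is correct. The paper states this as a standard fact without proof, so there is no route to diverge from: your argument (apply $\cM_2$'s guarantee to each section, truncate at $1$, then extend $\cM_1$'s guarantee from events to $[0,1]$-valued functions via the layer-cake formula) is the standard proof of basic composition. The truncation correctly keeps the additive term at $\delta_1$ rather than $e^{\varepsilon_2}\delta_1$. Your explicit assumption that the two mechanisms use independent internal randomness is genuinely needed, since the fact is false without it, and the paper's statement leaves it implicit.
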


\begin{fact}[Group privacy]\label{fact:group-privacy}
    Fix $\eps,\delta>0$, and $n \in \N$. Suppose $\cM$ is $(\eps,\delta)$-DP and $E \subsetneq \cY$ is measurable. For any pair of datasets $Z,Z' \in \cZ^n$ that differ in exactly $k$ elements,
    \[
        \Pr_{\cM}\brackets{\cM(Z) \in E} \leq e^{\eps\cdot k}\paren{\Pr_{\cM}\brackets{\cM(Z') \in E} + \frac{\delta}{\eps}}.
    \]
\end{fact}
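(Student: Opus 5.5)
The plan is the standard hybrid argument: walk from $Z$ to $Z'$ through a chain of neighboring datasets, apply the $(\eps,\delta)$-DP inequality of \Cref{def:dp} once per step, and sum the resulting geometric series of additive $\delta$ terms.

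\textbf{Step 1 (the chain).} Since $Z$ and $Z'$ lie in $\cZ^n$ and differ in exactly $k$ coordinates $i_1<\dots<i_k$, define $Z_0=Z$. For $j\in[k]$, let $Z_j$ be $Z_{j-1}$ with coordinate $i_j$ replaced by $z'_{i_j}$. Then $Z_k=Z'$, every $Z_j\in\cZ^n$, and each pair $(Z_{j-1},Z_j)$ differs in exactly one coordinate, so each consecutive pair is a pair of neighbors. Write $p_j=\Pr[\cM(Z_j)\in E]$.

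\textbf{Step 2 (unrolling).} Applying \Cref{def:dp} to $(Z_{j-1},Z_j)$ gives $p_{j-1}\le e^{\eps}p_j+\delta$. By induction on $m$, we claim
\[
p_0\le e^{m\eps}p_m+\delta\sum_{j=0}^{m-1}e^{j\eps}.
\]
The base case $m=1$ is the one-step inequality. For the inductive step, substitute $p_m\le e^{\eps}p_{m+1}+\delta$ into the claim for $m$. The coefficient of $p_{m+1}$ becomes $e^{(m+1)\eps}$, and the additive term gains $\delta e^{m\eps}$. Taking $m=k$ yields
\[
p_0\le e^{k\eps}p_k+\delta\,\frac{e^{k\eps}-1}{e^{\eps}-1}.
\]

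\textbf{Step 3 (simplifying the constant).} Since $e^{\eps}-1\ge \eps$ for $\eps>0$, we have
\[
\frac{e^{k\eps}-1}{e^{\eps}-1}\le \frac{e^{k\eps}}{\eps}.
\]
Substituting this into Step 2 gives $p_0\le e^{k\eps}\paren{p_k+\delta/\eps}$, which is exactly the claim. The same argument with the roles of $Z$ and $Z'$ swapped gives the symmetric statement.

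None of these steps is a real obstacle. The only point needing care is Step 1: replacing one coordinate at a time keeps every intermediate dataset the same size $n$, so each consecutive pair really is a pair of neighbors under the replace-one definition used here. The bound $e^{\eps}-1\ge\eps$ is the only inequality required to reach the stated form.
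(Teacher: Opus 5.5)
Your proof is correct, and it is the standard hybrid argument: the paper states group privacy as a known Fact without proof. Your three steps (a chain of $k$ single-coordinate replacements, the unrolled bound $\Pr[\mathcal{M}(Z)\in E]\le e^{k\varepsilon}\Pr[\mathcal{M}(Z')\in E]+\delta\,\frac{e^{k\varepsilon}-1}{e^{\varepsilon}-1}$, and the estimate $e^{\varepsilon}-1\ge\varepsilon$) are exactly where the $\delta/\varepsilon$ term in the stated form comes from.
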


\begin{definition}[$\Lap$ and $\TLap$]
    \label{def:laplace_distribution}
    The \emph{Laplace} distribution, denoted $\Lap(b)$, is defined over $\R$ by the probability density function $f(x)=\frac1{2b}e^{-|x|/b}$. The \emph{truncated Laplace} distribution, denoted $\TLap(b,\tau)$, is given by the probability density function $f(x)=a_{b,\tau}\cdot\frac1{2b}e^{-|x|/b}$ when $|x|\leq\tau$ and $0$ otherwise, where $a_{b,\tau}$ is a normalizing constant.
\end{definition}

\begin{fact}[Laplace mechanism \cite{DworkMNS06}]
    \label{fact:laplace_mechanism}
    Fix $\eps>0$ and a function $f:\cZ^n_\bot\to \R$. For all neighboring datasets $Z,Z'\in\cZ^n$ with $\abs{f(Z)-f(Z')}\leq \Delta$ we have $f(Z) + \Lap\paren{\frac{\Delta}{\eps}}\approx_{\eps} f(Z') + \Lap\paren{\frac{\Delta}{\eps}}$. Additionally, for all $\delta>0$ we have $f(Z) + \TLap\paren{\frac{\Delta}{\eps}, \frac{\Delta}{\eps}\ln\frac1\delta}\approx_{\eps,\delta} f(Z') + \TLap\paren{\frac{\Delta}{\eps}, \frac{\Delta}{\eps}\ln\frac1\delta}$.
\end{fact}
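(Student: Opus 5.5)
The statement to prove is the Laplace mechanism fact (\Cref{fact:laplace_mechanism}). The plan is to bound density ratios pointwise for the pure case. For the truncated case, we separate a small bad region of the output space.

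\textbf{Pure Laplace.} Write $b=\Delta/\eps$, $a=f(Z)$ and $a'=f(Z')$, with $|a-a'|\le\Delta$. The output densities are $p(y)=\frac1{2b}e^{-|y-a|/b}$ and $p'(y)=\frac1{2b}e^{-|y-a'|/b}$. By the triangle inequality, $|y-a'|-|y-a|\le|a-a'|\le\Delta$, so $p(y)\le e^{\Delta/b}p'(y)=e^{\eps}p'(y)$ for every $y$. Integrating over any measurable $E$ gives $\Pr[a+\Lap(b)\in E]\le e^\eps\Pr[a'+\Lap(b)\in E]$. The argument is symmetric in $a,a'$, so $\approx_\eps$ holds.

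\textbf{Truncated Laplace.} Set $\tau=b\ln\frac1\delta$ and let $c=a_{b,\tau}$ be the common normalizer, which does not depend on the shift. The supports are $I=[a-\tau,a+\tau]$ and $I'=[a'-\tau,a'+\tau]$. For a measurable $E$, split it as $E=(E\cap I\cap I')\cup(E\cap(I\setminus I'))$. On $I\cap I'$, both densities carry the same factor $c$, so the pointwise bound from the first part gives
\[
\Pr[a+\TLap\in E\cap I\cap I']\le e^\eps\Pr[a'+\TLap\in E].
\]
It remains to show $\Pr[a+\TLap\in I\setminus I']\le\delta$. Assume without loss of generality that $a'<a$. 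Then $I\setminus I'\subseteq(a'+\tau,a+\tau]$, which is an interval of length at most $\Delta$ at the upper tail of the noise. The mass of the noise on $[\tau-\Delta,\tau]$ is
\[
\frac{c}{2b}\int_{\tau-\Delta}^{\tau}e^{-x/b}\,dx=\frac c2e^{-\tau/b}\paren{e^{\Delta/b}-1}=\frac c2\,\delta\paren{e^\eps-1}.
\]

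\textbf{Main obstacle.} The delicate step is bounding this tail mass by exactly $\delta$ rather than $O(\delta)$. Computing the normalizer gives $c=1/(1-\delta)$, so the tail mass equals $\frac{\delta(e^\eps-1)}{2(1-\delta)}$. This is at most $\delta$ once $e^\eps\le3-2\delta$, e.g. for $\eps\le1$ and $\delta\le\frac14$, which covers the regime of interest. A fully general constant would need either a slightly larger truncation radius, such as $\tau=b\ln\frac{e^\eps}{\delta}$, or a direct argument. I would state the fact with the standard constant and note that small $\eps$ suffices. Symmetry in $Z$ and $Z'$ then yields $\approx_{\eps,\delta}$.
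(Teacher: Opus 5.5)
The paper gives no proof of this Fact; it is cited to \cite{DworkMNS06}. Your argument is the standard one. The pointwise density-ratio bound settles the pure Laplace case. For the truncated case, peeling off $I\setminus I'$ and bounding its mass by $\frac{\delta(e^\eps-1)}{2(1-\delta)}$ (using $c=1/(1-\delta)$) is exactly right.

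Your caveat is correct, but it should be stated more sharply.

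\emph{The condition is necessary.} The requirement $e^\eps\le 3-2\delta$ is not an artifact of your method; it is necessary for this truncation radius. Suppose $f(Z)-f(Z')=\Delta$ and $\delta\le e^{-\eps}$, so that $\tau\ge\Delta$ and your tail computation is an equality. Then $E=I\setminus I'$ has probability $0$ for the output on $Z'$ and probability exactly $\frac{\delta(e^\eps-1)}{2(1-\delta)}$ for the output on $Z$. This exceeds $\delta$ whenever $e^\eps>3-2\delta$. So the Fact as literally stated (all $\eps>0$, $\delta>0$) is false, for instance for every $\eps>\ln 3$ with $\delta\le e^{-\eps}$. No ``direct argument'' can rescue the radius $\frac{\Delta}{\eps}\ln\frac{1}{\delta}$. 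You should present this as a counterexample and add the hypothesis explicitly.

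\emph{Your sample regime is wrong.} The choice $\eps\le 1$, $\delta\le\frac{1}{4}$ does not give $e^\eps\le 3-2\delta$, because $e\approx 2.718>2.5$. Indeed, $\eps=1$, $\delta=\frac{1}{4}$ yields tail mass $\approx 0.286>\frac{1}{4}$. Use instead, e.g., $\eps\le 1,\delta\le\frac{1}{8}$ or $\eps\le\frac{1}{2},\delta\le\frac{1}{4}$.

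\emph{The paper's uses are safe.} Its only invocations are in \Cref{alg:avg-of-q}, after rescaling to $\eps/2<\frac{1}{2}$ and $\delta/3<\frac{1}{3}$. The $t^*$ test there even uses the larger radius $\frac{2}{\eps}\ln\frac{1}{\delta}$. Both lie in the valid regime, so nothing downstream breaks.

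\emph{Alternative radii.} Your radius $b\ln\frac{e^\eps}{\delta}$ does work for all $\eps>0$ and $\delta\in(0,1)$: it gives tail mass $\frac{\delta(1-e^{-\eps})}{2(1-\delta e^{-\eps})}\le\delta$. The radius $b\ln\bigl(1+\frac{e^\eps-1}{2\delta}\bigr)$ makes your bound $\frac{c}{2}e^{-\tau/b}(e^\eps-1)$ equal $\delta$ exactly, which is the sharp choice.

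\emph{A small point.} When $\tau<\Delta$ (possible in your regime, e.g.\ $\eps<\ln 2$ with $\delta$ near $(3-e^\eps)/2$), your displayed ``mass equals'' is only an upper bound. On the negative part the density is $e^{-|x|/b}\le e^{-x/b}$. An upper bound is all the sufficiency direction needs.
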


\begin{fact}[Exponential Mechanism \cite{McSherryT07}]
    \label{lem:exp-mechanism}
    The \emph{exponential mechanism} %
    takes a dataset $Z \in \cZ^n$, computes a score ($\score : \cZ^n \times \cY \to \R$)
    for each $y \in \cY$ with respect to $Z$, and
    outputs $y \in \cY$ with probability proportional
    to $\exp\paren{\frac{\eps \cdot \score(Z,y)}{2 \cdot \Delta}}$,
    where
    $$\Delta = \max\limits_{y \in \cY}\max\limits_{Z\sim Z' \in \cZ^n}
        {\abs{\score(Z,y)-\score(Z',y)}}.$$
    The mechanism is $\eps$-DP and outputs $\wt y$ that satisfies the following guarantee:
    \[
    \Pr\paren{\score(Z, \wt y) \leq
                \max\limits_{y \in \cY}\{\score(Z,y)\} - \frac{2\Delta}{\eps}(\ln\abs{\cY} + t)}
                \leq e^{-t}.
    \]
\end{fact}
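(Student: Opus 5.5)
The plan is to prove the two claims of \Cref{lem:exp-mechanism} separately: privacy by a pointwise likelihood-ratio bound, and utility by a union bound over the low-scoring outputs. Throughout I assume $\cY$ is finite, as the appearance of $\ln\abs{\cY}$ implicitly requires. Write $w_Z(y) = \exp\paren{\frac{\eps\cdot\score(Z,y)}{2\Delta}}$ and $W_Z = \sum_{y\in\cY} w_Z(y)$, so that the mechanism outputs $y$ with probability $w_Z(y)/W_Z$.

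\textbf{Privacy.} Fix neighbors $Z, Z'$ and an output $y$. By the definition of $\Delta$ we have $\abs{\score(Z,y)-\score(Z',y)}\le \Delta$. Hence $w_Z(y)\le e^{\eps/2} w_{Z'}(y)$ for every $y$. Summing this over $y$ gives $W_{Z'} \le e^{\eps/2} W_Z$. Combining the two bounds,
\[
\frac{w_Z(y)}{W_Z} \le \frac{e^{\eps/2}w_{Z'}(y)}{e^{-\eps/2}W_{Z'}} = e^{\eps}\,\frac{w_{Z'}(y)}{W_{Z'}} .
\]
Summing this pointwise bound over any set $E\subseteq\cY$ gives $\Pr[\cM(Z)\in E]\le e^{\eps}\Pr[\cM(Z')\in E]$. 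The same argument with $Z$ and $Z'$ exchanged gives the reverse inequality, so the mechanism is $\eps$-DP in the sense of \Cref{def:dp} with $\delta=0$.

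\textbf{Utility.} Let $\mathrm{OPT} = \max_y \score(Z,y)$ and define the bad set
\[
B = \set{y : \score(Z,y) \le \mathrm{OPT} - \tfrac{2\Delta}{\eps}(\ln\abs{\cY}+t)}.
\]
Every $y\in B$ has weight $w_Z(y) \le e^{\eps \mathrm{OPT}/(2\Delta)}\cdot e^{-(\ln\abs{\cY}+t)}$. The normalizer is at least the weight of a maximizer, so $W_Z \ge e^{\eps\mathrm{OPT}/(2\Delta)}$. Since $\abs{B}\le\abs{\cY}$, we get
\[
\Pr[\wt y\in B] \le \abs{\cY}\cdot e^{-\ln\abs{\cY}-t} = e^{-t},
\]
which is exactly the claimed tail bound.

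There is no real obstacle. The only points needing care are that the factor $2$ in the exponent is what lets the change in the numerator and the change in the normalizer each contribute $e^{\eps/2}$, and that the normalizer is lower-bounded by its single largest term.
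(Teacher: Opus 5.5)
Your proof is correct and is the standard McSherry--Talwar argument (the paper states this as a cited fact without proof). The numerator and the normalizer each contribute a factor $e^{\eps/2}$ to the privacy ratio, and for utility the total mass of the bad outputs is at most $\abs{\cY}e^{-(\ln\abs{\cY}+t)}$ relative to the maximizer's weight.

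One small slip: summing $w_Z(y)\le e^{\eps/2}w_{Z'}(y)$ gives $W_Z\le e^{\eps/2}W_{Z'}$, but you need $W_{Z'}\le e^{\eps/2}W_Z$. That inequality comes from summing the symmetric pointwise bound $w_{Z'}(y)\le e^{\eps/2}w_Z(y)$, which holds because $\Delta$ bounds the absolute score difference.
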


\begin{fact}[DP with high probability]
\label{fact:conditional_DP}
    Fix $\eps,\delta,\delta'\in(0,1)$, algorithm $\cA$ and neighbors $Z,Z'$. Suppose there exists an event $G$ over the coins of both $\cA(Z)$ and $\cA(Z')$ such that $\cA(Z)|_G\approx_{\eps,\delta} \cA(Z')|_G$. 
    If $\Pr[G]\geq 1-\delta'$ then $\cA(Z)\approx_{\eps,\delta + \delta'} \cA(Z')$. 
\end{fact}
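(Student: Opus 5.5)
The plan is to unconditionally decompose the output probability of $\cA(Z)$ according to whether $G$ occurs. We then apply the conditional closeness hypothesis on the good part and charge the bad part to $\delta'$. The key modelling point is how $G$ is read. Following the statement, $G$ is an event on the shared random coins $r$ that both executions $\cA(Z;r)$ and $\cA(Z';r)$ are run with, under a single coupling. Its probability $\Pr[G]$ is therefore one number, used for both the conditional law of $\cA(Z)$ and that of $\cA(Z')$, and $\Pr[\neg G]\le\delta'$.

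Fix a measurable set $E$. First I write
\[
\Pr[\cA(Z)\in E]\le \Pr[\cA(Z)\in E\wedge G]+\Pr[\neg G]\le \Pr[G]\cdot\Pr[\cA(Z)\in E\mid G]+\delta'.
\]
Second, I invoke the hypothesis $\cA(Z)|_G\approx_{\eps,\delta}\cA(Z')|_G$ (\Cref{def:dp}). This gives $\Pr[\cA(Z)\in E\mid G]\le e^{\eps}\Pr[\cA(Z')\in E\mid G]+\delta$. Multiplying by $\Pr[G]$ then yields
\[
\Pr[\cA(Z)\in E]\le e^{\eps}\Pr[\cA(Z')\in E\wedge G]+\Pr[G]\,\delta+\delta'\le e^{\eps}\Pr[\cA(Z')\in E]+\delta+\delta'.
\]
The last step uses only $\Pr[\cA(Z')\in E\wedge G]\le\Pr[\cA(Z')\in E]$ and $\Pr[G]\le 1$. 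The hypothesis is symmetric in $Z,Z'$, so the same argument with the roles swapped gives the reverse inequality. Together these give $\cA(Z)\approx_{\eps,\delta+\delta'}\cA(Z')$.

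The only step needing care is the multiplication by $\Pr[G]$. It converts the conditional probability of $\cA(Z')$ back into the joint probability $\Pr[\cA(Z')\in E\wedge G]$, and this is valid precisely because $G$ has the same probability under both runs, which is where the shared-coins interpretation is used. Had $G$ been allowed different probabilities $\Pr_Z[G]$ and $\Pr_{Z'}[G]$, the conversion would cost a factor of $\Pr_Z[G]/\Pr_{Z'}[G]\le 1/(1-\delta')$ in the multiplicative term. So I will state explicitly at the start that $G$ is a single event on the common coin space. The remaining steps are routine.
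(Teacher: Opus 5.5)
Your proof is correct and matches the paper's argument: split $\Pr[\cA(Z)\in E]$ on $G$, apply the conditional $(\eps,\delta)$ bound, multiply back by $\Pr[G]$, and drop the intersection with $G$ on the $Z'$ side. Your shared-coins caveat is harmless but not needed, since the same steps work on any joint probability space carrying $G$ in which both runs have their correct marginal laws; the paper simply treats $\Pr[G]$ as a single number without comment.
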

\begin{proof}
    Fix a set $E$. Then
    \begin{align*}
        \Pr[\cA(Z)\in E]
        &\leq \Pr[\cA(Z)\in E\mid G]\Pr[G] + \delta'\\
        &\leq e^\eps\Pr[\cA(Z')\in E\mid G]\Pr[G] + \delta + \delta'\\
        &\leq e^{\eps}\Pr[\cA(Z')\in E] + \delta + \delta'.\qedhere
    \end{align*}
\end{proof}

\begin{definition}[TV distance]
    For distributions $\cD$ and $\cD'$ over a set $S$ define
    \[
    \dtv(\cD,\cD')=\max_{T\subset S}\paren{\abs{\Pr_{x\sim\cD}[x\in T]-\Pr_{x\sim\cD'}[x\in T]}}.
    \]
\end{definition}

\begin{lemma}[Coupling lemma]
    Let $\cD$ and $\cD'$ be distributions over $\Omega$. For any coupling of $\cV$ of $\cD$ and $\cD'$ we have $    \dtv(\cD,\cD')\leq \Pr_{(X,Y)\sim\cV}\brackets{X\neq Y}$.
\end{lemma}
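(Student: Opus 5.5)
The plan is to take the coupling $\cV$ of $\cD$ and $\cD'$ and lower bound the mismatch probability $\Pr_{(X,Y)\sim\cV}[X\neq Y]$ by $\abs{\Pr_{\cD}[T]-\Pr_{\cD'}[T]}$ for every measurable $T\subset\Omega$. Taking the maximum over $T$ then gives the claim.

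Fix $T$ and let $(X,Y)\sim\cV$. Because $\cV$ is a coupling, its marginals are $\cD$ and $\cD'$, so
\[
\Pr_{\cD}[T]-\Pr_{\cD'}[T]=\Pr_{\cV}[X\in T]-\Pr_{\cV}[Y\in T].
\]
Split each event according to whether $X=Y$. On $\{X=Y\}$, the events $\{X\in T\}$ and $\{Y\in T\}$ coincide, so
\[
\Pr[X\in T,\,X=Y]=\Pr[Y\in T,\,X=Y].
\]
These two terms cancel in the difference, which leaves
\[
\Pr_{\cV}[X\in T]-\Pr_{\cV}[Y\in T]=\Pr[X\in T,\,X\neq Y]-\Pr[Y\in T,\,X\neq Y].
\]
The first term on the right is at most $\Pr[X\neq Y]$ and the second is nonnegative, so the difference is at most $\Pr[X\neq Y]$. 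Exchanging the roles of $X$ and $Y$ bounds the negative direction the same way, so $\abs{\Pr_{\cD}[T]-\Pr_{\cD'}[T]}\le\Pr_{\cV}[X\neq Y]$.

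Since $T$ was arbitrary, taking the maximum over $T$ in the definition of $\dtv$ yields $\dtv(\cD,\cD')\le\Pr_{\cV}[X\neq Y]$.

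The only point needing care is measurability of the event $\{X\neq Y\}$. I would assume it, which is automatic for the finite or standard Borel spaces used in this paper. If it were not measurable, the same argument goes through with the outer probability in place of $\Pr[X\neq Y]$. I do not expect any genuine obstacle beyond this.
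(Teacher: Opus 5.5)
The paper states this as a standard fact without proof. Your argument is the standard one and is correct. Splitting $\Pr[X\in T]$ and $\Pr[Y\in T]$ on the event $\{X=Y\}$, cancelling the matching terms, and bounding the remainder by $\Pr[X\neq Y]$ is all that is needed, and measurability is automatic in the paper's discrete setting.

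One refinement to your side remark. If the diagonal were not measurable, you could not split on $\{X=Y\}$, so the fallback is not literally the same argument. Instead write $\Pr[X\in T]-\Pr[Y\in T]=\Pr[X\in T,\,Y\notin T]-\Pr[X\notin T,\,Y\in T]$, which uses only measurable rectangles. Then use $\{X\in T,\,Y\notin T\}\subseteq\{X\neq Y\}$ to get the bound by the outer probability.
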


\begin{lemma}[Weyl's Inequality]\label{lem:weyl}
    Let $A,B\in \R^{d \times d}$ be symmetric matrices with ordered eigenvalues $\lambda_1(\cdot)\le \lambda_2(\cdot)\le \cdots\le \lambda_d(\cdot)$.
    For any $i\in [d]$ we have
        $\abs{\lambda_i(A) - \lambda_i(B)} \le \norm{A - B}{\mathrm{op}}$.
\end{lemma}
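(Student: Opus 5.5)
The plan is to deduce Weyl's inequality from the Courant--Fischer min-max characterization of eigenvalues. With the ordering $\lambda_1(\cdot)\le\cdots\le\lambda_d(\cdot)$ used in the statement, Courant--Fischer says that for a symmetric $M\in\R^{d\times d}$,
\[
\lambda_i(M)=\min_{\substack{V\subseteq\R^d\\ \dim V=i}}\ \max_{\substack{x\in V\\ \|x\|=1}} x^\top M x .
\]
I will take this characterization as a standard fact from linear algebra. The argument then reduces to a pointwise comparison of Rayleigh quotients of $A$ and $B$.

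Set $\Delta=\norm{A-B}{\mathrm{op}}$. For every unit vector $x$,
\[
x^\top A x = x^\top B x + x^\top (A-B) x \le x^\top B x + \Delta,
\]
since $|x^\top (A-B)x|\le \norm{A-B}{\mathrm{op}}$ for unit $x$.

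Now fix any $i$-dimensional subspace $V$. Taking the maximum over unit $x\in V$ on both sides gives
\[
\max_{x\in V,\ \|x\|=1} x^\top A x \le \max_{x\in V,\ \|x\|=1} x^\top B x + \Delta .
\]
Minimizing both sides over all $i$-dimensional subspaces $V$ and applying Courant--Fischer yields $\lambda_i(A)\le\lambda_i(B)+\Delta$.

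Exchanging the roles of $A$ and $B$ gives $\lambda_i(B)\le\lambda_i(A)+\Delta$, because $\norm{B-A}{\mathrm{op}}=\norm{A-B}{\mathrm{op}}$. Together the two inequalities give $\abs{\lambda_i(A)-\lambda_i(B)}\le\norm{A-B}{\mathrm{op}}$.

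There is no real obstacle in this argument. The only care needed is to write the min-max formula consistently with the paper's increasing ordering of eigenvalues, which means taking the minimum over $i$-dimensional subspaces, not $(d-i+1)$-dimensional ones. The rest is monotonicity of $\max$ and $\min$ under a uniform additive shift.
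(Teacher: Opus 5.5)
Your proof is correct. With the increasing ordering, $\lambda_i(M)=\min_{\dim V=i}\max_{x\in V,\|x\|=1}x^\top Mx$, and the uniform bound $|x^\top(A-B)x|\le\norm{A-B}{\mathrm{op}}$ passes through the max and then the min exactly as you describe. Swapping $A$ and $B$ gives the other direction.

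The paper does not prove this lemma; it lists it in the preliminaries as a standard fact. Your Courant--Fischer argument is the usual textbook proof, so there is no different route to compare against.

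The conclusion also covers the ``$i$th largest'' ordering that the paper uses later in the eigenvalue section. This is immediate because the $i$th largest eigenvalue is the $(d-i+1)$th smallest.
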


\fi

\section{Framework for privately evaluating monotone functions}

In this section, we develop our main tools for privately approximating monotone functions.
We first present and analyze \Cref{alg:quantile_finder}, our ``quantile finder.''
This algorithm is not itself differentially private but provides a strong interleaving guarantee when run on monotone functions, as discussed in the introduction.
We then show how to use the output of this algorithm to produce an approximate median (in \Cref{sec:median_of_quantiles}) or approximate mean (in \Cref{sec:average_of_quantiles}).

\subsection{Subsampling quantiles}

Recall that we write $\cS_p(Z)$ to denote the distribution over $S\subset Z$ given by $s_i\gets z_i$ with probability $p$ and $s_i\gets \bot$ with probability $1-p$. For any function $f$ over $\cZ^n_\bot$ let $f(\cS_p(Z))$ denote the distribution given by sampling $S\sim\cS_p(Z)$ and evaluating $f(S)$. 
For a distribution $\cD$ over a totally ordered domain, let  $Q_\cD(v)=\inf_\ell\set{\ell\mid \Pr_{y\sim\cD}[y\leq \ell]\geq v}$ be the $v$-quantile of $\cD$.

\begin{lemma}[Quantile-finder]
    \label{lem:quantile_finder}
    Fix $p\in (0,1/4)$, $\delta\in(0,1)$, and $\tau\in\N$ and let $T=O\paren{e^{4p\tau}p^{-2}\log\frac 1\delta}$.
    There exists an algorithm $\cA$ that gets query access to a monotone function $f$ on $\cZ^n_\bot$, and on input $Z$, samples $S_1,\dots,S_T\sim\cS_p(Z)$ and outputs a list $q(1), \ldots, q(\tau)$ such that
    \begin{itemize}
        \item \textbf{Accuracy:} $\min_{t\in [T]} f(S_t)\leq q(1)\leq q(2)\leq \ldots \leq q(\tau)\leq \max_{t\in [T]} f(S_t)$.
        \item \textbf{Interleaving:} Fix a neighbor $Z'$ of $Z$, and let $q$ and $q'$ denote the output of $\cA$ on input $Z$ and $Z'$ respectively. Then with probability at least $1-\delta$ for all $t\in [\tau-2]$ we have
        \[
        q'(t)\leq q(t+1)\leq q'(t+2).
        \]
    \end{itemize}
    Moreover $\cA$ has query complexity and runtime $O(T)$.
\end{lemma}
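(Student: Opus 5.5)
We define $\cA$ to output empirical quantiles of $f(S_1),\dots,f(S_T)$ at geometrically spaced levels, and we couple the runs on $Z$ and $Z'$ through shared randomness. Let $j$ be the coordinate where $Z$ and $Z'$ differ. Set $v_t = (1-p)^{\tau-t}$ for $t\in[\tau]$; the smallest level is $v_1=(1-p)^{\tau-1}\ge e^{-2p\tau}$, using $p<1/4$. Let $q(t)$ be the empirical $v_t$-quantile of the multiset $\{f(S_i)\}_{i\in[T]}$, meaning the $\lceil v_t T\rceil$-th order statistic. Since this is an increasing sequence of order statistics, the accuracy bullet is immediate. The runtime is $T$ queries plus a sort.

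For interleaving, draw the subsample indicator vectors once and use them for both $Z$ and $Z'$. For each $i$, the subsamples $S_i$ and $S_i'$ agree off coordinate $j$. Let $R_i=(S_i)_{-j}=(S'_i)_{-j}$. Monotonicity gives $f(R_i)\le f(S_i)$ and $f(R_i)\le f(S'_i)$, and equality holds whenever coordinate $j$ is not sampled, which happens with probability $1-p$. This gives the population interleaving from the introduction. For any $\ell$, the event $\{f(S)\le \ell\}$ contains $\{f(R)\le\ell,\ j\notin S\}$. Since $R$ is independent of whether $j\in S$, we get $F_{Z}(\ell)\ge (1-p)F_{R}(\ell)\ge(1-p)F_{Z'}(\ell)$. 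By symmetry, $F_{Z'}(\ell)\ge (1-p)F_Z(\ell)$. Hence $Q_{Z'}(v)\le Q_Z(v/(1-p))$ and $Q_Z(v(1-p))\le Q_{Z'}(v)$. One step of the geometric grid corresponds exactly to multiplying by $(1-p)$.

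The main obstacle is transferring this to empirical quantiles, where sampling error could break the interleaving. The grid gives slack of two index steps, so one factor of $(1-p)$ can be spent on the interleaving itself and the other absorbs estimation error. I would show that, with probability $1-\delta/2$, for every $t$ the empirical CDF of $f(S_i)$ evaluated at the true quantiles $Q_Z(v_t(1-p)^{\pm1/2})$ lies within a multiplicative factor $(1-p)^{\pm 1/2}$ of its mean, and likewise for $Z'$. By a multiplicative Chernoff bound, relative error $\Theta(p)$ at mean level $v\ge e^{-2p\tau}$ needs $T\gtrsim p^{-2}v^{-1}\log(\tau/\delta)$ samples. This is within the budget $T=O(e^{4p\tau}p^{-2}\log\frac1\delta)$, and the extra $e^{2p\tau}$ absorbs the union bound over the $\tau$ levels, since $\log\tau\lesssim e^{2p\tau}$ up to constants when $p\tau\gtrsim$ small, or $\tau$ is small otherwise. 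Half-step levels avoid atoms in the distribution; I would handle atoms by working directly with CDF inequalities at order-statistic thresholds instead of quantile values.

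Given these concentration events, $q(t+1)$ lies between the population quantiles $Q_Z(v_{t+1}(1-p)^{1/2})$ and $Q_Z(v_{t+1}(1-p)^{-1/2})$, and similarly for $q'$. Chaining with the population interleaving gives
\[
q'(t)\le Q_{Z'}\bigl(v_t(1-p)^{-1/2}\bigr)\le Q_Z\bigl(v_{t}(1-p)^{-3/2}\bigr)\le\cdots .
\]
The $\pm1/2$ exponents must be budgeted so that the total shift stays below two grid steps. If $1/2$ is too tight, I would use a grid ratio of $(1-p)^2$ per index, which changes $e^{2p\tau}$ to $e^{4p\tau}$ and still fits the stated $T$. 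An alternative is to use the coupling more directly: $f(S_i)\ne f(S'_i)$ only when $j\in S_i$, which affects about $pT$ samples. In that case I would bound the resulting shift in the order statistics by Chernoff, with no loss from the exact quantile locations.
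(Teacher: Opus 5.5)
Your population step is correct: $F_Z\ge(1-p)F_R\ge(1-p)F_{Z'}$ is a cleaner derivation of the paper's sandwich $(1-p)F_{Z'}\le F_Z\le F_{Z'}/(1-p)$. The coupling is harmless but not needed. There is, however, a genuine gap in the main line, plus a secondary gap in the bound on $T$.

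\textbf{The grid leaves no room for sampling error.} You read the target as allowing two index steps of slack, but $q'(t)\le q(t+1)\le q'(t+2)$ is a shift of \emph{one} index in each direction: $q'(t)\le q(t+1)$ and $q(s)\le q'(s+1)$.

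With $v_{t+1}/v_t=1/(1-p)$, the population comparison alone uses up the entire step. Your own chain shows this. It ends at $Q_Z(v_t(1-p)^{-3/2})=Q_Z(v_{t+1}(1-p)^{-1/2})$. That value is at least your lower bound $Q_Z(v_{t+1}(1-p)^{1/2})$ for $q(t+1)$, so the chain cannot close at $q(t+1)$. It closes only at $q(t+2)$, whose lower bound is $Q_Z(v_{t+2}(1-p)^{1/2})=Q_Z(v_{t+1}(1-p)^{-1/2})$. So what you actually prove is the two-step relation $q'(t)\le q(t+2)$.

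No choice of $T$ repairs this. Let $a=z_j$ occur only at index $j$, let $z'_j\neq a$, and take $f(S)=\Ind[a\in S]$. Then $f(\cS_p(Z'))\equiv 0$ and $f(\cS_p(Z))\sim\mathrm{Ber}(p)$. Hence $q'(\tau)=0$, while $q(\tau-1)$, the $\lceil(1-p)T\rceil$-th order statistic, equals $1$ exactly when $\Bin(T,1-p)<\lceil(1-p)T\rceil$. That event has probability bounded below independently of $T$ (tending to $1/2$), so $q(\tau-1)\le q'(\tau)$ fails. Since $q'$ is deterministic here, no coupling helps, and your closing alternative fails on the same example.

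The estimation slack has to sit inside a single step. The paper uses grid ratio $\frac{1+\gamma}{1-p}$ with $\gamma=p/2$. It shows $\hat F(\ell)\le\frac{1+\gamma}{1-p}\hat F'(\ell)$, and symmetrically, whenever the relevant empirical CDF is at least $\eta(1-\gamma^2)$. Your fallback ratio $(1-p)^{-2}$ would also close the chain, but only because the total shift then fits in \emph{one} step, not because it stays ``below two''.

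\textbf{The bound on $T$ does not follow.} After the grid is corrected, per-level multiplicative Chernoff plus a union bound costs about $p^{-2}v_1^{-1}\log(\tau/\delta)$. The spare factor $e^{\Theta(p\tau)}$ does not absorb $\log\tau$ when $p\tau=O(1)$ and $\tau$ is large, e.g.\ $p=1/\tau$ with constant $\delta$. In that regime you miss the claimed $T=O(e^{4p\tau}p^{-2}\log\frac1\delta)$ by a $\log\tau$ factor.

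The paper avoids union bounds altogether. The DKW inequality controls $\sup_\ell|\hat F(\ell)-F(\ell)|$ uniformly, which handles all levels, atoms, and data-dependent order-statistic thresholds at once. Additive accuracy $\gamma(1-\gamma^2)(1-p)\eta/2$ becomes multiplicative accuracy $1\pm\gamma$ at every level at least $\eta(1-\gamma^2)$. The resulting cost $\eta^{-2}\gamma^{-2}\log\frac1\delta$ is exactly where $e^{4p\tau}$ comes from, via $\frac{1+\gamma}{1-p}\le 1+2p$.

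With DKW, your fallback ratio would give $\eta^{-2}=(1-p)^{-4(\tau-1)}$. That exceeds $e^{4p\tau}$ by a factor $e^{\Theta(p^2\tau)}$, so the finer ratio is needed there as well.
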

\begin{proof}
Let $c>0$ be sufficiently large and let $\gamma$ be a parameter to be determined later. 
    
    \begin{algorithm}[H]
    \ifnum\neurips=0
    \Statex \textbf{Input:} Dataset $Z\in\cZ^n$, query access to monotone $f$, and parameters $p,\tau$ and $\delta$. 
    \Statex \textbf{Output:} Empirical quantiles $(q(1),\dots, q(\tau))$ of $f(\cS_p(Z))$. 
    \fi
	\begin{algorithmic}[1]
		\caption{\label{alg:quantile_finder} quantile-finder $\cA^f(Z)$}
        \ifnum\neurips=1
        \Statex \textbf{Input:} Dataset $Z\in\cZ^n$, query access to monotone $f$, and parameters $p,\tau$ and $\delta$.
        \Statex \textbf{Output:} Empirical quantiles $(q(1),\dots, q(\tau))$ of $f(\cS_p(Z))$. 
        \fi
        \State Set $\eta\gets \paren{\frac{1-p}{1+\gamma}}^\tau$, and $m\gets \frac{c\log(1/\delta)}{(\eta\gamma(1-\gamma^2)(1-p))^2}$.
		\State Draw $m$ samples from $f(\cS_p(Z))$ and let $\wh\cD$ denote the resulting empirical distribution. 
        \State For each $t\in[\tau]$ compute 
        \[
        q(t) = Q_{\wh\cD}\paren{\paren{\frac{1+\gamma}{1-p}}^t\eta}.
        \]
        \State Return $q(1),\ldots, q(\tau)$.
	\end{algorithmic}
    \end{algorithm}

    Let $Z$ and $Z'$ be neighboring datasets, and let $q$ and $q'$ denote the output of $\cA$ on input $Z$ and $Z'$ respectively.

    \begin{claim}[Quantile interleaving]
        \label{claim:quantile_interleaving}
        For all $t\in[\tau-2]$ we have
        \[
        q'(t)\leq q(t+1)\leq q'(t+2)
        \]
        with probability at least $1-\delta$. 
    \end{claim}

    \ifnum\neurips=0

    \begin{proof}
        Let $F$ and $F'$ denote the CDF of $f(\cS_p(Z))$ and $f(\cS_p(Z'))$ respectively. Additionally, let $\wh F$ and $\wh F'$ denote the empirical CDF of $m$ samples from $f(\cS_p(Z))$ and $f(\cS_p(Z'))$. To prove \Cref{claim:quantile_interleaving}, we will prove that $\wh F$ and $\wh F'$ satisfy the following with probability at least $1-\delta$: For all $\ell\in\R$ the following holds: %
        \begin{align}
            \label{eq:empirical_cdf_interleaving-1}
            \wh F'(\ell)\geq \eta(1-\gamma^2) \implies \wh F(\ell)\in\brackets{(1-\gamma)(1-p), \frac{1+\gamma}{1-p}}\cdot \wh F'(\ell) \\
            \label{eq:empirical_cdf_interleaving-2}
            \wh F(\ell)\geq \eta(1-\gamma^2) \implies \wh F'(\ell)\in\brackets{(1-\gamma)(1-p), \frac{1+\gamma}{1-p}}\cdot \wh F(\ell)  
        \end{align}
        To see that \eqref{eq:empirical_cdf_interleaving-1} and \eqref{eq:empirical_cdf_interleaving-2} suffice to prove \Cref{claim:quantile_interleaving}, let $\ell=q(t+1)$ for some $t\in [\tau-1]$. Then by definition of $q(t+1)$ and $\wh F$ we have
        \(
        \eta(1-\gamma^2)\leq \paren{\frac{1+\gamma}{1-p}}^{t+1}\eta\leq \wh F(\ell),
        \)
        which by \eqref{eq:empirical_cdf_interleaving-2} implies that $\wh F'(\ell)\geq \eta(1-\gamma^2)$. But by \eqref{eq:empirical_cdf_interleaving-1}, this implies that 
        \[
        \paren{\frac{1+\gamma}{1-p}}^{t+1}\eta\leq \wh F(\ell)\leq \frac{1+\gamma}{1-p}\cdot \wh F'(\ell).
        \]
        Rearranging terms yields $\wh F'(\ell)\geq \paren{\frac{1+\gamma}{1-p}}^{t}\eta$, and thus $q'(t)\leq \ell = q(t+1)$. An analogous argument suffices to show that $q(t+1)\leq q'(t+2)$. Since \eqref{eq:empirical_cdf_interleaving-1} and \eqref{eq:empirical_cdf_interleaving-2} hold for all $\ell\in\R$ with probability at least $1-\delta$ we have $q'(t)\leq q(t+1)\leq q'(t+2)$ for all $t\in[\tau-2]$ with probability at least $1-\delta$ as well. 

        To complete the proof of \Cref{claim:quantile_interleaving}, it remains to prove \eqref{eq:empirical_cdf_interleaving-1} and \eqref{eq:empirical_cdf_interleaving-2}. First, we show that the following stronger relationship between $F$ and $F'$ holds for all $\ell\in\R$:
        \begin{align}
            \label{eq:cdf_interleaving}
            (1-p)\cdot F'(\ell) \leq F(\ell) \leq \frac{1}{1-p}\cdot F'(\ell).
        \end{align}
        To see why \eqref{eq:cdf_interleaving} holds, suppose $Z$ and $Z'$ differ on $z_j$, and let $G=\set{S\subset Z \mid s_j=\bot \wedge f(S)\leq \ell}$ and let $B=\set{S\subset Z \mid s_j=z_j \wedge f(S)\leq \ell}$. Define $G'$ and $B'$ analogously for dataset $Z'$. Observe that $F(\ell) = \Pr_{S\sim \cS_p(Z)}\brackets{S\in G} + \Pr_{S\sim \cS_p(Z)}\brackets{S\in B}$. We bound $F(\ell)$ by $F'$ using the following two observations: First, since $Z_{-j} = Z'_{-j}$ we have $\Pr_{S\sim \cS_p(Z)}[S\in G] = \Pr_{S'\sim\cS_p(Z')}[S'\in G']$. And second, since $f$ is monotone, each $S\in B$ can be mapped to a unique $S\in G$ by setting $z_j\gets\bot$, and thus $\Pr_{S\sim\cS_p(Z)}[S\in B]\leq \frac{p}{1-p}\Pr_{S\sim \cS_p(Z)}[S\in G]$. Combining these observations yields
        \begin{align*}
        F(\ell)
        \leq \paren{1+\frac{p}{1-p}}\Pr_{S'\sim\cS_p(Z')}[S'\in G']
        &\leq \frac{1}{1-p}\paren{\Pr_{S'\sim\cS_p(Z')}[S'\in G'] + \Pr_{S'\sim\cS_p(Z')}[S'\in B']} \\
        &= \frac{1}{1-p}\cdot F'(\ell). 
        \end{align*}
        An analogous argument with the roles of $Z$ and $Z'$ swapped suffices to complete the proof of \eqref{eq:cdf_interleaving}. 

        Next, we complete the proof of \eqref{eq:empirical_cdf_interleaving-1} by combining \eqref{eq:cdf_interleaving} with the well-known DKW inequality.
        \begin{fact}[DKW inequality \cite{DvoretzkyKW56,Massart90}]
            \label{fact:DKW}
            Fix $\eps>0$ and $m\in\N$. Let $Y_1,\dots,Y_m$ be independent samples from a distribution $\cD$ over $\R$ with CDF $F$, and for all $t\in\R$ let $\wh F(t) = \frac1m\sum_{i=1}^{m} \Ind[Y_i\leq t]$ be the empirical CDF. Then,
            \[
            \Pr_{Y_1,\dots, Y_m}\brackets{\sup_{t\in\R}\abs{\wh F(t)-F(t)}\geq \eps}\leq 2e^{-2m\eps^2}.
            \]
        \end{fact}

        Let $\zeta = \sup_{\ell\in\R}\abs{\wh F(\ell) - F(\ell)}$ and define $\zeta'$ analogously for $\wh F'$ and $F'$. By \Cref{fact:DKW} and our setting of $m$ we see that $\zeta^*\coloneqq\max(\zeta,\zeta')\leq \gamma(1-\gamma^2)(1-p)\eta/2$ with probability at least $1-\delta$. Conditioned on this event, we see that \eqref{eq:cdf_interleaving} implies 
        \[
        (1-p)\cdot\wh F'(\ell)-2\zeta^* \leq \wh F(\ell)\leq \frac{1}{1-p}\paren{\wh F'(\ell) + 2\zeta^*}.
        \] 
        Now, fix $\ell\in\R$ and suppose that $\wh F'(\ell)\geq\eta(1-\gamma^2)$. Since we assumed $\zeta^*\leq \gamma(1-p)(1-\gamma^2)\eta/2$, we have $2\zeta^*\leq \gamma(1-p) \cdot\wh F'(\ell)$ and hence 
        \[
        (1-p)(1-\gamma)\cdot \wh F'(\ell)\leq \wh F(\ell)\leq \frac{1+\gamma}{1-p}\wh F'(\ell).
        \]
        An analogous argument with the roles of $\wh F$ and $\wh F'$ swapped suffices to prove \eqref{eq:empirical_cdf_interleaving-2}, and completes the proof of \Cref{claim:quantile_interleaving}.  
    
        \end{proof}
    
    \else

    We defer the proof of \Cref{claim:quantile_interleaving} to \Cref{sec:quantile_interleaving_proof} and complete the proof of \Cref{lem:quantile_finder} below.
    
    \fi

    The first guarantee of the lemma follows by inspection of the algorithm, and the second guarantee follows from \Cref{claim:quantile_interleaving}. The query complexity and runtime of $\cA$ are easily seen to be $O(m)$, where $m$ is the number of subsamples drawn by $\cA$. Note that if $\gamma= p/2$ and $p\leq 1/4$, then $\frac{1+\gamma}{1-p}\leq 1+2p$ and thus $m=O\paren{e^{4p\tau}p^{-2}\log\frac1\delta}$.
\end{proof}

\subsection{Median-of-quantiles}\label{sec:median_of_quantiles}

In this section, we use \Cref{lem:quantile_finder} to construct a mechanism for privately evaluating a monotone function $f:\cZ^n_\bot\to[\kappa]$. 
We use the exponential mechanism to release the median of the quantiles returned by the algorithm of \Cref{lem:quantile_finder}.
Accuracy follows immediately, but privacy requires a new argument that relies on the interleaving property.

\begin{theorem}[Median-of-quantiles]
    \label{thm:med-of-q}
    Fix $\eps,\delta,\beta\in(0,1)$ and $p\in(0,1/4)$. There exists an $(\eps,\delta)$-DP mechanism $\cM$ that for all monotone functions $f:\cZ^n_\bot\to[\kappa]$ and datasets $Z$ runs in time at most $T=\exp\paren{O\paren{\frac{p\log\kappa/\beta}{\eps}}}p^{-2}\log\frac1\delta$, samples $S_1,\dots, S_T\sim \cS_p(Z)$, and with probability at least $1-\beta$ satisfies
    \[
    \min f(S_i) \leq \cM^f(Z) \leq \max f(S_i).
    \]
\end{theorem}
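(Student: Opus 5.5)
We run the quantile finder of \Cref{lem:quantile_finder} with $\tau = \Theta\paren{\frac{1}{\eps}\log\frac{\kappa}{\beta}}$, which gives $T = O\paren{e^{4p\tau}p^{-2}\log\frac1\delta} = \exp\paren{O\paren{\frac{p\log\kappa/\beta}{\eps}}}p^{-2}\log\frac1\delta$. This produces a sorted list $q(1)\le\dots\le q(\tau)$, all lying in $[\min_t f(S_t),\max_t f(S_t)]$. We then release an approximate median of this list with the exponential mechanism over $\cY=[\kappa]$, using the score
\[
\score(q,y) = -\,\abs{\#\set{t: q(t)<y} - \#\set{t:q(t)>y}},
\]
or, more conveniently, $\score(q,y)=\min\paren{\#\set{t:q(t)\le y},\ \#\set{t:q(t)\ge y}}$.

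\textbf{Accuracy.} Every $y$ with $\score(q,y)>0$ satisfies $q(1)\le y\le q(\tau)$, which places it in the required interval. The true median of the list has score about $\tau/2$. The sensitivity bound below gives $\Delta=O(1)$, so by \Cref{lem:exp-mechanism} the output has score at least $\tau/2 - O\paren{\frac{1}{\eps}(\log\kappa+\log\frac1\beta)}$ with probability $1-\beta$. This is positive once the constant in $\tau$ is large enough, so the output lies in $[q(1),q(\tau)]$.

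\textbf{Privacy.} The list $q$ is not itself private, so we cannot simply compose. Instead we couple the runs on neighbors $Z,Z'$: apply \Cref{fact:conditional_DP} to the event $G$ of \Cref{claim:quantile_interleaving}, on which $q'(t)\le q(t+1)\le q'(t+2)$ for all $t$. On $G$, we show that for every fixed $y$ the counts $\#\set{t:q(t)\le y}$ and $\#\set{t:q'(t)\le y}$ differ by at most $O(1)$, say $2$, plus boundary effects of at most $2$ coming from the ends of the list. The argument uses the interleaving: if $q(t+1)\le y$ then $q'(t)\le y$, and conversely. The same holds for the $\ge$ counts, so the two score functions differ pointwise by at most $\Delta=O(1)$.

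The key step is that the exponential mechanism's privacy proof only uses that $\abs{\score(q,y)-\score(q',y)}\le\Delta$ for each $y$; it does not use adjacency of the inputs. Calibrating the mechanism with this $\Delta$ therefore makes its outputs on $q$ and $q'$ $\eps$-indistinguishable whenever $G$ holds. Since $\Pr[G]\ge1-\delta$, \Cref{fact:conditional_DP} gives $(\eps,\delta)$-DP. The conditioning needs care because the exponential mechanism's randomness is independent of $G$: fix the samples in $G$ and mix over them. Strictly, the two runs' samples are not literally coupled, so the cleaner route is the one \Cref{claim:quantile_interleaving} already implicitly takes. Its bounds hold with high probability for each run separately (DKW for each empirical CDF), and on the intersection of both good events the score-closeness holds deterministically for any realizations. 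Then for any $E$, $\Pr[\cM(Z)\in E]\le \Pr[\cM(Z)\in E, G_Z]+\delta/2$, and we compare against $\cM(Z')$ on $G_{Z'}$, averaging over the independent draws.

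\textbf{Main obstacle.} I expect this averaging to be the hardest part. The interleaving is stated via empirical CDFs of two different random sample sets, so I would first restate \eqref{eq:empirical_cdf_interleaving-1}--\eqref{eq:empirical_cdf_interleaving-2} as a consequence of the two separate DKW events, each with probability $1-\delta/2$. Then for any two good realizations the output distributions are within $e^{\eps}$ pointwise. Integrating over the product of the two good events, and bounding the bad mass by $\delta$, gives the claim.
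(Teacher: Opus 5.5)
Your proposal is correct and follows the paper's proof. The paper likewise runs the quantile finder with $\tau=\Theta(\frac1\eps\log\frac\kappa\beta)$, releases an approximate median of $q(1),\dots,q(\tau)$ with the exponential mechanism, uses the interleaving from \Cref{lem:quantile_finder} to get an $O(1)$ pointwise bound on the score difference, and concludes via \Cref{fact:conditional_DP}. The differences are cosmetic: your min-of-counts score replaces the paper's $\abs{\rank(y:\bar q)-\tau/2}$, and your explicit averaging over the two per-run DKW events (each of probability $1-\delta/2$, with interleaving holding deterministically on their product) spells out the conditioning step that the paper compresses into \Cref{fact:conditional_DP}.
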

\begin{proof}
We construct the median-of-quantiles mechanism below (\Cref{alg:med-of-q}). The mechanism first uses the quantile-finder algorithm given by \Cref{lem:quantile_finder}, and subsequently outputs an approximate median of the resulting quantiles using the exponential mechanism.

    \begin{algorithm}[H]
    \ifnum\neurips=0
    \Statex \textbf{Input:} Dataset $Z\in\cZ^n$, query access to monotone $f$, and parameters $\eps,\delta,\beta,p$.
    \Statex \textbf{Output:} $y\in[\kappa]$.
    \fi
	\begin{algorithmic}[1]
		\caption{\label{alg:med-of-q} median-of-quantiles mechanism $\cM^f(Z)$}
        \ifnum\neurips=1
        \Statex \textbf{Input:} Dataset $Z\in\cZ^n$, query access to monotone $f$, and parameters $\eps,\delta,\beta,p$.
        \Statex \textbf{Output:} $y\in[\kappa]$.
        \fi
        \State Set $\tau\gets\frac{4}{\eps}\log\frac{\kappa}{\beta}$ and let $\Bar{q} = (q(1),\ldots, q(\tau))\gets \cA^f(Z)$.
        \Statex\Comment{$\cA$ is the mechanism given by \Cref{lem:quantile_finder} with parameters $\tau,p$ and $\delta$.}
        \State For each $y\in[\kappa]$ let $\score(y ;  Z) = \abs{\rank(y : \Bar{q}) - \tau/2}$.
        \State Release $y\in[\kappa]$ via the exponential-mechanism with $\score$ as the score.
	\end{algorithmic}
    \end{algorithm}

First, we argue that \Cref{alg:med-of-q} is $(\eps,\delta)$-DP.
Let $EM$ denote the exponential mechanism as in \Cref{alg:med-of-q}, which receives a vector of quantiles and returns some $y\in [\kappa]$.
When we have two vectors of quantiles $\bar q=(q(1),\ldots,q(\tau))$ and $\bar q'=(q'(1),\ldots,q'(\tau))$ that are interleaved---that is, $q'(i)\leq q(i+1)\leq q'(i+2)$ for all $i\in[\tau-2]$, the random variables $EM(\bar q)$ and $EM(\bar q')$ are $\eps$-indistinguishable.
To see this, observe that for any $y\in [\kappa]$ we must have $\abs{\score(y; Z) - \score(y; Z')}\leq 1$, and so the standard privacy argument for the exponential mechanism applies. By \Cref{lem:quantile_finder}, the quantiles $\bar{q}$ and $\bar{q}'$ are interleaved with probability at least $1-\delta$, and thus by \Cref{fact:conditional_DP} mechanism $\cM$ is $(\eps,\delta)$-DP.

    To see why accuracy holds, observe that by the accuracy guarantee of the exponential mechanism we have $\cM^f(Z)\in [q(\tau/4), q(3\tau/4)]$ with probability at least $1-\beta$, which, by the guarantee of \Cref{lem:quantile_finder} is contained in $[\min_i f(S_i), \max_i f(S_i)]$. Last, the final runtime follows from the runtime of $\cA$ and our setting of $\tau$.
\end{proof}

\ifnum\neurips=1

We defer the statement and proof of the average-of-quantiles mechanism, the version of \Cref{thm:med-of-q} that applies real-valued functions, to \Cref{sec:average_of_quantiles}. 

\bibliography{bibliography,biblio}

\appendix

\tableofcontents

\fi

\ifnum\neurips=0

\subsection{Average-of-quantiles}\label{sec:average_of_quantiles}

\else

\section{Average-of-quantiles}\label{sec:average_of_quantiles}

\fi

In this section, we use \Cref{lem:quantile_finder} to construct a mechanism for privately evaluating a monotone function $f:\cZ^n_\bot\to\R$. Unlike \Cref{thm:med-of-q}, the mechanism in this section requires an input parameter $\alpha$ which controls the maximum allowed diameter of the list of quantiles returned by the algorithm of \Cref{lem:quantile_finder}; however, the mechanism in this section works for functions with unbounded range. 
The mechanism first tests that the diameter of the list of quantiles is not too large, and then releases a noisy average of the list of quantiles. 

\begin{theorem}[Average-of-quantiles]
    \label{thm:avg-of-q}
    Fix $\alpha>0$, $\eps,\delta\in(0,1)$ and $p\in(0,1/4)$. There exists an $(\eps,\delta)$-DP mechanism $\cM$ that  for all monotone functions $f:\cZ^n_\bot\to\R$ and, on input $Z$, runs in time  $T=\exp\paren{O\paren{\frac{p\log1/\delta}{\eps}}}p^{-2}\log\frac1\delta$, samples $S_1,\dots S_T\sim \cS_p(Z)$, and satisfies the following: 
    If $\max f(S_i) - \min f(S_i)\leq \alpha$ then 
    \[
    \min f(S_i)-\alpha \leq \cM^f(Z) \leq \max f(S_i)+\alpha.
    \]
\end{theorem}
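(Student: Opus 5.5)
The plan is to run the quantile-finder $\cA$ of \Cref{lem:quantile_finder} with $\tau = \Theta\paren{\frac1\eps\log\frac1\delta}$, so that $T = O\paren{e^{4p\tau}p^{-2}\log\frac1\delta}$ matches the stated runtime. Then we privately test that the spread $\Delta(\bar q) := q(\tau) - q(1)$ is at most about $\alpha$. If the test passes, we release $\mathrm{avg}(\bar q) = \frac1\tau\sum_t q(t)$ plus truncated Laplace noise. Concretely, we compute a stability score $s(\bar q)$ that counts how far $\bar q$ is from having large spread, for instance
\[
s(\bar q) = \max\set{k \le \tau/2 : q(\tau-1-2k) - q(2+2k) \le \alpha}.
\]
We add $\TLap\paren{\frac1\eps,\frac1\eps\ln\frac1\delta}$ to $s(\bar q)$ and abort (output $\bot$) if the noisy score falls below $\frac1\eps\ln\frac1\delta$. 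Otherwise we output $\frac1{\tau'}\sum_{t\in I} q(t)$ over a middle block of indices $I$ (dropping a few extremes), plus $\TLap\paren{\frac{c\alpha}{\tau'\eps},\cdot}$ noise. This noise has magnitude $O\paren{\frac{\alpha}{\tau\eps}\log\frac1\delta} = O(\alpha)$ by the choice of $\tau$.

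\textbf{Privacy.} By \Cref{fact:conditional_DP} it suffices to show $\eps$-indistinguishability (up to $\delta$) on the event that $\bar q,\bar q'$ are interleaved, i.e.\ $q'(t)\le q(t+1)\le q'(t+2)$. Under interleaving, shifting indices by two changes $s$ by at most one, so the noisy test is private. For the average we argue as follows. Interleaving gives $\sum_{t\in I} q'(t) \le \sum_{t\in I} q(t+1) \le \sum_{t\in I} q'(t+2)$. Hence $\sum_{t \in I} q(t) - \sum_{t\in I} q'(t)$ telescopes to boundary terms, namely $O(1)$ differences $q(\cdot)-q(\cdot)$ at the ends of $I$. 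Whenever the test passes with positive probability on either input, these boundary terms are bounded by the spread of the middle block, which is at most $\alpha$. So the sensitivity of the averaged statistic is $O(\alpha/\tau')$, and the truncated Laplace mechanism (\Cref{fact:laplace_mechanism}) together with \Cref{fact:composition} yields $(O(\eps),O(\delta))$-DP. Rescaling constants finishes this part.

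\textbf{Accuracy and the main obstacle.} If $\max f(S_i)-\min f(S_i)\le\alpha$, then by the accuracy property of \Cref{lem:quantile_finder} all $q(t)$ lie in $[\min f(S_i),\max f(S_i)]$. So $s(\bar q)$ is maximal ($\approx\tau/2 \gg \frac2\eps\ln\frac1\delta$), and truncation guarantees the test passes deterministically. The released average lies in $[\min f(S_i),\max f(S_i)]$, and the bounded truncated-Laplace noise keeps the output within $\pm\alpha$. This gives exactly the claimed guarantee.

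The delicate step is the sensitivity argument. The spread condition must be tied to the boundary terms on the neighbor without circularity, because $\bar q'$ might have large spread while $\bar q$ passes. The index-shifted definition of $s$ is meant to handle this: if $s(\bar q)\ge 1$, then interleaving bounds the corresponding window spread of $\bar q'$ by $\alpha$. Choosing $I$ and the margins consistently is where I expect to spend care, possibly replacing the hard test by a propose-test-release style distance-to-instability argument.
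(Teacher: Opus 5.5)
Your overall plan is sound: a propose-test-release step on a score that moves by at most one under interleaving, then a noisy average over a block of quantiles whose sensitivity you control by telescoping. The problem is that the score you wrote down does not do its job. Since $q(1)\le\cdots\le q(\tau)$ and the window $[2+2k,\tau-1-2k]$ shrinks as $k$ grows, the quantity $q(\tau-1-2k)-q(2+2k)$ is nonincreasing in $k$. Once the condition holds, it holds for every larger admissible $k$. So your $\max$ is always the largest admissible $k$, or is undefined if even the few central quantiles straddle a jump. In effect $s(\bar q)$ is a single bit about the central quantiles.

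This breaks privacy in two ways. First, $s$ gives no control of the spread of your block $I$. The test passes even when $I$ straddles an arbitrarily large jump. Interleaved lists can differ by a one-index shift (e.g.\ $q'(t)=q(t+1)$), and then $\sum_{t\in I}q(t)$ changes by at least the size of the jump, so the released average is not private. Second, $s$ is not $1$-sensitive: a shift can carry the central quantiles across a jump and flip $s$ between a value of order $\tau/4$ and undefined. Your accuracy remark that $s$ is maximal when $\max f(S_i)-\min f(S_i)\le\alpha$ is a symptom of this; $s$ is maximal almost always. Separately, with threshold equal to the truncation width $\frac1\varepsilon\ln\frac1\delta$, a possible pass only certifies a score $\ge 0$ on the input at hand. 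That is one short of what you need to certify the neighbor.

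The fix is to reverse the orientation, so that the score counts how far a window can be \emph{widened} while keeping spread at most $\alpha$. For a fixed block $I=[a,b]$, take $s(\bar q)=\max\{k\ge 0: q(b+1+k)-q(a-k)\le\alpha\}$, with $s=-1$ if no $k$ qualifies. Equivalently, use $\min$ in place of $\max$ in your formula and pass when the noisy value is small; this is exactly the paper's $t^*=\min\{t: q(\tau-t)-q(t)\le\alpha\}$.

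With the corrected score the argument runs as follows.
\begin{itemize}
\item Interleaving gives $q(v)-q(u)\le q'(v+1)-q'(u-1)$, hence $|s(\bar q)-s(\bar q')|\le 1$.
\item Put the threshold at least one above the truncation width. Then a possible pass on either input forces $s\ge 0$ on both, so both lists have spread at most $\alpha$ on $[a,b+1]$.
\item Your telescoping then gives $\bigl|\sum_{t\in I}q(t)-\sum_{t\in I}q'(t)\bigr|\le\max\{q(b+1)-q(a),\,q'(b+1)-q'(a)\}\le\alpha$.
\end{itemize}

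After this correction your route works. It differs from the paper's only in averaging a fixed block. The paper averages the data-dependent window $q(t^*+1),\dots,q(t^*+\tau/4)$. It therefore has to carry the shift $|t^*(Z)-t^*(Z')|\le 1$ through the comparison $g'(i)\le g(i+2)$, and it ends with sensitivity $16\alpha/\tau$. A fixed block gives sensitivity $\alpha/|I|$ by one-line telescoping. The truncated-Laplace test, composition, conditioning on the interleaving event, and accuracy via truncation are the same in both. For accuracy, choose the margins $a-1$ and $\tau-b-1$ larger than threshold plus truncation width, which $\tau=\Theta(\frac1\varepsilon\ln\frac1\delta)$ allows.
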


\begin{proof}
    The proof is similar to that of \Cref{thm:med-of-q}, except instead of running the exponential mechanism, the mechanism first tests if there is a large ``core'' of quantiles within an interval of width $\alpha$, and if there is, the mechanim releases the average of these quantiles plus truncated Laplace noise. 

    \begin{algorithm}[H]
    \ifnum\neurips=0
    \Statex \textbf{Input:} Dataset $Z\in\cZ^n$, query access to monotone $f$, and parameters $\eps,\delta,\alpha,p$. 
    \Statex \textbf{Output:} $y\in\R\cup\set{\bot}$
    \fi
	\begin{algorithmic}[1]
		\caption{\label{alg:avg-of-q} average-of-quantiles mechanism $\cM^f(Z)$}
        \ifnum\neurips=1
        \Statex \textbf{Input:} Dataset $Z\in\cZ^n$, query access to monotone $f$, and parameters $\eps,\delta,\alpha,p$. 
        \Statex \textbf{Output:} $y\in\R\cup\set{\bot}$
        \fi
        \State Set $c=16$, $\eps\gets\eps/2$, $\delta\gets\delta/3$ and $\tau\gets\frac{c}{\eps}\log\frac{1}{\delta}$. 
        \State Let $(q(1),\ldots, q(\tau))\gets \cA^f(Z)$.
        \Statex\Comment{$\cA$ is the mechanism given by \Cref{lem:quantile_finder} with parameters $\tau, p$, and $\delta$.}
        \State Let $t^* = \min\set{t\in[\tau/2] \mid q(\tau-t) - q(t)\leq \alpha}$
        \State If $t^* + \TLap\paren{\frac 1\eps, \frac {\tau}{8}}\leq \frac \tau4 - 1$ then set
        \[
        y\gets \frac{4}{\tau}\sum_{i\in[\tau/4]} q(t^* + i) \hspace{1cm} 
        \]
        and release $\wh y\sim y+\TLap\paren{\frac{16\alpha}{\tau\eps},\frac{16\alpha\log(1/\delta)}{\tau\eps}}$.
        \State Else release $\bot$.
	\end{algorithmic}
    \end{algorithm}

    First, we argue that $\cM$ is $(\eps,\delta)$-DP. Recall that \Cref{lem:quantile_finder} states that for any two neighboring datasets $Z$ and $Z'$ and all $t\in[\tau-2]$ we have
    \[
    q'(t)\leq q(t+1)\leq q'(t+2)
    \]
    with probability at least $1-\delta$. Let $E$ denote the event that the above interleaving holds. Let $\bar q$ and $\bar q'$ denote a fixed list quantiles output by $\cA^f(Z)$ and $\cA^f(Z')$ respectively. We first prove that for fixed $\bar{q}$ and $\bar{q}'$, \Cref{alg:avg-of-q} is DP conditioned on the event $E$, and then we apply \Cref{fact:conditional_DP} and the fact that $\Pr[\bar{q},\bar{q}'\in E]\geq 1-\delta$ to complete the proof. In the remaining parts of the proof, fix $\bar{q}$ and $\bar{q}'$, and let $t^*(Z),y(Z)$ and $t^*(Z'),y(Z')$ denote the values of $t^*$ and $y$ when the inputs are $Z$ and $Z'$ and the quantiles are fixed to $\bar{q}$ and $\bar{q}'$, respectively. 

    The crux of the proof of privacy is the following sensitivity bound.

    \begin{claim}[Stability of the average of quantiles]
    \label{claim:avg_stability}
    Assume that for all $t\in[\tau-2]$ we have
    \(
    q'(t)\leq q(t+1)\leq q'(t+2),
    \)
    and $t^*(Z),t^*(Z') \leq \frac{3\tau}{8}$. Then $|y(Z)-y(Z')|\leq \frac {16\alpha}{\tau}$.
    \end{claim}
    
    \begin{proof}
        First, for all $i\in[\tau/4]$ let $g(i)=q(t^*(Z) + i))$, and define $g'$ analogously for $Z'$. Since the interleaving holds we have $|t^*(Z)-t^*(Z')|\in\zo$, and thus for all $i\in[\tau/4]$
        \[
        g'(i)=q'(t^*(Z')+i)\leq q'(t^*(Z) + i+1)\leq q(t^*(Z) + i+2) = g(i+2),
        \]
        where the second inequality follows by the hypothesis on the quantiles. By the same argument, we also have that $g(i)\leq g'(i+2)$. By definition of $g$, and because $t^*(Z)\leq 3\tau/8$, we have $g(i)
        \in [q(t^*(Z)+1), q(t^*(Z) + \tau/4)]$ for all $i\in[\tau/4]$ (the same holds for $g'$), and by our assumption on $t^*$, the width of this interval is at most $\alpha$. Taken together, this implies that $g'(1)\leq g(3)\leq g'(5)\leq g(\tau/4)$, and thus, that $g(\tau/4) - g'(1)\leq 2\alpha$ (since $g'(5)-g'(1)\leq \alpha$ and $g(\tau/4)-g(3)\leq \alpha$). By the same argument, $g'(\tau/4)-g(1)\leq 2\alpha$.

        Next, we express $y(Z)$ and $y(Z')$ in terms of $g$ and $g'$, and using the above inequalities argue that the difference is small.
        \begin{align*}
            y(Z) - y(Z') &= \frac{4}{\tau} \paren{\sum_{i\in[\tau/4]}g(i)-g'(i)} \\
                &= \frac{4}{\tau} \paren{\sum_{i\in[\tau/4-2]}g(i)-g'(i+2)} + \frac 4 \tau \paren{g(\tau/4-1) + g(\tau/4) - g'(1) - g'(2)}.
        \end{align*}
        Since $g(i) \le g'(i+2)$, the first term on the right-hand side is at most zero.
        Continuing, we have
        \begin{align*}
            y(Z) - y(Z') &\le \frac 4 \tau \paren{g(\tau/4-1) + g(\tau/4) - g'(1) - g'(2)} \\
                &\le \frac 8 \tau\paren{g(\tau/4) -  g'(1)} \\
                &\le \frac{16\alpha }{\tau}.
        \end{align*}
        A symmetrical argument establishes an upper bound on $y(Z')-y(Z)$, so we are done.
    \end{proof}

    To complete the proof of privacy, let $G=\set{(Z,Z') \mid \abs{y(Z) - y(Z')}\leq \frac{16\alpha}{\tau}}$ and condition on the event $E$. We consider two cases: First, suppose $(Z,Z')\not\in G$, then by \Cref{claim:avg_stability}, at least one of $t^*(Z)$ or $t^*(Z')$ is at least $\frac{3\tau}{8}$. Since we are conditioned on $E$, we have that both are at least $\frac{3\tau}{8} -1$, and thus $\cM^f(Z)$ and $\cM^f(Z')$ both output $\bot$. Next, suppose $(Z,Z')\in G$. Observe that by \Cref{claim:avg_stability} we have $\abs{y(Z)-y(Z')}\leq \frac{16\alpha}{\tau}$ and thus 
    \[
    y(Z)+\TLap\paren{\frac{16\alpha}{\eps\tau},\frac{16\alpha\log(1/\delta)}{\eps\tau}}\approx_{\eps,\delta} y(Z')+\TLap\paren{\frac{16\alpha}{\eps\tau},\frac{16\alpha\log(1/\delta)}{\eps\tau}}.
    \]
    Similarly, since we are conditioned on $E$, we have $\abs{t^*(Z)-t^*(Z')}\leq 1$. Therefore,
    \[
    t^*(Z) + \TLap\paren{\frac1\eps,\frac \tau8}\approx_{\eps,\delta} t^*(Z') + \TLap\paren{\frac 1\eps, \frac\tau8}.
    \]
    By composition (\Cref{fact:composition}), we have that $\cM^f(Z)|_E\approx_{2\eps,2\delta} \cM^f(Z')|_E$. Since $E$ occurs with probability at least $1-\delta$ we can apply \Cref{fact:conditional_DP} to see that $\cM^f(Z)\approx_{2\eps,3\delta} \cM^f(Z')$. This completes the proof of privacy since we rescaled $\eps$ and $\delta$ at the start of \Cref{alg:avg-of-q}.

    Runtime follows by \Cref{lem:quantile_finder} and our setting of $\tau$. To see why the accuracy guarantee holds, observe that if $\max f(S_i) - \min f(S_i)\leq \alpha$ then $\cM^f(Z)$ releases
    \[
    \wh y\sim y(Z)+\TLap\paren{\frac{16\alpha}{\eps\tau},\frac{16\alpha\log(1/\delta)}{\eps\tau}}.
    \] 
    Since $\tau=\frac{c\log(1/\delta)}{\eps}$ and $c=16$, the noise is at most $\alpha$, and hence $\abs{\wh y - y(Z)}\leq \alpha$. By the definition of $y(Z)$, this implies that $\min f(S_i)-\alpha\leq \wh y\leq \max f(S_i)+\alpha$.
\end{proof}

\ifnum\neurips=1

\section{Proof of \texorpdfstring{\Cref{claim:quantile_interleaving}}{quantile interleaving}}
\label{sec:quantile_interleaving_proof}

In this section, we complete the proof of the quantile interleaving claim.

\fi

\section{Application to eigenvalue estimation}\label{sec:eigenvalue_estimation}

In this section, we explain how to apply our techniques to estimating the $\ord{i}$ largest eigenvalue $\lambda_{i}(\Sigma)$ of the covariance $\Sigma$ of a subgaussian distribution for any $i\in[d]$. Our main result provides a tradeoff between sample and time complexity for estimating a single eigenvalue. 

\begin{theorem}[Private eigenvalue estimation]
    \label{thm:eigenvalue_est}
    There exists a mechanism $\cM$ that, for all privacy parameters $\eps,\delta\in(0,1)$, failure probability $\beta>0$, accuracy parameter $\alpha\in(0,1/4)$, and subsampling probability $p\in(0,\frac 14)$, satisfies $(\eps,\delta)$-DP. Moreover, for all subgaussian distributions $\cD$ over $\R^d$ with covariance $\Sigma \succ 0$, there exists a constant $K_\cD$, such that for all $i\in[d]$, the following holds: 
    
    If
    \[
    n=\Omega\paren{\frac{K_\cD}{\alpha^2}\paren{\frac{d + \log(1/\beta)}{p} + \frac{\log(1/\delta)}{\eps} + \frac{\log(p^{-1}\log 1/\delta)}{p}}},
    \]
    then
    \[
    \Pr_{Z\sim\cD^n}\brackets{1-\alpha\leq \frac{\cM(Z,i)}{ \lambda_{i}(\Sigma)} \leq 1+\alpha}\geq 1-\beta.
    \]
    Moreover, $\cM$ has runs in time $\exp\paren{O\paren{\frac{p\log1/\delta}{\eps}}}\poly\paren{\frac{\log1/\delta}{p}}$.
\end{theorem}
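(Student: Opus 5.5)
We will instantiate the average-of-quantiles mechanism (\Cref{thm:avg-of-q}) with a monotone statistic whose value concentrates around $\log\lambda_i(\Sigma)$. The natural candidate is
\[
f(S) = \lambda_i\Bigl(\textstyle\sum_{j: s_j\neq\bot} z_j z_j^\top\Bigr),
\]
the $i$-th eigenvalue of the unnormalized scatter matrix of the subsample. It is monotone: adding a point adds a PSD rank-one matrix, and by Weyl/Courant--Fischer every eigenvalue weakly increases. Since \Cref{thm:avg-of-q} gives additive error and we want multiplicative error, we apply it to $g = \log f$ (with $\log 0 = -\infty$, or clipped), which is still monotone. We run \Cref{alg:avg-of-q} with accuracy parameter $\alpha/4$ on $g$ and output $\exp(\wh y)/(pn)$. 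Privacy is then immediate from \Cref{thm:avg-of-q}, since the rescaling by the public quantity $pn$ is post-processing. The runtime is $T\cdot\poly(n,d)$ with $T=\exp(O(p\log(1/\delta)/\eps))p^{-2}\log(1/\delta)$, which matches the statement up to the $\poly(n,d)$ cost of eigen-decomposition.

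For accuracy, it suffices by \Cref{thm:avg-of-q} to show that with probability $1-\beta$ over $Z$ and the subsampling, every $S_t$, $t\in[T]$, satisfies
\[
\Bigl|\log\frac{f(S_t)}{pn\,\lambda_i(\Sigma)}\Bigr|\le \alpha/4 .
\]
If this holds, the diameter condition is met and the output is within $\alpha/2$ in log scale, giving the ratio in $[1-\alpha,1+\alpha]$ for $\alpha<1/4$.

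I would prove this concentration in two steps. First, $|S_t|\in[(1\pm\alpha/8)pn]$ for all $t$ by a Chernoff bound plus a union bound over $T$. This needs $pn\gtrsim \alpha^{-2}\log(T/\beta)$, which is where the $\frac{\log(1/\delta)}{\eps}$ term (from $\log T \approx p\log(1/\delta)/\eps$, divided by $p$) and the $\frac{\log(p^{-1}\log 1/\delta)}{p}$ term come from. Second, conditioned on $|S_t|=k$, the points in $S_t$ are $k$ i.i.d.\ draws from $\cD$. Standard subgaussian covariance concentration (Vershynin) gives
\[
\Bigl\|\Sigma^{-1/2}\Bigl(\tfrac1k\textstyle\sum z_jz_j^\top\Bigr)\Sigma^{-1/2}-I\Bigr\|_{op}\le\alpha/8
\]
with probability $1-\beta'$ once $k\gtrsim K_\cD\alpha^{-2}(d+\log 1/\beta')$. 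This relative bound yields $\hat\Sigma\in[(1\pm\alpha/8)]\Sigma$ in Loewner order, hence a multiplicative bound on $\lambda_i$ via Courant--Fischer. This multiplicative form is needed rather than plain Weyl, to avoid depending on the condition number.

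The main obstacle is the union bound over all $T$ subsamples. Naively we would need $\beta'=\beta/T$, which adds $\log T/p\approx \log(1/\delta)/\eps$ divided by $\alpha^2$ — acceptable, since that term appears in the bound without a $1/p$ factor. I would first try exactly this union bound, checking that $\frac{1}{p}\log T=\frac{O(\log 1/\delta)}{\eps}+\frac{O(\log(p^{-1}\log1/\delta))}{p}$ reproduces the stated three terms. If the $d/p$ term instead seemed to require $d\log T$, I would avoid that by using the $(d+\log(1/\beta'))$ form of the tail bound, so that $d$ and $\log T$ add rather than multiply.
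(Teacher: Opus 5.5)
Your proposal is correct and is essentially the paper's proof. Both run average-of-quantiles on the log of the monotone statistic $\lambda_i\bigl(\sum_j z_jz_j^\top\bigr)$ (the paper normalizes by $np$ inside the query, you divide by the public quantity $pn$ afterwards, which amounts to the same thing), control $|S_t|$ with a Chernoff bound and each subsample's empirical covariance with subgaussian concentration, and union bound over the $T$ subsamples, which yields exactly the three terms in the sample complexity. Your Loewner-order step is a more careful justification of the multiplicative eigenvalue bound than the paper's appeal to Weyl.

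One constant needs fixing. If each $\log\bigl(f(S_t)/(pn\lambda_i(\Sigma))\bigr)$ is only guaranteed to lie within $\alpha/4$ of $0$, the diameter can reach $\alpha/2$. That exceeds the accuracy parameter $\alpha/4$ you pass to \Cref{alg:avg-of-q}, so the diameter condition of \Cref{thm:avg-of-q} is not guaranteed. Either require per-subsample error $\alpha/8$ or run the mechanism with parameter $\alpha/2$; both choices change only constants.
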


The proof of \Cref{thm:eigenvalue_est} leverages the following standard concentration result regarding the convergence of the empirical covariance for subgaussian distributions. For a dataset $Z$ of size $n$ let $\wh\Sigma(Z)=\frac 1n\sum_{i\in[n]} Z_iZ^T_i$.

\begin{theorem}[\cite{vershynin2018high}]
    \label{thm:cov_concentration}
    Let $\cD$ be a subgaussian distribution over $\R^d$ with covariance matrix $\Sigma\succ 0$. Then there exists a constant $K_\cD$ such that for all $n\in\N$, $\beta>0$, and $Z\sim\cD^n$ 
    \[
    \norm{\Sigma^{-1/2}\wh\Sigma(Z)\Sigma^{-1/2} - \mathbb I}{op} \leq K_\cD\paren{\sqrt{\frac{d+\log(2/\beta)}{n}} + \frac{d+\log(2/\beta)}{n}}
    \]
    with probability at least $1-\beta$.
\end{theorem}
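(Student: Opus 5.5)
The plan is the standard $\eps$-net argument from \cite{vershynin2018high}. First reduce to the isotropic case. Set $X_i = \Sigma^{-1/2} Z_i$; this is well defined since $\Sigma\succ 0$. Then the $X_i$ are i.i.d.\ with $\Ex[X_iX_i^\top]=\mathbb I$, and
\[
\Sigma^{-1/2}\wh\Sigma(Z)\Sigma^{-1/2} = \frac1n\sum_{i\in[n]} X_iX_i^\top .
\]
Because $\cD$ is subgaussian, each $X_i$ is a subgaussian vector. Let $K$ be its subgaussian norm, i.e.\ $\sup_{\|v\|=1}\|\langle X_i,v\rangle\|_{\psi_2}\le K$; this is a constant depending only on $\cD$. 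Write $A = \frac1n\sum_i X_iX_i^\top - \mathbb I$, which is symmetric. It then suffices to show that $\|A\|_{op}\le CK^2\max(\rho,\rho^2)$ with probability at least $1-\beta$, where $\rho = \sqrt{(d+\log(2/\beta))/n}$. Since $\max(\rho,\rho^2)\le \rho+\rho^2$, we can then take $K_\cD = CK^2$ for an absolute constant $C$.

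The argument has three steps. \textbf{Step 1 (discretization).} Fix a $1/4$-net $\cN$ of the unit sphere $S^{d-1}$ with $|\cN|\le 9^d$. For symmetric $A$ the standard bound is $\|A\|_{op}\le 2\max_{v\in\cN}|v^\top A v|$. To see it, take $u$ attaining $\|A\|_{op} = |u^\top A u|$ and a net point $v$ with $\|u-v\|\le 1/4$; then $|u^\top Au - v^\top Av|\le 2\|u-v\|\|A\|_{op}\le \|A\|_{op}/2$. \textbf{Step 2 (fixed direction).} For fixed unit $v$, the quantity $v^\top A v = \frac1n\sum_i(\langle X_i,v\rangle^2-1)$ is an average of i.i.d.\ mean-zero random variables. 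Each is subexponential with $\psi_1$-norm $O(K^2)$, because the square of a subgaussian variable is subexponential and centering costs only a constant factor. Bernstein's inequality then gives
\[
\Pr\brackets{|v^\top Av|\ge t/2}\le 2\exp\paren{-cn\min\paren{t^2/K^4,\, t/K^2}}.
\]
\textbf{Step 3 (choice of $t$ and union bound).} Set $t=K^2\max(\delta,\delta^2)$ with $\delta = C\rho$. Then $\min(t^2/K^4,t/K^2)=\delta^2$, so the exponent becomes $-cC^2(d+\log(2/\beta))$. A union bound over $\cN$ gives total failure probability at most $9^d\cdot 2e^{-cC^2(d+\log(2/\beta))}\le\beta$ once $C$ is large enough. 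On the complementary event, Step 1 yields $\|A\|_{op}\le t\le CK^2(\rho+\rho^2)$.

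The step I expect to be the main obstacle is Step 2, specifically tracking the subexponential norm of $\langle X_i,v\rangle^2-1$ uniformly in $v$ so that the constant depends only on $\cD$ and not on $d$. This is where the definition of ``subgaussian distribution'' matters. I would adopt the convention that the whitened vector has bounded $\psi_2$-norm in every direction, which is exactly what makes $K_\cD$ independent of $d$ and $n$. The rest of the argument is routine bookkeeping of constants.
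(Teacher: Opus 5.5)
Your proposal is correct and is the standard argument behind the cited result in \cite{vershynin2018high}: whitening, a $1/4$-net of size $9^d$, Bernstein's inequality for the subexponential variables $\langle X_i,v\rangle^2-1$, and a union bound. The paper gives no proof of its own and simply invokes that reference. One thing you leave implicit is that the whitening step uses $\Ex[ZZ^\top]=\Sigma$, i.e.\ mean-zero data; the statement tacitly requires this anyway, since $\wh\Sigma$ is uncentered and the bound would fail otherwise.
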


\begin{remark}
    \label{rem:weyl}
    By Weyl's inequality (\cref{lem:weyl}), \Cref{thm:cov_concentration} implies that the eigenvalues of the empirical covariance and the true covariance are close on a multiplicative scale---that is, 
    \[
    \abs{\frac{\lambda_{i}(\wh\Sigma(Z))}{\lambda_{i}(\Sigma)} - 1}\leq \norm{\Sigma^{-1/2}\wh\Sigma(Z)\Sigma^{-1/2} - \mathbb I}{op}
    \]
    for all $i\in[d]$. 
\end{remark}

\begin{proof}[Proof of \Cref{thm:eigenvalue_est}]
    The proof proceeds by applying the average-of-quantiles mechanism (\Cref{thm:avg-of-q}) to estimate $\lambda_{i}$ on a $\log$ scale. The additive error guarantee of \Cref{thm:avg-of-q}, combined with the concentration guarantee of \Cref{thm:cov_concentration}, provides a multiplicative estimate of $\lambda_{i}$. Note that for all $m\in\N$, the function $h_m(Z,i)=\frac{1}{m}\cdot \lambda_{i}\paren{\sum_{j\in[n]} Z_jZ_j^T}$, the $\ord{i}$ largest eigenvalue, is monotone (this follows from the min-max characterization of the eigenvalues). First, we argue that with high probability over $Z$ and $S\sim\cS_p(Z)$ we have $h_{np}(S,i)/ \lambda_{i}(\Sigma)\approx 1$. Second, we apply the average-of-quantiles mechanism from \Cref{thm:avg-of-q} to privately estimate $\log (h_{np}(S))$ to within some appropriate additive error, which translates to a multiplicative estimate of $\lambda_{i}(\Sigma)$. 

    \begin{claim}
        \label{claim:approx_eigenvalue}
        For all $\alpha\in(0,1)$ and
        \[
        n=\Omega\paren{K_\cD\paren{\frac{d + \log(1/\beta)}{\alpha^2p}}},
        \]
        if $Z\sim\cD^n$ and $S\sim\cS_p(Z)$, then with probability at least $1-\beta$
        \[
        1-\alpha\leq \frac{h_{np}(S,i)}{\lambda_{i}\paren{\Sigma}}\leq 1 \pm \alpha.
        \]
    \end{claim}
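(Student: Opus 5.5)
The plan is to reduce the claim to \Cref{thm:cov_concentration} applied to the subsample, conditioning on the subsample size. Write $S\sim\cS_p(Z)$ and let $k=|S|$. Conditional on $k$, the non-$\bot$ entries of $S$ are $k$ i.i.d.\ draws from $\cD$, because each $z_j$ is kept independently of its value. Hence $\frac{1}{k}\sum_{s_j\neq\bot}s_js_j^T=\wh\Sigma(S)$ has the law of an empirical covariance of $k$ samples. By definition,
\[
h_{np}(S,i)=\frac{k}{np}\,\lambda_i\paren{\wh\Sigma(S)},
\]
so the ratio $h_{np}(S,i)/\lambda_i(\Sigma)$ factors into two terms: $\frac{k}{np}$, and $\lambda_i(\wh\Sigma(S))/\lambda_i(\Sigma)$.

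\textbf{Step 1 (subsample size).} Since $k\sim\mathrm{Bin}(n,p)$, a multiplicative Chernoff bound gives $\abs{k/(np)-1}\leq \alpha/3$ with probability at least $1-\beta/2$, provided $np\gtrsim \log(1/\beta)/\alpha^2$. This is implied by the hypothesis on $n$, using $K_\cD\gtrsim 1$, which we may assume without loss of generality. On this event we also have $k\geq np/2$.

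\textbf{Step 2 (eigenvalue ratio).} Condition on any such $k$. By \Cref{thm:cov_concentration} and \Cref{rem:weyl} with failure probability $\beta/2$,
\[
\abs{\frac{\lambda_i(\wh\Sigma(S))}{\lambda_i(\Sigma)}-1}\leq K_\cD\paren{\sqrt{\frac{d+\log(4/\beta)}{k}}+\frac{d+\log(4/\beta)}{k}}.
\]
Using $k\geq np/2$ and $n\geq C K_\cD(d+\log(1/\beta))/(\alpha^2p)$ with $C$ a large constant, the square-root term is at most $\alpha/(3\sqrt{C'})$ times a constant. The linear term is smaller still, since it is the square of something less than $1$. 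Here we may take $K_\cD\geq1$, so that $K_\cD^2$ is absorbed in the regime of interest; otherwise we use $K_\cD^2$ in place of $K_\cD$, which only changes the constant. This gives a ratio within $1\pm\alpha/3$.

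\textbf{Step 3 (combine).} Multiply the two factors. On the intersection of the events, which has probability at least $1-\beta$ by a union bound, the ratio lies in $[(1-\alpha/3)^2,(1+\alpha/3)^2]\subseteq[1-\alpha,1+\alpha]$ for $\alpha<1$.

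The main point of care is Step 2. We must justify that the subsample conditioned on its size is an i.i.d.\ sample, so that \Cref{thm:cov_concentration} applies with $k$ random, and we must track the dependence on $K_\cD$ correctly. Independence of the keep-coins from the data handles the first issue. The second is handled by absorbing constants into the $\Omega(\cdot)$, either by assuming $K_\cD\geq1$ or by redefining $K_\cD$.
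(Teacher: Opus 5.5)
Your proof is correct and follows essentially the same route as the paper: a Chernoff bound on $|S|\sim\mathrm{Bin}(n,p)$ (which also gives $|S|\geq np/2$), then \Cref{thm:cov_concentration} with \Cref{rem:weyl} applied to the subsample conditioned on its size, and a union bound. Your factorization $h_{np}(S,i)=\frac{|S|}{np}\lambda_i(\wh\Sigma(S))$ is just the multiplicative form of the paper's $|h_{np}(S,i)-h_m(S,i)|=h_m(S,i)\frac{|m-np|}{np}$.

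One correction to your constant bookkeeping. Assuming $K_\cD\geq 1$ does not let $K_\cD$ absorb $K_\cD^2$; for $K_\cD\geq 1$ the inequality goes the other way. The valid fix is the second one you mention: replace $K_\cD$ by the distribution-dependent constant $\max(1,K_\cD)^2$. The paper's proof glosses over this same point.
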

    \begin{proof}
        We prove this in two steps. 
        First, we show that the empirical estimate $h_m(S,i)$ (with $m$ random) is close to the true quantity 
        $\lambda_i(\Sigma)$.
        Then, we show that $m$ concentrates around its expectation, so $h_m(S,i)$ is close to $h_{np}(S,i)$.
        
        Let $m=|S|$. Then $m\sim\Bin(n,p)$ and thus, by a Chernoff bound we have $\abs{m-pn}\leq \sqrt{cpn\log1/\beta}$ with probability at least $1-\beta/2$. It follows that $\abs{h_{np}(S,i) - h_{m}(S,i)}=h_{m}(S,i)\abs{\frac{m-np}{np}}$ and that
        \[
        \abs{h_{np}(S,i) - h_{m}(S,i)}\leq h_{m}(S,i)\sqrt{\frac{c\log1/\beta}{np}}
        \]
        with probability at least $1-\beta/2$. Additionally, since $\sqrt{cpn\log1/\beta}\leq pn/2$, we have that $m\geq pn/2$ with probability at least $1-\beta/2$, and conditioned on this event, \Cref{thm:cov_concentration}, \Cref{rem:weyl}, and our setting of $n$ imply that $\frac{h_{m}(S,i)}{\lambda_{i}(\Sigma)}\in [1\pm \frac{\alpha}{4}]$ with probability at least $1-\beta/2$. By the union bound, we obtain that $\frac{ h_{np}(S,i)}{\lambda_{i}(\Sigma)}\in [1\pm \alpha]$ with probability at least $1-\beta$.  
    \end{proof}

    Let $\cA$ denote the average-of-quantiles mechanism (\Cref{thm:avg-of-q}) with accuracy parameter $\alpha$, privacy parameters $\eps$ and $\delta$, subsampling probability $p$, and query access to $\log (h_{np})$. We argue that $\cA(Z)$ outputs a $y$ such that $|y-\log h_{np}(S)|\leq \alpha$ for some subsample $S$ drawn by $\cA$, and thus, by \Cref{claim:approx_eigenvalue} that $|y - \log\lambda_{i}(\Sigma)|\leq 2\alpha$, and by extension, that $e^{-\alpha}\leq \frac{e^y}{\lambda_{i}(\Sigma)}\leq e^{\alpha}$. By the accuracy guarantee of $\cA$, it suffices to argue that $\abs{\log (h_{np}(S,i)) - \log(\lambda_{i}(\Sigma))}\leq \alpha/2$ for every subsample $S$ drawn by the algorithm. Let $S_1,\dots, S_T$ denote the $T=\exp\paren{O\paren{\frac{p\log1/\delta}{\eps}}}\poly\paren{\frac{\log1/\delta}{p}}$ subsamples drawn by $\cA$, and let $B_j$ denote the event that $\abs{\log \frac{h_{np}(S_j,i)}{\lambda_{i}(\Sigma)}}> \alpha/2$. By \Cref{claim:approx_eigenvalue} and our setting of $n$, we have
    $\Pr\brackets{\bigcup_{j\in[T]} B_j}\leq \beta$. Now, conditioned on $\bigcap \overline{B_j}$, by the accuracy guarantee of $\cA$, we see that $\cA(Z)$ outputs $y$ such that $\abs{y-\log \lambda_{i}(\Sigma)}\leq 2\alpha$. By our bound on $\Pr[\bigcap \overline B_j]$, we have $e^{-\alpha}\leq \frac {e^y}{\lambda_{i}(\Sigma)}\leq e^{\alpha}$ with probability at least $1-\beta$. Letting $\cM$ be the mechanism that sets $y\gets \cA(Z)$ and returns $e^{y}$, completes the proof.     
\end{proof}

\section{Applications to M-estimation}\label{sec:m-estimation}

In this section, we will turn the tools we have developed to \emph{M-estimation}, that is, solving problems of the form
\begin{equation}
    \min_\theta \frac 1 n \sum_{i=1}^n \ell_\theta(z_i)
    \label{eq:m_est_def}
\end{equation}
for some loss function $\ell$.
We will present applications to univariate estimation and testing problems based on~\eqref{eq:m_est_def}, including estimating a single coordinate of the minimizer and estimating the minimum achievable loss.

\Cref{sec:m_est_setup} will further develop the learning setting we address and present the assumptions we use.
We then state our results: \Cref{sec:loss_estimation} contains results for approximating the best-possible population loss and \Cref{sec:parameter_estimation} contains results for parameter estimation.

\subsection{Setup: Exponential Families and Concentration}
\label{sec:m_est_setup}

In the context of M-estimation, we associate a \emph{learning task} with a triple $(\Theta,\ell,\cD)$.
Here $\Theta$ is a set of parameters, $\ell = \{\ell_\theta : \theta\in \Theta\}$ a set of loss functions, and $\cD$ a data-generating distribution over a data space $\cZ$, which we leave implicit.
We assume that $\ell_\theta :\cZ_\bot \to \R_{\ge 0}$ satisfies $\ell_\theta(\bot)=0$ for all $\theta$ (recall our convention that $\cZ_\bot = Z \cup \{ \bot\}$).
Ideally, we would minimize over $\Theta$ with respect to the underlying distribution $\cD$ itself:
\[
    \theta(\cD) = \argmin_{\theta\in\Theta} \Ex_{z\sim \cD}[\ell_\theta(z)]
        = \argmin_{\theta\in\Theta} L_\theta(\cD), 
\]
where we have defined $L_\theta(\cD)$ to denote the expected population loss.
We will use $L(\cD)$ to denote $\min_\theta L_\theta(\cD)$, the smallest possible population loss.

As we can only approximate $\cD$ through a finite set of samples $Z = (z_1,\ldots,z_n)$ drawn i.i.d., we will extend our notation and define
\[
    \theta(Z) = \argmin_{\theta\in \Theta} \frac 1 n \sum_{i=1}^n \ell_\theta(z_i)
    = \argmin_{\theta\in\Theta} L_\theta(Z).
\]
Similarly, we define $L_\theta(Z)$ as the empirical loss and $L(Z)$ as the smallest possible empirical loss. 

\paragraph{Exponential Families}
One important example of M-estimation arises from the problem of modeling exponential families.
An \emph{exponential family model} is a set of distributions over an underlying space $\cZ$ of the form
\[
    p_\theta(z) = h(z) \cdot \exp\paren{\langle \theta, \phi(z)\rangle - A(\theta)}.
\]
The distribution is defined by the parameter vector $\theta$, the \emph{sufficient statistic} $\phi$,  and the \emph{carrier} $h$.
The function $A(\theta)$ is the \emph{log partition function}.
When fitting such a model to a set of observations $Z=(z_1,\ldots,z_n)$, we take as our loss function the negative log likelihood:
\begin{equation}
    \min_\theta \sum_i \ell_\theta(z_i) =  \min_\theta \sum_i - \log p_\theta(z_i) = \min_\theta \sum_i A(\theta) - \langle \theta  , \phi(z_i)\rangle.
    \label{eq:exp_fam_mle}
\end{equation}
This maximum likelihood problem has several advantageous features.
Since $A(\theta)$ is convex, the overall problem is also convex and can be solved efficiently.
Additionally, the solution to \eqref{eq:exp_fam_mle} admits a clean asymptotic description:
as $n\to \infty$, we have
\[
    \theta(Z) - \theta(\cD) \sim \cN(0, n^{-1} \nabla^2 A(\theta(\cD))^{-1}).
\]
If the distribution satisfies some mild regularity conditions and the sufficient statistic $\phi$ is \emph{minimal}, informally meaning that the family is not overparameterized, we furthermore have that $\nabla^2 A(\theta(\cD)) \succ 0$, i.e., in a neighborhood around the true minimizer the loss function behaves as if it is strongly convex.
Crucially for some of our downstream applications, we do not require strong convexity to hold for the whole loss function.
For our purposes, it will suffice to show that any parameter sufficiently far from the empirical minimizer will have appreciably larger loss.

For more discussion on this topic and as a point of comparison, we refer the reader to \cite{AsiDT25}.
Their methods rely not versions of the above conditions which are themselves privately certifiable; that is, their algorithms check whether these local conditions hold and this check must itself be privatizable.
In contrast, our approach allows us to write down the conditions we desire and establish accuracy whenever these conditions are met, a simpler argument.

\newcommand{\Nconc}{N_{\ref{ass:loss_concentration}}}
\newcommand{\Kconc}{K_{\ref{ass:loss_concentration}}}

\paragraph{Learning Task Assumptions}
We now operationalize the above discussion and state the assumptions we will use going forward to establish accuracy.
In both \Cref{ass:loss_concentration,ass:id}, we attempt to write down an interpretable set of general conditions under which our algorithm delivers improvements over existing results.
We emphasize that our privacy guarantees do not rely on assumptions about the distribution.

Our results in \Cref{sec:loss_estimation} provide accurate loss estimation for learning tasks $(\Theta, \ell,\cD)$ that satisfy the following assumption.
\begin{assumption}[Concentration of the empirical loss]
    \label{ass:loss_concentration}
    For task $(\Theta,\ell,\cD)$, there exists a function $\Nconc:\R\to\R$ and a value $\Kconc>0$ such that for all $\alpha,\beta>0$ and $n\geq \Nconc(\alpha) + \frac{\Kconc\log1/\beta}{\alpha^2}$ 
    \[
    \Pr_{Z\sim\cD^n}\brackets{\abs{L(Z) -L(\cD)}\geq \alpha}\leq \beta.
    \]   
\end{assumption}
Informally, this captures a type of subgaussian concentration. 
In our applications, $\Nconc$ will correspond to a dimension-dependent term or a measure of the complexity of $\Theta$.

\newcommand{\Kid}{K_{\ref{ass:id}}}
\newcommand{\Nid}{N_{\ref{ass:id}}}
\newcommand{\Mid}{M_{\ref{ass:id}}}

Our results in \Cref{sec:parameter_estimation}, which focus estimating a single coordinate of the larger minimizer $\theta(Z)$, require more care to set up.
Recall that $L^{(w)}(Z)$ denotes the minimum empirical loss achievable among parameters $\theta$ with $\theta_1=w$ and $Z_{-i}$ denotes the dataset $Z$ with the $i$-th element removed.

\begin{assumption}[Identifiability, Concentration, Stability] 
    \label{ass:id}
    For task $(\Theta, \ell, \cD)$, there exists a function $\Nid:\R\to\R$ and values $
    \Mid,\Kid,\allowbreak\lambda,\allowbreak\mu,\sigma>0$ such that the following holds for all $\alpha,\beta>0$:
    \begin{assumplist}
        \item\label{ass:param_concentration} For all $n\geq \Nid(\alpha) + \frac{\Kid\log1/\beta}{\alpha^2}$ we have
        \[
        \Pr_{Z\sim\cD^n}\brackets{\abs{\theta(Z)_1-\theta(\cD)_1}\geq \alpha\mu^{-1/2}}\leq \beta.
        \]
        \item\label{ass:tssc} For all $n\geq \Mid + \Kid\log1/\beta$ we have
        \[
        \Pr_{Z\sim\cD^n}\brackets{\forall w\in\R\colon \frac{\mu(w-\theta(Z)_1)^2}{2}\leq  L^{(w)}(Z) - L(Z) \leq 2\mu(w-\theta(Z)_1)^2}\geq 1-\beta.
        \]
        \item\label{ass:loo} For all $n\geq \Mid+\Kid\log1/\beta$ and all $i\in[n]$ we have 
        \[
        \Pr_{Z\sim\cD^n}\brackets{\abs{L(Z)-L(Z_{-i})}\geq \frac{\lambda\log 2/\beta}{n}} \leq \beta.
        \]
        \item\label{ass:tb} For all $n\in\N$ we have 
        \[
        \Pr_{Z\sim\cD^n}\brackets{L(Z)\geq \sigma^2\paren{1 + \frac{\log1/\beta}{n}}}\leq \beta.
        \]
    \end{assumplist}
\end{assumption}

We now discuss each part of the assumption informally.
In \Cref{sec:regression_facts} we prove that this assumption is satisfied for random design linear regression: \Cref{ass:param_concentration} is straightforward: we ask that the first coordinate of the empirical minimizer is usually close to the first coordinate of the true minimizer. \Cref{ass:tb} is similarly transparent: the minimum empirical loss should have  subexponential behavior on its upper tail.
As a key example, this will be satisfied when we have subexponential concentration of $\ell_{\theta(\cD)}(z)$, the loss of a single observation with respect to the true minimizer.
Under such conditions we also expect \Cref{ass:loo} to hold, as any single observation is unlikely to strongly influence the empirical minimizer.

Informally, \Cref{ass:tssc} enforces an approximate notion of \emph{identifiability}: 
any $\theta'$ with close-to-optimal loss must have $\theta_1'\approx \theta(Z)_1$.
Technically, we read it as a ``typical smoothness and strong convexity'' condition, yielding quantitative control over this tradeoff.
The steps needed to establish this tradeoff are significantly more delicate than those for the other three, as we have to reason about loss minimization with the first parameter fixed.
Weaker conditions would also suffice: the core of the argument requires us to show that poor choices for the first parameter lead to large losses.

\subsection{Estimating the population loss}
\label{sec:loss_estimation}

Our first result witnesses a sample and time complexity tradeoff for the problem of estimating the population loss. 
The subsampling hyperparameter $p$ controls this tradeoff.

\begin{theorem}[Private loss estimation via subsampling]
    \label{thm:loss_estimation}
    There exists a mechanism $\cM$ that, for any task $(\Theta,\ell,\cD)$ satisfying \Cref{ass:loss_concentration}, privacy parameters $\eps,\delta\in(0,1)$, accuracy parameter $\alpha>0$, and subsampling probability $p\in(0,\frac 14)$, satisfies $(\eps,\delta)$-DP and the following accuracy guarantee:

    If 
    \[
    n=\Omega\paren{\frac{\Nconc(\alpha/8)}{p} + \frac{\Kconc+L(\cD)^2}{\alpha^2}\paren{\frac{\log1/\delta}{\eps} + \frac{\log 1/\delta}{p} + \frac{\log ( p^{-1}\log1/\delta)}{p}}}
    \]
    then %
    \[
    \Pr_{Z\sim\cD^n}\brackets{\abs{\cM^L(Z) - L(\cD)} \leq \alpha}\geq 1-\delta.
    \]
    Moreover, $\cM$ has query complexity and runtime $\exp\paren{O\paren{\frac{p\log1/\delta}{\eps}}}\poly\paren{\frac{\log1/\delta}{p}}$.
\end{theorem}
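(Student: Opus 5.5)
We will run the average-of-quantiles mechanism (\Cref{thm:avg-of-q}) with accuracy parameter $\alpha/4$, privacy parameters $\eps,\delta$, and subsampling probability $p$, querying the unnormalized loss $f(S) = \frac{1}{np}\min_{\theta}\sum_{j:\,s_j\neq\bot}\ell_\theta(s_j)$. This mirrors the proof of \Cref{thm:eigenvalue_est}. Privacy and the runtime bound $\exp(O(p\log(1/\delta)/\eps))\poly(\log(1/\delta)/p)$ then come directly from \Cref{thm:avg-of-q}, because $f$ is monotone. Monotonicity holds since $\ell_\theta\geq 0$ and $\ell_\theta(\bot)=0$: adding a point cannot decrease $\sum_j \ell_\theta(s_j)$ for any $\theta$, so it cannot decrease the minimum over $\theta$ either. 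What remains is accuracy. By \Cref{thm:avg-of-q}, it suffices to show that with probability $1-\delta$ over $Z$ and the $T$ subsamples $S_1,\dots,S_T$, every $f(S_j)$ lies in $[L(\cD)-\alpha/2,\,L(\cD)+\alpha/2]$. Then $\max_j f(S_j)-\min_j f(S_j)\leq\alpha/4$ after rescaling constants, and the output lies within $\alpha$ of $L(\cD)$.

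\textbf{Step 1 (a single subsample).} Let $m=|S|\sim\Bin(n,p)$. Conditioned on $m$, the retained points are $m$ i.i.d.\ draws from $\cD$, so $\frac{m}{np}f(S)=L(S)$, where $L(S)$ is the empirical minimum loss on those $m$ points. By a Chernoff bound, $|m-np|\leq\sqrt{c\,np\log(1/\beta')}$ with probability $1-\beta'/2$. In particular $m\geq np/2$ once $np\gtrsim\log(1/\beta')$. Conditioned on this, \Cref{ass:loss_concentration} gives $|L(S)-L(\cD)|\leq\alpha/8$ with probability $1-\beta'/2$, provided $np/2\geq \Nconc(\alpha/8)+64\Kconc\log(2/\beta')/\alpha^2$. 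This requirement produces the $\Nconc(\alpha/8)/p$ term and the $\Kconc\log(1/\beta')/(\alpha^2p)$ term.

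\textbf{Step 2 (from $L(S)$ to $f(S)$).} Write
\[
f(S)-L(S)=L(S)\cdot\frac{m-np}{np}.
\]
Hence $|f(S)-L(S)|\leq (L(\cD)+\alpha/8)\sqrt{c\log(1/\beta')/(np)}$. This is at most $\alpha/4$ when $np\gtrsim (L(\cD)^2+\alpha^2)\log(1/\beta')/\alpha^2$, which is the source of the $L(\cD)^2/\alpha^2$ factor in the bound on $n$.

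\textbf{Step 3 (union bound and parameter bookkeeping).} Take $\beta'=\delta/T$ and union bound over all $T$ subsamples. Since
\[
\log T = O\!\left(\frac{p\log(1/\delta)}{\eps}\right)+O\!\left(\log\left(p^{-1}\log(1/\delta)\right)\right),
\]
the condition $np\gtrsim \frac{\Kconc+L(\cD)^2}{\alpha^2}\log(T/\delta)$ expands to exactly the three terms $\frac{\log(1/\delta)}{\eps}$, $\frac{\log(1/\delta)}{p}$, and $\frac{\log(p^{-1}\log(1/\delta))}{p}$ in the statement, each multiplied by $\frac{\Kconc+L(\cD)^2}{\alpha^2}$. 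The factor $p$ cancels against the $p$ inside $\log T$ in the first term, which is why that term has no $1/p$.

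The main obstacle is this bookkeeping. Every one of the exponentially many subsamples must be good simultaneously, and we must check that the $\log T$ overhead, once divided by $p$, only contributes $\log(1/\delta)/\eps$ and not a worse term. A secondary subtlety is that the subsamples are not independent of one another. This does not matter, because the union bound only uses the marginal law of each $S_j$, and each marginal is $\mathrm{Bin}(n,p)$ i.i.d.\ draws from $\cD$ when $Z\sim\cD^n$.
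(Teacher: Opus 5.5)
Your proposal is correct and follows the paper's proof essentially step for step: the paper also runs average-of-quantiles on $L/p$ (your $f$), bounds a single subsample via a Chernoff bound on $|S|$ plus \Cref{ass:loss_concentration} (its \Cref{claim:approx_loss}), and union bounds over the $T$ subsamples with $\beta\gets\delta/T$, with the same $\log T$ bookkeeping. The one slip is in your constants: Steps 1--2 only place each $f(S_j)$ within $3\alpha/8$ of $L(\cD)$, so the spread can reach $3\alpha/4$, which exceeds your accuracy parameter $\alpha/4$. The fix is to use accuracy parameter $\alpha/2$ (as the paper does) and tighten Step 2 to $\alpha/8$, with the extra constant absorbed into the $\Omega$; then each $f(S_j)$ is within $\alpha/4$, the output is within $3\alpha/4$, and the $\Nconc(\alpha/8)$ term is preserved.
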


Our next result provides an improved sample and time complexity tradeoff for the problem of testing whether the loss is large. Essentially, the result states that we can achieve the same tradeoff as in \Cref{thm:loss_estimation}, but with the privacy parameter $\delta$ replaced with a failure probability $\beta$. 

\begin{theorem}[Private loss testing via subsampling]
    \label{thm:loss_testing}
    There exists a mechanism $\cM$ that, for all tasks $(\Theta,\ell,\cD)$ satisfying \Cref{ass:loss_concentration}, privacy parameters $\eps,\delta\in(0,1)$, failure probability $\beta>0$, accuracy parameter $\alpha>0$, and subsampling probability $p\in(0,\frac 14)$, satisfies the following: 

    If 
    \[
    n=\Omega\paren{\frac{\Nconc(\alpha/2)}{p} + \frac{\Kconc+L(\cD)^2}{\alpha^2}\paren{\frac{\log1/\beta}{\eps} + \frac{\log1/\beta}{p} + \frac{\log ( p^{-1}\log1/\delta)}{p}}}
    \]
    then the following holds:  
    \begin{itemize}
        \item If $L(\cD)\geq 2\alpha$ then $\Pr_{Z\sim\cD^n}\brackets{\cM^L(Z)\text{ outputs reject}}\geq1-\beta$. 
        \item If $L(\cD)\leq \alpha$ then $\Pr_{Z\sim\cD^n}\brackets{\cM^L(Z)\text{ outputs accept}}\geq1-\beta$.
    \end{itemize}
    Moreover, $\cM$ has query complexity and runtime $\exp\paren{O\paren{\frac{p\log1/\beta}{\eps}}}\poly\paren{\frac{\log1/\delta}{p}}$.  
\end{theorem}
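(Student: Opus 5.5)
We reduce loss testing to the median-of-quantiles mechanism (\Cref{thm:med-of-q}), applied to a monotone, finite-range function. This lets the probability parameter in the quantile count be $\beta$ rather than $\delta$.

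\textbf{The function.} For $S\subset Z$ set $g(S)=\frac{1}{np}\min_\theta\sum_{j:s_j\neq\bot}\ell_\theta(s_j)$. Because $\ell_\theta\ge 0$ and $\ell_\theta(\bot)=0$, adding a point can only increase each sum, so it increases the minimum and $g$ is monotone. Define the three-valued threshold
\[
h(S)=\Ind[g(S)>4\alpha/3]+\Ind[g(S)>5\alpha/3]\in\{0,1,2\}.
\]
As a composition of monotone maps, $h$ is monotone with $\kappa=3$.

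\textbf{The mechanism.} Run \Cref{alg:med-of-q} on $h$ with parameters $\eps,\delta,\beta,p$. Output reject if the result is $2$, accept if it is $0$, and (say) accept if it is $1$; the case analysis below is set up so that $1$ only appears when it is harmless. Privacy is then immediate from \Cref{thm:med-of-q}, and the runtime is
\[
\exp\paren{O\paren{\frac{p\log(3/\beta)}{\eps}}}p^{-2}\log\frac1\delta,
\]
as claimed.

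\textbf{Accuracy.} \Cref{thm:med-of-q} guarantees, with probability $1-\beta$, that the output lies in $[\min_i h(S_i),\max_i h(S_i)]$ over the $T$ subsamples. It therefore suffices to show that, with probability $1-\beta$ over $Z$ and all subsamples, every $S_i$ satisfies
\[
|g(S_i)-L(\cD)|\le \alpha/4.
\]
Given this, $L(\cD)\ge2\alpha$ forces every $h(S_i)=2$, so the mechanism rejects. Likewise $L(\cD)\le\alpha$ forces every $h(S_i)=0$, so it accepts. (This is an $\alpha/4$ margin; constants would be rebalanced as needed.)

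\textbf{Concentration of each subsample.} Follow the proof of \Cref{claim:approx_eigenvalue}. Let $m=|S|\sim\Bin(n,p)$. Since the elements of $S$ are i.i.d.\ from $\cD$, we have $g(S)=\frac{m}{np}L(S)$.
\begin{itemize}
\item \Cref{ass:loss_concentration} with accuracy $\alpha/8$ and failure probability $\beta/(2T)$ gives $|L(S)-L(\cD)|\le\alpha/8$, provided $pn/2\ge \Nconc(\alpha/8)+\Kconc\log(2T/\beta)/\alpha^2$.
\item A Chernoff bound gives $|m/(np)-1|\le\sqrt{c\log(2T/\beta)/(np)}$. Multiplied by $L(S)\lesssim L(\cD)+\alpha$, this contributes error at most $\alpha/8$ once $np\gtrsim L(\cD)^2\log(T/\beta)/\alpha^2$.
\end{itemize}
Here $\log T=O(p\log(1/\beta)/\eps)+O(\log(p^{-1}\log1/\delta))$. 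Dividing the resulting requirement $np\gtrsim(\Kconc+L(\cD)^2)\log(T/\beta)/\alpha^2$ by $p$ yields exactly the three terms $\frac{\log1/\beta}{\eps}$, $\frac{\log 1/\beta}{p}$, and $\frac{\log(p^{-1}\log1/\delta)}{p}$ in the stated sample complexity. A union bound over the $T$ subsamples finishes the argument.

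\textbf{Main obstacle.} The delicate step is this union bound over exponentially many subsamples. Its cost, $\log T$, must land in the sample complexity with the right dependence. This works only because the number of quantiles $\tau$ in \Cref{alg:med-of-q} scales with $\log(\kappa/\beta)$ and not $\log(1/\delta)$, which is why we use the bounded-range median mechanism rather than average-of-quantiles.
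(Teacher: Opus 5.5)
Your proposal is correct and follows the paper's proof. Both run median-of-quantiles (\Cref{thm:med-of-q}) on a monotone threshold of $L(S)/p$, so that $\tau$, and hence $\log T$, scales with $\log(1/\beta)$. Both then control all $T$ subsamples with the Chernoff-plus-\Cref{ass:loss_concentration} argument of \Cref{claim:approx_loss}, taking $\beta\gets\beta/T$.

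The only difference is cosmetic. The paper uses the single binary threshold $\Ind[L(S)/p\geq 3\alpha/2]$, which already suffices, since when all $h(S_i)$ agree the output is that common value. It also leaves a margin of $\alpha/2$ on each side instead of your $\alpha/3$, so your three-level function buys nothing.
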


\subsubsection*{Proof of \texorpdfstring{\Cref{thm:loss_estimation}}{private loss estimation theorem}}

\begin{proof}
    The proof is a direct application of \Cref{thm:avg-of-q}: since $\ell_\theta$ is non-negative for all $z$ and $\theta$, we have that $L(Z)$ is monotone, and thus we can safely apply the average-of-quantiles mechanism  (\Cref{thm:avg-of-q}) with query access to $L$. Let $\cA$ denote the average-of-quantiles mechanism with accuracy parameter $\alpha'\gets\alpha/2$, privacy parameters $\eps$ and $\delta$, and subsampling probability $p$. Let $\cM$ be the mechanism which, given query access to $L$ and input $Z$, simulates $\cA$ with query access to $\frac{L}{p}$ and input $Z$. 

    We first prove a claim which states that if $Z\sim\cD^n$ and $S\sim\cS_p(Z)$ then with high probability $\abs{\frac{L(S)}{p} - L(\cD)}\lesssim \alpha + o(1)$. This is not immediate from \Cref{ass:loss_concentration} since $|S|\sim\Bin(n,p)$ and is not always equal to $np$. However, the proof follows from a straightforward application of Chernoff bounds for Bernoulli random variables.

    \begin{claim}
        \label{claim:approx_loss}
        There exists a constant $c>0$ such that for all 
        \[
        n=\Omega\paren{\frac{\Nconc(\alpha)}{p} + \frac{\Kconc\log1/\beta}{\alpha^2p}},
        \]
        if $Z\sim\cD^n$ and $S\sim\cS_p(Z)$ then 
        \[
        \abs{\frac{L(S)}{p} - L(\cD)}\leq \alpha\paren{1+\sqrt{\frac{c\log1/\beta}{n}}} + L(\cD)\sqrt{\frac{c\log1/\beta}{n}}
        \]
        with probability at least $1-\beta$.
    \end{claim}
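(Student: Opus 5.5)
Let $m=|S|$. The key observation is that $L(S)$ is the minimum of $\sum_{i:s_i\neq\bot}\ell_\theta(s_i)$, scaled by the full normalization $\frac1n$. This holds because $\ell_\theta(\bot)=0$ and $S$ is formally a length-$n$ tuple. Hence $L(S)=\frac{m}{n}\,L(S^\circ)$, where $S^\circ$ denotes the dataset of the $m$ non-$\bot$ entries of $S$, normalized by $m$. Conditioned on $m$, $S^\circ$ is distributed as $\cD^m$, since the inclusion coins are independent of the i.i.d.\ data. So I would write
\[
\frac{L(S)}{p}-L(\cD)=\frac{m}{np}\paren{L(S^\circ)-L(\cD)}+L(\cD)\paren{\frac{m}{np}-1}
\]
and bound the two terms separately, each with probability failure $\beta/2$.

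For the second term, $m\sim\Bin(n,p)$. A multiplicative Chernoff bound gives $\abs{m-np}\le\sqrt{c\,np\log(2/\beta)}$ with probability $1-\beta/2$. This yields $\abs{\frac{m}{np}-1}\le\sqrt{\frac{c\log 1/\beta}{np}}$. I would absorb the $1/p$ into the constant or into the sample-size condition, so that it matches the stated $\sqrt{c\log(1/\beta)/n}$ form up to adjusting $c$. Since the claim is stated for a fixed $p$ and the lower-order terms are only used qualitatively later, I would check whether the intended bound really has $np$ in the denominator and otherwise adjust. On the same event, the requirement $n\gtrsim \log(1/\beta)/p$ (implied by the hypothesis on $n$) guarantees $m\ge np/2$, and also $\frac{m}{np}\le 1+\sqrt{c\log(1/\beta)/n}$-type bounds.

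For the first term, condition on $m$ with $m\ge np/2$. Apply \Cref{ass:loss_concentration} to $S^\circ\sim\cD^m$ with accuracy $\alpha$ and failure $\beta/2$. This requires $m\ge\Nconc(\alpha)+\Kconc\log(2/\beta)/\alpha^2$, which follows from $np/2\ge$ that quantity, i.e.\ from the assumed $n=\Omega\paren{\Nconc(\alpha)/p+\Kconc\log(1/\beta)/(\alpha^2p)}$. Then $\abs{L(S^\circ)-L(\cD)}\le\alpha$, and multiplying by $\frac{m}{np}\le1+\sqrt{c\log(1/\beta)/n}$ gives the first summand of the claimed bound. A union bound over the two events completes the proof.

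The main obstacle is the small technical point of correctly relating $L(S)$ to an honest empirical loss on $\cD^m$. This requires handling the normalization and the conditioning on $m$, so that independence of the subsampling coins from the data justifies applying \Cref{ass:loss_concentration}. The rest is routine Chernoff bookkeeping.
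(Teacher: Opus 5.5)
Your proposal follows the paper's proof: a Chernoff bound on $m=|S|$, \Cref{ass:loss_concentration} applied to the renormalized subsample $nL(S)/m$, and a union bound. The quantity $nL(S)/m$ is your $L(S^\circ)$; the paper uses it without spelling out the conditioning on $m$. The paper splits $\frac{L(S)}{p}-L(\cD)$ as $\paren{\frac{L(S)}{p}-\frac{nL(S)}{m}}+\paren{\frac{nL(S)}{m}-L(\cD)}$ rather than your way, but both give the identical bound $\alpha+(\alpha+L(\cD))\abs{\frac{m}{np}-1}$.

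You are right that this argument gives $\sqrt{c\log(1/\beta)/(np)}$ rather than the stated $\sqrt{c\log(1/\beta)/n}$. The paper's own proof also arrives at $np$: its displayed intermediate bound has $np$ in the denominator.

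Neither of your proposed fixes works, though. You cannot absorb the factor into the sample-size condition, because the ratio between the two forms is $p^{-1/2}$ and does not shrink with $n$. You also cannot absorb it into an absolute $c$. Take a loss identically $1$ on $\cZ$: the assumption holds with arbitrarily small $\Kconc$, and $\frac{L(S)}{p}-L(\cD)=\frac{m}{np}-1$. This fluctuates at scale $\sqrt{\log(1/\beta)/(np)}$, so for small $p$ the literal statement fails. The clean fix is to state the claim with $np$. That is also what the $L(\cD)^2\log(1/\beta)/(\alpha^2 p)$ term in \Cref{thm:loss_estimation} reflects.

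A smaller point: $np\gtrsim\log(1/\beta)$ is needed both for your Chernoff form and for $m\ge np/2$. It does not follow from the hypothesis in general, for instance when $\Kconc\ll\alpha^2$ and $\Nconc(\alpha)$ is small. The paper's proof makes the same implicit assumption.
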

    \begin{proof}
        Let $m=|S|$. Then $m\sim\Bin(n,p)$ and thus, by a Chernoff bound we have $\abs{m-pn}\leq \sqrt{cpn\log1/\beta}$ with probability at least $1-\beta/2$. It follows that $\abs{L(S)/p - L(S)n/m}=\frac{L(S)}{mp}\abs{m-np}$. Applying the bound on $\abs{m-pn}$ and multiplying by $n/n$ yields
        \[
        \abs{\frac{L(S)}{p} - \frac{n L(S)}{m}}\leq \frac{n L(S)}{m}\sqrt{\frac{c\log1/\beta}{np}}
        \]
        with probability at least $1-\beta/2$. Now, applying \Cref{ass:loss_concentration} to $S$ and $\frac{n L(S)}{m}$, we see that $\abs{\frac{n L(S)}{m} - L(\cD)}\leq \alpha$ with probability at least $1-\beta/2$. Combining the two bounds completes the proof.
    \end{proof}

    Let $T=\exp\paren{O\paren{\frac{p\log1/\delta}{\eps}}}\poly\paren{\frac{\log1/\delta}{p}}$ and let $S_1,\dots S_T\sim\cS_p(Z)$ denote the subsamples drawn by $\cA$. If $B$ is the event that $\max_{i\in [T]}\abs{L(\cD)-L(S_i)/p}\geq \alpha'/2$, then conditioned on $\overline B$, we have $\max L(S_i)/p - \min L(S_i)/p\leq \alpha'$, and thus, by the accuracy guarantee of \Cref{thm:avg-of-q}, that $\abs{\cA^{L/p}(Z) - L(\cD)}\leq 2\alpha'=\alpha$. Hence, it suffices to show that
    \[
    \Pr_{\substack{Z\sim\cD^n \\ S_1,\dots S_T\sim\cS_p(Z)}}\brackets{\max_{i\in [T]}\abs{L(\cD)-L(S_i)/p}\geq \alpha'/2}\leq \delta.
    \]
    To see why this holds, observe that if $n\geq \Omega\paren{\frac{\Nconc(\alpha'/4)}{p} + \frac{(\Kconc+L(\cD)^2)\log 1/\beta}{\alpha^2p}}$ then by \Cref{claim:approx_loss} and \Cref{ass:loss_concentration}  %
    \[
    \Pr_{\substack{Z\sim\cD^n \\ S\sim\cS_p(Z)}}\brackets{\abs{L(\cD)-L(S)/p}\geq \alpha'/2}\leq \beta,
    \]
    and thus, setting $\beta\gets \delta/T$ and applying a union bound over all $T$ subsamples completes the proof.
\end{proof}

\subsubsection*{Proof of \texorpdfstring{\Cref{thm:loss_testing}}{private loss testing theorem}}

\begin{proof}
    Let $\cA$ denote the median-of-quantiles mechanism (\Cref{thm:med-of-q}) with subsampling probability $p$, privacy parameters $\eps,\delta$, and failure probability $\beta/2$. Let $\cM$ be the mechanism which, given query access to $L$ and input $Z$, simulates $\cA$ with query access to $h$ and input $Z$, where $h(Z) = \Ind\brackets{\frac{L(Z)}{p}\geq 3\alpha/2}$. If $\cA^h(Z)=1$ then $\cM$ outputs ``reject'' and otherwise $\cM$ outputs ``accept''. 
    
    We argue that $\cA$ outputs $1$ with probability at least $1-\beta$ when $L(\cD)\geq 2\alpha$, and outputs $0$ with probability at least $1-\beta$ when $L(\cD)\leq \alpha$. Let $T=\exp\paren{O\paren{\frac{p\log1/\beta}{\eps}}}\poly\paren{\frac{\log1/\delta}{p}}$. By our setting of $n$, and the same argument as in the proof of \Cref{thm:loss_estimation} (except now with $\beta\gets \beta/2T$ instead of $\beta\gets \delta/T$), we have that 
    \[
    \Pr_{\substack{Z\sim\cD^n \\ S_1,\dots S_T\sim\cS_p(Z)}}\brackets{\max_{i\in [T]}\abs{L(\cD)-L(S_i)/p}\geq \alpha/2}\leq \beta/2.
    \]
    Thus, if $L(\cD)\geq 2\alpha$, then $h(S_i)=1$ for all $i\in[T]$ with probability at least $1-\beta/2$, and therefore $\cA$ outputs $1$ with probability at least $1-\beta$. By the same argument, if $L(\cD)\leq \alpha$ then $\cA$ outputs $0$ with probability at least $1-\beta$. The accuracy guarantee now follows from the definition of $\cM$. The runtime and privacy guarantees follow from \Cref{thm:med-of-q} and the fact that $h$ is monotone (since $L$ is monotone). 
\end{proof}

\subsection{Estimating a single parameter}
\label{sec:parameter_estimation}

The first result of this section states that one can achieve a sample and time complexity tradeoff for the problem of testing if $|\theta(\cD)_1|$ is large or small.

\begin{theorem}[Testing a single parameter via loss comparisons]
    \label{thm:param_testing}
    There exists a mechanism $\cM$ such that for all learning tasks $(\Theta,\ell,\cD)$ satisfying \Cref{ass:id}, privacy parameters $\eps,\delta>0$, accuracy parameter $\alpha>0$, and subsampling probability $p\in(0,1/4)$, satisfies $(\eps,\delta)$-DP and the following accuracy guarantee:
    Fix $\rho \geq 2\sigma^2 + \alpha^2$ and $t>20\alpha\mu^{-1/2}$. If 
    \[
    n=\Omega\paren{\frac{\Nid(\alpha/2) + \Mid}{p} + \paren{\frac{\Kid(1+\alpha^2)}{\alpha^2} + \frac{\rho^2+\lambda^2}{\alpha^4}}\paren{\frac{p^2}{\eps^3}}\poly\log\frac{n\rho}{\alpha\delta\eps}},
    \] 
    then 
    \begin{enumerate}
        \item If $|\theta(\cD)_1| > 2t$ 
        then $\Pr[\cM^L(Z,\rho,t,\alpha,p) = \sign(\theta(\cD)_1)]\geq 1-\delta$.
        \item If $|\theta(\cD)_1| < t/2$ 
        then $\Pr[\cM^L(Z,\rho,t,\alpha,p) = 0]\geq 1-\delta$.
    \end{enumerate}
     Moreover $\cM$ runs in time $\exp\paren{O\paren{\frac{p}{\eps}\log\frac{1}{\delta}}}\cdot\poly\paren{\frac{\log1/\delta}{p}}$.
\end{theorem}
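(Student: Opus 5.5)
The plan is to reduce testing to a small number of private loss comparisons, in the spirit of the intro's ``estimating a single parameter via private loss comparison.'' Take three candidates $w\in\{-t,0,t\}$. For each, form the monotone statistic $h_w(Z)=\frac1p\paren{L^{(w)}(Z)-L(Z)}$, or a clipped or thresholded version of it. Release it privately and decide from the outcomes. If $\theta(\cD)_1>2t$, then $L^{(-t)}$ and $L^{(0)}$ should exceed $L^{(t)}$ by roughly $\mu t^2$. If $|\theta(\cD)_1|<t/2$, then $L^{(0)}$ should be the smallest by a margin of order $\mu t^2$. The mechanism outputs $\sign$ of the winning candidate, or $0$.

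A difference of losses is not monotone, so I would not privatize the difference directly. Instead I would run the average-of-quantiles mechanism (\Cref{thm:avg-of-q}) separately on each monotone quantity $L^{(w)}/p$ for $w\in\{-t,0,t\}$, and also on $L/p$ if needed. Each of these is monotone because $\ell_\theta\ge 0$ and $\ell_\theta(\bot)=0$. I would use accuracy parameter $\alpha'\approx\alpha^2$ and compose the runs with \Cref{fact:composition}, splitting $\eps,\delta$ into $O(1)$ parts. Privacy then follows immediately from \Cref{thm:avg-of-q}, and the runtime is $\exp(O(\frac p\eps\log\frac1\delta))\poly(\frac{\log 1/\delta}{p})$ by \Cref{lem:quantile_finder}.

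For accuracy, fix $\beta=\delta/T$ with $T$ the number of subsamples, and union bound over all $S_j\sim\cS_p(Z)$. I need that for every drawn subsample, $L^{(w)}(S_j)/p$ lies within $\alpha'$ of a common center, and that these centers are correctly ordered. I would get this in three steps:
\begin{enumerate}
\item \Cref{ass:tssc} applied to $S_j$ gives $L^{(w)}(S_j)-L(S_j)\in[\frac\mu2,2\mu](w-\theta(S_j)_1)^2$ in the rescaled units.
\item \Cref{ass:param_concentration} with $n p\ge \Nid(\alpha/2)+\Mid$ gives $|\theta(S_j)_1-\theta(\cD)_1|\le\alpha\mu^{-1/2}\ll t$. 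This uses $t>20\alpha\mu^{-1/2}$.
\item The spread of $L(S_j)/p$ across subsamples must be $\lesssim\alpha'$.
\end{enumerate}
Given these, the gap between $w=0$ and $w=\pm t$ is at least of order $\mu t^2/2-2\mu(t/2+\alpha\mu^{-1/2})^2$ in the relevant cases. The constants $2t$ versus $t/2$ are chosen exactly so this margin is positive and exceeds the private error. The bounds $\rho\ge 2\sigma^2+\alpha^2$ and \Cref{ass:tb} give a range $[0,\rho]$ for clipping $L^{(w)}/p$, which keeps the scale of the released quantities bounded.

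The main obstacle is step 3: concentration of $L(S)/p$ across random subsamples of a fixed typical $Z$ at a dimension-free rate. Without this, $\Nconc$-type terms would carry a $d/\alpha^2$ factor. I would first try a martingale (Doob) argument: expose the subsampling coordinates one at a time, and bound each increment using the leave-one-out stability \Cref{ass:loo}, which gives increments $\lesssim\lambda\log(n/\beta)/(np)$. Azuma's inequality then gives deviation $\lambda\sqrt{\log}/\sqrt{np}$, plus a Binomial correction for $|S|$ as in \Cref{claim:approx_loss} using \Cref{ass:tb}. Requiring this deviation to be at most $\alpha'/p$ in the rescaled quantity, with the $\alpha^2$ scale and the union bound over $T=e^{O(p\log(1/\delta)/\eps)}$ subsamples, should produce the $\frac{\rho^2+\lambda^2}{\alpha^4}\cdot\frac{p^2}{\eps^3}\poly\log$ term in $n$. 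The stated bound on $n$ makes me expect exactly this accounting. The martingale step needs care, because \Cref{ass:loo} holds only with high probability over $Z$, so I would condition on a good event for $Z$ holding for all $i$ simultaneously.
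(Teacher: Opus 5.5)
Your overall architecture matches the paper's: three monotone, clipped, $1/p$-scaled losses, each released by average-of-quantiles with accuracy $\alpha^2$, composition for privacy, a union bound over the $T$ subsamples, and a dimension-free concentration of $\wt L(S)$ around $\Ex_{S'\sim\cS_p(Z)}[\wt L(S')]$.

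\textbf{Main gap: point candidates.} The choice of \emph{point} candidates $w\in\{-t,0,t\}$ breaks the accuracy argument. \Cref{ass:tssc} only sandwiches $L^{(w)}(S)-L(S)$ between $\frac{\mu}{2}(w-\theta(S)_1)^2$ and $2\mu(w-\theta(S)_1)^2$. It therefore certifies $L^{(w)}(S)<L^{(w')}(S)$ only when $|w'-\theta(S)_1|>2|w-\theta(S)_1|$. This fails in both cases of the theorem:
\begin{itemize}
\item When $\theta_1>2t$, the ratio $\theta_1/(\theta_1-t)$ is always below $2$, so $L^{(t)}<L^{(0)}$ is never certified.
\item When $|\theta_1|\in[t/3,t/2)$, $L^{(0)}<L^{(\pm t)}$ is not certified either.
\end{itemize}
Your own margin evaluates to $\mu t^2/2-2\mu(t/2+\alpha\mu^{-1/2})^2=-2\alpha t\sqrt{\mu}-2\alpha^2<0$. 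Moreover, the assumption does not exclude the (even convex) profile equal to $2\mu(w-\theta_1)^2$ for $w\le\theta_1$ and $\frac{\mu}{2}(w-\theta_1)^2$ for $w\ge\theta_1$. With $\theta_1=0.45t$ this profile has $L^{(t)}<L^{(0)}$ by a margin of order $\mu t^2\gg\alpha^2$, so your tester would output $1$ instead of $0$. A point test would be fine for linear regression, where the profile is an exact quadratic (\Cref{lem:part2}), but the theorem assumes only \Cref{ass:id}.

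\textbf{The missing idea} is to use the intervals $\pi_0=(-t,t)$, $\pi_1=[t,\infty)$, $\pi_{-1}=(-\infty,-t]$ with the profile loss $L^{(\pi)}=\min_{w\in\pi}L^{(w)}$. This is still monotone, as a minimum of monotone functions, and is then clipped and scaled to $\wt L^{(\pi)}$. Up to the $|S|$ versus $pn$ normalization:
\begin{itemize}
\item The interval containing $\theta(\cD)_1$ has excess loss at most $2\mu(\alpha\mu^{-1/2}/2)^2=\alpha^2/2$, using only the upper bound at a tiny distance.
\item Any interval at distance at least $10\alpha\mu^{-1/2}$ has excess at least $\frac{\mu}{2}(5\alpha\mu^{-1/2})^2>12\alpha^2$, using only the lower bound.
\end{itemize}
So the factor $4$ between the curvature constants becomes harmless. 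The condition $t>20\alpha\mu^{-1/2}$ and the thresholds $2t$, $t/2$ serve only to place $\theta(\cD)_1$ in one interval at distance more than $10\alpha\mu^{-1/2}$ from the other two. This is exactly \Cref{lem:subsampling_accuracy}.

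\textbf{Secondary gap: step~3 via Azuma.} Azuma needs almost-sure bounds on the Doob increments. However, flipping one coordinate of an atypical subsample can move $\wt L$ by the full clipping range $\rho/p$. \Cref{ass:loo} transfers to a random $S\sim\cS_p(Z)$ (itself an i.i.d.\ sample), so it controls single-coordinate changes only for most $S\subset Z$, not all of them. Conditioning on a good event for $Z$ cannot give the worst-case bound over all subsets that Azuma requires.

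The paper handles this in two steps:
\begin{itemize}
\item It shows that most $Z$ are $\beta$-typical: all single-coordinate changes are $O(\lambda\log(n/\beta)/(np))$ for all but a $\beta$ fraction of $S\sim\cS_p(Z)$.
\item It then applies Warnke's typical bounded differences inequality (\Cref{thm:TL}) with $a\approx\lambda\log(n/\beta)/(np)$ and $b=\rho/p$.
\end{itemize}
That is where the $\log(n\rho/\beta)$ in \Cref{claim:L_concentration} comes from.
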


Our next result provides a similar tradeoff for the problem of estimating $\theta(\cD)_1$.
We first introduce some additional notation: Let $\Theta|_1$ denote the set $\set{w\in\R \mid \exists\theta\in\Theta : \theta_1=w}$ and let $\Pi=\set{\pi_i}_{i\in[\kappa]}$ be an interval partition of $\Theta|_1$. For an interval $\pi$ and value $w\in\R$, let $\Delta(w,\pi)=\min_{w'\in\pi} \abs{w-w'}$.

\begin{theorem}[Estimating a single parameter via loss comparisons]
    \label{thm:theta1_estimation}
    There exists a mechanism $\cM$ such that for all learning tasks $(\Theta,\ell,\cD)$ satisfying \Cref{ass:id}, interval partition $\Pi=\set{\pi_i}_{i\in[\kappa]}$ of $\Theta|_1$, privacy parameters $\eps,\delta>0$, failure probability $\beta\in(0,1)$, accuracy parameter $\alpha>0$, and subsampling probability $p\in(0,1/4)$, is $(\eps,\delta)$-DP and satisfies the following:
    Fix clipping parameter $\rho \geq 2\sigma^2 + \alpha^2$. If 
    \[
    n=\Omega\paren{\frac{\Nid(\alpha/2) + \Mid}{p} + \paren{\frac{\Kid(1+\alpha^2)}{\alpha^2} + \frac{\rho^2+\lambda^2}{\alpha^4}}\paren{\frac{p^2}{\eps^3}}\poly\log\frac{n\rho\kappa\log1/\delta}{\alpha\beta\eps}},
    \] 
    then
    
    \[
    \Pr_{\substack{Z\sim\cD^n \\ \pi\sim \cM^L(Z,\rho,\alpha)}}\brackets{ \Delta(\theta(\cD)_1,\pi) < 10\alpha\mu^{-1/2}}\geq 1 - \beta.
    \]

     Moreover $\cM$ runs in expected time $\exp\paren{O\paren{\frac{p}{\eps}\log\frac1\beta\log\frac{\kappa\rho}{\alpha}}}\cdot \kappa\poly\paren{\frac{\log\kappa/\delta\log1/\beta}{p}}$. 
\end{theorem}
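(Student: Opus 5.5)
Our plan is to reduce estimation of $\theta(\cD)_1$ to a private selection over the $\kappa$ intervals of $\Pi$, and to score each interval by a privately estimated constrained loss. For each $i\in[\kappa]$ define
\[
L^{(\pi_i)}(Z)=\min_{w\in\pi_i} L^{(w)}(Z),
\]
the minimum empirical (unnormalized, summed) loss over parameters with first coordinate in $\pi_i$. Since $\ell_\theta\ge 0$ and $\ell_\theta(\bot)=0$, each $L^{(\pi_i)}$ is monotone, exactly as $L$ was in the proof of \Cref{thm:loss_estimation}. We clip each loss at $\rho n$, i.e.\ replace $L^{(\pi_i)}$ by $\min(L^{(\pi_i)},\rho n)$, which preserves monotonicity; the condition $\rho\ge 2\sigma^2+\alpha^2$ together with \Cref{ass:tb} ensures that clipping is inactive near the minimizer. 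We then compare the rescaled quantities $L^{(\pi_i)}(S)/p$ on subsamples.

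\textbf{Privacy.} Rather than run $\kappa$ independent copies of average-of-quantiles and compose (which would cost a factor $\kappa$ in $\eps$), we will use a threshold/binary-search structure. The plan is to privately search over a grid of loss levels $v$ (with grid size $\poly(\rho/\alpha)$, which accounts for the $\log\frac{\kappa\rho}{\alpha}$ factor in the runtime exponent). At each level $v$ we ask, via median-of-quantiles (\Cref{thm:med-of-q}) applied to the monotone $0/1$ function $h_{v}(S)=\Ind[\min_i L^{(\pi_i)}(S)/p\ge v]$, whether the minimum loss exceeds $v$. This locates $\widehat L\approx L(S)/p$ to within $\alpha^2$. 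We then release an index $i$ for which the indicator $\Ind[L^{(\pi_i)}(S)/p\le \widehat L+c\alpha^2]$ fires, using a sparse-vector / above-threshold style scan in which each test is again a median-of-quantiles on a monotone indicator. The scan runs in expected time proportional to $\kappa$, which explains the expected-time guarantee and the factor $\kappa$. Privacy follows from the privacy of \Cref{thm:med-of-q} together with sparse-vector composition, where the number of above-threshold answers is $O(1)$. The $\log 1/\beta$ (rather than $\log 1/\delta$) in the exponent comes from median-of-quantiles, whose $\tau$ depends only on $\kappa/\beta$.

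\textbf{Accuracy.} Let $w^*=\theta(Z)_1$, and apply \Cref{ass:id} on subsamples $S$ of size about $pn$; the Chernoff step is as in \Cref{claim:approx_loss}, giving the term $(\Nid(\alpha/2)+\Mid)/p$. By \Cref{ass:tssc}, on $S$ every interval at distance at least $10\alpha\mu^{-1/2}$ from $\theta(S)_1$ has excess loss at least $\frac{\mu}{2}(10\alpha\mu^{-1/2})^2\cdot|S|=50\alpha^2|S|$, while the interval containing $\theta(S)_1$ has excess $0$. Using \Cref{ass:param_concentration}, $\theta(S)_1$ is within $\alpha\mu^{-1/2}$ of $\theta(\cD)_1$. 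A union bound over all $T$ subsamples and $\kappa$ intervals, with $\beta\gets\beta/(T\kappa)$, produces the polylog terms.

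\textbf{Main obstacle.} The crux is that the \emph{gap} $50\alpha^2|S|$ must be stable across the different subsamples $S_1,\dots,S_T$, since median-of-quantiles only guarantees an output between $\min_j$ and $\max_j$ of its queries. Absolute losses $L(S_j)$ fluctuate at scale $\sqrt{|S|}\,\rho$, which is much larger than $\alpha^2|S|$ unless $n$ is large. The $\frac{\rho^2+\lambda^2}{\alpha^4}\frac{p^2}{\eps^3}$ term suggests the intended fix: show that for a typical $Z$ the subsampled loss $L(S)$ concentrates around its conditional mean given $Z$ at a dimension-free rate. We plan to prove this with a bounded-differences/martingale argument over the inclusion bits, using \Cref{ass:loo} to bound each coordinate's influence by $\lambda\log(2/\beta)/n$ and clipping to bound the range by $\rho$. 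With that in hand, all $L^{(\pi_i)}(S_j)/p$ lie within $\alpha^2 n$ of their conditional means simultaneously, so the gap survives, and the selected interval satisfies $\Delta(\theta(\cD)_1,\pi)<10\alpha\mu^{-1/2}$ with probability at least $1-\beta$.
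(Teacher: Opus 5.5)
Your accuracy core matches the paper's: concentration of the subsampled loss via a typical-bounded-differences argument driven by \Cref{ass:loo}, plus a strong-convexity gap. The selection layer you build on top is different from the paper's, however, and two steps fail as written.

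\textbf{Privacy of the scan.} Sparse vector is not a composition theorem for differentially private sub-mechanisms. Its $\kappa$-independent cost relies on the queries being low-sensitivity numbers that the mechanism itself perturbs against a noisy threshold. Suppose instead that each test is a black-box $\eps'$-DP median-of-quantiles. In the worst case every test can fire with probability $q$ on $Z$ and $qe^{\eps'}$ on $Z'$, for a constant $q$. Then the event ``none of the first $j$ tests fires'' has likelihood ratio $e^{\Theta(jq\eps')}$. Stopping at the first firing test therefore costs $\Theta(\kappa\eps')$, which forces $\eps'=\eps/\kappa$ and puts a factor $\kappa$ into the runtime exponent.

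To repair this you must open the black box. Work on the interleaving event of \Cref{lem:quantile_finder}, union-bounded over all $\kappa$ quantile vectors at $\delta/\kappa$ each. On that event, the rank of any fixed threshold among $q^{(i)}(1),\dots,q^{(i)}(\tau)$ moves by $O(1)$ between neighbors. So AboveThreshold run directly on these ranks, with $\tau=\Theta(\frac1\eps\log\frac\kappa\beta)$, is private and accurate.

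The paper avoids the issue differently. It runs median-of-quantiles per interval with a shared quantile-finder seed, and selects with private selection from private candidates (\Cref{thm:pspc}), which is built for choosing among DP candidates. It repeats this $O(\log\frac1\beta)$ times at constant failure probability and outputs the median interval, avoiding a runtime polynomial in $1/\beta$.

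Separately, your firing indicator $\Ind[L^{(\pi_i)}(S)/p\le\widehat L+c\alpha^2]$ is non-increasing in $S$, so it is not monotone in the paper's sense; you must test its complement.

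\textbf{Concentration of the constrained losses.} \Cref{ass:loo} gives leave-one-out stability only for the unconstrained minimum $L$. Nothing supports your claim that every $L^{(\pi_i)}(S_j)/p$ concentrates around its conditional mean.

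The paper concentrates only $\wt L(S)$ around $\Ex_{S'\sim\cS_p(Z)}[\wt L(S')]$ (\Cref{claim:L_concentration}). It then sandwiches the constrained losses against $\wt L(S)$ subsample by subsample (\Cref{claim:L_bounds}):
\begin{itemize}
\item $\wt L^{(\pi)}(S)\ge\wt L(S)+5\alpha^2$ for far $\pi$;
\item $\wt L^{(\pi^*)}(S)\le\wt L(S)+\alpha^2$ for the fixed interval $\pi^*\ni\theta(\cD)_1$.
\end{itemize}
The second bound needs the smoothness half of \Cref{ass:tssc} at $w=\theta(\cD)_1$. Your remark that the interval containing $\theta(S)_1$ has excess $0$ does not suffice: that interval changes with $S$, whereas a test for a fixed $\pi$ must pass on essentially all subsamples simultaneously.
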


One can construct a mechanism $\cM'$ that has the same sample complexity, accuracy, and privacy guarantees as in \Cref{thm:theta1_estimation}, but with expected runtime $\exp\paren{O\paren{\frac{p}{\eps}\log\frac{\kappa\rho}{\alpha\beta}}}\poly\paren{\frac{\kappa\log\kappa/\delta\log1/\beta}{p\beta}}$. These running times are incomparable in general; the main difference is that the mechanism in \Cref{thm:theta1_estimation} allows us to set $p$ such that we avoid a $\poly(1/\beta)$ dependence. See \cref{rem:alt_mechanism} for more details.

In order to state the key technical lemma we will use to prove \Cref{thm:theta1_estimation,thm:param_testing}, we introduce the following additional notation: For each interval $\pi$ let $L^{(\pi)}=\min_{w\in\pi} L^{(w)}$ and let $\wt L$ and $\wt L^{(\pi)}$ denote the functions $\frac 1p\cdot \clip_{\rho}[L]$ and $\frac 1p \cdot \clip_{\rho}[L^{(\pi)}]$ respectively, where $\clip_{\rho}[L]$ denotes the function $L$ with range clipped to $[0,\rho]$. Informally, \Cref{lem:subsampling_accuracy} states that if $\theta(\cD)_1$ is in $\pi$ and far from $\pi'$, then we can distinguish $\pi$ and $\pi'$ by comparing the losses on different subsamples of $Z$. Somewhat more formally, if $S,S'\sim\cS_p(Z)$ then $L^{(\pi)}(S) < L^{(\pi')}(S')$ with high probability.  

\begin{lemma}
        \label{lem:subsampling_accuracy}
        Fix a learning task $(\Theta,\ell,\cD)$ satisfying \Cref{ass:id}. Fix $p\in(0,\frac14),\alpha,\beta>0$, and set $\rho\geq 2\sigma^2 + \alpha^2$, and let $\Pi$ be an interval partition of size $\kappa$. There exists a function $h:\cZ^*\to\R$ such that if
        \[
        n=\Omega\paren{\frac{\Nid(\alpha/2) + \Mid}{p} + \paren{\frac{\Kid(1+\alpha^2)}{\alpha^2p} + \frac{\rho^2+\lambda^2}{\alpha^4p} }\log\paren{\frac{n\rho\kappa}{\beta}}^3}
        \]
        then with probability at least $1-\beta$ over $Z\sim\cD^n$ and $S\sim\cS_{p}(Z)$ the following holds for all $\pi\in\Pi$:
        \begin{enumerate}
            \item If $\Delta(\pi,\theta(\cD)_1)\geq 10\alpha\mu^{-1/2}$ then $\wt L^{(\pi)}(S) > h(Z) + 3\alpha^2$.
            \item If $\theta(\cD)_1\in\pi$ then $h(Z) < \wt L^{(\pi)}(S) < h(Z) + \alpha^2$. 
        \end{enumerate}
    \end{lemma}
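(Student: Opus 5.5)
We will take $h(Z)$ to be a deterministic proxy for the typical value of $\wt L(S)=\frac1p\clip_\rho[L(S)]$. A natural choice is $h(Z)=\Ex_{S\sim\cS_p(Z)}[\wt L(S)]-\alpha^2/4$. If that expectation is awkward to control, we would use instead $h(Z)=\frac{1}{p}\cdot\frac{pn}{n}\,L(Z)$ plus a correction, fixed later. The proof has two steps. First, $\wt L(S)$ concentrates around $h(Z)$ at a dimension-free rate. Second, the constrained losses $L^{(\pi)}(S)$ inherit the quadratic growth of \Cref{ass:tssc} on the subsample. We work on the event that $m=|S|$ satisfies $|m-pn|\le\sqrt{cpn\log(1/\beta)}$. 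We also use that, conditional on $m$, $S$ is an i.i.d.\ sample of size $m\ge \Nid(\alpha/2)+\Mid+\Kid\log(\kappa/\beta)$ from $\cD$. This lets us apply every part of \Cref{ass:id} directly to $S$, exactly as in \Cref{claim:approx_loss} and \Cref{claim:approx_eigenvalue}.

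\textbf{Concentration of $\wt L(S)$ given $Z$.} The key step is a bounded-differences argument over the subsampling coins of a fixed typical $Z$. By \Cref{ass:loo}, applied with a union bound over all $i$ and over the subsample sizes, with failure probability $\beta/(n\kappa)$ each, removing one point changes $L$ by at most $\lambda\log(2n\kappa/\beta)/m$. The clipping at $\rho$ ensures that on the atypical event the change is at most $\rho$. Scaling by $1/p$, with $L$ treated as an average rescaled to size $pn$, McDiarmid's inequality (or a martingale version that tolerates the rare bad event) over the $n$ independent Bernoulli coins gives fluctuations of order
\[
\frac{\lambda\log(n\kappa/\beta)}{pn}\sqrt{n\log(1/\beta)},
\]
plus a term of order $\rho\sqrt{\log(1/\beta)/(pn)}$ coming from the randomness of $m$. \Cref{ass:tb} ensures that clipping at $\rho\ge2\sigma^2+\alpha^2$ is inactive with high probability, so $\wt L$ agrees with the unclipped loss. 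Requiring both fluctuation terms to be at most $\alpha^2/8$ yields the $\frac{\rho^2+\lambda^2}{\alpha^4p}\log^3$ term in the sample complexity.

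\textbf{Comparison of intervals.} Next we pass from $L(S)$ to $L^{(\pi)}(S)$ via \Cref{ass:tssc} applied to $S$:
\[
\frac{\mu}{2}\Delta(\pi,\theta(S)_1)^2\le L^{(\pi)}(S)-L(S)\le 2\mu\,\Delta(\pi,\theta(S)_1)^2 .
\]
By \Cref{ass:param_concentration} applied to $S$, we have $|\theta(S)_1-\theta(\cD)_1|\le\alpha\mu^{-1/2}/2$. For item 2, if $\theta(\cD)_1\in\pi$ then $\Delta\le\alpha\mu^{-1/2}/2$, so the excess loss is at most $\alpha^2/2$. Combined with concentration this gives $h<\wt L^{(\pi)}(S)<h+\alpha^2$, with the $1/p$ scaling absorbed by rescaling $\alpha$. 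Clipping only decreases values and $L^{(\pi)}\ge L$, so monotonicity is preserved. For item 1, $\Delta(\pi,\theta(S)_1)\ge 9.5\alpha\mu^{-1/2}$, so the excess loss is at least $45\alpha^2$, which exceeds $3\alpha^2$ plus the fluctuation, provided clipping does not cut it off. This holds because $\rho\ge 2\sigma^2+\alpha^2$ leaves room above the typical loss $\approx\sigma^2$. If clipping does bind, then $\wt L^{(\pi)}=\rho/p$, which is still large. A union bound over the $\kappa$ intervals finishes the argument.

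\textbf{Main obstacle.} The hard step is the dimension-free concentration of $\wt L(S)$ around a deterministic function of $Z$. \Cref{ass:loss_concentration}-style bounds would give a $d/\alpha^2$ rate, so the argument must rely on the leave-one-out stability of \Cref{ass:loo}. That stability holds only with high probability, and only for $Z$ drawn from $\cD$, not for arbitrary subsets. We would handle this with a typical-bounded-differences inequality: bounded differences hold on a good event of probability $1-\beta/\poly(n)$, and the clip at $\rho$ bounds the differences elsewhere. Obtaining exponents that give exactly the $\log^3$ factor is the delicate part.
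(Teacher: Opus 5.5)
Your plan is essentially the paper's proof. In both, $h(Z)$ is $\Ex_{S'\sim\cS_p(Z)}[\wt L(S')]$ shifted down by a margin, and the dimension-free concentration of $\wt L(S)$ comes from Warnke's typical bounded-differences inequality: \Cref{ass:loo} supplies the Lipschitz constant for typical $Z$, and the clip at $\rho$ bounds the differences off the good event. The interval comparison then applies \Cref{ass:param_concentration,ass:tssc,ass:tb} to the subsample of size $\approx pn$.

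One correction to the rate. The fluctuation you display, $\frac{\lambda\log(n\kappa/\beta)}{pn}\sqrt{n\log(1/\beta)}$, is the plain McDiarmid rate. It would only give $n\gtrsim \lambda^2\log^3/(\alpha^4p^2)$, not the stated $1/p$ dependence. To get $1/p$ you need the variance-sensitive form, namely the $2Np(1-p)(a+c)^2$ denominator in \Cref{thm:TL}. That yields fluctuations of order $\lambda(\log(n\rho/\beta))^{3/2}/\sqrt{np}$, as in \Cref{claim:L_concentration}.
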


In the remainder of the section we prove \Cref{thm:param_testing,thm:theta1_estimation}, and then prove \Cref{lem:subsampling_accuracy}. 

\subsubsection*{Proof of \texorpdfstring{\Cref{thm:param_testing}}{private parameter testing theorem}}
\begin{proof}
    Our testing mechanism uses \Cref{thm:avg-of-q} to estimate the loss when the first coordinate is close to zero, and when the first coordinate is far from zero. It then compares the two outcomes and outputs the result. 

    \begin{algorithm}[H]
    \ifnum\neurips=0
    \Statex \textbf{Input:} Dataset $Z\in\cZ^n$, query access to loss $L$, parameters $\eps,\delta,\rho,\alpha,p$, and threshold $t\in\Theta|_1$.
    \Statex \textbf{Output:} Bit $b\in\zo$.
    \fi
	\begin{algorithmic}[1]
    	\caption{\label{alg:param_est} Parameter tester $\cM^{L}$}
        \ifnum\neurips=1
        \Statex \textbf{Input:} Dataset $Z\in\cZ^n$, query access to loss $L$, parameters $\eps,\delta,\rho,\alpha,p$, and threshold $t\in\Theta|_1$.
        \Statex \textbf{Output:} Bit $b\in\zo$.
        \fi
        \State Set $\eps'\gets\frac{\eps}{3}$, and $\delta'\gets\frac{\delta}{6}$.
        \State Let $\cA$ denote \Cref{alg:avg-of-q} with privacy parameters $\eps',\delta'$, and accuracy parameter $\alpha^2$. 
        \State Set $\pi_0\gets(-t,t)$ and $\pi_1\gets [t,\infty)$ and $\pi_{-1}\gets (-\infty, -t]$. 
        \State For each $j\in\set{-1,0,1}$, let $y_j\gets \cA^{\wt L^{(\pi_j)}}(Z)$. %
        \Comment{We treat the output $\bot$ as $\infty$.}
        \State Return $\arg\min_{j\in\set{-1,0,1}}\set{y_j}$. 
	\end{algorithmic}
    \end{algorithm}

    \subparagraph{Privacy and runtime.} Fix neighboring datasets $Z$ and $Z'$. Our proof proceeds by arguing that there exists an event $E$ of probability at least $1-3\delta/6$ such that conditioned on $E$ the mechanism $\cM$ is $(\eps,3\delta/6)$-DP. We then apply \Cref{fact:conditional_DP} to see that $\cM$ is $(\eps,\delta)$-DP. Let $E$ be the event that the interleaving relationship in \Cref{lem:quantile_finder} fails to hold for some $\wt L^{(\pi_j)}$. By \Cref{lem:quantile_finder} and our setting of $\delta'$ we have $\Pr[E]\leq 3\delta/6$. Since each call to $\cA$ is $(\eps',\delta')$-DP conditioned on $E$, \Cref{fact:conditional_DP} implies that $\cM$ is $(\eps,\delta)$-DP. By \Cref{thm:med-of-q}, each call to $\cA$ has runtime $\exp\paren{O\paren{\frac{p}{\eps}\log\frac{1}{\delta}}}\poly\paren{\frac{\log1/\delta}{p}}$. 

    \subparagraph{Accuracy.} The proof of accuracy proceeds by analyzing the following bad event: %
    Let $E$ denote the event that the conclusion of \Cref{lem:subsampling_accuracy} fails to hold for some $\pi_j$ and some subsample drawn by $\cA$. Assume without loss of generality that $\theta(\cD)_1 > 2t$ (the cases where $|\theta(\cD)_1| < t/2$ and $\theta(\cD)_1<-2t$ are symmetric). Then, conditioned on $\overline{E}$, we have that $\abs{\wt L^{(\pi_1)}(S) - \wt L^{(\pi_1)}(S')}\leq \alpha^2$, and that $\wt L^{(\pi_1)}(S) + \alpha^2 < \wt L^{(\pi_b)}(S') - \alpha^2$ for each $b\in\set{-1,0}$ and every $S$ and $S'$ drawn by $\cA$. By the guarantee of \Cref{thm:avg-of-q}, we have that $\cA^{\wt L^{(\pi_1)}}(Z) < \cA^{\wt L^{(\pi_b)}}(Z)$ and hence the mechanism correctly outputs $1$. To complete the accuracy argument, it suffices to bound the probability of the event $E$. Recall that by \Cref{thm:avg-of-q} each $\cA$ draws $T=\exp\paren{O(\frac{p}{\eps}\log\frac{1}{\delta})}\poly\paren{\frac{\log1/\delta}{p}}$ subsamples. By \Cref{lem:subsampling_accuracy}, our setting of $n$, and the union bound over the $T$ subsamples drawn in each call to $\cA$ we have $\Pr[E]\leq \delta$. 
    \end{proof}

\subsubsection*{Proof of \texorpdfstring{\Cref{thm:theta1_estimation}}{private parameter estimation theorem}}

    \begin{proof}
    Below, we define a mechanism for estimating a single parameter. Our algorithm uses median-of-quantiles to privately evaluate the loss when the first coordinate is restricted to a candidate interval $\pi_i$, and then uses private selection from private candidates \cite{LiuT19} to select the candidate that achieved the best loss.\footnote{\Cref{thm:pspc} does not appear explicitly in \cite{LiuT19}; however, it is a straightforward corollary of their Algorithm 2 and an analysis of geometric random variables.}

    \begin{theorem}[Private selection from private candidates \cite{LiuT19}]
        \label{thm:pspc}
        Fix $\eps,\beta\in(0,1)$, and $\kappa\in\N$, and let $T_{pspc}(\kappa,\beta)=O(\frac \kappa\beta\log\frac \kappa\beta\log\frac1\beta)$. Suppose $\set{\cM_i}_{i\in[\kappa]}$ is a collection of $\eps$-DP mechanisms with real-valued outputs. There exists a $3\eps$-DP mechanism $\cM_{pspc}$ that on input $D$ calls mechanisms $\set{\cM_i(D)}_{i\in[\kappa]}$, and outputs $\arg\min_i\min_j\set{y^j_i}$ where $y^j_i$ is the output of the $\ord{j}$ call to $\cM_i$. Moreover, $\cM_{pspc}$ makes $O(\frac{\kappa}{\beta}\log\frac{\kappa}{\beta})$ calls in expectation, and with probability at least $1-\beta$ makes at least one and at most $T_{pspc}(\kappa,\beta)$ calls to each $\cM_i$.
    \end{theorem}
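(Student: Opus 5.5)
The plan is to derive \Cref{thm:pspc} from the random-stopping variant of Algorithm 2 in \cite{LiuT19}. Set $\gamma=\Theta(\beta/\kappa)$ and run rounds $r=1,2,\dots$. In each round, pick $i\in[\kappa]$ uniformly at random, run $\cM_i(D)$ to get a fresh output $y_i^j$, and record the pair $(i,y_i^j)$. After each round, stop with probability $\gamma$; otherwise continue. When it stops, the mechanism returns $\arg\min_i\min_j y_i^j$ over all recorded pairs.

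Privacy follows the \cite{LiuT19} analysis. Each round is one call to the $\eps$-DP mechanism $\cM_U(D)$ with $U$ uniform, so each round is itself $\eps$-DP. The stopping time $R$ is geometric with parameter $\gamma$ and does not depend on the data. Their Theorem 3.4, for the geometric stopping rule with the ``return the best candidate seen'' selection, gives $3\eps$-DP for the final output. I would cite that result directly. If a self-contained argument were wanted, I would redo their computation: for an output event $\{$the best is $(i,y)$ with $y\in A\}$, write its probability as the geometric series $\sum_r \gamma(1-\gamma)^{r-1}\cdot(\cdots)$. Then apply the $\eps$-DP bound once to the round that produces the winner, and use the bound $\Pr[\text{all other rounds} > y]$ together with the generating-function trick to pick up the remaining two factors of $e^{\eps}$. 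I expect this to be the main obstacle, but it is exactly their proof, so citing it is the intended route.

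The call counts come from simple geometric estimates. The expected number of rounds is $1/\gamma=O(\kappa/\beta)$, which gives the expected $O(\frac\kappa\beta\log\frac\kappa\beta)$ calls with room to spare. For the high-probability statement I would first cap $R$: $\Pr[R>\frac1\gamma\log\frac2\beta]\le\beta/2$, so with probability at least $1-\beta/2$ there are at most $T_{pspc}(\kappa,\beta)=O(\frac\kappa\beta\log\frac\kappa\beta\log\frac1\beta)$ rounds. Since no $\cM_i$ can be called more often than the total number of rounds, this bounds every per-candidate count.

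The last requirement is that each $\cM_i$ is called at least once. For a fixed $i$, conditioned on $R=r$, the probability that $i$ is never chosen is $(1-1/\kappa)^r\le e^{-r/\kappa}$. Averaging over the geometric law of $R$ gives
\[
\Pr[i\text{ never called}]\le\frac{\gamma}{1-(1-\gamma)e^{-1/\kappa}}\approx\frac{\gamma\kappa}{1+\gamma\kappa}.
\]
This is only $O(\gamma\kappa)=O(\beta)$ per candidate, which is too weak to union-bound over $\kappa$ candidates. So I would change the sampling: in each round, run every $\cM_i$ once in round-robin order instead of picking one uniformly. This is still an instance of their framework, since each round is now the $\kappa\eps$-DP composite mechanism. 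To keep each round at privacy $\eps$ anyway, I would use instead the \cite{LiuT19} variant with a random $Q$ attached to each base-mechanism call, which is a per-candidate geometric repetition count. Concretely: for each $i$, independently draw $R_i\sim\mathrm{Geom}(\gamma)$ with $\gamma=\beta/(2\kappa)$, and run $\cM_i$ exactly $R_i$ times. Then every $\cM_i$ is called at least once with certainty. A union bound shows that with probability at least $1-\beta$ every $R_i\le\frac1\gamma\log\frac{2\kappa}\beta$, which matches $T_{pspc}$ up to the stated logarithms. The expected total number of calls is $\kappa/\gamma=O(\kappa^2/\beta)$; to recover the stated $O(\frac\kappa\beta\log\frac\kappa\beta)$ I would take $\gamma=\Theta(\beta/\log(\kappa/\beta))$ for the per-candidate counts instead. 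Privacy for this per-candidate geometric scheme follows from the same \cite{LiuT19} accounting: the selection is a minimum over independent geometric-repetition best-of runs, so it gives $3\eps$.
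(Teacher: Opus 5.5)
Your final construction has a genuine privacy gap. Running each $\cM_i$ an independent number $R_i\sim\mathrm{Geom}(\gamma)$ of times and returning the overall argmin is not $3\eps$-DP, and the accounting of \cite{LiuT19} does not carry over to it.

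Their Theorem 3.4 relies on all calls coming from one $\eps$-DP base mechanism $Q$ with one geometric stopping time. Then the probability of outputting a given $(i,y)$ equals $\Pr[Q(D)=(i,y)]\cdot\gamma/(1-(1-\gamma)F)^2$, where $F$ is the probability that a single call does worse than $y$. The first factor moves by $e^{\eps}$ and the second by $e^{2\eps}$.

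With independent counts, the probability that a fixed candidate wins instead contains a product over the other $\kappa-1$ candidates: for each, the probability that none of its calls beats the winning value. All of these factors can move by $e^{\eps}$ in the same direction. Concretely, let $\cM_1\equiv 1/2$. For $i\ge 2$, let $\cM_i$ output $0$ with probability $a$ on $D$ and $ae^{-\eps}$ on a neighbor $D'$, and $1$ otherwise; this is $\eps$-DP for $a\le 1/2$. The output is index $1$ exactly when every call to $\cM_2,\dots,\cM_\kappa$ returns $1$, which happens with probability
\[
\paren{\frac{\gamma(1-a)}{\gamma+a(1-\gamma)}}^{\kappa-1}.
\]
For $\gamma\ll a$, the ratio of this quantity between $D'$ and $D$ is at least roughly $e^{(\kappa-1)\eps}$. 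So the scheme is only $\Theta(\kappa\eps)$-DP, no better than composition.

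The uniform-choice scheme you started with is the right one. It is the instantiation of Algorithm 2 of \cite{LiuT19}, plus a geometric-variable analysis, that the paper has in mind. Its coverage argument failed only because you union-bounded per-candidate failure probabilities of order $\gamma\kappa$. Each of these is dominated by the single shared event that $R$ is small, so the union bound is wasteful; condition on $R$ instead.

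Take $\gamma=\beta/(4\kappa\ln(4\kappa/\beta))$. Then:
\begin{itemize}
\item $\Pr[R<\kappa\ln(4\kappa/\beta)]\le\gamma\kappa\ln(4\kappa/\beta)=\beta/4$.
\item Given $R\ge\kappa\ln(4\kappa/\beta)$, the coupon-collector bound $\kappa(1-1/\kappa)^R\le\beta/4$ shows that every $\cM_i$ is called at least once.
\item $\Pr[R>\gamma^{-1}\ln(2/\beta)]\le\beta/2$.
\end{itemize}
This yields $1/\gamma=O(\frac\kappa\beta\log\frac\kappa\beta)$ expected calls and the cap $T_{pspc}=O(\frac\kappa\beta\log\frac\kappa\beta\log\frac1\beta)$, exactly as stated. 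The extra $\log(\kappa/\beta)$ in both bounds is precisely the price of this smaller $\gamma$. Privacy is unaffected, since their $3\eps$ guarantee holds for every $\gamma\in(0,1]$.
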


    Because our mechanism uses median-of-quantiles, it will be convenient to introduce a notation for the loss with discretized range. Let $f^{(\pi)}$ denote $\wt L^{(\pi)}$ with range rounded to $\alpha^2\cdot \Z$.

    \begin{algorithm}[H]
    \ifnum\neurips=0
        \Statex \textbf{Input:} Dataset $Z\in\cZ^n$, query access to loss $L$, interval partition $\Pi$, and parameters $\eps,\delta,\beta,\rho,\alpha,p$.
        \Statex \textbf{Output:} Interval $\pi\in\Pi$
    \fi
	\begin{algorithmic}[1]
		\caption{\label{alg:theta1_est} $\theta_1$ estimation mechanism $\cM^{L}(Z,\rho,\alpha)$}
        \ifnum\neurips=1
            \Statex \textbf{Input:} Dataset $Z\in\cZ^n$, query access to loss $L$, interval partition $\Pi$, and parameters $\eps,\delta,\beta,\rho,\alpha,p$.
            \Statex \textbf{Output:} Interval $\pi\in\Pi$
        \fi
        \State Let $c\gets\frac{1}{10}$, $t\gets \frac5c\log\frac1\beta$, $\eps'\gets\frac{\eps}{3t}$, and $\delta'\gets\frac{\delta}{\kappa}$.
        \State Sample the random seed $r$ used by \Cref{alg:quantile_finder} run with failure probability $\delta'$.
        \State For all $i\in[\kappa]$ let $\cA_i(Z,r)$ denote the median-of-quantiles mechanism (\Cref{alg:med-of-q}) with input $Z$, and random seed $r$ given as input to \Cref{alg:quantile_finder}, privacy parameters $\eps'$ and $\delta'$, failure probability $\poly(c/\kappa)$, subsampling parameter $p$, and query access to $f^{(\pi_i)}$. 
        \State For each $j\in[t]$ let $i_j$ denote the output of the $\ord{j}$ run of $\cM_{pspc}$ run with privacy parameter $\eps'$, failure probability $c/2$, and mechanisms $\set{\cA_i(Z,r)}_{i\in[\kappa]}$.
        \State Output the median interval from $\set{\pi_{i_j} \mid j\in[t]}$. 
	\end{algorithmic}
    \end{algorithm}

    We describe \Cref{alg:theta1_est} at a high level below: The algorithm first samples a random seed $r$ to give to the quantile-finder (\Cref{alg:quantile_finder}), which will be used to fix the subsamples drawn across all calls to the $\cA_i$'s since \Cref{alg:quantile_finder} uses the same random seed each time; however, we do not fix the randomness used for the exponential mechanism in each call to $\cA_i$, as this is essential for privacy to hold. For each random seed $r$, the algorithm defines the mechanism $\cA_i(Z; r)$ as the median-of-quantiles mechanism run with the fixed random seed $r$ given to the quantile-finder, and query access to the function $f^{(\pi_i)}$---that is, the clipped and appropriately normalized loss corresponding to the $\ord{i}$ candidate $\pi_i$. Then, the algorithm runs $t$ iterations of $\cM_{pspsc}$, the selection mechanism in \Cref{thm:pspc}, with a small constant failure probability. The basic idea is that most of the runs of $\cM_{pspsc}$ will succeed and output a candidate $\pi_{i_j}$ such that $\Delta(\pi_{i_j}, \theta(\cD)_1)$ is small, and thus this will hold for the median interval in $\set{\pi_{i_j}}$ as well.
     
     \subparagraph{Privacy.} Fix neighboring datasets $Z$ and $Z'$. Our proof proceeds by arguing that there exists an event $E$ of probability at least $1-\delta$ over the choice of random seed, such that conditioned on $E$ each of the $t$ calls to $\cM_{pspc}$ is $3\eps'$-DP. We then apply \Cref{fact:conditional_DP} (DP with high probability) to complete the proof.

     By \Cref{lem:quantile_finder} and \Cref{thm:med-of-q}, for each $i\in[\kappa]$, there exists an event $E_i$ (the interleaving in \Cref{claim:quantile_interleaving}) that occurs with probability at least $1-\delta'$ over the choice of random seeds $r$ and $r'$ (the random seeds chosen on input $Z$ and $Z'$) such that conditioned on $E_i$ we have $\cA_i(Z,r)\approx_{\eps'} \cA_i(Z',r')$. Let $E=\bigcap_{i\in[\kappa]} E_i$. By the union bound $\Pr[E]\geq 1-\delta$, and moreover, conditioned on $E$ we have $\cA_{i}(Z,r)\approx_{\eps'}\cA_i(Z',r')$ for all $i\in[\kappa]$. Thus, by \Cref{thm:pspc} we have that $\cM_{pspc}(Z)\approx_{3\eps'}\cM_{pspc}(Z')$ conditioned on $E$. Basic composition and our setting of $\eps'$ implies that conditioned on $E$ we have $\cM(Z)\approx_{\eps}\cM(Z')$. Since $E$ occurs with probability at least $1-\delta$ we can apply \Cref{fact:conditional_DP} to see that $\cM$ is $(\eps,\delta)$-DP.

     \subparagraph{Runtime.} Since the range of $f^{(\pi)}$ has size $\rho/\alpha^2$ for each $\pi\in\Pi$, \Cref{thm:med-of-q} implies that each call to $\cA_i$ has runtime $\exp\paren{O(\frac{p}{\eps}\log\frac1\beta\log\frac{\kappa\rho}{\alpha})}\poly\paren{\frac{\log\kappa/\delta}{p}}$. Since $\cM_{pspc}$ makes $O(\kappa\log\kappa)$ calls in expectation, mechanism $\cM$ runs in expected time $\exp\paren{O(\frac{p}{\eps}\log\frac1\beta\log\frac{\kappa\rho}{\alpha})}\cdot\kappa\poly\paren{\frac{\log\kappa/\delta\log1/\beta}{p}}$.

     \subparagraph{Accuracy.} %

    We leverage \Cref{thm:med-of-q,lem:subsampling_accuracy,lem:quantile_finder} in order to analyze the following bad events: For each $j\in[t]$, let $E_j$ denote the event that in the $\ord{j}$ iteration of $\cM_{pspc}$, there exists $i\in[\kappa]$ such that some execution of $\cA_i$ fails to satisfy the accuracy guarantee of \Cref{thm:med-of-q}. Assume $\theta(\cD)_1\in\pi_{i^*}$ and let $E'_j$ denote the event that the $\ord{j}$ iteration of $\cM_{pspc}$ does not call $\cA_{i^*}$ or makes more than $T_{pspc}(\kappa,c)$ calls to the $\cA_i$'s. Let $E$ denote the event that $\frac1t\sum \Ind\brackets{E_j\cup E'_j}\geq 3c$. Additionally, let $B$ denote the event that there exists some $i\in[\kappa]$ and some subsample $S$ drawn by \Cref{alg:quantile_finder} such that $\wt L^{(\pi_i)}(S)$ fails to satisfy the conclusion of \Cref{lem:subsampling_accuracy}.

    Observe that if $B$, $E_j$ and $E'_j$, do not occur, then for every $i$ with $\Delta(\pi_{i}, \theta(\cD)_1)\geq \alpha\mu^{-1/2}$ we have $f^{(\pi_i)}(S) > h(Z) + 2\alpha^2$ (since $f^{(\pi)}$ and $\wt L^{\pi}$ differ by at most $\alpha^2$) for every subsample $S$ drawn by \Cref{alg:quantile_finder}, and thus $\cA_i(Z;r)>h(Z) + 2\alpha^2$. On the other hand, we have $f^{(\pi_{i^*})}(S) < h(Z) + 2\alpha^2$ and thus $\cA_{i^*}(Z;r) < h(Z) + 2\alpha^2$. It follows that the index $i_j$ returned by the $\ord{j}$ iteration of $\cM_{pspc}$ satisfies $\Delta(\pi_{i_j},\theta(\cD)_1) < \alpha\mu^{-1/2}$. Moreover, if the neither $B$ nor $E$ occur, then at least a $1-3c$ fraction of the $\pi_{i_j}$ satisfy the above guarantee, and hence the median $\pi_{i_j}$ does as well. Thus, it suffices to bound the probability of $B$ and $E$. 

    First, we bound the probability of $B$. By \Cref{lem:quantile_finder}, the total number of subsamples drawn by \Cref{alg:quantile_finder} is $T=\exp\paren{O(\frac{p}{\eps}\log\frac1\beta\log\frac{\kappa\rho}{\alpha})}\poly\paren{\frac{\log\kappa/\delta}{p}}$. By \Cref{lem:subsampling_accuracy} and the union bound over the $S_1,\dots, S_T$ subsamples, if 
    \[
    n=\Omega\paren{\frac{\Nid(\alpha/2) + \Mid}{p} + \paren{\frac{\Kid}{\alpha^2p} + \frac{\Kid}{p} + \frac{\rho^2+\lambda^2}{\alpha^4p}}\log\paren{\frac{n\rho\kappa\cdot T}{\beta}}^3}
    \]
    then $\Pr[B]\leq \beta/2$. 

    Next, we bound the probability of $E$. By \Cref{thm:pspc} and our setting of parameters in \Cref{alg:theta1_est} we have that $\Pr[E'_j]\leq c$. Next, by the law of total probability we have $\Pr[E_j]\leq \Pr[E_j\mid \overline{E'_j}] + \Pr[E'_j]$. If $E'_j$ does not occur, then $\cM_{pcpc}$ makes at most $T_{pspc}(\kappa,c)$ calls to any $\cA_i$. Since $\cA_i$ is executed with failure probability $\poly(c/\kappa)$, the $T_{pspc}(\kappa,c)$ calls implies that $\Pr[E_j\mid \overline{E'_j}]\leq c$. Since $\Pr[E_j\cup E'_j]\leq 2c$ for each $j\in[t]$, a Chernoff bound implies that $\Pr[E]\leq e^{-O(ct)}$ which is at most $\beta/2$ by our setting of $t$. 

    Thus $\Pr[B\cup E]\leq \beta$, which completes the proof of accuracy.
    \end{proof}

    \begin{remark}\label{rem:alt_mechanism}
    We briefly explain how to construct a mechanism matching the guarantees in \cref{thm:med-of-q} but with expected runtime $\exp\paren{O\paren{\frac{p}{\eps}\log\frac{\kappa\rho}{\alpha\beta}}}\poly\paren{\frac{\kappa\log\kappa/\delta\log1/\beta}{p\beta}}$.
    Let $\cM'$ be defined as \Cref{alg:theta1_est}, but set $c=\beta$ and $t=1$. Essentially, $\cM'$ executes $\cM_{pspc}$ (see \Cref{thm:pspc}) once with failure probability $\beta/2$ and outputs the result. The privacy analysis is the same as that of \Cref{alg:theta1_est}. The runtime follows from \Cref{thm:pspc,thm:med-of-q}, and our setting of $c=\beta$ and $t=1$. The accuracy argument is identical to that of \Cref{alg:theta1_est}, except that we no longer need to consider multiple runs of $\cM_{pspc}$.   
    \end{remark}

\subsubsection*{Proof of \texorpdfstring{\Cref{lem:subsampling_accuracy}}{subsampling accuracy lemma}}

\begin{proof}
We argue that there exists some $h(Z)$, such that with high probability over $S\sim\cS_p(Z)$, if $\Delta(\pi,\theta(\cD)_1) > 10\alpha\mu^{-1/2}$, then $\wt L^{(\pi)}(S) > h(Z) + 3\alpha^2$; while if $\theta(\cD)_1\in\pi$, then $h(Z) < \wt L^{(\pi)}(S) < h(Z) + \alpha^2$. 

\begin{claim}
    \label{claim:L_bounds}
    If $\rho\geq 2\sigma^2 + \alpha^2$ and
    \[
    n=\Omega\paren{\frac{\Nid(\alpha/2) + \Mid}{p} + \frac{\Kid\log1/\beta}{\alpha^2p} +  \frac{\Kid\log1/\beta}{p} + \frac{\rho^2\log1/\beta}{\alpha^4p}},
    \]
    then for all intervals $\pi$, the following holds:
    \begin{enumerate}
        \item If $\Delta(\theta(\cD)_1,\pi)\geq 10\alpha\mu^{-1/2}$, then
        \[
        \Pr_{Z,S}\brackets{\wt L^{(\pi)}(S)\geq \wt L(S) + 5\alpha^2} \geq 1-\beta.
        \]
        \item If $\theta(\cD)_1\in\pi$, then
        \[
        \Pr_{Z,S}\brackets{\wt L^{(\pi)}(S)\leq \wt L(S) + \alpha^2} \geq 1-\beta.
        \]
    \end{enumerate}
\end{claim}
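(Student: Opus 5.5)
We apply \Cref{ass:id} to the subsample $S$ itself, viewed as a dataset of $m=|S|$ i.i.d.\ draws from $\cD$, and then convert from the empirical normalization $L^{(w)}(S)$ to the rescaled quantity $\wt L^{(\pi)}(S)=\frac1p\clip_\rho[L^{(\pi)}(S)]$. Here we read $L(S)$ as the unnormalized loss sum, consistent with its use in \Cref{thm:loss_estimation}. With this reading, $L^{(\pi)}(S)/p=\frac{m}{pn}\cdot \bar L^{(\pi)}(S)$, where $\bar L$ is the average loss over the $m$ retained points.

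\textbf{Step 1.} By a Chernoff bound, $m\sim\Bin(n,p)$ satisfies $|m-pn|\le\sqrt{cpn\log1/\beta}$ with probability $1-\beta/4$. In particular $m\ge pn/2$, so $m$ exceeds the thresholds $\Nid(\alpha/2)+\Kid\log(1/\beta)/\alpha^2$ and $\Mid+\Kid\log 1/\beta$. This is exactly why the sample bound contains the $1/p$ factors.

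\textbf{Step 2.} Conditioned on $m$, the subsample $S$ is distributed as $\cD^m$. On an event of probability $1-3\beta/4$ we have the following three facts.
\begin{itemize}
\item By \Cref{ass:param_concentration}, $|\theta(S)_1-\theta(\cD)_1|\le \frac{\alpha}{2}\mu^{-1/2}$.
\item By \Cref{ass:tssc}, $\frac\mu2(w-\theta(S)_1)^2\le \bar L^{(w)}(S)-\bar L(S)\le 2\mu(w-\theta(S)_1)^2$ for all $w$.
\item By \Cref{ass:tb}, $\bar L(S)\le\sigma^2(1+\log(1/\beta)/m)\le 2\sigma^2$.
\end{itemize}

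\textbf{Case 1: $\Delta(\theta(\cD)_1,\pi)\ge10\alpha\mu^{-1/2}$.} Every $w\in\pi$ satisfies $|w-\theta(S)_1|\ge 9.5\alpha\mu^{-1/2}$. Therefore $\bar L^{(\pi)}(S)-\bar L(S)\ge \frac\mu2(9.5)^2\alpha^2\mu^{-1}>45\alpha^2$.

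\textbf{Case 2: $\theta(\cD)_1\in\pi$.} Taking $w=\theta(\cD)_1$ gives $\bar L^{(\pi)}(S)-\bar L(S)\le 2\mu(\alpha/2)^2\mu^{-1}=\alpha^2/2$.

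\textbf{Step 3 (rescaling and clipping; the main obstacle).} We must pass from the $\bar L$ gaps to $\wt L$ gaps. Multiplying by $\frac{m}{pn}=1\pm O(\sqrt{\log(1/\beta)/(pn)})$ perturbs each quantity by at most $\bar L\cdot O(\sqrt{\log(1/\beta)/(pn)})$. The same factor multiplies both $\bar L^{(\pi)}$ and $\bar L$, so the gap itself is scaled by a factor close to $1$. The additive slack is at most $\rho\sqrt{\log(1/\beta)/(pn)}$ because of the clipping, and this is at most $\alpha^2/4$ once $n\gtrsim \rho^2\log(1/\beta)/(\alpha^4 p)$; this requirement is the source of the $\rho^2/\alpha^4p$ term.

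Clipping must be handled with care, since it could destroy the Case 1 gap if $L^{(\pi)}(S)/p$ exceeds $\rho$. Since $\bar L(S)\le 2\sigma^2$ and $\rho\ge2\sigma^2+\alpha^2$, the unclipped $\wt L(S)$ lies below $\rho-\alpha^2$ up to the small rescaling error. Two situations then arise.
\begin{itemize}
\item If $\bar L^{(\pi)}$ is clipped at $\rho$, the gap is still at least about $\alpha^2$. To retain the full $5\alpha^2$ we would want $\rho\ge 2\sigma^2+6\alpha^2$. More plausibly, the constants in the statement are loose, and we would check whether the $3\alpha^2$ needed downstream in \Cref{lem:subsampling_accuracy} survives. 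This is the step I expect to require the most fiddling with constants.
\item If $\bar L^{(\pi)}$ is not clipped, the gap is more than $45\alpha^2$ minus small errors, which is comfortably at least $5\alpha^2$.
\end{itemize}
In Case 2, clipping is monotone and $1$-Lipschitz, so it can only shrink the gap. The gap therefore stays at most $\alpha^2/2+\alpha^2/4\le\alpha^2$.

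A union bound over the failure events of Steps 1 and 2 finishes the argument, with $\beta$ rescaled by a constant.
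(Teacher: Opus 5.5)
Your decomposition matches the paper's. You use the same bad events: one for $m=|S|$, one for $\theta(S)_1$, one for the upper tail of the loss, and one for the TSSC inequality applied to $S\sim\cD^m$. You also split into the same two cases. Your treatment of Case 2 and of the unclipped part of Case 1 is correct: you factor out the common multiplier $m/(pn)$ and use that clipping is monotone and $1$-Lipschitz. This is cleaner than the paper's additive $(\rho/p)$-size error term. The genuine gap is the clipped sub-case of Case 1, which you leave unresolved. There you get a gap of only ``about $\alpha^2$'' and defer to adjusting constants. That falls short of the claimed $5\alpha^2$, and also of the $3\alpha^2$ needed in \Cref{lem:subsampling_accuracy}.

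The cause is that you apply the clip at the wrong scale. Take the definition you yourself state, $\wt L^{(\pi)}(S)=\frac1p\clip_\rho[L^{(\pi)}(S)]$. The clip acts on $L^{(\pi)}(S)=\frac mn\bar L^{(\pi)}(S)\approx p\,\bar L^{(\pi)}(S)$, and only afterwards is the value divided by $p$. (So $L$ is normalized by the tuple length $n$, not an unnormalized sum; your formula is right but the description is not.) Hence clipping happens when $L^{(\pi)}(S)>\rho$, not when $L^{(\pi)}(S)/p>\rho$, and the clipped value is $\wt L^{(\pi)}(S)=\rho/p$, not $\rho$. This is exactly the observation the paper uses for this case.

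Since $p<1/4$, we have $\rho/p>4\rho\ge 8\sigma^2+4\alpha^2$. Meanwhile $\wt L(S)\le \frac{m}{pn}\bar L(S)\le 2\sigma^2(1+\xi)$, where $\xi=\sqrt{c\log(1/\beta)/(pn)}$. So the gap exceeds $4\alpha^2+(6-2\xi)\sigma^2$, and no strengthening to $\rho\ge 2\sigma^2+6\alpha^2$ is needed. The remaining shortfall from $4\alpha^2$ to the literal $5\alpha^2$ arises only when $\sigma^2\lesssim\alpha^2/6$ and $p$ is close to $1/4$. It is a constant-level looseness that the paper's own write-up of this step also has.
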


While \Cref{claim:L_bounds} provides a guarantee in terms of $S$, our goal is to provide a bound in terms of $Z$. To obtain the desired bound, we will use \Cref{claim:L_concentration}, which states that $\wt L(S)$ concentrates around its mean---this allows us to set $h(Z)\approx \Ex_{S\sim\cS_p(Z)}[L(S)]$ and apply a union bound over the subsamples.   
  
\begin{claim}
    \label{claim:L_concentration}
    For all $n=\Omega\paren{\frac{\Mid + \Kid\log (n\rho/\beta)}{p}}$, we have
    \[
        \Pr_{\substack{Z\sim\cD^n \\ S\sim\cS_p(Z)}}\brackets{\abs{\wt L(S) - \Ex_{S'\sim\cS_p(Z)}[\wt L(S')]}\gtrsim \frac{\lambda(\log(n\rho / \beta))^{3/2}}{\sqrt{np}}} \leq \beta.
    \]
\end{claim}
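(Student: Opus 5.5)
Conditioned on $Z$, the map $S\mapsto \wt L(S)$ is a function of the $n$ independent Bernoulli$(p)$ selection indicators. The natural tool is therefore a bounded-differences (McDiarmid-type) concentration inequality for independent inputs, with the bounded-difference constants supplied by \Cref{ass:loo}.

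\textbf{Step 1: bound the leave-one-out changes.} Write $\wt L(S)=\frac1p\clip_\rho[L(S)]$, where $L(S)$ is the empirical loss normalized by $n$ with $\bot$ entries contributing zero. Toggling a single coordinate $s_i$ changes $\wt L(S)$ by at most $\frac1p|L(S)-L(S_{-i})|$. To control this, view $S$ with $|S|=m\approx pn$ as a sample of $m$ i.i.d.\ points from $\cD$. By \Cref{ass:loo}, once $m\ge \Mid+\Kid\log(1/\beta')$, the $m$-normalized leave-one-out change is at most $\lambda\log(2/\beta')/m$. Converting from $n$-normalization to $m$-normalization, and using $m\ge pn/2$ (Chernoff), gives
\[
\frac1p\,|L(S)-L(S_{-i})| \lesssim \frac{\lambda\log(1/\beta')}{pn}.
\]
The case $s_i=\bot\to z_i$ is the same event with the roles of $S$ and $S^{i\gets z_i}$ exchanged.

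\textbf{Step 2: make the bound uniform.} We need this bound to hold for all $i$ and for typical $S$, not just a fixed one. I will define a good set $G$ of pairs $(Z,S)$ on which, for every $i\in[n]$, both $|L(S)-L(S_{-i})|$ and $|L(S^{i\gets z_i})-L(S)|$ are at most $c_0:=\lambda\log(n/\beta)/(pn)$ after rescaling. A union bound over $i$, using \Cref{ass:loo} with $\beta'=\beta/n$ applied to the random sets $S$ and $S^{i\gets z_i}$ (each still an i.i.d.\ sample of random size), gives $\Pr[G^c]\le \beta/\poly(n)$. The $\log(n\rho/\beta)$ factors, including the $\rho$ from clipping, enter here.

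\textbf{Step 3: apply McDiarmid with bad events.} A plain McDiarmid bound cannot be used, because the bounded-difference property holds only with high probability. I would first try the standard variant: replace $\wt L$ by a function that is globally Lipschitz and agrees with $\wt L$ on $G$ (for instance, a McShane-type extension in Hamming distance). Alternatively, use a Kutin/Combes-type ``McDiarmid with bad events'' inequality, paying the trivial range bound $\rho/p$ times $\Pr[G^c\mid Z]$; this is where clipping to $[0,\rho]$ is essential.

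For Bernoulli$(p)$ inputs, a Bernstein-form bounded-difference inequality gives deviations of order $c_0\sqrt{np\log(1/\beta)}+c_0\log(1/\beta)$. Substituting $c_0$ yields
\[
\frac{\lambda\log(n\rho/\beta)^{3/2}}{\sqrt{np}},
\]
with the second term of lower order. Markov's inequality over $Z$ converts the statement conditional on $Z$ into the joint probability bound.

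The difference between the mean of the modified function and $\Ex_{S'}[\wt L(S')]$ is at most $(\rho/p)\Pr[G^c\mid Z]$. This is negligible once $\Pr[G^c\mid Z]\le \beta/(n\rho)$, which is why $\rho$ appears inside the logarithm.

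\textbf{Main obstacle.} The hardest part is Step 3, specifically getting the $\sqrt{p}$-sensitive Bernstein form rather than a naive $\sqrt n$ form while the bounded differences hold only on a high-probability event. Conditioning on $Z$ and deriving the needed $\Pr[G^c\mid Z]$ bound from the joint probability will also require a Markov step. That step costs an extra $\log$ factor, consistent with the exponent $3/2$ in the claim.
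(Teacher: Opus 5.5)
Your outline matches the paper's proof in structure. Both use a good set of subsamples at which every single-coordinate toggle moves $\wt L$ by at most $a\approx\lambda\log(n/\beta)/(np)$, obtained from \Cref{ass:loo} after renormalizing by $n/|S|$ via a Chernoff bound and a union bound over $i$. Both add an averaging step showing most $Z$ are typical, and then a typical-bounded-differences inequality in which clipping supplies the worst-case difference $b=\rho/p$.

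The step that does not go through as you instantiate it is Step 3. The McShane extension $g(S)=\min_{S'\in G}\{\wt L(S')+a\,d_H(S,S')\}$ agrees with $\wt L$ on $G$ only if $\wt L$ is $a$-Lipschitz in Hamming distance between \emph{every pair} of points of $G$. Your good set only controls toggles from a good point to its neighbours, and a path between two good points can pass through bad ones, so agreement is not implied. (This all-pairs hypothesis is exactly what Combes' extension assumes.) Kutin-type ``strongly difference-bounded'' inequalities do accept your neighbour-wise condition. However, they are Hoeffding-type with variance proxy $na^2$, which gives deviations of order $\lambda\log^{3/2}(n\rho/\beta)/(p\sqrt n)$, a factor $p^{-1/2}$ worse than the claim. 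So neither device you name delivers the $\sqrt{np}$ rate; this is precisely the obstacle you flagged, and the proposal leaves it unresolved.

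The missing ingredient is an inequality with both features at once, which is what the paper uses: Warnke's typical bounded differences inequality for Bernoulli$(p)$ inputs (\Cref{thm:TL}). It takes exactly your hypothesis (difference at most $a$ when $x\in\Gamma$, at most $b$ otherwise). It is centered at $\Ex_{S'\sim\cS_p(Z)}[\wt L(S')]$ itself, so no mean-shift correction is needed. It bounds $\Pr[|f-\Ex f|\ge t \text{ and } \neg\cB]$ by $2\exp\bigl(-t^2/(2np(1-p)(a+c)^2+2(a+c)t/3)\bigr)$, with $c=\gamma(b-a)$ and $\Pr[\cB]\le n\gamma^{-1}\Pr[S\notin\Gamma]$.

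Choosing $\gamma=a/(b-a)$ makes $c=a$ and $n\gamma^{-1}\le nb/a$, which is polynomial in $n\rho$; this is how $\rho$ enters the logarithm. Taking $t\approx a\sqrt{np\log(1/\beta)}$ then gives the claim. Note also that the exponent $3/2$ comes from this product, namely $\log(n/\beta)$ in $a$ times $\sqrt{\log(1/\beta)}$ from the tail. It does not come from the Markov step, which merely converts $\Pr_{Z,S}[S\notin\Gamma]\le\beta^2$ into $\Pr_{S}[S\notin\Gamma]\le\beta$ for all but a $\beta$-fraction of $Z$.
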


We defer the proofs of \Cref{claim:L_bounds,claim:L_concentration} for later, and complete the proof of \Cref{lem:subsampling_accuracy}. Let $\Pi^{-}\subset\Pi$ denote the set of intervals $\pi$ such that $\Delta(\theta(\cD)_1,\pi)\geq 10\alpha\mu^{-1/2}$ and let $\pi^*\in\Pi$ be the interval containing $\theta(\cD)_1$. Define $E_1$ as the event that $\wt L^{(\pi)}(S)< \wt L(S) + 5\alpha^2$ for some $\pi\in\Pi^{-}$, or $\wt L^{(\pi^*)}(S) > \wt L(S) + \alpha^2$. Define $E_2$ as the event that $\abs{\wt L(S) - \Ex_{S'\sim\cS_p(Z)}[\wt L(S')]} > \frac{c\lambda(\log(n\rho \kappa/ \beta))^{3/2}}{\sqrt{np}}$. %
Observe that conditioned on $\overline{E_1\cup E_2}$ we have
\[
    \wt L^{(\pi)}(S) \geq \Ex[\wt L(S)] + 5\alpha^2 - \frac{c\lambda(\log(n\rho \kappa/ \beta))^{3/2}}{\sqrt{np}}
\]
for all $\pi\in\Pi^-$, and
\[
    \Ex[\wt L(S)] - \frac{c\lambda(\log(n\rho \kappa/ \beta))^{3/2}}{\sqrt{np}} \leq \wt L^{(\pi^*)}(S) \leq \Ex[\wt L(S)] + \alpha^2 + \frac{c\lambda(\log(n\rho \kappa/ \beta))^{3/2}}{\sqrt{np}}.
\]
Let $h(Z) = \Ex[\wt L(S)] - \frac{c\lambda(\log(n\rho \kappa/ \beta))^{3/2}}{\sqrt{np}}$. By our setting of $n$, we have that for all $\pi\in\Pi^-$, we have $\wt L^{(\pi)}(S) > h(Z) + 5\alpha^2$ and $\wt L^{(\pi^*)}(S)\in [h(Z), h(Z) + 3\alpha^2/2]$. 

Rescaling $\alpha\gets \alpha/\sqrt{1.5}$ and repeating the proof gives 
\[
\wt L^{(\pi)}(S) > h(Z) +3\alpha^2 ~~\text{and}~~ h(Z) < \wt L^{(\pi^*)}(S) < h(Z) + \alpha^2.
\]
By  \Cref{claim:L_bounds,claim:L_concentration} and the union bound we have $\Pr[E_1\cup E_2]\leq \beta$, which yields the result.

\end{proof}

In the remainder of the section we prove  \Cref{claim:L_bounds,claim:L_concentration}.

\begin{proof}[Proof of \Cref{claim:L_bounds}]
First set $\alpha_0\gets\alpha\mu^{-1/2}$. Let $E_1$ be the event that $m=|S|\not\in\brackets{pn\pm \sqrt{cpn\log1/\beta}}$, let  $E_2$ be the event that $|\theta(S)_1 - \theta(\cD)_1|\geq \alpha_0/2$, let $E_3$ be the event that $nL(S)/m>2\sigma^2$, and let $E_4$ be the event that some $w\in\R$ violates the inequality in \Cref{ass:tssc}. First, we show that the conclusion holds conditioned on $\overline{\bigcup_{i\in[4]} E_i}$, and then we show that $\Pr[\bigcup_{i\in[4]} E_i]\leq \beta$. 

Let $\alpha_1=10\alpha_0$ and consider the case where $\Delta(\theta(\cD)_1,\pi)\geq \alpha_1$. By the triangle inequality and the fact that $|\theta(S)_1 - \theta(\cD)_1|\geq \alpha_0/2$, we have $\inf_{w\in\pi} |w-\theta(S)_1|\geq \alpha_1/2$. Now, if $m=|S|$ then,
\[
\frac nm\cdot L^{(\pi)}(S) \geq \frac nm\cdot L(S) + \frac{\mu\alpha_1^2}{8} \geq  \frac 1p \clip_{\rho}[L(S)] - \frac{\rho}{p}\sqrt{\frac{c\log1/\beta}{np}} + \frac{\mu\alpha_1^2}{8},
\]
where the first inequality follows from \Cref{ass:tssc}, and the second inequality follows from a similar analysis as \Cref{claim:approx_loss}. Next, we show that the same holds for $\wt L$---that is, when $L$ is clipped and normalized by $\frac {1}{p}$ (instead of $\frac nm$). First, observe that if $L^{(\pi)}(S)\leq \rho$ then $\clip_{\rho} L^{(\pi)}(S) = L^{(\pi)}(S)$, and thus, %
\[
\wt L^{(\pi)}(S) \geq \wt L(S) - \frac{\rho}{p}\sqrt{\frac{c\log1/\beta}{np}} + \frac{\mu\alpha_1^2}{8} \geq \wt L(S) + 5\mu\alpha_0^2.
\]
To see why the same holds when $L^{(\pi)}(S)\geq \rho$ (and thus, $\wt L^{(\pi)}(S)\geq \rho/p$), recall that %
$nL(S)/m \leq 2\sigma^2$, %
and thus, by our assumption that $\rho\geq 2\sigma^2 + \alpha_0^2\mu$, and conditioned on event $\overline E_1$, we have 
\[
\wt L(S)\leq\frac{nL(S)}{m}\paren{1+\sqrt{\frac{c\log1/\beta}{np}}} \leq   \rho - \frac{\mu\alpha_1^2}{8} + \sigma^2\sqrt{\frac{c\log1/\beta}{np}}.
\]
Multiplying the right hand side by $\frac 1p$ yields
\[
\wt L^{(\pi)}(S) \geq \frac{\rho}{p} \geq \wt L(S) - \frac{\rho}{p}\sqrt{\frac{c\log1/\beta}{np}} + \frac{\mu\alpha_1^2}{8}\geq \wt L(S) + 5\mu\alpha_0^2.
\]

Next, we consider the case where $\Delta(\theta(\cD)_1,\pi)=0$. Let $w=\theta(\cD)_1$. Then conditioned on $\overline{E_4}$, we can apply \Cref{ass:tssc} to obtain
\[
\wt L^{(\pi)}(S)\leq \frac nm L^{(w)}(S) + \frac{\rho}{p}\sqrt{\frac{c\log1/\beta}{np}} \leq \frac nm L(S) + \frac{\mu\alpha_0^2}{2} + \frac{\rho}{p}\sqrt{\frac{c\log1/\beta}{np}}.
\]
Since $L(S)<\rho$ we have $L(S)=\clip_{\rho}[L(S)]$, and thus, 
\[
\wt L^{(\pi)}(S)\leq \wt L(S) + \frac{\mu\alpha_0^2}{2} + \frac{2\rho}{p}\sqrt{\frac{c\log1/\beta}{np}}\leq \wt L(S) + \mu\alpha_0^2.
\]

We complete the proof of \Cref{claim:L_bounds} by bounding the probability of $\bigcup_{i\in[4]} E_i$. By the argument in \Cref{claim:approx_loss}, we have $\Pr[E_1]\leq \beta/4$. Since $n=\Omega\paren{\frac{\Nid(\alpha/2)}{p} + \frac{\Kid\log1/\beta}{\alpha^2p}}$, \Cref{ass:param_concentration} implies that $\Pr[E_2]\leq \beta/4$. Similarly, by our setting of $n=\Omega\paren{\frac{\Mid}{p} + \frac{\Kid\log1/\beta}{p}}$ and \Cref{ass:tb,ass:tssc}, we have $\Pr[E_3]\leq \beta/4$ and $\Pr[E_4]\leq \beta/4$. By the union bound, $\Pr\brackets{\bigcup_{i\in[4]} E_i}\leq \beta$ holds, which completes the proof.  
\end{proof}

\begin{proof}[Proof of \Cref{claim:L_concentration}]
A key tool in our proof is a type of bounded-differences inequality that applies when the difference between function values on neighboring points is bounded with high probability. While the most general version of the statement appears in \cite{Warnke15}, we use a simplified form for binary random variables and where the differences are bounded by the same constant in every coordinate. At a high level, we use the inequality to argue that the function $\wt L$ concentrates around its expectation at a fast rate when $S\sim\cS_p(Z)$. In particular, this rate is independent of the rate at which the empirical loss concentrates around the population loss (e.g., as in \Cref{ass:loss_concentration}, which we do not use here) and only depends on \Cref{ass:loo}.   

    \begin{theorem}[Typical bounded differences inequality \cite{Warnke15}]\label{thm:TL}
        Let $X=(X_1, \dots, X_N)$ be a family of independent random 
        variables taking values in $\zo$, where $\Pr[X_i=1]=p$ for all $i\in[N]$. Let $\Gamma \subset \zo^N$ be an event and 
        assume that the function $f:\zo^N \to \R$ 
        satisfies the following \emph{typical Lipschitz condition}:
        \begin{itemize}
        \item[(TL)] There are numbers $a$ and 
        $b$ with $a \leq b$ such that whenever 
        $x,\tilde{x} \in \zo^N$ differ in exactly one coordinate, we have
        \begin{equation*}
        |f(x)-f(\tilde{x})| \; \le \; 
        \begin{cases}
        		a & \;\text{if $x \in \Gamma$,}\\ 
        		b & \;\text{otherwise.} 
        \end{cases}
        \end{equation*}
        \end{itemize} 
        For any $\gamma\in(0,1]$ there is an event $\cB=\cB(\Gamma,\gamma)$
        satisfying 
        \begin{equation*}
        \Pr(\cB) \leq N\gamma^{-1} \cdot \Pr(X \notin \Gamma) 
        \quad \text{and} \quad \neg\cB \subseteq \Gamma , 
        \end{equation*}
        such that for $\mu=\Ex f(X)$, $c=\gamma (b-a)$ and any 
        $t \ge 0$ we have 
        \begin{equation*}
        \Pr\brackets{\abs{f(X)-\mu} \ge t \text{ and } \overline\cB} \le 2\exp\left(-\frac{t^2}{2Np(1-p)(a+c)^2 + 2(a+c)t/3}\right) . 
        \end{equation*}
    \end{theorem}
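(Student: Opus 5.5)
The plan is to follow the martingale route behind \cite{Warnke15}: build the Doob martingale of $f(X)$, stop it the first time the ``atypical'' set $\zo^N\setminus\Gamma$ becomes conditionally likely, and apply Freedman's (Bernstein-type) martingale inequality to the stopped process.

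\textbf{Step 1: Doob martingale and its increments.} For $k\in\{0,\dots,N\}$ let $M_k=\Ex[f(X)\mid X_1,\dots,X_k]$, so $M_0=\mu$ and $M_N=f(X)$. For a prefix $x_{<k}$ write
\[
\Delta_k(x_{<k})=\Ex_{X_{>k}}\bigl[f(x_{<k},1,X_{>k})-f(x_{<k},0,X_{>k})\bigr].
\]
Then $D_k=M_k-M_{k-1}$ equals $(X_k-p)\Delta_k(x_{<k})$. Hence, given the prefix, $D_k$ takes two values whose gap is $|\Delta_k|$, and its conditional variance is $p(1-p)\Delta_k^2$. The two completions $(x_{<k},0,X_{>k})$ and $(x_{<k},1,X_{>k})$ differ in one coordinate. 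By (TL), each term in $\Delta_k$ is at most $a$, plus $(b-a)$ times the indicator that the relevant completion lies outside $\Gamma$. This gives
\[
|\Delta_k|\le a+(b-a)\,\pi_k(x_{<k}),
\]
where $\pi_k(x_{<k})$ bounds the probability that the completion fails to lie in $\Gamma$.

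\textbf{Step 2: the bad event.} Define $\cB$ to be the event that $\pi_k(X_{<k})>\gamma$ for some $k\le N$. Choose $\pi_k$ as the conditional probability $\Pr(X\notin\Gamma\mid X_1,\dots,X_k)$, aligned with a filtration so that it is a martingale in $k$; I expect to need to include the coordinate $k$ (or both of its values) here. Then the Markov inequality gives
\[
\Pr(\pi_k>\gamma)\le \gamma^{-1}\Pr(X\notin\Gamma)
\]
for each $k$, and a union bound over the $N$ indices yields $\Pr(\cB)\le N\gamma^{-1}\Pr(X\notin\Gamma)$. At $k=N$ the conditional probability is just $\Ind[X\notin\Gamma]$, which exceeds $\gamma>0$ exactly when $X\notin\Gamma$. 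Thus $\neg\cB\subseteq\Gamma$.

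\textbf{Step 3: stopped martingale and Freedman.} Let $\sigma$ be the first $k$ at which the step-$k$ typicality condition fails; this is a stopping time because the condition is measurable with respect to the information used before step $k$. On steps $k\le\sigma$ we have
\[
|\Delta_k|\le a+\gamma(b-a)=a+c.
\]
So the stopped martingale $M_{k\wedge\sigma}$ has increments bounded by $a+c$ and predictable quadratic variation at most $Np(1-p)(a+c)^2$. Freedman's inequality then gives
\[
\Pr\bigl(|M_{N\wedge\sigma}-\mu|\ge t\bigr)\le 2\exp\Bigl(-\frac{t^2}{2Np(1-p)(a+c)^2+2(a+c)t/3}\Bigr).
\]
On $\overline{\cB}$ the process is never stopped, so $M_{N\wedge\sigma}=f(X)$, which yields the claimed tail bound for $\{|f(X)-\mu|\ge t\}\cap\overline{\cB}$.

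\textbf{Main obstacle.} The delicate point is Step 2: the exact definition of $\pi_k$. The bound on $\Delta_k$ involves completions with coordinate $k$ set to both $0$ and $1$, whereas a conditional probability only sees the realized $X_k$. I would first try to get a bound controlled by $\Pr(X\notin\Gamma\mid X_{<k})$ itself. For this, note that the two values of $X_k$ carry weights $p$ and $1-p$, so a naive argument costs a factor $\min(p,1-p)^{-1}$, and I would try to avoid that. Failing that, I would follow Warnke's more careful bookkeeping. There, the contribution of the non-typical completion is charged only to the configuration $x$ on which (TL) is applied, namely the one with $x\notin\Gamma$. This is exactly why (TL) is stated asymmetrically in $x$, and it should let the union bound still produce exactly $N\gamma^{-1}\Pr(X\notin\Gamma)$.
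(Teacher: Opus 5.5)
The paper quotes this theorem from Warnke and gives no proof, so I judged your argument on its own. Your architecture is the right one: Doob martingale, a typicality-based stopping rule, Freedman on the stopped process, and Markov plus a union bound over $k$ for $\cB$. Once completed it gives exactly the stated constants. There is a genuine gap, though: the step you call the main obstacle is left open, and as written Steps 2 and 3 do not fit together.

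In Step 2 you take $\pi_k=\Pr(X\notin\Gamma\mid X_1,\dots,X_k)$, which depends on $X_k$. Step 3 then claims the stopping condition is decided before step $k$, which is false for this $\pi_k$. If $\sigma$ is the first $k$ with $\pi_k>\gamma$ (a legitimate stopping time), then $M_{k\wedge\sigma}$ still contains the increment $D_\sigma$. At $k=\sigma$ nothing bounds $|\Delta_\sigma|$ by $a+c$, so your increment and variance bounds fail exactly there. You also consider the predictable quantity $\Pr(X\notin\Gamma\mid X_{<k})$, but you expect it to cost a factor $\min(p,1-p)^{-1}$, which would change $c=\gamma(b-a)$. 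The appeal to ``Warnke's bookkeeping'' does not name a predictable quantity that controls $|\Delta_k|$.

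The missing observation is that being neighbours is symmetric, so (TL) may be applied with $x$ equal to either endpoint. The cost $b$ is therefore incurred only when both endpoints lie outside $\Gamma$. Concretely, for $y\in\zo^{N-k}$ and $x^v=(x_{<k},v,y)$, you have $|f(x^1)-f(x^0)|\le a+(b-a)\Ind[x^v\notin\Gamma]$ for each $v\in\zo$. Averaging over $y=X_{>k}$ gives
\[
|\Delta_k(x_{<k})|\le a+(b-a)\,\pi_k(x_{<k}),\qquad \pi_k(x_{<k}):=\min_{v\in\zo}\Pr\bigl(X\notin\Gamma\mid X_{<k}=x_{<k},\,X_k=v\bigr).
\]
This $\pi_k$ is predictable, and it satisfies $\pi_k(X_{<k})\le\Pr(X\notin\Gamma\mid X_{\le k})$. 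It also satisfies $\pi_k\le\Pr(X\notin\Gamma\mid X_{<k})$, since a minimum lies below any convex combination, so your ``first try'' loses nothing.

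The proof then closes as follows.
\begin{enumerate}
\item Set $\cB=\bigcup_{k=1}^N\{\Pr(X\notin\Gamma\mid X_{\le k})\ge\gamma\}$. Use $\ge$ rather than your $>$, so that the $k=N$ term gives $\neg\cB\subseteq\Gamma$ also when $\gamma=1$.
\item Markov and a union bound give $\Pr(\cB)\le N\gamma^{-1}\Pr(X\notin\Gamma)$.
\item Replace stopping by the martingale transform $\tilde M_k=\mu+\sum_{j\le k}H_jD_j$, with the predictable indicator $H_j=\Ind[\pi_i(X_{<i})\le\gamma\ \forall i\le j]$.
\item Then $|H_jD_j|\le\max(p,1-p)(a+c)\le a+c$, and the conditional variances are at most $p(1-p)(a+c)^2$. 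Freedman applied to $\pm\tilde M$ gives the stated bound.
\item On $\overline{\cB}$ every $H_j=1$, so $\tilde M_N=f(X)$.
\end{enumerate}
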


    At a high level, we show that for a typical $Z$, one can apply \Cref{thm:TL} to $\wt L$ with the domain $\set{S\subset Z}$. We use the $\lambda$-leave-one-out property (\Cref{ass:loo}) to argue that $\wt L$ has bounded differences with high probability under $\cS_p(Z)$---that is, it satisfies the typical Lipschitz condition, and we use the definition of $\wt L$ to argue that the differences are always at most $\rho/p$.  

    In order to leverage \Cref{thm:TL}, we argue that for most $Z$, there exists a set $\Gamma$ of subsets $S\subset Z$ with probability mass at least $1-\beta$ over $\cS_p(Z)$ such that for all $S\in\Gamma$ the quantity $\sup_{i\in[n]}\frac 1p \cdot \abs{L(S^{i\gets z_i})-L(S_{-i})}$ is small. Let $c>0$ be a sufficiently large constant, and for all $Z$ and $\beta\in(0,1)$ define the set
    \[
        \Gamma(Z,\beta)=\set{S\subset Z \mid \sup_{i\in[n]}\frac 1p\cdot \abs{L(S^{i\gets z_i})-L(S_{-i})}\leq \frac{c\lambda\log(n/\beta)}{np}}.
    \]
    Next, we argue that for most $Z$, the set $\Gamma(Z,\beta)$ has probability mass at least $1-\beta$ under $\cS_p(Z)$. We say a dataset $Z$ is \emph{$\beta$-typical} if $\Pr_{S\sim\cS_p(Z)}\brackets{S\in\Gamma(Z,\beta)}\geq 1-\beta$. To apply \Cref{ass:loo}, we first relate the difference $\frac 1p\cdot\abs{L(S^{i\gets z_i})-L(S_{-i})}$ for a random subset $S\sim\cS_p(Z)$ to the difference $\frac{n}{|S^{i\gets z_i}|}\cdot\abs{L(S^{i\gets z_i})-L(S_{-i})}$---that is, where the losses are normalized appropriately for samples of size $|S^{i\gets z_i}|$. For all $S\subset Z$, where $S^{i\gets z_i}$ has size $m = pn - r$ for some $r\in\Z$, we have
    \begin{align*}
        \frac 1p\cdot \abs{L(S^{i\gets z_i}) - L(S_{-i})} 
        & = \frac{n}{m+r}\abs{L(S^{i\gets z_i}) - L(S_{-i})}\\
        & = n\paren{\frac 1m - \frac {r}{m(m+r)}}\cdot \abs{L(S^{i\gets z_i}) - L(S_{-i})}\\
        & = \frac nm\cdot\abs{L(S^{i\gets z_i}) - L(S_{-i})}\paren{1-\frac{r}{m+r}}
    \end{align*}
    By the arguments in \Cref{claim:approx_loss}, we have $r\in[\pm \sqrt{cpn\log1/\beta}]$ with probability at least $1-\beta^2/2$. Conditioned on this event, we have
    \[
    \frac 1p\cdot \abs{L(S^{i\gets z_i}) - L(S_{-i})}\leq  \frac{2n}{m}\cdot\abs{L(S^{i\gets z_i}) - L(S_{-i})}.
    \] 
    And thus, applying \Cref{ass:loo} and our setting of $n$ in the statement of \Cref{claim:L_concentration} yields
    \[
    \Pr_{Z,S}\brackets{\frac 1p\cdot \abs{L(S^{i\gets z_i}) - L(S_{-i})} \geq \frac{2c\lambda\log1/\beta}{m}}\leq \Pr_{Z,S}\brackets{\frac{n}{m}\cdot \abs{L(S^{i\gets z_i}) - L(S_{-i})} \geq \frac{c\lambda\log1/\beta}{m}}\leq \beta^2/2. 
    \]
    By the union bound over all $i\in[n]$ and the event that $r\not\in [\pm \sqrt{cpn\log1/\beta}]$, we obtain that $\Pr_{Z,S}\brackets{S\in \Gamma(Z,\beta)}\geq 1-\beta^2$. Now, let $E$ be the event that $Z$ is $\beta$-typical. Then by the law of total probability,
    \[
    1-\beta^2\leq \Pr[E] + (1-\beta)(1-\Pr[E]),
    \]
    and thus $Z$ is $\beta$-typical with probability at least $1-\beta$.

    We complete the proof of \Cref{claim:L_concentration} by applying \Cref{thm:TL} to $\wt L$ and $\beta$-typical $Z$. Let $a=\frac{c\lambda\log(n/\beta)}{np}$ and let $b=\rho/p$ and $\gamma=a/(b-a)$. Then by \Cref{thm:TL}, there is an event $\cB$ such that $\Pr[\cB]\leq n\gamma^{-1}\beta$ and 
    \[
    \Pr_{S\sim\cS_p(Z)}\brackets{\abs{\wt L(S) - \Ex_{S\sim\cS_p(Z)}[\wt L(S)]}\geq t\wedge \overline\cB} \leq 2\exp\paren{-\frac{t^2}{8np(1-p)a^2 + 4at/3}}.
    \]
    Since $p\leq \frac 14$, $\frac{\log(1/\beta)}{np}\leq 1$, and $\Pr[\cB]\leq n\gamma^{-1}\beta$, there exists an absolute constant $c>0$, such that if  $t=a\sqrt{cpn\log\frac1\beta}$, then
    \[
    \Pr_{S\sim\cS_p(Z)}\brackets{\abs{\wt L(S) - \Ex_{S\sim\cS_p(Z)}[\wt L(S)]}\geq t} \leq 2\exp\paren{-\log\frac1\beta} + n\gamma^{-1}\beta.
    \]
    Since $Z$ is $\beta$-typical with probability at least $1-\beta$, the union bound gives
    \[
    \Pr_{\substack{Z\sim\cD^n \\ S\sim\cS_p(Z)}}\brackets{\abs{\wt L(S) - \Ex_{S\sim\cS_p(Z)}[\wt L(S)]}\geq t} \leq 2n\gamma^{-1}\beta.
    \] 
    Since $\frac n\gamma \leq n^2\rho$, setting $\beta'\gets\frac{\beta}{2n^2\rho}$ and repeating the proof with $\beta'$ yields the result.
\end{proof}

\subsection{When an exact oracle for the loss is unavailable}

In this section, we explain how the mechanisms in \Cref{sec:loss_estimation,sec:parameter_estimation} can be realized with an oracle $\appxL$ that satisfies $\appxL(Z) \leq L(Z) + \gamma$ for some sufficiently small $\gamma>0$.
In this case, one cannot directly apply our theorems since $\appxL$ need not be monotone. In the two paragraphs below, we discuss a simple method for enforcing monotonicity while maintaining the accuracy guarantees. 

\paragraph{Enforcing monotonicity.} Instead of applying our mechanisms directly to $\appxL$, we instead use a slightly modified function $f$ defined by $f(Z) = \appxL(Z) + \gamma|Z|$. It is not hard to see that if $S\subsetneq Z$, then $f(Z) - f(S) = \appxL(Z) - \appxL(S) + \gamma\paren{|Z|-|S|}$. Since $\appxL(Z)\geq L(Z)$ and $\appxL(S)\leq L(S) + \gamma$, we have $f(Z) - f(S) \geq L(Z) - L(S) - \gamma + \gamma\paren{|Z| - |S|}$. Since $L$ is monotone and $|Z| > |S|$, we have $f(Z) - f(S) \geq 0$, and thus $f$ is monotone.

\paragraph{Maintaining accuracy.} While the above transformation enforces monotonicity, it is not immediately obvious how it affects the accuracy guarantees of our mechanisms. Fortunately, an additional simple transformation to $\appxL$ suffices to ensure that the accuracy guarantees are preserved up to a constant factor. Consider the following transformation: set $\gamma\approx \alpha/\sqrt{np}$ and define $f(Z) = \appxL(Z) + \gamma|Z| - \gamma np$. By the same argument as above, $f$ is monotone. To see how this affects our accuracy guarantees, observe that the mechanisms of \Cref{thm:med-of-q,thm:avg-of-q} only query $f$ on datasets of size $np \pm \sqrt{np}$ with high probability. Moreover, for a dataset $S\in\cZ^n_\bot$ of size $|S| = np \pm \sqrt{np}$, we have $f(S) \approx \appxL(S) \pm \gamma\sqrt{np} \approx L(S) \pm \alpha$. Thus, for a sufficiently small setting of $\alpha$, the accuracy arguments in \Cref{sec:loss_estimation,sec:parameter_estimation} will not be affected.

\section{Query complexity lower bound for monotone statistics}
\label{sec:query_lb}

In this section we prove a lower bound on the query complexity of any algorithm that is private for monotone functions and satisfies a weak accuracy guarantee. It will be convenient for the analysis of our lower bound to switch from treating datasets as $n$-tuples to treating datasets as sets of $n$ distinct elements. For a set $\cZ$, let $\cZ^*$ denote the set of finite subsets of $\cZ$. We say two datasets are neighbors if they have the same size and differ in exactly one element. Note that this is without loss of generality since we can always define the functions in our hard instances to treat $n$-tuples as sets, however, in the interest of simplifying the notation, we assume that the datasets are sets instead. We assume the dataset size is public, and we say two datasets $Z$ and $Z'$ are neighbors if they are the same size and differ in exactly one element: formally, $|Z|=|Z'|$ and $|Z\cap Z'| = |Z|-1$.

Before presenting our result, we define a weak accuracy guarantee that the mechanism should satisfy in order for our lower bound to apply. 

\begin{definition}[$(\nu,N)$-constant, $(p,\beta)$-weakly accurate]
    A function $f:\cZ^*\to\R$ is \emph{$(\nu,N)$-constant under a distribution $\cD$} if for all $n\geq N$, we have
    \[
\Pr_{Z\sim\cD^n}\brackets{f(Z)=\nu\mid |Z| \geq N}=1.
    \]
    An algorithm $\cA$ is \emph{$(p,\beta)$-weakly accurate for a function $f$} if for all distributions $\cD$ such that $f$ is $(\nu,N)$-constant under $\cD$, algorithm $\cA$ satisfies
    \[
    \Pr_{Z\sim\cD^{N/p}}\brackets{\abs{\cA^f(Z)-\nu}\geq 1/2 \mid |Z|=\frac Np}\leq \beta.
    \]
\end{definition}

In other words, the condition ``$(\nu,N)$-constant under $\cD$'' means that once the sample size reaches $N$, the value of $f(Z)$ is no longer random: for every $n\ge N$, a draw $Z\sim \cD^n$ satisfies $f(Z)=\nu$ with probability $1$ (so long as $Z$ has enough unique elements).
An algorithm $\cA$ is ``$(p,\beta)$-weakly accurate for $f$'' if, for every distribution $\cD$ where $f$ becomes such an eventual constant $\nu$, the algorithm can recover $\nu$ from only $N/p$ fresh samples---with probability at least $1-\beta$ over $Z\sim \cD^{N/p}$, its output is within additive error $1/2$ of $\nu$, as long as all elements of $Z$ are distinct. 

\begin{theorem}[Query complexity lower bound]
    \label{thm:query_lb}
    Fix a sufficiently small constant $c>0$. Let $\cZ=\N$, and fix range size $\kappa\in\N$, failure probability $\beta\in(0,\frac 14)$, privacy parameters $\eps\in(0,1)$ and $\delta\in(0,\eps\beta^2)$, and sampling parameter $p\in(0,1)$. Let $$\cF=\set{f:\cZ^*\to[\kappa]\mid \text{$f$ is monotone}} \, .$$ Suppose mechanism $\cM^f$ is $(\eps,\delta)$-DP and $(p,\beta)$-weakly accurate for all $f\in\cF$. Let $$\tau = \frac{c}{\eps}\log\min\paren{\frac\kappa\beta, \frac\eps\delta} \, .$$
    If $\tau \leq n$, then $\cM^f$ has expected query complexity $\exp\paren{{\Omega\paren{p\tau}}}$ for some $f\in\cF$ and $Z\in\cZ^n$.  
\end{theorem}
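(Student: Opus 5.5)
We construct a hard family of instances, following the outline in the techniques section. Let $n = N/p$. Since $\cZ=\N$ is infinite, we can pick a uniformly random set $U\subset\N$ of $n+\tau$ fresh elements. Split it as $U = A\cup B\cup B'$ with $|A| = n-\tau$ and $|B|=|B'|=\tau$, and set $X = A\cup B$ and $X' = A\cup B'$. These two datasets differ in exactly $\tau$ elements.

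Draw labels $y_0,y_1$ uniformly from $[\kappa]$ and define the monotone function
\[
f(S)=\begin{cases} y_1 & \text{if } |S|\ge N,\ S\subseteq A\cup B \text{ (or with $y_0$ and $B'$)},\\
\text{a floor value} & \text{otherwise}.\end{cases}
\]
The exact rule must be arranged so that $f$ is monotone with values in $[\kappa]$. One option is to order the labels $y_0<y_1$ and set
\[
f(S)=\max\{\, y_1\cdot\Ind[|S\cap X|\ge N \wedge \ldots],\ \ldots \,\}.
\]
I would simplify using the standard trick from \cite{LinderRSS25,SteinkeS25}. Let $f(S)=y_1$ if $S$ contains at least $N$ elements of $X$ and $\ge 1$ element of $B$, or contains at least $N$ elements of $A$, and define the $y_0$ rule symmetrically. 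A top value $\kappa$ is assigned to any set that would otherwise create a conflict. This keeps $f$ monotone.

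Next we define distributions. Let $\cD_1$ be uniform over a large set $A_1\cup B$, with $A_1$ extending $A$ so that the dataset $Z\sim\cD_1^n$ typically has only a few elements of $B$. Likewise $\cD_0$ uses $B'$. We tune the construction so that $f$ is $(y_1,N)$-constant under $\cD_1$ and $(y_0,N)$-constant under $\cD_0$. Weak accuracy then forces $\cM^f(Z)=y_1$ and $\cM^f(Z')=y_0$ with probability $\ge 1-\beta$.

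The core step is an indistinguishability lemma. Fix any adaptive query strategy making $q\le e^{cp\tau}$ queries. We show that the transcript of queries on $X$ is close in TV to the transcript on $Z$, where both are drawn from the joint randomness of the hidden partition. The key event is that a query $S\subseteq X$ with $|S|\ge N$ (which is required to get a non-floor answer) has the answer revealing $B$ versus $A$ only if $S\cap B$ has an unusual size.

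Since $|X|=N/p$, a query of size $\ge N$ is a $p$-fraction of the data. The hidden $\tau$-set $B$ looks like a random $\tau$-subset, so hitting the distinguishing pattern has probability about $e^{-\Omega(p\tau)}$. The natural choice of pattern is avoiding $B$ entirely: $\Pr[S\cap B=\emptyset]\approx(1-p)^\tau$. Hence I expect the hard instance should be designed so that the answer differs between $X$ and $Z$ only on queries $S$ with $|S|\ge N$ and $S\cap B=\emptyset$. This is exactly the regime where our algorithm's subsamples must land, which matches the $e^{4p\tau}$ upper bound.

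A union bound over $q$ adaptive queries then gives TV at most $q e^{-\Omega(p\tau)}$. We can take the worst case over adaptivity because, before a distinguishing query occurs, all answers are determined by public information. So if $q\le e^{cp\tau}$, then $\cM^f(X)=y_1$ and $\cM^f(X')=y_0$ each hold with probability $\ge 1-2\beta$.

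We finish with group privacy (\Cref{fact:group-privacy}) at distance $\tau$:
\[
1-2\beta\le \Pr[\cM(X)=y_1]\le e^{\eps\tau}\bigl(\Pr[\cM(X')=y_1]+\delta/\eps\bigr).
\]
Averaging over the random label $y_1$ gives $\Pr[\cM(X')=y_1]\lesssim 1/\kappa+2\beta$; more carefully, one uses $\beta$ instead of $1/2$. Thus $e^{\eps\tau}\gtrsim\min(\kappa/\beta,\eps/\delta)$, which contradicts $\tau=\frac c\eps\log\min(\frac\kappa\beta,\frac\eps\delta)$ for small $c$. The expected query complexity follows by Markov's inequality.

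I expect the main obstacle to be simultaneously guaranteeing three properties: monotonicity of $f$, exact $(\nu,N)$-constancy under $\cD_0$ and $\cD_1$, and the property that only $B$-avoiding large queries distinguish $X$ from $Z$. I would first try the \cite{LangeLRV25}-style construction, in which $f$ depends only on $|S\cap A|$, $|S\cap B|$, and $|S\cap B'|$ through thresholds.
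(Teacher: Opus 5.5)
There is a genuine gap: the hard instance is never actually constructed, and the one you sketch is internally inconsistent. Your overall architecture does match the paper's. It has two datasets at distance $\tau$, and the view is indistinguishable from i.i.d.\ data except through size-$\ge N$ queries that avoid $\tau$ hidden planted points, each with probability about $(1-p)^\tau$. Adaptivity is handled by a union bound plus simulation, and group privacy finishes. The problems in your instance are as follows.

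\emph{Constancy conflict.} Suppose $\cD_1$ is uniform on $A_1\cup B$ and $\cD_0$ on $A_0\cup B'$, with $A_1,A_0\supseteq A$. Whenever $n-\tau\ge N$, any $N$ distinct points of $A$ form a positive-probability sample under both distributions. So $(y_1,N)$- and $(y_0,N)$-constancy would force $f$ to equal both $y_1$ and $y_0$ on that set.

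\emph{Your rule for $f$ clashes the same way.} Every set containing $N$ points of $A$, including $X$ and $X'$ themselves, satisfies both the $y_1$ rule and its mirror $y_0$ rule. The tie-break $\kappa$ is then what $f$ returns on both datasets.

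\emph{This is not a tuning issue.} Large subsets of $A$ lie inside both $X$ (where $f$ must look like $y_1$) and $X'$ (where it must look like $y_0$). Monotonicity therefore caps their value at $\min(y_0,y_1)$. Hence only the distribution with the smaller label can have $N$ points of $A$ in its support. The construction must be asymmetric, and $Z\sim\cD_1^n$ cannot ``mostly consist of $A$.''

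The paper resolves this with a nested construction:
\begin{itemize}
\item Hide disjoint sets $B_0,B_1$ of size $t\gg n^2$ in a universe of size $M\gg qt$, and let $\cD_b$ be uniform on $B_b$.
\item Draw $y_0\in[\kappa/2]$ and $y_1\in[\kappa]\setminus[\kappa/2]$.
\item Let $f$ take a floor value on sets of size $<N$, $y_0$ on larger subsets of $B_0$, $y_1$ on the remaining large subsets of $B_0\cup B_1$, and $\kappa$ otherwise. It is monotone by nesting.
\end{itemize}
Your $A$ goes inside $B_0$ and your $B$ inside $B_1$. Then $X'\subset B_0$ is genuinely $\cD_0$-like, up to $O(qt/M+n^2/t)$ from hiding $B_1$ and from collisions.

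$X$ is compared with $Z\sim\cD_1^n$, which is disjoint from $A$. The comparison goes through the partition-oblivious function $g$ that equals $y_1$ on every large subset of $B_0\cup B_1$. Under $g$, both $X$ and $Z$ are (up to collisions) uniform $n$-subsets of $B_0\cup B_1$. Moreover $f\ne g$ only on large subsets of $B_0$. On $X$ such a query must avoid the $\tau$ hidden $B_1$-points, which is your $(1-p)^\tau$ event. On $Z$ it must consist of $N$ outside points all landing in $B_0$, which has probability $e^{-\Omega(N)}$. This $g$ is exactly the simulator your adaptivity remark needs; the paper makes it rigorous via the \cite{LangeLRV25} reduction to nonadaptive algorithms that are given $g$.

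Finally, your last step needs both bounds at once:
\[
\Pr\bigl[\abs{\cM^f(X')-y_1}<1/2\bigr]\lesssim\min(\beta,2/\kappa)\le\sqrt{2\beta/\kappa}.
\]
The $2/\kappa$ term comes from the view on $X'$ being nearly independent of $y_1$. The sum $1/\kappa+2\beta$ you wrote only reaches $\tau\approx\frac1\eps\log\min(\kappa,1/\beta,\eps/\delta)$, not $\frac1\eps\log\min(\kappa/\beta,\eps/\delta)$.
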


As we will see in the proof, the parameter $\tau$ controls the distance between the ``bad'' datasets we use to show a privacy violation. 

\begin{proof} The proof proceeds by constructing a family of monotone functions $\cF$, and two families of distributions $\set{\cD_0}$ and $\set{\cD_1}$ such that every $f\in\cF$ is $(y_0,N)$-constant under some $\cD_0$ and $(y_1,N)$-constant under some $\cD_1$. We then argue that for every mechanism $\cM$ with query complexity $\exp\paren{o\paren{p\tau}}$ that is $(p,\beta)$-weakly accurate for $\cF$, there exist neighboring inputs $X,X'$ and a function $f\in\cF$, such that $\cM^f(X)\not\approx_{\eps,\delta}\cM^f(X')$, i.e., privacy is violated.   

    Let $G=[M]$ for some sufficiently large $M$, and fix $t$ such that $n^2\ll t\ll M$ (we will set $M$ and $t$ later in the proof). To sample distributions $\cD_0$ and $\cD_1$, sample $B_0=\set{b^0_1,\dots,b^0_t}$ uniformly without replacement from $G$, and $B_1=\set{b^1_1,\dots,b^1_t}$ uniformly without replacement from $G\setminus B_0$. Let $\cD_0$ and $\cD_1$ be the uniform distributions over $B_0$ and $B_1$ respectively. To sample a function $f\in\cF$, sample $B_0$ and $B_1$, along with integers $y_0\in[\kappa/2]$ and $y_1\in[\kappa]\setminus[\kappa/2]$. Let $n=N/p$ and $\tau=\frac{1}{2\eps}\ln \gamma \min\paren{\frac\kappa\beta, \frac\eps\delta}$ for a sufficiently small absolute constant $\gamma\in(0,1)$.  Define 
    \[
    f(Z) = 
    \begin{cases}
    0 & \text{if}~~ |Z| < N\\
    y_0 & \text{else if}~~ Z\subset B_0 \\
    y_1 & \text{else if}~~ Z\subset B_0\cup B_1\\
    \kappa & \text{otherwise}%
    \end{cases}
    \]
    It is not hard to see that $f$ is monotone, and that it is $(y_0,N)$-constant and $(y_1,N)$-constant under the corresponding $\cD_0$ and $\cD_1$.

    To prove the privacy violation, we will sample sets $X$ and $X'$ as follows: Let $X=\set{x_1,\dots,x_n}$ where $x_1,\dots,x_\tau$ are sampled uniformly from $B_1$ without replacement, and $x_{\tau+1},\dots,x_n$ are sampled uniformly from $B_0$ without replacement. Let $X'=\set{x'_1,\dots, x'_{\tau},x_{\tau+1},\dots,x_n}$ where $x'_1,\dots,x'_{\tau}$ are sampled uniformly from $B_0\setminus X$ without replacement. Note that $X'\subset B_0$ and that $X$ and $X'$ are at distance $\tau$.

    \Cref{lem:inaccurate,lem:accurate} state that if $\cM$ is $(\eps,\delta)$-DP and $(p,\beta)$-weakly accurate, and if $f$, $X$, and $X'$ are sampled as above, then with high probability $\cM^f(X')$ will be far from $y_1$, while $\cM^f(X)$ will be close to $y_1$. Since the bounds we obtain are in terms of the query complexity of $\cM$, we can combine them with group privacy to obtain a lower bound on the query complexity of any mechanism satisfying the aforementioned guarantees. 

    Let parameters $\eps,\delta,\beta,\kappa$, $p$, and $\cM$ be as in \Cref{thm:query_lb}, and let $\tau$, $y_1$, $f$, $X$, and $X'$ be as defined above. Additionally, for all $q\in\N$, an algorithm $\cA$ is a \emph{$q$-query algorithm} if $\cA^f$ has expected query complexity at most $q$ for all inputs $Z$ and functions $f$. Then we can show the following lemmas:

\begin{lemma}
    \label{lem:inaccurate}
    For all $q\in\N$, if $\cM$ is a $q$-query algorithm, %
    then 
    \[
    \Pr\brackets{\abs{\cM^{f}(X')-y_1}<\frac 12}\leq \min\paren{\beta ,\frac{2}{\kappa}} + O\paren{\frac{tq}{M} + \frac{n^2}{t}}\, .
    \]
\end{lemma}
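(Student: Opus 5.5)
We compare the run of $\cM^f(X')$ with a run of $\cM$ on an input whose law matches that of a genuine sample from $\cD_0$, where weak accuracy forces the output toward $y_0$. The point is that $X'\subset B_0$ consists of $n$ distinct elements of $B_0$. Conditioned on $B_0$, it is a uniformly random $n$-subset of $B_0$, because the $x'_i$ and the $x_{\tau+1},\dots,x_n$ are drawn without replacement from $B_0$ and are disjoint from each other. A sample $Z\sim\cD_0^{n}$ has the same law conditioned on $|Z|=n$, and this conditioning fails with probability at most $O(n^2/t)$ by the birthday bound. Moreover $f$ is $(y_0,N)$-constant under $\cD_0$. However, $\cM$ queries the oracle $f$, which also encodes $B_1$ and $y_1$, so weak accuracy must be applied to $f$ itself; it says $\cM^f(Z)$ lands within $1/2$ of $y_0$ with probability at least $1-\beta$. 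Since $y_0\le\kappa/2<y_1$, the event $\abs{\cM^f(X')-y_1}<1/2$ then has probability at most $\beta+O(n^2/t)$.

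This already gives the $\beta$ branch of the minimum, so the remaining task is the $2/\kappa$ branch. For this I would use a simulation (hybrid) argument. Let $g$ be the function that agrees with $f$ except that it returns $\kappa$ on every set not contained in $B_0$, so that $g$ is defined without any reference to $B_1$ or $y_1$. The runs $\cM^f(X')$ and $\cM^g(X')$ differ only if some query $Q$ has $Q\not\subset B_0$, $Q\subset B_0\cup B_1$, and $|Q|\ge N$. Any such $Q$ contains an element of $B_1$.

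The key step, and the main obstacle, is bounding the probability of this event by $O(tq/M)$. Conditioned on everything $\cM$ has seen, which is $X'$ together with the answers of $g$ (these depend only on $B_0$, $y_0$, and the internal coins), $B_1$ is a uniformly random $t$-subset of $G\setminus B_0$. It is independent of the transcript so far, provided no earlier query has hit it. So each query outside $B_0$ hits $B_1$ with probability at most about $t/(M-t)$ per fresh element. Here I would have to be careful that a query containing many new elements does not get much larger odds. To handle this I would use the following reduction: a query with $Q\not\subset B_0\cup B_1$ is answered $\kappa$ by both oracles, so the only way to distinguish them is a query $Q\subset B_0\cup B_1$. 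Such a $Q$ has every element outside $B_0$ lying in $B_1$, and even a single specified element lands in $B_1$ with probability at most $t/(M-t)$. Summing over at most $q$ expected queries and applying Markov's inequality gives the $O(tq/M)$ term.

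Finally, $\cM^g(X')$ is independent of $y_1$, and $y_1$ is uniform over $\kappa/2$ values. Hence the probability that $\cM^g(X')$ lands within $1/2$ of $y_1$ is at most $2/\kappa$, since at most one integer lies within $1/2$ of any point. Combining this with the accuracy-based bound gives
\[
\min(\beta,2/\kappa)+O\!\left(\frac{tq}{M}+\frac{n^2}{t}\right).
\]
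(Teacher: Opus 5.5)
Your proof is correct, but it is organized differently from the paper's. The paper moves everything into an idealized world in three steps. It replaces $f$ by $h$ (your $g$) at total-variation cost $O(qt/M)$, proving this for nonadaptive algorithms given $(B_0,y_0)$ and lifting via \Cref{prop:adaptivity}. It then replaces $X'$ by $Z\sim\cD_0^n$ at cost $O(n^2/t)$. Only then does it apply weak accuracy (to $h$) and the independence of $y_1$.

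You instead handle the two branches of the minimum separately. For the $\beta$ branch you apply weak accuracy to $f$ itself. This is legitimate: $f$ is $(y_0,N)$-constant under $\cD_0$, and given $(B_0,B_1,y_0,y_1)$ the set $X'$ is exactly a uniform $n$-subset of $B_0$, i.e.\ it has the law of $Z\sim\cD_0^n$ conditioned on $|Z|=n$. Since weak accuracy is already defined conditionally on $|Z|=N/p$, your $O(n^2/t)$ slack is not even needed there.

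For the $2/\kappa$ branch you only need the swap $f\to g$ on $X'$ itself, because $\cM^g(X')$ is already independent of $y_1$. Your deferred-decision argument makes that swap cheap. The transcript of the $g$-run depends only on $B_0$, $y_0$, $X'$ and the coins, so it is independent of $B_1$. Hence a fixed element of each query $Q\not\subset B_0$ lands in $B_1$ with probability at most $t/(M-t)$. Summing over the at most $q$ expected queries and using $\Pr[H\ge 1]\le \mathbb{E}[H]$, where $H$ is the number of distinguishing queries, gives the $O(qt/M)$ term. This is precisely the content of \Cref{prop:adaptivity}, specialized and inlined.

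Your route is shorter and gives the slightly sharper bound $\min\bigl(\beta,\,2/\kappa+O(qt/M)\bigr)$ with no $n^2/t$ term. It also never needs the intermediate function to be an admissible instance for the accuracy guarantee. The paper's route is uniform with \Cref{lem:accurate}. There $X$ mixes elements of $B_0$ and $B_1$, so accuracy cannot be applied to $X$ directly, and a hybrid through an idealized sample (together with the nonadaptive reduction) is genuinely needed.
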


\begin{lemma}
    \label{lem:accurate}
    For all $q\in\N$, if $\cM$ is a $q$-query algorithm, %
    then 
    \[
    \Pr\brackets{\abs{\cM^{f}(X)-y_1}\geq 1/2}\leq \beta + qe^{-\Omega\paren{p\tau}} + O\paren{\frac{n^2}{t}}\, .
    \] 
\end{lemma}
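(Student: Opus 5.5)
We need to prove Lemma \ref{lem:accurate}: with high probability $\cM^f(X)$ is within $1/2$ of $y_1$. The plan is to compare the run of $\cM^f$ on $X$ with its run on a genuinely accurate instance. We take a distribution $\cD$ under which $f$ is $(y_1,N)$-constant and whose samples "look like" $X$. Then we show that, unless some query of $\cM$ lands on a rare kind of subset, the two runs are indistinguishable.

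\textbf{The comparison instance.} Let $\cD_{01}$ be the uniform distribution over $B_0\cup B_1$. Under $\cD_{01}$, a sample $Z$ of $N$ distinct elements is a subset of $B_0\cup B_1$. It is not contained in $B_0$ except with probability $2^{-N}$-ish, and the $2^{-N}$ issue can be sidestepped. For the clean version, we instead use the distribution that mixes so that $f$ is $(y_1,N)$-constant; alternatively we simply define $f$'s accurate instance as $Z=\{z_1,\dots,z_n\}$ with $\tau/p$-fraction-matching composition. Concretely, I would take $Z$ uniform over $n$-subsets of $B_0\cup B_1$ conditioned on having exactly $\tau$ elements from $B_1$. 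Then $Z$ has the same law as $X$, given $B_0,B_1$ random.

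The real point is the following. Since $B_0,B_1$ are random hidden sets of size $t\gg n^2$, the law of $X$ is, up to $O(n^2/t)$ in total variation (collisions and distinctness), just a uniform random $n$-set with a uniformly random $\tau$-subset marked. The function $f$ only depends on membership in $B_0$, $B_1$.

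\textbf{Key step: $f$ on subsets of $X$ is determined by the marked part.} Any query $S$ the algorithm makes either contains an element outside $X$, or is a subset of $X$. In the first case the element lies outside $B_0\cup B_1$ except with probability $O(t/M)$ per query; since here we only need a bound in terms of $q e^{-\Omega(p\tau)}$, we can absorb this by choosing $M$ huge. In the second case, $f(S)\in\{0,y_0,y_1\}$, with $f(S)=y_0$ iff $|S|\ge N$ and $S$ avoids the $\tau$ marked elements. We compare with the simulated "accurate" dataset $Z\sim\cD^{n}$ for a distribution $\cD$ under which $f$ is $(y_1,N)$-constant. For instance, redefine the accurate world as one where every $N$-subset of the input evaluates to $y_1$. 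The two oracles differ only on queries $S\subset X$ with $|S|\ge N=pn$ that miss all $\tau$ marked points. By symmetry the marked set is a uniformly random $\tau$-subset, hidden from the algorithm until such a query succeeds. For a fixed $S$ of size $\ge pn$, the probability that a uniformly random $\tau$-subset of $[n]$ avoids $S$ is at most $(1-p)^{\tau}\le e^{-p\tau}$.

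\textbf{Completing the argument.} Use a hybrid/adaptive union bound. Run $\cM$ against the oracle that answers $y_1$ on all large subsets of the input. Until the first query witnessing a difference, the transcripts coincide. Each query reveals a difference with probability at most $e^{-p\tau}$ conditioned on no prior difference; the marked set remains uniform given all "no" answers, up to a negligible conditioning correction. Since there are $q$ queries in expectation, the expected number of queries combined with Markov gives a total probability at most $q e^{-\Omega(p\tau)}$. In the accurate world, weak accuracy gives error $\ge 1/2$ with probability $\le\beta$. Adding the $O(n^2/t)$ coupling loss yields the bound.

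\textbf{Main obstacle.} The delicate part is constructing a legitimate distribution $\cD$ under which the same $f$ is $(y_1,N)$-constant and whose samples are coupled with $X$. Weak accuracy is only promised for such $\cD$. The naive choice, uniform on $B_1$, yields datasets with no $B_0$ points, so the coupling fails. I would first try $\cD=$ uniform on $B_0\cup B_1$ with the conditioning "$|Z\cap B_1|$ large". If that breaks constancy, I would modify $f$ as in the paper, setting $y_1$ whenever $S\not\subset B_0$, and argue via the random-marking symmetry above.
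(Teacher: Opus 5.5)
\textbf{The gap.} The step you flag as the main obstacle is never resolved, and none of your candidate fixes closes it: weak accuracy is never applied to an admissible instance. Weak accuracy only constrains $\cM^{f'}(Z)$ for $Z$ drawn i.i.d.\ from some $\cD$ under which $f'$ is constant with probability one.

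Your candidates fail as follows. Uniform $n$-subsets of $B_0\cup B_1$ with exactly $\tau$ points in $B_1$ is just the law of $X$, and it is not a product distribution. Uniform on $B_0\cup B_1$ is a product distribution, but then $f(Z)=y_0$ with probability about $2^{-n}>0$, so $f$ is not $(y_1,N)$-constant. Conditioning on $|Z\cap B_1|$ being large destroys the product structure again. Likewise, the ``accurate world'' in your last paragraph is $\cM$ run on the input $X$ against a fake oracle, and nothing in the hypotheses says $\cM$ is accurate there. Your $e^{-p\tau}$ union bound only shows that swapping the oracle on input $X$ is nearly invisible; it never ties the output to $y_1$.

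\textbf{The missing idea.} Make the intermediate oracle blind to the split of $B=B_0\cup B_1$. The paper uses $g$, which equals $0$ on sets of size below $N$, $y_1$ on larger subsets of $B$, and $\kappa$ otherwise. It is defined on all of $B$, not only on subsets of the input, so it is a fixed monotone function of $(B,y_1)$. Condition on $(B,y_1)$ and average over the uniformly random split. Then $X$ is a uniform $n$-subset of $B$, which is also the law of $Z\sim\cD_1^n$ conditioned on no collisions. Hence $(g,X)$ and $(g,Z)$ are $O(n^2/t)$-close. This is why your rejection of uniform on $B_1$ is mistaken: the two inputs have different joint laws with $f$, but essentially identical joint laws with $g$.

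The paper then returns from $g$ to $f$ on input $Z$. Since $Z\subset B_1$, the two oracles differ only on sets of size at least $N$ contained in $B_0\subseteq B\setminus Z$. Given $(B,Z)$, each such query lands there with probability at most $(2/3)^N=e^{-\Omega(pn)}$. Finally, $f$ is $(y_1,N)$-constant under $\cD_1$. You could even skip this last step: $g$ is itself monotone and $(y_1,N)$-constant under the uniform distribution on $B$, and $\cM$ is weakly accurate for every function in $\cF$.

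\textbf{Smaller repairs.} Your avoidance bound $(1-p)^\tau\le e^{-p\tau}$ is the right computation for the step from $(f,X)$ to $(g,X)$, but two details need fixing. First, query points in $B\setminus X$ should be handled through the random split, since each lies in $B_0$ with probability about $1/2$. An $O(qt/M)$ term does not work here, because the lemma contains no such term. Second, the marked set does not remain uniform after ``no'' answers. Instead, run against the marking-independent oracle $g$, whose queries are then independent of the marking, and union-bound over them; this is what \Cref{prop:adaptivity} formalizes.
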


To prove \Cref{lem:inaccurate,lem:accurate}, we show that the view of the mechanism on input $X'$ and $X$ is close in TV distance to the  view of the mechanism on input $Z'\sim\cD_0^n$ and $Z\sim\cD_1^n$. We then apply the accuracy guarantee of $\cM$ and the fact that $y_1$ is uniformly random in $[\kappa]\setminus[\kappa/2]$ to deduce each bound. We defer the proofs of \Cref{lem:accurate,lem:inaccurate} to the end of the section, and complete the proof of \Cref{thm:query_lb} below. 

Since the lower bound holds trivially  for $p=o\paren{1/\tau}$ we assume $p=\Omega\paren{1/\tau}$. Set $q=e^{O(p\tau)}$, set and $M$ and $t$ such that $\frac{qt}{M} + \frac{n^2}{t} = e^{-\omega(\tau)}$ (essentially, this term is small enough that we can ignore it). Suppose $\cM$ has query complexity $q$. Combining \Cref{lem:inaccurate,lem:accurate} with group privacy (\Cref{fact:group-privacy}) yields the desired privacy violation since
    \begin{align*}
    1-\beta - e^{-\Omega(\min(p\tau,pn))}
    &\leq \Pr\brackets{\abs{\cM^{f}(X)-y_1}< 1/2}\\
    &\leq e^{\eps\tau}\paren{\Pr\brackets{\abs{\cM^{f}(X')-y_1}< 1/2} + \delta/\eps}\\
    &\leq e^{\eps\tau}\paren{\min\paren{\beta,\frac2{\kappa}} + \delta/\eps}.
    \end{align*}
The first inequality follows from \Cref{lem:accurate}, the second follows from group privacy and the fact that $X$ and $X'$ are at distance $\tau$, and the third follows from \Cref{lem:inaccurate}. If $\tau=\frac{1}{2\eps}\ln \gamma\min\paren{\frac\kappa\beta, \frac\eps\delta}$, $\delta<\eps\beta^2$, and $\gamma$ is sufficiently small, the right hand side is at most $\frac 12$. Additionally, since $p=\Omega\paren{1/\tau}$ and $\beta<\frac 14$, the left hand side is at least $\frac 12$, which yields the desired contradiction.  
\end{proof}

In the remainder of the section we prove \Cref{lem:accurate,lem:inaccurate}. We first introduce a standard definition for reasoning about the distribution of the view of an algorithm.

\begin{definition}[The view of an algorithm {$\cA[\cD]$}]
    \label{def:view}
    For all algorithms $\cA$, and all distributions $\cD$ over tuples $\set{(f,I)}$, let $\cA[\cD]$ be the distribution over query answer histories $\set{(f(J_i),J_i)}$ when $(f,I)\sim\cD$ and $\cA^f$ is run on input $I$.  
\end{definition}

    One of the main challenges in proving query complexity lower bounds is handling algorithms that make ``adaptive'' queries to the function. An algorithm $\cA$ is \emph{nonadaptive} if given an input $Z$ and query access to $f$, it specifies all queries $J_1,\dots, J_q$ after reading $Z$---that is, its queries to $f$ do not depend on previous query answers. An algorithm $\cA$ is \emph{adaptive} if it is not nonadaptive. 
    
    In order to argue that the statements hold for algorithms that make adaptive queries to $f$, we leverage the following idea, which was introduced in \cite{LangeLRV25} who used it to prove a query complexity lower bound in the context of property testing. While the exact statement of \Cref{prop:adaptivity} does not appear in \cite{LangeLRV25}, it follows readily from their techniques. Informally, \Cref{prop:adaptivity} states that if a $q$-query nonadaptive algorithm cannot distinguish two distributions $\cD_0$ over $\set{(f_0,I)}$ and $\cD_1$ over $\set{(f_1,I)}$ when a description of $f_0$ is always included with its input $I$, then no $q$-query adaptive algorithm can distinguish between $\cD_0$ and $\cD_1$ when the description of $f_0$ is not included in the input.

\begin{proposition}
    \label{prop:adaptivity}
    Let $\cD$ be a distribution over tuples $(f_0,f_1,Z)$ where $f_0$ and $f_1$ are functions with domain $\cZ^*$ and $Z\in\cZ^*$. 
    Define the following distributions:
    \begin{itemize}
        \item Let $\cD_0$ and $\cD_1$ be distributions over tuples $(f_0,Z)$ and $(f_1,Z)$ respectively.
        \item Let $\wt\cD_0$ and $\wt\cD_1$ denote the distributions given by sampling $(f_0,f_1,Z)\sim\cD$, 
        and returning $(f_0, (f_0,Z))$ and $(f_1,(f_0,Z))$ respectively.
    \end{itemize}
    Then, for every $q$-query adaptive algorithm $\cA$, there exists a $q$-query nonadaptive algorithm $\cA_{na}$ such that 
    \[
    \dtv\paren{\cA[\cD_0],\cA[\cD_1]}\leq \dtv\paren{\cA_{na}[\wt\cD_0],\cA_{na}[\wt \cD_1]}.
    \]
\end{proposition}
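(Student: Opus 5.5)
The plan is to simulate the adaptive algorithm $\cA$ by a nonadaptive one that, because it is handed a full description of $f_0$ in its input, can answer its own queries internally. Concretely, $\cA_{na}$ on input $(f_0,Z)$ runs $\cA$ on input $Z$, and whenever $\cA$ issues a query $J$ it answers with $f_0(J)$, computed from the description of $f_0$, without touching the oracle. This fixes the entire transcript $J_1,\dots,J_k$ of $\cA$'s queries before any real oracle call. $\cA_{na}$ then submits $J_1,\dots,J_k$ to the oracle as a single nonadaptive batch and outputs the recorded history $\set{(f(J_i),J_i)}$. Internal randomness of $\cA$ is sampled once and shared by the simulation. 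Since $\cA$ has expected query complexity at most $q$ on every input and function, and the simulation makes exactly as many oracle queries as the simulated run of $\cA^{f_0}(Z)$, $\cA_{na}$ is a $q$-query algorithm.

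The analysis compares the two experiments. Under $\wt\cD_0$ the oracle is $f_0$, so the batch answers coincide with the simulated ones and the view of $\cA_{na}$ is distributed exactly as $\cA[\cD_0]$, viewed as a function of $(f_0,Z)$. Under $\wt\cD_1$ the oracle is $f_1$. I would couple a run of $\cA^{f_1}(Z)$ with the simulation (same $(f_0,f_1,Z)$ from $\cD$, same coins). The two runs agree step by step until the first query $J_i$ with $f_0(J_i)\neq f_1(J_i)$. Up to that index, $\cA^{f_1}$ issues exactly $J_1,\dots,J_i$, and the batch view of $\cA_{na}$ under $f_1$ records the true answers $f_1(J_i)$. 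Let $E$ be the event that some $J_i$ in the $f_0$-driven transcript has $f_0(J_i)\neq f_1(J_i)$. This event is a deterministic function of the view of $\cA_{na}$, read off by comparing the recorded answers with the simulated ones.

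Next comes a distinguisher argument. Take any test $T$ on views of $\cA$. Define a test $T'$ on views of $\cA_{na}$ that accepts if $E$ occurs, and otherwise evaluates $T$ on the common transcript. On $\neg E$ the two adaptive views are identical and equal the $\cA_{na}$ transcript, so
\[
\Pr_{\cD_0}[T]-\Pr_{\cD_1}[T]\le \Pr[E \text{ under } \wt\cD_1]-\Pr[E\text{ under }\wt\cD_0].
\]
Here $\Pr[E]=0$ under $\wt\cD_0$, so the right side is bounded by $\Pr_{\wt\cD_1}[T']-\Pr_{\wt\cD_0}[T']$. Choosing $T'$ as the maximizing combination then gives $\dtv(\cA[\cD_0],\cA[\cD_1])\le\dtv(\cA_{na}[\wt\cD_0],\cA_{na}[\wt\cD_1])$.

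The main obstacle is that the views of $\cA_{na}$ and $\cA$ live on different spaces, since $\cA_{na}$'s view includes $f_0$ implicitly through its input. I would handle this by defining the comparison through the coupled runs and using data processing: the adaptive transcript under either function is a post-processing of the nonadaptive transcript, namely truncation at the first disagreement. Care is also needed because expected rather than worst-case query complexity is bounded. This is handled because the simulated run consumes the same coins, so the per-input query counts match.
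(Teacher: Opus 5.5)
Your construction is the paper's: $\cA_{na}$ simulates $\cA$ with answers from $f_0$ and then queries the true oracle on the resulting batch. The runs are coupled through a shared $(f_0,f_1,Z)$ and shared coins. The disagreement event $E$ has probability $0$ under $\wt\cD_0$, while $\neg E$ forces the two adaptive views to coincide. Your displayed inequality is correct and already finishes the proof. Since $E$ is determined by the run of $\cA_{na}$ (which knows $f_0$; the paper makes this explicit by having $\cA_{na}$ output the bit $\Ind[E]$), you get $\dtv\paren{\cA[\cD_0],\cA[\cD_1]}\le\Pr_{\wt\cD_1}[E]-\Pr_{\wt\cD_0}[E]\le\dtv\paren{\cA_{na}[\wt\cD_0],\cA_{na}[\wt\cD_1]}$. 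This is exactly the paper's chain through the coupling lemma.

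Two of your closing steps are wrong as written and should be dropped. First, take your test $T'$ (accept on $E$, else run $T$) and let $v_0$ be the coupled view of $\cA^{f_0}(Z)$. Then $\Pr_{\wt\cD_1}[T']-\Pr_{\wt\cD_0}[T']=\Pr[E\wedge\neg T(v_0)]$. This can be strictly smaller than $\Pr[E]$, and is even $0$ if $T$ accepts $v_0$ whenever $E$ holds. So the claim that the right side is bounded by $\Pr_{\wt\cD_1}[T']-\Pr_{\wt\cD_0}[T']$ fails. Use $\Ind[E]$ itself as the distinguisher instead, or use $T\wedge\neg E$ with the roles of $\wt\cD_0$ and $\wt\cD_1$ swapped. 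Second, the transcript of $\cA^{f_1}(Z)$ is not a post-processing (truncation) of the batch transcript. After the first disagreement, $\cA^{f_1}$ issues new queries whose answers the batch never contains. That is precisely why the argument needs the coupling bound $\dtv\le\Pr[v_0\neq v_1]$ rather than data processing.
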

\begin{proof}
    The proof proceeds by a simulation argument. Let $\cA$ be a $q$-query adaptive algorithm that gets input $Z$ and query access to $h$, 
    and consider the following nonadaptive algorithm $\cA_{na}$ that gets input $(g,Z)$ and query access to $h$:
    \begin{enumerate}
        \item Simulate $\cA$ and answer its queries $J_1,\dots, J_Q\in\cZ^*$ using $g$ (i.e., return $g(J_i)$).
        \item Nonadaptively query $h$ on $J_1,\dots, J_Q$. If $h(J_i) = g(J_i)$ for all $i\in[Q]$ then return $0$, otherwise return $1$.
    \end{enumerate}
    Let $\cV$ be a coupling over views $(v_0,v_1)\sim\paren{\cA[\cD_0'],\cA[\cD_1']}$, where both runs of $\cA$ are executed using the same random coins and the inputs are sampled jointly as $(f_0,f_1,Z)\sim\cD$. By the coupling lemma
    \[
    \dtv\paren{\cA[\cD_0],\cA[\cD_1]}\leq \Pr[v_0\neq v_1].
    \]
    Next, we show that $\Pr[v_0\neq v_1]\leq \dtv\paren{\cA_{na}[\wt\cD_0],\cA_{na}[\wt\cD_1]}$. Let $E$ be the set of query answer histories $\set{(h(J_i), J_i)}$ such that $h(J_i)\neq g(J_i)$ for some $i$. Observe that $\Pr_{\wt v_0\sim \cA_{na}[\wt\cD_0]}[\wt v_0\in E]=0$ since $h=g$ when the input is sampled from $\wt\cD_0$. On the other hand, when the input is sampled from $\wt\cD_1$, event $E$ occurs if and only if $v_0\neq v_1$ (since otherwise $h(J_i)=g(J_i)$ for all queries). Thus,
    \begin{align*}
    \Pr[v_0\neq v_1]&\leq \Pr_{\wt v_1\sim \cA_{na}[\wt\cD_1]}[\wt v_1\in E] \\
    &= \abs{\Pr_{\wt v_1\sim \cA_{na}[\wt\cD_1]}[\wt v_1\in E] - \Pr_{\wt v_0\sim \cA_{na}[\wt\cD_1]}[\wt v_0\in E]} \\
    &\leq \dtv\paren{\cA_{na}[\wt\cD_0],\cA_{na}[\wt\cD_1]},
    \end{align*}
    which completes the proof.
\end{proof}

We now complete the proofs of \Cref{lem:accurate,lem:inaccurate}. We will abuse notation and treat $f$ as both a function and a random variable whose distribution is given by the procedure for sampling $B_0$ and $B_1$ described in the proof of \Cref{thm:query_lb}. We will also define two additional functions $h$ and $g$ in the proofs of \Cref{lem:accurate,lem:inaccurate} respectively. As with $f$, we abuse notation and treat $g$ and $h$ as both functions and random variables whose distributions are given by the procedure for sampling $B_0$, $B_1$, $y_0$, and $y_1$, in the proof of \Cref{thm:query_lb}. 

\begin{proof}[Proof of \Cref{lem:inaccurate}]
    Let $\cD'$ denote the distribution over pairs $(f,X')$ sampled as in the proof of \Cref{thm:query_lb}. Define the function 
    \[
    h(Z) = 
    \begin{cases}
        0 &\text{if}~~ |Z|<N\\
        y_0 &\text{else if}~~ Z\subset B_0\\
        \kappa &\text{otherwise.}
    \end{cases}
    \]
    Let $\cD''$ denote the distribution over pairs of the form $(h,Z)$ where $h$ is sampled by sampling $B_0$, and $y_0$ as in the proof of \Cref{thm:query_lb}, and $Z\sim\cD_0^n$. We will first show that the bound holds when the input is sampled from $\cD''$. 
    Then, we will complete the argument by bounding the TV distance between $\cM[\cD']$ and $\cM[\cD'']$. 

    Since $Z\sim\cD_0^n$ is a subset of $B_0$ with probability $1$, the function $h$ is $(y_0,N)$-constant under $\cD_0$. To prove the first part of the bound, observe that since $\abs{y_0-y_1}\geq 1$, and a sample of $n$ i.i.d.\ elements from $\cD_0^n$ contains a collision with probability $O(n^2/t)$, the accuracy guarantee of $\cM$ implies that $\Pr_{Z\sim\cD_0^n}\brackets{\abs{\cM^h(Z)-y_1}<1/2}\leq \beta + O\paren{n^2/t}$. The $2/\kappa$ part of the bound holds since the distribution of $y_1$ is uniformly random over $[\kappa]\setminus[\kappa/2]$ even when conditioned on the input $Z$ and a complete description of the function $h$. 

    To complete the argument, we bound the TV distance between $\cA[\cD']$ and $\cA[\cD'']$ for any $q$-query algorithm $\cA$. Let $\cD^*$ denote the distribution over pairs $(h,X')$ where $h$ and $X'$ are sampled as above.

    \begin{claim}
        \label{claim:D'-D*}
        Let $\cA$ be a $q$-query algorithm. Then, 
        \[
        \dtv\paren{\cA[\cD'],\cA[\cD^*]}\leq O\paren{\frac{qt}{M}}.
        \]
    \end{claim}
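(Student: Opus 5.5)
We prove the bound with a coupling. We run $\cA^f(X')$ and $\cA^h(X')$ on the same input $X'$ and the same random coins, with $f$ and $h$ built from the same $B_0,B_1,y_0,y_1$. By the coupling lemma, $\dtv(\cA[\cD'],\cA[\cD^*])$ is at most the probability that the two runs produce different query-answer histories. As long as every query answered so far agrees, both runs issue the same next query $J$. So the histories can differ only if, for some query $J$, we have $f(J)\neq h(J)$.

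The first step is to pin down exactly where $f$ and $h$ disagree. The definitions coincide when $|J|<N$ and when $J\subset B_0$. They also coincide when $J\not\subset B_0\cup B_1$, since both output $\kappa$ there. Hence $f(J)\neq h(J)$ exactly when $|J|\geq N$, $J\subset B_0\cup B_1$, and $J\cap B_1\neq\emptyset$. In particular, a disagreement requires the query $J$ to contain some element of $B_1$.

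The second step bounds the chance that a query hits $B_1$. Here we use the fact that the input $X'\subset B_0$ carries no information about $B_1$. Condition on $B_0$, $y_0$, $y_1$, $X'$, and the coins, up to the first disagreement. Every answer the algorithm has seen so far is an $h$-value, and $h$ depends only on $B_0$ and $y_0$. So at the moment of the first disagreement, the query $J$ is a function of information independent of $B_1$. Given $B_0$, the set $B_1$ is a uniformly random $t$-subset of $G\setminus B_0$.

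For $J$ to cause a disagreement, $J$ must lie inside $B_0\cup B_1$ and meet $B_1$. So it contains some element $g\in J\setminus B_0$, and we must have $g\in B_1$. Fixing one such element $g$ (say the least one), this happens with probability at most $t/(M-t)$. The condition $J\subset B_0\cup B_1$ forces all of $J\setminus B_0$ into $B_1$, which only lowers the probability further. A union bound over the at most $q$ queries (for an expected-$q$-query algorithm, we instead sum over the query index using linearity of expectation) gives
\[
\Pr[v\neq v^*]\leq \frac{qt}{M-t}=O\paren{\frac{qt}{M}}
\]
for $t\leq M/2$.

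The main obstacle is making the adaptivity argument rigorous. Queries may depend on earlier answers, so we formalize it by the principle of deferred decisions. We sample $B_1$ only after the run of $\cA^h(X')$ has finished: that run's full history is a function of $(B_0,y_0,X',\text{coins})$ alone. The event that the coupled runs differ is then contained in the event that some query of the $h$-run meets $B_1$ in the manner above. That event is evaluated against a fresh uniform $B_1$, and its probability is bounded exactly as in the second step.
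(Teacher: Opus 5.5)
Your proof is correct and is essentially the paper's argument. The paper also notes that $f$ and $h$ can disagree only on queries of size at least $N$ that lie inside $B_0\cup B_1$ and meet $B_1$, bounds by $O(t/M)$ the chance that a query chosen independently of $B_1$ does so, and union-bounds over the expected $q$ queries. The only difference is packaging: the paper first proves the bound for nonadaptive algorithms given $(B_0,y_0)$ and then lifts it via \Cref{prop:adaptivity}, whose simulation proof is exactly the coupling/deferred-decisions argument you write out directly.
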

    \begin{proof}
        For a fixed $B_0$ and $B_1$ and a query $J$, if $f(J)\neq h(J)$, then $|J|\geq N$ and $J\subset B_0\cup B_1$ and $J\cap B_1\neq \emptyset$.

        We first consider nonadaptive algorithms $\cA_{na}$ that are given additional input $(B_0,y_0)$. Since $B_0$ and $y_0$ uniquely determine $h$, we can leverage \Cref{prop:adaptivity} to lift the argument to adaptive algorithms.

        To bound the probability that the algorithm makes a query $J$ such that $f(J)\neq h(J)$, it will be convenient to view the distribution over $h,f$ and $X'$ as follows: Instead of sampling $B_0$ and $B_1$ and then sampling $X'$, we can equivalently first sample an $X'$, then sample the remaining $t-n$ elements of $B_0$, and finally sample the $t$ elements of $B_1$ from $G\setminus B_0$. Now, fix some query $J$ made by $\cA_{na}$ on input $X'$ and $(B_0,y_0)$. Over the randomness of $B_1$, the probability that $J\subset B_0\cup B_1$ and $J\cap B_1\neq \emptyset$ is at most the probability that the elements in $J\setminus B_0$ are in $B_1$. Since $|B_1|=t$ and $|G| = M$, the probability is at most $O\paren{\frac{t}{M}}$. Let $Q$ be a random variable denoting the number of queries made by $\cA_{na}$ and denote the random queries $J_1,\dots J_Q$. Since $\cA_{na}$ has expected query complexity $q$, we can apply the union bound to see that
        \[
        \Pr\brackets{\bigcup_{i=1}^Q \set{f(J_i)\neq h(J_i)}}\leq  \Ex_{Q}\brackets{\sum_{i=1}^{Q}\Pr[f(J_i)\neq h(J_i)\mid Q]}\leq O\paren{\frac{qt}{M}}.
        \]
        Since $\cA_{na}$ is provided with a complete description of $h$ as input, we can apply \Cref{prop:adaptivity} to complete the proof.  
    \end{proof}

    \begin{claim}
        \label{claim:D*-D'}
        For all algorithms $\cA$ we have
        \[
        \dtv\paren{\cA[\cD^*],\cA[\cD'']}\leq O\paren{\frac{n^2}{t}}.
        \]
    \end{claim}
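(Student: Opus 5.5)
We need to bound the TV distance between the view of $\cA$ on $(h,X')$ and its view on $(h,Z)$ with $Z\sim\cD_0^n$. The plan is to construct a coupling in which the function $h$ is identical on both sides and the inputs $X'$ and $Z$ coincide except on a low-probability event. The coupling lemma then gives the bound for every algorithm $\cA$, adaptive or not. Since the coupled pairs agree as (function, input) pairs, $\cA$ run with shared coins produces identical views. No query-complexity argument is needed here; this is purely a statement about input distributions.

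First, I describe the marginal of $(h,X')$ under $\cD^*$. The function $h$ depends only on $(B_0,y_0)$ (and $N$), and $B_0$ is a uniformly random $t$-subset of $G$. Given $B_0$ and $B_1$, the set $X'$ consists of $x_{\tau+1},\dots,x_n$ drawn uniformly without replacement from $B_0$, together with $x'_1,\dots,x'_\tau$ drawn uniformly without replacement from $B_0\setminus\set{x_{\tau+1},\dots,x_n}$. Together these are $n$ distinct elements drawn uniformly without replacement from $B_0$. The choice of $B_1$ affects only which elements appear in $X$, not $X'$. So conditioned on $(B_0,y_0)$, the set $X'$ is a uniformly random $n$-subset of $B_0$, independent of $B_1$.

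Second, I compare this with $Z\sim\cD_0^n$, which consists of $n$ i.i.d. uniform draws from $B_0$, viewed as a set. Conditioned on all $n$ draws being distinct, $Z$ is a uniformly random $n$-subset of $B_0$. I couple as follows: draw $n$ i.i.d. uniform samples from $B_0$, and if they are distinct, let $X'=Z$ be that set; otherwise resample $X'$ independently. The two agree whenever there is no collision. By a birthday bound, a collision occurs with probability at most $\binom n2/t=O(n^2/t)$.

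Third, I run $\cA^h$ on both inputs with the same coins. On the no-collision event the query histories are identical, so by the coupling lemma $\dtv(\cA[\cD^*],\cA[\cD''])\le O(n^2/t)$. The main point to check is the first step: that $X'$ is exactly uniform over $n$-subsets of $B_0$ and carries no information correlated with $h$ beyond $B_0$. This follows from the sampling order: first $x_{\tau+1},\dots,x_n$, then $x'_1,\dots,x'_\tau$ from the remaining elements of $B_0$.
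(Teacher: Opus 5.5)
Your proof is correct and follows the paper's route. Both arguments note that $h$ has the same distribution on both sides, that $X'$ is a uniformly random $n$-subset of $B_0$ while $Z$ is $n$ i.i.d.\ draws from $B_0$, and that the two agree off a collision event of probability $O(n^2/t)$. Your explicit conditioning on $(B_0,y_0)$ and your explicit coupling make rigorous the step the paper states briefly as ``it suffices to bound the TV distance between the distributions of $X'$ and $Z$.''
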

    \begin{proof}
        Since $\cD^*$ and $\cD''$ both have the same distribution over the function and only differ in distribution on the dataset, it suffices to bound the TV distance between the distributions of $X'$ and $Z$. Notice that $X'$ is defined by sampling $n$ points from $B_0$ uniformly without replacement, while $Z$ is given by sampling $n$ points uniformly and independently from $B_0$. Conditioned on the event that $Z$ does not contain a collision, these distributions are the same. Since $Z$ is uniform over $B_1$, a set of size $t$, it contains a collision with probability $O(n^2/t)$, and thus the TV distance between $X'$ and $Z$ is $O(n^2/t)$.   
    \end{proof}

    By the triangle inequality and \Cref{claim:D'-D*,claim:D*-D'}, we have $\dtv\paren{\cA[\cD'],\cA[\cD'']}\leq O\paren{\frac{qt}{M} + \frac{n^2}{t}}$. It follows that 
    \[
    \Pr\brackets{\abs{\cM^{f}(X')-y_1}<1/2}\leq \min\paren{\beta,\frac{2}{\kappa}} + O\paren{\frac{qt}{M} + \frac{n^2}{t}},
    \]
    which completes the proof of the lemma.   
\end{proof}

\begin{proof}[Proof of \Cref{lem:accurate}]  The proof is similar to that of \Cref{lem:inaccurate}. Let $\cD'$ denote the distribution over $(f,X)$ in the proof of \Cref{thm:query_lb} and let $\cD''$ denote the distribution over $(f,Z)$ where $Z\sim\cD_1^n$. We aim to bound $\dtv(\cM[\cD'],\cM[\cD''])$. Since $f$ is $(y_1,N)$-constant under $\cD_1$, the accuracy guarantee of $\cM$ and the TV distance bound together imply the result. 

    In order to facilitate the argument, we introduce an intermediate function $g$ defined as 
    \[
    g(Z) =
    \begin{cases}
        0 & \text{if}~~ |Z|<N\\
        y_1 & \text{else if}~~ Z\subset B_0\cup B_1\\
        \kappa &\text{otherwise}.
    \end{cases}
    \]
    Let $\wt\cD'$ and $\wt\cD''$ denote the distributions over pairs $(g,X)$ and $(g,Z)$ respectively. 
    
    \begin{claim}
        For any $q$-query algorithm $\cA$ we have 
        \[
        \dtv\paren{\cA[\cD'],\cA[\wt\cD']}\leq qe^{-\Omega\paren{p\tau}}.
        \]
    \end{claim}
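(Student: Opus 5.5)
Since $f$ and $g$ differ only on queries $J$ with $|J|\ge N$ and $J\subset B_0$ (where $f=y_0$ but $g=y_1$), it suffices to bound the probability that $\cA$ ever queries some $J\subset B_0$ with $|J|\ge N$. The mechanism sees only $X$ and answers. Every query $J$ with $g(J)\ne\kappa$ and $|J|\ge N$ must lie in $B_0\cup B_1$, and since $B_0,B_1$ are hidden in a huge universe $G$, such queries are essentially subsets of $X$ plus elements guessed with probability $O(t/M)$. So I will first reduce to the case of queries $J\subseteq X$ with $|J|\ge N$. The key event is then that $J\cap\{x_1,\dots,x_\tau\}=\emptyset$, i.e., $J$ avoids all $\tau$ planted $B_1$ elements.

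To handle adaptivity I will invoke \Cref{prop:adaptivity}, taking $g$ as the function whose description is given to the nonadaptive simulator. Knowing $g$ means knowing $B_0\cup B_1$ and $y_1$, but not the split into $B_0$ and $B_1$. Conditioned on $X$ and on $B_0\cup B_1$, the identity of which $\tau$ elements of $X$ came from $B_1$ is a uniformly random $\tau$-subset of $X$, by the symmetry of the sampling procedure. The roles of $B_0$ and $B_1$ outside $X$ are also exchangeable, which is where the $O(n^2/t)$ and $O(qt/M)$ slack is absorbed.

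For a fixed query $J\subseteq X$ with $|J|=k\ge N=pn$, the probability that a uniformly random $\tau$-subset $T\subseteq X$ misses $J$ is
\[
\binom{n-k}{\tau}\Big/\binom{n}{\tau}\le (1-k/n)^\tau\le (1-p)^\tau\le e^{-p\tau}.
\]
A union bound over the expected $q$ nonadaptive queries then gives $\Pr[\exists i: f(J_i)\ne g(J_i)]\le qe^{-\Omega(p\tau)}$. Queries not contained in $X$ contribute $O(qt/M)$, which is negligible under the parameter choice $qt/M=e^{-\omega(\tau)}$. \Cref{prop:adaptivity} together with the coupling lemma then transfers this bound to adaptive $\cA$.

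The main obstacle is justifying the exact conditional uniformity of the $B_1$-positions within $X$ given the simulator's knowledge of $g$. This requires checking that the sampling law of $(B_0,B_1,X)$ is invariant under permutations that swap which elements of $X\cap(B_0\cup B_1)$ are labelled $B_1$, while $g$, $X$, and $|B_0|=|B_1|=t$ stay fixed. If this fails exactly, I would instead bound the deviation by the collision and size-mismatch terms already accounted for.
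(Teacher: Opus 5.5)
Your treatment of queries not contained in $X$ has a genuine gap. You invoke \Cref{prop:adaptivity} with $g$ as the function handed to the nonadaptive simulator. As you note yourself, the simulator therefore knows $B=B_0\cup B_1$. Your bound ``queries not contained in $X$ contribute $O(qt/M)$'' is a guessing argument, which only makes sense for an algorithm that does not know $B$. A simulator holding $g$ can freely put elements of $B\setminus X$ into its queries. For example, it can query some $J\subseteq B\setminus X$ with $|J|\ge N$, or a set mixing fewer than $N$ elements of $X$ with elements of $B\setminus X$. For such $J$ your computation $\binom{n-k}{\tau}/\binom{n}{\tau}$ does not apply, and $\Pr[J\subseteq B_0]$ has nothing to do with $t/M$. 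If you instead meant to reduce to $J\subseteq X$ for the original adaptive algorithm, before passing to the simulator, then that step needs its own adaptivity argument, which you do not supply. Note also that neither an $O(qt/M)$ nor an $O(n^2/t)$ term appears in the claimed bound $qe^{-\Omega(p\tau)}$.

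The missing ingredient is the one the paper uses. Write $J_X=J\cap X$ and $J_{\overline X}=J\setminus X$. Conditioned on $B$ and $X$, which $\tau$ elements of $X$ lie in $B_1$ is a uniform $\tau$-subset, as you say. Independently of that, the $2t-n$ elements of $B\setminus X$ are split uniformly into $t-\tau$ elements of $B_1$ and $t-n+\tau$ elements of $B_0$. Queries $J\not\subseteq B$ never separate $f$ from $g$. For any $J\subseteq B$ you get $\Pr[J\subseteq B_0]\le e^{-|J_X|\tau/n}\cdot(2/3)^{|J_{\overline X}|}$ once $t\ge 2n$. Because $\tau\le n$, this is $e^{-\Omega(\tau|J|/n)}\le e^{-\Omega(p\tau)}$ whenever $|J|\ge N=pn$. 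The union bound and \Cref{prop:adaptivity} then give exactly $qe^{-\Omega(p\tau)}$.

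Alternatively, your guessing intuition can be made rigorous without \Cref{prop:adaptivity}. Couple both runs to a run whose answers depend on $X$ alone ($0$ below size $N$, $y_1$ on subsets of $X$, $\kappa$ otherwise), and union-bound over that run's queries. This only yields $qe^{-p\tau}+O(qt/M)$, which is weaker than the stated claim.

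Finally, the ``main obstacle'' you flag is not one. The law of $(B_0,B_1,X)$ is uniform over consistent configurations, so the conditional uniformity is immediate.
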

    \begin{proof}
        Observe that for any query $J$, if $f(J)\neq g(Z)$ then $J\subset B_0$ and $|J|\geq N$. Thus, in order to bound the TV distance between $\cA[\cD']$ and $\cA[\wt\cD']$ it suffices to bound the probability that any query $J$ of size at least $N$ is a subset of $B_0$. 

        Let $B=B_0\cup B_1$. We first prove the claim for nonadaptive algorithms $\cA_{na}$ that are given additional input $(B, y_1)$, and then use \Cref{prop:adaptivity} to prove the bound for adaptive algorithms. We note that $\cA_{na}$ is given $B$ for free, but is not provided with the partition of $B$ into $B_0$ and $B_1$. Since $B_0$ and $B_1$ are chosen uniformly at random without replacement from $G$, we can equivalently sample them by first sampling the set $B$, and then sampling $B_0$ by choosing $t$ elements uniformly without replacement from $B$, and letting $B_1$ be the remaining $t$ elements. Additionally, since $X$ consists of a uniform set of $\tau$ distinct elements from $B_1$ and $n-\tau$ distinct elements from $B_0$, the distribution of $B_0$ and $B_1$ conditioned on $B$ and $X$ is equivalent to the following distribution: First sample $\tau$ elements at random from $X$ and place them in $B_1$ and place the remaining $n-\tau$ elements in $B_0$. Second, partition the remaining $2t-n$ elements from $B$ by choosing $t-\tau$ elements uniformly and placing them in $B_1$ and placing the remaining $t-n+\tau$ elements in $B_0$.

        Below, we leverage this alternative view of the sampling process to prove the bound for nonadaptive algorithms.

        Fix a query $J$, and let $J_X$ and $J_{\overline X}$ denote $J\cap X$ and $J\setminus X$. Recall that $X$ is composed of $\tau$ distinct elements from $B_1$, and thus, over the randomness of $B_1$, the probability that $J_X$ is a subset of $B_0$ is the probability that no element in $J_X$ is placed in $B_1$. Since the $\tau$ elements from $B_1$ are chosen uniformly at random without replacement from $X$, a set with $n$ elements, we have  
        \[
        \Pr\brackets{J_X\subset B_0}\leq \paren{1-\frac{|J_X|}{n}}^\tau\leq \exp\paren{-|J_X|\cdot \frac{\tau}{n}}.
        \]
        Similarly, over the randomness of the partition of the elements in $B\setminus X$ into $B_0$ and $B_1$, the probability that $J_{\overline X}\subset B_0$ is at most $\exp\paren{{\Omega\paren{-|J_{\overline{X}}|}}}$.

        Since the partition of the elements in $X$ and $B\setminus X$ is chosen independently, the events $J_X\subset B_0$ and $J_{\overline X}\subset B_0$ are independent. Hence, for all $\tau\leq n$ we have
        \[
        \Pr[J\subset B_0] 
        = \Pr[J_X\subset B_0]\Pr[J_{\overline{X}}\subset B_0] \leq \exp\paren{-\Omega\paren{\frac{\tau}{n}\paren{|J_X| + |J_{\overline X}|}}}.
        \]
        Since $|J|\geq N=pn$, we obtain $\Pr[J\subset B_0]\leq e^{-\Omega\paren{p\tau}}$. A union bound %
        as in the proof of \Cref{claim:D'-D*} implies the bound for nonadaptive algorithms. Since $B$ and $y_1$ uniquely determine the function $g$, we can apply \Cref{prop:adaptivity} and lift the bound to adaptive algorithms.
    \end{proof}

    \begin{claim}
        For any $q$-query algorithm $\cA$ we have 
        \[
        \dtv\paren{\cA[\cD''],\cA[\wt\cD'']}\leq q\cdot e^{-\Omega(pn)}.
        \]
    \end{claim}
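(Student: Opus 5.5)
The approach mirrors the proof of \Cref{claim:D'-D*}. Since $f$ and $g$ differ only on queries $J$ with $|J|\ge N$ and $J\subset B_0$, it suffices to bound the probability that the algorithm ever issues such a query when run on input $Z\sim\cD_1^n$. We first handle nonadaptive algorithms $\cA_{na}$ that additionally receive $(B,y_1)$, where $B=B_0\cup B_1$; this pair determines $g$. The bound then lifts to adaptive $q$-query algorithms via \Cref{prop:adaptivity}, applied with $f_0=g$ and $f_1=f$.

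For the nonadaptive case, we resample in a convenient order. First draw $B$. Then draw $B_1$ as a uniform $t$-subset of $B$, with $B_0=B\setminus B_1$. Then draw $Z$ i.i.d.\ uniform from $B_1$. Condition on $B$ and $Z$; write $k\le n$ for the number of distinct elements of $Z$, all of which lie in $B_1$. Given this, $B_1\setminus Z$ is a uniform $(t-k)$-subset of $B\setminus Z$, a set of size $2t-k$. The rest of $B_0$ is the complement of $B_1\setminus Z$ inside $B\setminus Z$.

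Now fix a query $J$ with $|J|\ge N=pn$. If $J\cap Z\neq\emptyset$, then $J\not\subset B_0$ and the query is safe. Otherwise every element of $J$ must land in $B_0$. Each such element lands in $B_0$ with probability $t/(2t-k)\le 1/(2-n/t)$, which is at most $2/3$ since $n\ll t$. Sampling without replacement gives negative correlation, so
\[
\Pr[J\subset B_0\mid B,Z]\le (2/3)^{|J|}\le e^{-\Omega(pn)}.
\]
We then take a union bound over the random number $Q$ of queries, with $\Ex[Q]\le q$, exactly as in \Cref{claim:D'-D*}. The queries of $\cA_{na}$ depend only on $(B,y_1,Z)$ and its own coins, so this conditioning is legitimate, and the probability of a disagreeing query is at most $q e^{-\Omega(pn)}$.

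The main point of care is the conditioning step. We must check that, given the algorithm's input $(B,y_1,Z)$, the partition of $B\setminus Z$ is still close to uniform, with each element landing in $B_0$ with probability bounded below $1$ by a constant; that is what drives the exponential decay. This needs $t\gg n$, which holds by the choice $n^2\ll t$. Repeated elements in $Z$ do not matter, since only the distinct set $Z\cap B_1$ affects the conditioning. Finally, applying \Cref{prop:adaptivity}, with the coupling of $(g,f,Z)$ and $g$ given as side information, bounds $\dtv(\cA[\cD''],\cA[\wt\cD''])$ by the nonadaptive disagreement probability. This gives the claimed $q\cdot e^{-\Omega(pn)}$.
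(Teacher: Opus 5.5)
Your proposal is correct and follows the paper's proof essentially step for step. You give a nonadaptive algorithm $(B,y_1)$ as side information, condition on $B$ and $Z$ so that $B_0$ is a uniform $t$-subset of $B\setminus Z$ with per-element inclusion probability at most $2/3$, bound $\Pr[J\subset B_0]\le e^{-\Omega(pn)}$ for $|J|\ge N$, take a union bound over the expected $q$ queries, and lift via \Cref{prop:adaptivity}; you are in fact slightly more careful than the paper in tracking the $k\le n$ distinct elements of $Z$ and justifying the product bound for sampling without replacement.
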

    \begin{proof}
        Fix a query $J$ made by $\cA$, and observe that if $J\subset Z$ then $f(J)=g(J)$. Thus, we can assume without loss of generality that $J\not\subset Z$. If $f(J)\neq g(J)$ then $|J|\geq N$ and $J\subset B_0$. We first prove the bound for a nonadaptive algorithm $\cA_{na}$ that is given additional input $(B,y_1)$ where $B=B_0\cup B_1$. Note that this uniquely determines the function $h$, but conditioned on $B$ and $Z$, the distribution over $B_0$ is equivalent to sampling $t$ distinct elements uniformly from $B\setminus Z$, and thus, each element is included with probability $\frac{t}{2t-n}\leq \frac 23$. 
    
        Since $Z\sim\cD_1^n$ we have that $Z\subset B_1$ and thus, if $f(J)\neq g(J)$ then $J$ must contain at least $N$ elements from $B_0$ and no elements from $G\setminus B_0$. But over the random sampling of $B_0$ conditioned on $Z$ and $B$ we have $\Pr[J\subset B_0]\leq e^{-\Omega(N)}$. A union bound %
        argument as in the proof of \Cref{claim:D'-D*} suffices to prove the bound for nonadaptive algorithms. Applying \Cref{prop:adaptivity} completes the proof.  
    \end{proof}

    \begin{claim}
        For any $q$-query algorithm $\cA$ we have
        \[
        \dtv\paren{\cA[\wt\cD'],\cA[\wt\cD'']}\leq O\paren{\frac{n^2}{t}}.
        \]
    \end{claim}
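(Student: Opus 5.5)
The plan is to observe that $\wt\cD'$ and $\wt\cD''$ have the same marginal distribution over the function $g$ and differ only in the input dataset. Sampling $(g,X)$ from $\wt\cD'$ is the same as sampling $B_0,B_1,y_1$ and then $X$. In $\wt\cD''$ we sample the same $B_0,B_1,y_1$ and then $Z\sim\cD_1^n$. Since $\cA[\cdot]$ is obtained by applying the same randomized procedure to the pair (function, input), the data-processing inequality reduces the claim to
\[
\dtv\paren{(g,X),(g,Z)}\leq O\paren{\frac{n^2}{t}}.
\]
This needs no restriction to $q$-query or nonadaptive algorithms. Moreover, $g$ depends on $(B_0,B_1)$ only through $B=B_0\cup B_1$ and $y_1$, so it suffices to compare the conditional laws of $X$ and $Z$ given $(B,y_1)$.

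The key step is a symmetry argument. Condition on $B$. Under $\wt\cD'$, the split of $B$ into $B_0,B_1$ is uniform. Then $X$ consists of $\tau$ distinct elements of $B_1$ and $n-\tau$ distinct elements of $B_0$. Marginalizing over the uniform split, $X$ is a uniformly random $n$-subset of $B$: every $n$-subset of $B$ arises with equal probability by exchangeability. Under $\wt\cD''$, $Z$ consists of $n$ i.i.d.\ uniform draws from $B_1$. Conditioned on having no collisions, and marginalizing over the uniform split, $Z$ is likewise a uniform $n$-subset of $B$. So the two conditional laws agree on the no-collision event.

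Finally, as in \Cref{claim:D*-D'}, $n$ i.i.d.\ draws from a set of size $t$ collide with probability at most $\binom n2/t=O(n^2/t)$. The coupling lemma then gives the TV bound, and the claim follows after averaging over $(B,y_1)$.

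I expect the main obstacle to be the subtlety in the symmetry step. The function $g$ does not reveal the partition of $B$, but the partition is correlated with the dataset: $X$ has exactly $\tau$ elements from $B_1$, while $Z$ has all of its elements in $B_1$. The argument must therefore be carried out on the joint law of $(B,y_1,\text{dataset})$, marginalizing out the partition, rather than conditioning on $B_0,B_1$. I would verify explicitly that, given $B$, both $X$ and the collision-free $Z$ are uniform over $n$-subsets of $B$. If this exchangeability turns out to fail, I would instead fall back on comparing each dataset to a uniform $n$-subset of $B$ separately, using the triangle inequality.
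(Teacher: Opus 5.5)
Your proposal is correct and follows essentially the same route as the paper. The paper also observes that $g$ depends only on $B=B_0\cup B_1$ and $y_1$, notes that after marginalizing the uniform split both $X$ and the collision-free $Z$ are uniform $n$-subsets of $B$, and bounds the TV distance by the $O(n^2/t)$ collision probability. Your permutation-invariance argument and your data-processing reduction (which, as you note, needs no query bound) just spell out steps the paper asserts directly.
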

    \begin{proof}
        Since $B_0$ and $B_1$ are sampled uniformly without replacement from $G$, the set $B_0\cup B_1$ is sampled uniformly without replacement from $G$, as well. Let $B = B_0\cup B_1$. Then since $g$ only depends on the elements in $B$, and not on the partition of $B$ into $B_0$ and $B_1$, the distribution $\wt\cD'$ is the same whether $X$ is sampled by choosing $\tau$ elements from $B_0$ and $n-\tau$ elements from $B_1$, or instead by choosing $n$ uniformly random elements from $B$ (without replacement). Similarly, conditioned on the event that there are no collisions in the sampling of $Z$ from $B_1$, the distribution of $\wt\cD''$ is the same whether $Z$ is sampled from $\cD_1^n$, or instead sampled uniformly without replacement from $B$. Thus, conditioned on the event that $Z$ does not contain any collisions, the distributions $\wt\cD'$ and $\wt\cD''$ are equivalent, and hence, the TV distance between $\cA[\wt\cD']$ and $\cA[\wt\cD'']$ is at most the probability that $Z$ contains a collision. Since $Z$ consists of $n$ independent and uniform samples from $B_1$, a set of size $t$, the probability of a collisions is at most $O(n^2/t)$.
    \end{proof}

    To complete the proof of \Cref{lem:accurate}, it suffices to apply the triangle inequality and the above claims to obtain, 
    \[
    \dtv\paren{\cM[\cD'],\cM[\cD'']}\leq qe^{-\Omega\paren{\min\set{p\tau,pn}}} + O\paren{\frac{n^2}{t}}.
    \]
    Since $Z\sim\cD_1^n$ contains a collision with probability at most $O\paren{n^2/t}$, the $(p,\beta)$ accuracy guarantee of $\cM$ and the TV distance bound yields the result. 
\end{proof}

\ifnum\neurips=0

\bibliography{bibliography,biblio}

\appendix
\fi

\section{Technical discussion of existing guarantees}
\label{sec:technical_extensions}

In this section we present some minor extensions to the guarantees provided in prior work.
The arguments are standard; we present them to give a more complete comparison with our results.

\subsection{Extension of  \texorpdfstring{\cite{LinderRSS25}}{privately evaluating black-box functions}}
\label{sec:lrss_comparison}

The work of \cite{LinderRSS25} gives algorithms for privately evaluating arbitrary functions.
They achieve nearly optimal error and sample complexity but require exponentially many queries (and thus exponential time).
In this section, we discuss how amplification by subsampling allows one to reduce the time complexity at the cost of increased error.
The same techniques would apply identically to the algorithms of \cite{FangDY22}.

In the setting of \Cref{sec:results_intro}, amplification by subsampling achieves the following tradeoff: sample complexity $\frac{N\log\kappa/\beta}{\trade\eps}$ at a cost of runtime $N^{\trade}$. 
In contrast, our work
achieves the same sample complexity at a cost of runtime $e^\trade$, 
but is restricted to monotone functions.
This difference is significant for all settings of $\trade$, however it is most striking when $\trade=\Theta(\log N)$, in which case our mechanisms run in time $\poly N$, whereas the application of \cite{LinderRSS25} runs in time $N^{\Theta(\log N)}$.

Recall \Cref{ass:intro_concentration}: there exists some $\nu\in\R$ and $N:\R\to\R$ such that if $n\geq N(\alpha) + \frac{\log1/\beta}{\alpha^2}$ and $Z\sim\cD^n$, then $\abs{f(Z)-\nu}\leq \alpha$ with probability at least $1-\beta$. In this setting, amplification by subsampling and \cite{LinderRSS25} achieve the following tradeoff.

\begin{theorem}
    \label{thm:privacy_wrappers_tradeoff}
    Fix range size $\kappa>0$, privacy parameter $\eps\in(0,1]$, failure probability $\beta>0$, and tradeoff parameter $\trade\leq \frac{1}{\eps}\log\frac\kappa\beta$. There exists a mechanism $\cM$ such that for all functions $f:\cZ^*\to[\kappa]$ mechanism $\cM^f$ is $\eps$-DP, and on datasets of size $n$ has query complexity and runtime $n^{O(\trade\log\frac{\kappa}{\beta})}$. Additionally, if $f$ and $\cD$ satisfy \Cref{ass:intro_concentration}, then for all $\alpha>0$ and 
    \[
    n = \Omega\paren{\frac{N(\alpha)\log\kappa/\beta}{\trade\eps} + \frac{\log\kappa/\beta\log1/\beta}{\trade\alpha^2\eps} + \frac{\log\kappa/\beta\log n}{\alpha^2\eps}}
    \]
    we have
    \[
    \Pr_{Z\sim\cD^n}\brackets{\abs{\cM^f(Z) - \nu}\geq \alpha}\leq \beta.
    \]
\end{theorem}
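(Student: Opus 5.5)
Our plan is to combine the $\eps$-DP mechanism of \cite{LinderRSS25} (or \cite{FangDY22}), which on datasets of size $m$ achieves additive sample overhead at query cost $m^{O(\frac1\eps\log\frac\kappa\beta)}$, with privacy amplification by subsampling. Concretely, $\cM$ draws a subsample $S\sim\cS_p(Z)$ with $p\approx \trade\eps/\log(\kappa/\beta)$ (so that $pn$ is the effective sample size). It then runs the base mechanism $\cB$ on $S$ with an inflated privacy parameter $\eps_0$. We choose $\eps_0$ so that amplification returns $\eps$-DP: the standard bound gives $\log(1+p(e^{\eps_0}-1))\le\eps$. With $\eps_0\approx \log(1+\eps/p)$ we get $\eps_0\approx \log(\log(\kappa/\beta)/\trade)$, which is in any case at least $\Omega(\eps/p)$ up to logs. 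Since $\cB$ queries $f$ only on subsets of $S$ (down-locality), $\cM^f$ is a legitimate query algorithm for $f$.

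\textbf{Runtime.} The base mechanism run at privacy $\eps_0$ on $|S|\le n$ points makes $n^{O(\frac1{\eps_0}\log\frac\kappa\beta)}$ queries. We take $\trade$ so that $\frac{1}{\eps_0}\log\frac\kappa\beta = O(\trade\log\frac\kappa\beta)$ in the worst case; the stated bound $n^{O(\trade\log\frac\kappa\beta)}$ is generous enough to absorb this. The requirement $\trade\le\frac1\eps\log\frac\kappa\beta$ ensures $p\le 1$.

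\textbf{Accuracy.} First, a Chernoff bound gives $|S|\ge pn/2$ with probability $1-\beta/3$. The base mechanism's guarantee from \cite{LinderRSS25} is that it outputs a value within the range of $f$ over large subsets $S'\subset S$ with $|S'|\ge |S|-O(\frac1{\eps_0}\log\frac\kappa\beta)$, w.p.\ $1-\beta/3$. Under \Cref{ass:intro_concentration}, we need every such $S'$ to satisfy $|f(S')/|S'|-\nu|\le\alpha$. This requires two things:
\begin{itemize}
\item $pn/2-O(\frac1{\eps_0}\log\frac\kappa\beta)\ge N(\alpha)+\frac{\log(1/\beta')}{\alpha^2}$;
\item a union bound over all $\binom{|S|}{\le k}\le n^{k}$ subsets with $k=O(\frac1{\eps_0}\log\frac\kappa\beta)$, which forces $\log(1/\beta')\approx \log\frac1\beta+k\log n$.
\end{itemize}
Substituting $p\approx\trade\eps/\log(\kappa/\beta)$ gives the three terms $\frac{N(\alpha)\log\kappa/\beta}{\trade\eps}$, $\frac{\log\kappa/\beta\log1/\beta}{\trade\alpha^2\eps}$ and $\frac{\log\kappa/\beta\log n}{\alpha^2\eps}$. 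The last comes from the union bound, since $k\log n/(p\alpha^2)$ with $k\lesssim \trade\cdot$const scales as stated. Normalization by $|S'|$ versus $pn$ is handled as in \Cref{claim:approx_loss}, via a rescaled function $f/p$ evaluated on sets of size $\approx pn$.

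\textbf{Main obstacle.} The delicate step is the parameter bookkeeping between $\eps_0$, $p$ and the union bound. When $\eps/p$ is large, amplification only buys $\eps_0\approx\log(\eps/p)$ rather than $\eps/p$. We would first verify the regime $\eps_0\le 1$, where $\eps_0\approx\eps/p$ and everything is clean. Otherwise we would cap $\eps_0$ at a constant, which only makes the runtime exponent $O(\log\frac\kappa\beta)\le O(\trade\log\frac\kappa\beta)$ and the overhead $k$ smaller, so the bounds still hold.
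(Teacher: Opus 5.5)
\textbf{Gap: the choice of $p$ fails for $\trade<\log\frac\kappa\beta$.} Your architecture matches the paper's sketch: Poisson-subsample, run the down-local sens-o-matic mechanism at an inflated privacy level $\eps_0$, amplify, and union-bound over the $\binom{|S|}{\le k}$ subsets it touches. The problem is $p\approx\trade\eps/\log(\kappa/\beta)$. Whenever $\trade<\log\frac\kappa\beta$, this puts $p$ below $\eps$, so $\eps/p=\log(\kappa/\beta)/\trade>1$. Amplification then only permits $\eps_0\approx\log(1+\eps/p)$, which falls short of $\eps/p$ by exactly the factor the bounds cannot absorb.

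Your cap (fix $\eps_0=\Theta(1)$, keep $p$) gives $k=\Theta(\frac1{\eps_0}\log\frac\kappa\beta)=\Theta(\log\frac\kappa\beta)$. That is \emph{larger} than the value $k=\trade$ of your clean regime, not smaller. The union-bound contribution then becomes $\frac{k\log n}{p\alpha^2}=\Theta\big(\frac{\log^2(\kappa/\beta)\log n}{\trade\eps\alpha^2}\big)$, plus an additive $\frac kp=\frac{\log^2(\kappa/\beta)}{\trade\eps}$. This exceeds the claimed third term $\frac{\log(\kappa/\beta)\log n}{\alpha^2\eps}$ by a factor $\log(\kappa/\beta)/\trade$.

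No choice of $\eps_0$ rescues this $p$. Matching the third term needs $k\lesssim\trade$, i.e.\ $\eps_0\gtrsim\log(\kappa/\beta)/\trade$. Privacy needs $e^{\eps_0}\lesssim 1+\log(\kappa/\beta)/\trade$. These are incompatible once $\log(\kappa/\beta)/\trade$ exceeds a constant. So your argument only covers $\trade\gtrsim\log\frac\kappa\beta$; there it is fine and even gives runtime $n^{O(\trade)}$. The range $1\le\trade<\log\frac\kappa\beta$, which the theorem includes, is left unproved.

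\textbf{Fix: raise $p$ instead of lowering $\eps_0$.} Subsampling below rate $\eps$ buys nothing. The paper takes $p=\eps\trade$ (assuming $\trade\ge1$, capping at $p=1$) and runs sens-o-matic at $\eps_0=\eps/p=1/\trade\le1$, so amplification yields $2\eps$ directly. Then:
\begin{itemize}
\item $k=O(\trade\log\frac\kappa\beta)$. This is precisely the source of the $\log\frac\kappa\beta$ in the stated runtime exponent $n^{O(\trade\log\frac\kappa\beta)}$.
\item The union bound contributes $\frac{k\log n}{p\alpha^2}=O\big(\frac{\log(\kappa/\beta)\log n}{\eps\alpha^2}\big)$.
\item The first two terms become $\frac{N(\alpha)}{\eps\trade}$ and $\frac{\log(1/\beta)}{\eps\trade\alpha^2}$ (for $\eps\trade\le1$), already below the stated ones.
\end{itemize}
Equivalently, keep your $p$ for $\trade\ge\log\frac\kappa\beta$, and use $p=\eps$, $\eps_0=1$ for $1\le\trade<\log\frac\kappa\beta$. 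In that second regime the requirement is $\frac1\eps\big(N(\alpha)+\frac{\log(1/\beta)+\log(\kappa/\beta)\log n}{\alpha^2}\big)$, which the three stated terms dominate because $\log(\kappa/\beta)/\trade\ge1$. The runtime there is $n^{O(\log\frac\kappa\beta)}\le n^{O(\trade\log\frac\kappa\beta)}$.
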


While \Cref{thm:privacy_wrappers_tradeoff} applies to bounded range functions, analogous to \Cref{thm:med-of-q}, a similar result can be shown for the setting of functions with range $\R$, analogous to \Cref{thm:avg-of-q}. Additionally, for simplicity we only compare to the pure DP version of ``sens-o-matic'' (Theorem 3.1 in \cite{LinderRSS25}); however, we can obtain a similar guarantee to their approximate DP version by substituting a generalized interior point mechanism for the exponential mechanism used to compute the median in \Cref{alg:med-of-q}. Indeed, this is exactly the approach taken by \cite{LinderRSS25} to obtain their approximate DP result.  We apply standard amplification by subsampling for pure DP.
\begin{proposition}[Privacy Amplification by Subsampling, \citep{FiorettoVH2025}, Theorem 3.28]
    Let $\cM:\cZ_\bot^*\to \cY$ satisfy $\eps$-DP.
    For $p\in (0,1)$, define $\cM_p(Z) = \cM(S)$ where $S\sim \cS_p(Z)$, i.e., we set each element of $Z$ to $\bot$ independently with probability $p$ and then run $\cM$.
    Then $\cM_p$ is $\eps'$-DP for $\eps'=\log(1+p(e^\eps-1))$.
    In particular, if $\eps\le 1$ then $\eps'\le 2p \eps$.
\end{proposition}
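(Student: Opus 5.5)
The plan is to prove the amplification bound by the standard coupling argument on neighbouring datasets. Fix neighbours $Z,Z'\in\cZ^n$ that differ only in coordinate $j$, and fix a measurable set $E\subset\cY$. Condition on whether coordinate $j$ survives the subsampling. Because the coordinates are subsampled independently, $S\sim\cS_p(Z)$ can be written as $S=S_{-j}$ with probability $1-p$ and $S=S_{-j}^{j\gets z_j}$ with probability $p$, where $S_{-j}$ has the same law under $Z$ and $Z'$. (I read ``$p$'' as the retention probability, as in the definition of $\cS_p$; with the opposite reading the same argument goes through with $p$ replaced by $1-p$.)

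Writing $A=\Pr[\cM(S_{-j})\in E]$, $B=\Pr[\cM(S_{-j}^{j\gets z_j})\in E]$ and $B'=\Pr[\cM(S_{-j}^{j\gets z'_j})\in E]$, averaged over the common law of $S_{-j}$, we obtain
\[
\Pr[\cM_p(Z)\in E]=(1-p)A+pB,
\qquad
\Pr[\cM_p(Z')\in E]=(1-p)A+pB'.
\]
The datasets $S_{-j}$, $S_{-j}^{j\gets z_j}$ and $S_{-j}^{j\gets z'_j}$ are pairwise neighbours in $\cZ_\bot^n$, since they differ only in coordinate $j$. Applying $\eps$-DP of $\cM$ pointwise in $S_{-j}$ and then averaging gives
\[
B\le e^\eps B'
\qquad\text{and}\qquad
B\le e^\eps A.
\]

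The main step is to combine these two inequalities with the right convex weights. Write $B\le \min(e^\eps A,\,e^\eps B')\le e^\eps\bigl((1-p)A+pB'\bigr)$, which bounds $B$ by a convex combination of the two available bounds. Substituting gives
\[
(1-p)A+pB\le (1-p)A+p\,e^\eps\bigl((1-p)A+pB'\bigr).
\]
This weighting is too lossy by itself, so instead I will use the sharper decomposition $B=\min(B,e^\eps A)$ together with the convexity argument from the standard proof. Set $Y=(1-p)A+pB'$. If $B\le Y$, then the target ratio is at most $1\le 1+p(e^\eps-1)$. Otherwise, use $B\le e^\eps\min(A,B')\le e^\eps Y$. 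Since the map $x\mapsto (x-(1-p)A)$ stays linear while we replace $B$ by the bound $e^\eps Y$, we get
\[
(1-p)A+pB\le Y+p(B-B')\le Y+p(e^\eps-1)Y
\]
after checking that $B-B'\le (e^\eps-1)Y$. This check is where the two DP inequalities are used jointly: because $B\le e^\eps\min(A,B')$, we have $B-B'\le (e^\eps-1)\min(A,B')$ whenever $B'\ge A$. If $B'<A$, then $B-B'\le e^\eps B'-B'=(e^\eps-1)B'\le (e^\eps-1)Y$. So in every case $B-B'\le(e^\eps-1)Y$. This yields $\Pr[\cM_p(Z)\in E]\le(1+p(e^\eps-1))\Pr[\cM_p(Z')\in E]$, and the reverse inequality follows by symmetry.

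For the ``in particular'' clause, use $\log(1+x)\le x$ and $e^\eps-1\le 2\eps$ for $\eps\le1$. This gives $\eps'\le p(e^\eps-1)\le 2p\eps$.

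I expect the main obstacle to be the case analysis showing $B-B'\le(e^\eps-1)Y$, since a naive triangle-style bound loses a factor and gives only $e^{\eps}$ rather than $1+p(e^\eps-1)$.
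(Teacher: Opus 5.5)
Your argument is correct and is the standard proof of pure-DP amplification by subsampling; the paper gives no proof of its own and simply imports this result from \cite{FiorettoVH2025}. Your key step is exactly right: the identity $\Pr[\cM_p(Z)\in E]=Y+p(B-B')$ together with $B-B'\le(e^\eps-1)\min(A,B')\le(e^\eps-1)Y$. So is your reading of $p$ as the retention probability; the parenthetical in the statement has it backwards. Do delete the exploratory case split, though. The assertion that $B\le Y$ makes the ratio at most $1$ is false, since that would require $B\le B'$. Your final two-case analysis on $A$ versus $B'$ already covers every case without it.
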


In order to state the guarantee of the sens-o-matic mechanism, we define the following notation: For a parameter $\lambda\in\N$ and a dataset $Z\in\cZ^*$, let $N_\lambda(Z)=\set{S\subset Z \mid |S|\geq |Z|-\lambda}$, that is $N_\lambda(Z)$ is the set of subsets of $Z$ given by removing up to $\lambda$ points from $Z$. Additionally, for a function $f$, we let $f(N_\lambda(Z))$ denote the set $\set{f(S) \mid S\in N_\lambda(Z)}$.

\begin{theorem}[\cite{LinderRSS25}, Theorem 3.1, pure DP]

Fix privacy parameter $\eps>0$, error probability, $\beta > 0$, and range size $\kappa\in\N$, there exist a mechanism $\cM$ such that for every function $f:\cZ^*\to[\kappa]$ and dataset $Z\in\cZ^n$, with probability at least $1-\beta$,
    \[ 
    \cM^f(Z) \in [\min f \bparen{N_\lambda(Z)}, \max f \bparen{N_\lambda(Z)}] \, ,
    \]
    where  $\lambda = O\bparen{\frac1\eps\log\frac{\kappa}\beta}$ .
\end{theorem}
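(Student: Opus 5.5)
The plan is to reduce the arbitrary-$f$ case to the interleaving-plus-exponential-mechanism template of \Cref{thm:med-of-q}. The sequence of subsampling quantiles is replaced by two deterministic, nested sequences of \emph{envelopes}: an upper one built from $\max f(N_i(Z))$ and a lower one built from $\min f(N_i(Z))$. This is a sketch of the sens-o-matic idea, not a reproduction of \cite{LinderRSS25}'s argument. The interleaving step in particular is not checked, and the first neighbor comparison I wrote down for it fails (see below).

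For $i=0,1,\dots,\lambda$, let
\[
u_i(Z)=\max f(N_i(Z)),\qquad \ell_i(Z)=\min f(N_i(Z)).
\]
Since $N_i(Z)\subseteq N_{i+1}(Z)$, the intervals $I_i(Z)=[\ell_i(Z),u_i(Z)]$ are nested and grow with $i$. Each lies inside $I_\lambda(Z)=[\min f(N_\lambda(Z)),\max f(N_\lambda(Z))]$, and $I_0(Z)=\{f(Z)\}$. For $y\in[\kappa]$ I would use the depth score
\[
\score(y;Z)=-\min\{i\le\lambda : y\in I_i(Z)\},
\]
capped at $-\lambda$ when no such $i$ exists, and release $y$ via the exponential mechanism (\Cref{lem:exp-mechanism}). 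Accuracy would then be immediate. Some $y$, namely $f(Z)$, has score $0$. By \Cref{lem:exp-mechanism}, with probability $1-\beta$ the output has score at least $-\frac{2}{\eps}(\ln\kappa+\ln\frac1\beta)$. Taking $\lambda$ a large constant multiple of $\frac1\eps\log\frac\kappa\beta$ forces that score above the cap $-\lambda$, so the output lies in $I_\lambda(Z)$.

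The real work is privacy, i.e.\ showing $|\score(y;Z)-\score(y;Z')|=O(1)$ for neighbors. The natural route goes through the common dataset $W=Z\cap Z'$, which has $n-1$ elements. Every subset of $W$ that removes at most $i$ of its points is a subset of $Z$ removing at most $i+1$ points. So $N_i(W)\subseteq N_{i+1}(Z)$, and likewise for $Z'$, giving
\[
I_i(W)\subseteq I_{i+1}(Z)\cap I_{i+1}(Z').
\]
To close the loop I also need a reverse containment such as $I_i(Z)\subseteq I_{i+c}(W)$. That is where this breaks. A witness $S\in N_i(Z)$ that contains the changed point $z_j$ has no counterpart among subsets of $W$ with the same $f$-value. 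Since $f$ is arbitrary rather than monotone, deleting $z_j$ can move $f(S)$ anywhere, so the raw removal-indexed envelopes do not interleave.

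This is the main obstacle. The first thing I would try is to replace $u_i$ and $\ell_i$ with envelopes that range over subsets of a fixed cardinality, for example $\max\{f(T): T\subseteq Z,\ |T|=n-i\}$. This makes the comparison between $Z$ and $Z'$ go through $W$ at matched sizes. Then I would check whether the missing witnesses, those containing $z_j$, can be charged to one extra index shift. I have not verified this, and the same obstruction may simply reappear; if it does, the indexing has to come from \cite{LinderRSS25} rather than from this sketch. Given any such interleaving, i.e.\ containments $I_i(Z)\subseteq I_{i+c}(Z')$ and $I_i(Z')\subseteq I_{i+c}(Z)$ for a constant $c$, the score has sensitivity $c$. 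The exponential-mechanism privacy argument from the proof of \Cref{thm:med-of-q} then carries over verbatim with $\eps$ rescaled by $c$. Since $\lambda$ scales linearly in $1/\eps$ anyway, this changes $\lambda$ only by a constant factor.
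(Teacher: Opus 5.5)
\textbf{There is a genuine gap: privacy is never established, and it cannot be established for this score under any re-indexing.} The problem is the anchoring, not the indexing. Your score makes $f(Z)$ the top candidate, since $I_0(Z)=\{f(Z)\}$. For an arbitrary $f$, the value $f(Z)$ is not stable under a single swap.

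Concretely, let $Z,Z'$ differ only in coordinate $j$, with $z_j=a$ and no entry of $Z'$ equal to $a$. Let $f(S)=\kappa$ if $a$ appears in $S$, and $f(S)=1$ otherwise. Every subset of $Z'$ has value $1$. So accuracy on $Z'$ forces $\Pr[\cM^f(Z')=\kappa]\le\beta$, and $\eps$-DP then forces $\Pr[\cM^f(Z)=\kappa]\le e^{\eps}\beta$. But any exponential mechanism in which $f(Z)=\kappa$ attains the maximum score outputs $\kappa$ with probability at least $1/\kappa$. For $\beta<e^{-\eps}/\kappa$ this is a contradiction, whatever the sensitivity or level structure.

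In particular, no containment $I_i(Z)\subseteq I_{i+c}(Z')$ with constant $c$ can hold: already $I_0(Z)=\{\kappa\}\not\subseteq I_c(Z')=\{1\}$. Your fixed-cardinality envelopes fail for the same reason. $T=Z$ is still the only set of size $n$, so the bottom level is still $\{f(Z)\}$. Moreover, at every level $i\ge 1$, $Z$ has size-$(n-i)$ subsets containing $a$ with no counterpart in $Z'$.

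\textbf{The missing idea is to score the median of a stable sequence rather than $f(Z)$, and to obtain stability by first making $f$ monotone.} This is the role the monotone technique of \cite{FangDY22} plays in \cite{LinderRSS25}; the paper itself does not reprove the theorem but imports it. Here is one way to carry it out.

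Since $n$ is public, the function $g(T)=\max\{f(S): S\subseteq T,\ |S|\ge n-\lambda\}$ (set to $1$ if no such $S$ exists) is monotone. Define the downward shifts $g_i(Z)=\min\{g(T): T\in N_i(Z)\}$ for $0\le i\le\lambda$.
\begin{itemize}
\item \emph{Range.} For $T\in N_\lambda(Z)$ we have $f(T)\le g(T)\le \max f(N_\lambda(Z))$. Hence every $g_i(Z)$ lies in the target interval $[\min f(N_\lambda(Z)),\max f(N_\lambda(Z))]$.
\item \emph{Interleaving.} Let $W=Z\cap Z'$. Then $N_i(W)\subseteq N_{i+1}(Z)$, and $T'\cap W\in N_i(W)$ for every $T'\in N_i(Z')$. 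Monotonicity of $g$ gives $g_{i+1}(Z)\le g_i(W)\le g_i(Z')$, and symmetrically $g_{i+1}(Z')\le g_i(Z)$. This is the same interleaving pattern as in \Cref{lem:quantile_finder}, now deterministic.
\item \emph{Score.} The depth score $\min(\#\{i: g_i(Z)\le y\},\,\#\{i: g_i(Z)\ge y\})$ therefore has sensitivity $1$. The median of the sequence scores at least $\lambda/2$. Any $y$ with positive score lies in $[\min_i g_i(Z),\max_i g_i(Z)]$.
\end{itemize}
The exponential mechanism with this score is then $\eps$-DP and accurate with probability $1-\beta$ once $\lambda\ge \frac{4}{\eps}\log\frac{\kappa}{\beta}+2$. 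This gives $\lambda=O(\frac1\eps\log\frac\kappa\beta)$ as claimed.
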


With these in hand, the proof of \cref{thm:privacy_wrappers_tradeoff} is almost immediate.

\begin{proof}[Proof sketch]
    We subsample $S\subset Z$ by including each element with probability $p\geq \eps$ and then run the sens-o-matic mechanism with privacy parameter $\eps/p$ and failure probability $\beta$. 
    This mechanism is $2\eps$-DP and runs in time $\paren{pn}^{O\paren{\frac p\eps\log\frac\kappa\beta}}$. 
    
    To see why accuracy holds, observe that sens-o-matic queries the function $f$ on all subsets of the input dataset $S$ given by removing at most $\frac p\eps\log\frac\kappa\beta$. Since there are at most $n^{O\paren{\frac p\eps\log\frac\kappa\beta}}$ such subsets, the applying \Cref{ass:intro_concentration} and the union bound implies that all subsets are within $\alpha$ of $\nu$, and thus the mechanism outputs a $y$ such that $\abs{y-\nu}\leq \alpha$. Reparameterizing $p=\eps\trade$ yields the result. 
\end{proof}

\subsection{Translating \texorpdfstring{\cite{SteinkeS25}}{Steinke Steinke} to our setting}
\label{sec:ss25}

In this section, we translate the guarantees provided by \cite{SteinkeS25} to our setting. Specifically, we explain how their result yields a similar tradeoff between sample and \emph{query} complexity for general functions---that is, one can save a factor of $\trade$ over the sample complexity of subsample-and-aggregate, while paying a factor of $e^{\trade}$ in query complexity. However, as discussed in the introduction, the mechanism in \Cref{thm:ss25} nevertheless has runtime $n^{O\paren{\frac1\eps{\log\frac1\delta}}}$.  

Below, we state an informal corollary of \cite{SteinkeS25} in the style of \Cref{thm:avg_of_q-intro}. Recall, that we are interested in functions $f$ and distributions $\cD$ that satisfy \Cref{ass:intro_concentration}---that is, there exists $\nu\in\R$ and $N=N(\alpha)$ such that if $n\geq N + \frac{\log1/\beta}{\alpha^2}$ and $Z\sim\cD^n$ then $|f(Z)/n - \nu|\leq \alpha$ with probability at least $1-\beta$. Since the mechanism of \cite{SteinkeS25} is for functions $f:\cZ^*\to [\kappa]$ that need not be monotone, we modify this assumption to remove the normalization by $n$---that is, we make the slightly more general assumption that
\begin{equation}
    \label{eq:ss25_concentration}
    \text{if}~~ n\geq N + \frac{\log1/\beta}{\alpha^2} ~~\text{then}~~ \Pr_{Z\sim\cD^n}\brackets{|f(Z) - \nu|\geq \alpha}\leq \beta.
\end{equation}

For functions and distributions that satisfy \eqref{eq:ss25_concentration}, the mechanism of \cite{SteinkeS25} yields the following result: 

\begin{corollary}[Corollary of \cite{SteinkeS25}]
    Fix privacy parameters $\eps,\delta>0$, and $\trade\leq \wt O\paren{\frac1\eps\log\frac1\delta}$. There exists a mechanism $\cM$ such that for all functions $f:\cZ^*\to[\kappa]$ mechanism $\cM^f$ is $(\eps,\delta)$-DP, has query complexity $\wt O\paren{e^{\trade}\cdot \frac{\log(1/\delta)\log(n)}{\eps}}$, and has runtime $n^{\wt O\paren{\frac1\eps\log\frac1\delta}}$. Additionally, if $f$ and $\cD$ satisfy \eqref{eq:ss25_concentration}, then for all $\alpha,\beta>0$ and
    
    \[
    n = \wt\Omega\paren{\frac{N(\alpha)\log1/\delta}{\trade\eps} + \frac{\log1/\delta \log1/\beta}{\trade\alpha^2\eps} + \frac{\log1/\delta}{\alpha^2\eps} + \frac{\log(1/\delta)^2}{\eps^2\trade}}
    \]
    we have
    \[
    \Pr_{Z\sim\cD^n}\brackets{\abs{\cM^f(Z,\alpha) - \nu}\geq \alpha}\leq \beta.
    \]    
    Here $\wt O$ and $\wt\Omega$ hide an additional factor of $\exp(O(\log^* \kappa))$.
\end{corollary}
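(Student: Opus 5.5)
This corollary is obtained by instantiating the main combinatorial theorem of \cite{SteinkeS25} and making its parameters explicit in the setting of \eqref{eq:ss25_concentration}; it is not proved from scratch.

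Their mechanism privately evaluates an arbitrary $f:\cZ^*\to[\kappa]$. It is down-local: it queries $f$ only on subsets of the input drawn from a structured family (built from a combinatorial design), and with probability at least $1-\beta$ it outputs a value lying between the minimum and maximum of $f$ over the queried subsets. The first step is to restate that guarantee with three quantities made explicit: the number of queried subsets $q$, the minimum size $m$ of a queried subset, and the privacy and distance parameter $\lambda\approx\frac1\eps\log\frac1\delta$. The distance parameter also carries the $\exp(O(\log^*\kappa))$ overhead from the generalized interior-point step, which is why $\wt O$ and $\wt\Omega$ are used.

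The second step is to read off, by the calculation already alluded to in the introduction's footnote, that the family can be chosen with $m\approx n\trade/\lambda$ and $q=\wt O(e^{\trade}\lambda\log n)$. Plugging in $\lambda\approx\frac1\eps\log\frac1\delta$ gives the stated query complexity $\wt O\paren{e^{\trade}\cdot\frac{\log(1/\delta)\log n}{\eps}}$. The runtime $n^{\wt O(\frac1\eps\log\frac1\delta)}$ is inherited unchanged from their construction of and search over the design. Privacy, $(\eps,\delta)$-DP, is simply their theorem.

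The third step is accuracy, which I would do by a union bound. Every queried subset has size at least $m\approx n\trade\eps/\log(1/\delta)$, and each is a subset of $Z\sim\cD^n$ chosen independently of the data values. It is therefore itself an i.i.d.\ sample of size at least $m$. Apply \eqref{eq:ss25_concentration} with failure probability $\beta/(2q)$ to each subset and union bound over all $q$ of them. This shows that every queried value lies within $\alpha$ of $\nu$ with probability at least $1-\beta/2$, and hence so does the output.

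The requirement is $m\ge N(\alpha)+\frac{\log(2q/\beta)}{\alpha^2}$. Since $\log q\approx\trade+\log\log\frac1\delta+\log\frac1\eps+\log\log n$, multiplying through by $\lambda/\trade$ produces the terms one at a time:
\begin{itemize}
\item the $N(\alpha)$ part gives $\frac{N(\alpha)\log1/\delta}{\trade\eps}$;
\item the $\log(1/\beta)$ part gives $\frac{\log1/\delta\log1/\beta}{\trade\alpha^2\eps}$;
\item the $\trade$ contribution to $\log q$ cancels the $1/\trade$, giving $\frac{\log1/\delta}{\alpha^2\eps}$.
\end{itemize}
The remaining term $\frac{\log(1/\delta)^2}{\eps^2\trade}$ is the structural condition needed for the design to exist and for $m$ to be at least a constant. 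Concretely, it encodes $n\gtrsim\lambda^2/\trade$, which is also where the restriction $\trade\le\wt O(\lambda)$ enters.

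The main obstacle is the second step: extracting the quantitative tradeoff $q\approx e^{\trade}$ against $m\approx n\trade/\lambda$ from their general bound, which is stated in terms of the size of a combinatorial object. I would try to do this by bounding that object via a random or greedy covering construction. I expect this to give roughly $e^{\lambda m/n}$ sets, up to the logarithmic factors that are absorbed into $\wt O$.
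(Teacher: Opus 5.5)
Your route is the paper's: it instantiates Theorem~1.1 of \cite{SteinkeS25} with queried subsets of size $pn$ (i.e.\ removing $(1-p)n$ points) and $p=\trade/\tau$, bounds the number of queried subsets $k=\frac{\binom{n}{\tau}}{\binom{(1-p)n}{\tau}}\bigl(1+\log\binom{(1-p)n}{\tau}\bigr)+1$ by $e^{O(p\tau)}\tau\log n$, and union-bounds \eqref{eq:ss25_concentration} over these data-oblivious subsets. The covering bound you planned to rederive by a greedy or random construction is already part of their theorem, so your ``main obstacle'' reduces to the elementary estimate $\binom{n}{\tau}/\binom{(1-p)n}{\tau}\le e^{O(p\tau)}$; one small correction is that the last term $\frac{\log(1/\delta)^2}{\eps^2\trade}$ comes from the condition $p\ge 2\tau/n$ that the paper imposes in that estimate (queried subsets of size at least $2\tau$), not from needing subsets of merely constant size.
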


\begin{proof}[Proof sketch]

The proof is a special case of the main result of \cite{SteinkeS25} which we state below.

\begin{theorem}[Theorem 1.1 \cite{SteinkeS25}] \label{thm:ss25}
    Let $\mathcal{Y} \subseteq \mathbb{R}$ be finite and let $\mathcal{X}$ be arbitrary; denote $\mathcal{X}^* = \bigcup_{n \in \mathbb{N}} \mathcal{X}^n$.
    Let $\varepsilon,\delta>0$ and $n, m, \tau \in \mathbb{N}$ satisfy \[n \ge m \ge \tau=\frac{1}{\varepsilon} \log(1/\delta) \exp(O(\log^* |\mathcal{Y}|)). \label{eq:thm:main:t}\] Let 
    \[
    k = \frac{{n \choose \tau}}{{m \choose \tau}} \left(1 + \log{ m \choose \tau} \right) +1 \label{eq:thm:main:k}
    \]
    Then, for all $f : \mathcal{X}^* \to \mathcal{Y}$, there exists an algorithm $\cM^f : \mathcal{X}^n \to \mathcal{Y}$ with the following properties.
    \begin{itemize}
        \item \textbf{Privacy:} $\cM^f$ is $(\varepsilon,\delta)$-differentially private. 
        \item \textbf{Statistical Accuracy:} Let $\mathcal{D}$ be an arbitrary probability distribution on $\mathcal{X}$. Suppose 
        \[\Pr_{X \gets \mathcal{D}^{n-m}}\brackets{|f(X)-\nu|\le\alpha}\ge 1-\beta
        \]
        for some $\alpha, \beta, \nu \in \mathbb{R}$. Then 
        \[
        \Pr_{X \gets \mathcal{D}^n}\brackets{|\cM^f(X)-\nu|\le\alpha}\ge 1- k\beta
        \]
        \item \textbf{Oracle Efficiency:} On input $x \in \mathcal{X}^n$, mechanism $\cM^f(x)$ selects $O(k)$ subsets of $x$, each of size $n-m$, and evaluates $f$ on those subsets; other than these $k$ evaluations, $\cM^f(x)$ does not depend on either $f$ or $x$.
    \end{itemize}
\end{theorem}

Recall that in our setting, we trade off between sample and time complexity by changing the subsampling probability $p$ used in \Cref{thm:med-of-q,thm:avg-of-q}. Since our mechanisms query $f$ on subsets at depth approximately $pn$, the relevant point of comparison is when $m=(1-p)n$. The following proposition allows us to bound the query complexity of $\cM$, the mechanism in \Cref{thm:ss25}, when $m=(1-p)n$.

\begin{proposition}
    \label{prop:binom-ratio-bound}
    Fix $n,\tau\in\N$ and $p\in(0,1)$. If $\tau<(1-p)n$ and $p\geq 2\tau/n$ then 
    \[
    \frac{\binom{n}{\tau}}{\binom{(1-p)n}{\tau}}\leq e^{p\tau}.
    \]
\end{proposition}
\begin{proof}
Expanding the binomial coefficients we obtain
\[
\frac{\binom{n}{\tau}}{\binom{(1-p)n}{\tau}}
=
\prod_{i=0}^{\tau-1}
\frac{n-i}{(1-p)n-i}.
\]
Since $\frac{n-i}{(1-p)n-i}\leq \frac{n}{(1-p)n-(\tau-1)}$, we see that
\[
\frac{\binom{n}{\tau}}{\binom{(1-p)n}{\tau}}
\leq
\paren{\frac{n}{(1-p)n-(\tau-1)}}^\tau\leq \paren{\frac{1}{(1-p)-\tau/n}}^\tau.
\]
Since $p>2\tau/n$ we have $1-p-\tau/n\geq 1-p/2$ the above is at most $e^{p\tau}$.
\end{proof}

Now, since all subsets are of size $pn$, we can apply \eqref{eq:ss25_concentration} to see that if $n\geq \frac{N(\alpha)}{p} + \frac{\log1/\beta}{\alpha^2 p}$ then the statistical accuracy assumption in \Cref{thm:ss25} will hold. Scaling $\beta\gets \beta\cdot e^{-p\tau}$, we see that $\abs{\cM^f(X)-\nu}\leq \alpha$ with probability at least $1-\beta$.     

To obtain the desired tradeoff and complete the proof sketch we can simply set $p=\frac{\trade}{\tau}$.
\end{proof}

\section{Related work for applications}
\label{sec:more_related_work}

\subsection{Subsample-and-aggregate baseline}

In this section we formalize the baseline provided by the subsample-and-aggregate framework explained in the introduction. Recall that we are interested in providing accuracy guarantees for functions $f$ and distributions $\cD$ satisfying \Cref{ass:intro_concentration}, which states that there exists $N$ and $\nu$ such that for all $n\geq N(\alpha) + \frac{\log1/\beta}{\alpha^2}$ we have $\abs{\frac{f(Z)}{n} - \nu}\leq \alpha$ with probability at least $1-\beta$ over the sampling of $Z\sim\cD^n$. 

\begin{proposition}[Subsample-and-aggregate baseline]
    \label{prop:ssa_baseline}
    For all $\eps,\delta>0$ there exists an $(\eps,\delta)$-DP mechanism $\cM$ such that for all $f$ and $\cD$ that satisfy \Cref{ass:intro_concentration}, and all $\alpha,\beta>0$ the following holds: If 
    \[
    n=\Omega\paren{\frac{N(\alpha)\log(1/\beta\delta)}{\eps} + \frac{\log(1/\beta\delta)}{\alpha^2\eps}}
    \]
    then
    \[
    \Pr_{Z\sim\cD^n}\brackets{\abs{\cM^f(Z) - \nu}\geq 3\alpha}\leq \beta.
    \]
    Moreover, $\cM$ runs in time $O\paren{\frac1\eps\log\frac{1}{\beta\delta}}$
\end{proposition}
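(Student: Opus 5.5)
The plan is to instantiate classical subsample-and-aggregate. Partition $Z$ into $k = \Theta\paren{\frac1\eps\log\frac{1}{\beta\delta}}$ disjoint blocks $B_1,\dots,B_k$ of size $n/k$. Compute $v_j = f(B_j)/|B_j|$ on each block, and privately aggregate the $v_j$ with a stable-histogram / propose-test-release style mechanism at scale $\alpha$.

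\textbf{Aggregation.} Concretely, discretize $\R$ into bins of width $2\alpha$ with a random offset, or use two interleaved grids so that any interval of length $2\alpha$ lies in some bin. Count how many $v_j$ fall in each bin. Then run the stability-based histogram: add $\TLap(1/\eps, \frac1\eps\log\frac1\delta)$ noise to each nonempty bin count, and release the center of the bin with the largest noisy count, provided that count exceeds a threshold $\frac{2}{\eps}\log\frac1\delta$; otherwise release $\bot$.

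\textbf{Privacy.} Changing one element of $Z$ changes one block, hence one $v_j$. So at most two bin counts change, each by one. The standard stable-histogram argument then gives $(\eps,\delta)$-DP after constant rescaling. Empty bins are never released because of the threshold and the truncated noise, which accounts for the $\delta$. This uses \Cref{fact:laplace_mechanism} and \Cref{fact:conditional_DP}.

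\textbf{Accuracy.} Each block has size $n/k \ge N(\alpha) + \frac{\log(k/\beta)}{\alpha^2}$. This holds under the stated bound on $n$, since $k = \Theta\paren{\frac1\eps\log\frac1{\beta\delta}}$ and $\log(k/\beta) = O(\log\frac1{\beta\delta})$ up to constants; I expect the slack in the second term to absorb this. By \Cref{ass:intro_concentration} and a union bound over the $k$ blocks, with probability $1-\beta/2$ every $v_j$ lies in $[\nu-\alpha,\nu+\alpha]$. All $k$ values then fall in at most two adjacent bins, and one grid puts them all in a single bin. That bin has count $k$, which for our choice of $k$ exceeds the threshold plus the maximum noise. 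Any other bin has noisy count at most $\frac1\eps\log\frac1\delta$, below the threshold. So the released center is within $\alpha$ (bin half-width) of a point of the bin, hence within $3\alpha$ of $\nu$.

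\textbf{Runtime.} There are $k$ queries, and only the $O(k)$ nonempty bins are touched, so the runtime is $O\paren{\frac1\eps\log\frac1{\beta\delta}}$.

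The main obstacle is bookkeeping: the union bound forces $\log(k/\beta)$ inside the per-block concentration requirement, so the constants in $k$ and in the threshold must be chosen consistently. A simpler way to avoid the two-grid issue is to run the mechanism on both grids with $\eps/2,\delta/2$ each and output any non-$\bot$ answer. This is still private by composition, and at least one grid succeeds.
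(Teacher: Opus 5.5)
Your mechanism is fine: it is the same stable-histogram subsample-and-aggregate the paper uses, and your privacy and runtime arguments go through. The accuracy analysis, however, does not give the stated sample complexity.

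The failing step is your claim that $n/k \ge N(\alpha) + \frac{\log(k/\beta)}{\alpha^2}$ ``holds under the stated bound on $n$.'' With $k = \Theta\paren{\frac1\eps\log\frac{1}{\beta\delta}}$ blocks, the stated bound only guarantees $n/k = \Theta\paren{N(\alpha) + \frac{1}{\alpha^2}}$ samples per block. The reason is that the second term $\frac{\log(1/\beta\delta)}{\alpha^2\eps}$ equals $k/\alpha^2$ up to constants, so there is no slack to absorb $\log(k/\beta)$. Noting that $\log(k/\beta) = O(\log\frac{1}{\beta\delta})$ does not help, because that factor is needed per block. Your union bound over all $k$ blocks therefore only yields
\[
n = \Omega\paren{\frac{N(\alpha)\log(1/\beta\delta)}{\eps} + \frac{\log(1/\beta\delta)\log(k/\beta)}{\alpha^2\eps}}.
\]
For $\beta=\delta$ this is at least a $\log\frac1\delta$ factor worse than claimed.

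The missing idea is to let each block fail with constant probability and rely on the robustness of the aggregator instead of a union bound.
\begin{itemize}
\item With $n/k \ge N(\alpha) + \frac{\log 5}{\alpha^2}$, \Cref{ass:intro_concentration} gives $\Pr\brackets{\abs{v_j-\nu}>\alpha}\le 1/5$, independently across blocks.
\item Since $k \gtrsim \log\frac1\beta$, a Chernoff bound over the $k$ blocks shows that with probability $1-\beta/2$ at least $3k/4$ of the $v_j$ lie in $[\nu-\alpha,\nu+\alpha]$. This is where the $\log\frac1\beta$ inside $k$ is actually used; in your argument it plays no role, which is a hint that something is off.
\item The stable histogram tolerates the at most $k/4$ outliers. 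In either of your grids, some bin meeting $[\nu-\alpha,\nu+\alpha]$ has count at least $3k/8$, while every bin missing it has count at most $k/4$.
\item The gap $k/8$ exceeds twice the truncated-noise magnitude $\frac1\eps\log\frac1\delta$ once the constant in $k$ is large. Hence the released bin meets $[\nu-\alpha,\nu+\alpha]$, and its center is within $2\alpha$ of $\nu$.
\end{itemize}

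This is exactly the paper's proof. It uses $k=\frac{C}{\eps}\log\frac{1}{\beta\delta}$ blocks, each with only $N(\alpha)+\frac{\log 5}{\alpha^2}$ samples, and invokes \Cref{lem:stablehistogram}. That lemma's hypothesis only asks that each value fall outside three adjacent width-$\alpha$ bins with probability at most a constant $\beta'<1/4$.
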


\begin{remark}
    While we only include a formal result for the $\delta>0$ case, for functions with finite range of size $\kappa$ the standard exponential mechanism for median is $(\eps,0)$-DP and has the same sample complexity with $1/\delta$ replaced by $\kappa/\beta$. 
\end{remark}

\begin{proof}
    We use the Stable Histograms approach as in \cite{KarwaV18}. The exact guarantees we cite appear in Appendix C of \cite{BrownGSUZ21}. 

    \begin{lemma}[Stable Histogram guarantees]
    \label{lem:stablehistogram}
    There exists a constant $C>0$ and an $(\eps,\delta)$-differentially private mechanism $\cA_{\eps,\delta}$ that gets as input a set of bins $\set{B_b}_{b\in\Z}$ and a dataset $Z=z_1,\ldots,z_n$ drawn i.i.d.\ from distribution $P$, and satisfies the following guarantees: Suppose that there exists $b\in\Z$ and a constant $\beta'<\frac{1}{4}$, such that $\Pr[z_i\notin B_{b-1}\cup B_b \cup B_{b+1}]\leq \beta'$ for any fixed $i\in[n]$. Then for all $\eps,\beta,\delta\in (0,1)$, if 
    \[
    n\geq\frac{C}{\eps}\log\frac{1}{\beta\delta},
    \] 
    then $\cA_{\eps,\delta}(z_1,\dots,z_k)\in \{b-1,b,b+1\}$ with probability at least $1-\beta$.
    \end{lemma}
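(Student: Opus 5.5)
The statement to prove is the Stable Histogram guarantee (\Cref{lem:stablehistogram}). I would prove it with the standard stability-based histogram of \cite{KarwaV18}.

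\textbf{Mechanism.} Given $z_1,\dots,z_n$, compute the count $c_j=\abs{\set{i : z_i\in B_j}}$ for every bin that contains at least one point. Bins with $c_j=0$ are never touched, so the mechanism runs over only finitely many bins. For each nonempty bin, form the noisy count $\wt c_j=c_j+\Lap(2/\eps)$. Discard every bin with $\wt c_j< 1+\frac{2}{\eps}\log\frac{2}{\delta}$. Among the surviving bins, release the index of the one with the largest noisy count, or $\bot$ if none survive.

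\textbf{Privacy.} Fix neighboring datasets $Z,Z'$. They differ in one point, so at most two bin counts change, each by exactly $1$. The $\ell_1$ sensitivity of the vector of counts over bins nonempty in both datasets is therefore at most $2$, and the Laplace mechanism (\Cref{fact:laplace_mechanism}) gives $\eps$-DP on that part of the vector.

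The only other difference is a bin that is empty under one dataset and has count $1$ under the other. That bin passes the threshold with probability at most $\Pr[\Lap(2/\eps)\ge \frac2\eps\log\frac2\delta]\le \delta/2$. Condition on the event that no such bin passes, which has probability at least $1-\delta$ over at most two such bins. On this event the outputs are $\eps$-indistinguishable, and \Cref{fact:conditional_DP} yields $(\eps,\delta)$-DP. Releasing the argmax is post-processing of the noisy surviving counts, so it is also private.

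\textbf{Accuracy.} This is the step that needs care, because the bins are unbounded in number.
\begin{itemize}
\item Let $W=\#\set{i : z_i\notin B_{b-1}\cup B_b\cup B_{b+1}}$. Its mean is at most $\beta' n< n/4$. A Chernoff bound gives $W\le n/3$ with probability $1-e^{-\Omega(n)}$, which is at most $\beta/3$ once $n\ge \frac{C}{\eps}\log\frac1{\beta\delta}$.
\item On that event, at least $2n/3$ of the points lie in three bins, so some $j^*\in\set{b-1,b,b+1}$ has $c_{j^*}\ge 2n/9$.
\item Every bin outside $\set{b-1,b,b+1}$ has count at most $n/3$ in total, and there are at most $n$ nonempty bins. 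A union bound over them gives Laplace noise of magnitude below $\frac2\eps\log\frac{3n}{\beta}$ everywhere, with probability at least $1-\beta/3$.
\end{itemize}

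This union bound introduces a $\log n$ term, which is the main obstacle. I would avoid it by thresholding instead: a bin far from $b$ could beat $j^*$ only by having noisy count at least $2n/9$ minus noise, which forces its own noise to exceed roughly $n/9$ whenever its true count is small. That handles small bins, but a far bin could have true count near $n/3$, which exceeds $2n/9$. To rule this out I would tighten the concentration: use $W\le n/4+O(\sqrt{n\log(1/\beta)})$, so that $W<n/4+$ slack, and note that the heaviest of three bins holding at least $3n/4$ points has count at least $n/4$.

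This gap between the $n/4$ lower bound on $c_{j^*}$ and the mass outside still looks too thin. If needed, I would instead rely on the stated constant: ask for $\beta'$ small enough, or follow the argument of \cite{BrownGSUZ21}, Appendix C, which compares each far bin only against $j^*$ and charges the noise difference $\Lap$ tail $e^{-\Omega(\eps n)}$ times the number of competing heavy bins (at most a constant). Combining the three failure events with $n\ge \frac C\eps\log\frac1{\beta\delta}$ ensures that $j^*$ passes the threshold and wins, so the output lies in $\set{b-1,b,b+1}$ with probability $1-\beta$.
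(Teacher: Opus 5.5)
Your mechanism and privacy argument are the standard stability-based histogram. That is exactly what the paper invokes: it gives no proof of its own and only cites \cite{KarwaV18} and Appendix~C of \cite{BrownGSUZ21}. Your privacy argument is fine.

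The accuracy part has a genuine gap at the spot you flagged, and none of your proposed fixes closes it. Thresholding does not ``handle small bins.'' The threshold $1+\frac{2}{\eps}\log\frac{2}{\delta}$ sits far below $c_{j^*}=\Omega(n)$. So a far bin of count $O(1)$ beats $j^*$ as soon as its noise exceeds roughly $c_{j^*}$, which has probability $e^{-\Omega(\eps n)}$ per bin. With $\Theta(\beta' n)$ such bins the union bound costs $n\,e^{-\Omega(\eps n)}$, which is the same $\log n$ you wanted to avoid. Your final fallback, charging only the constant number of heavy competitors, just drops these light bins.

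This is not an artifact of the analysis. Let $P$ spread its outside mass $\beta'$ over infinitely many far bins of negligible individual mass. Fix $\beta,\delta$, and let $\eps\to 0$ with $n=\frac{C}{\eps}\log\frac{1}{\beta\delta}$, so that $\eps n$ stays constant. Then about $\beta' n$ far bins have count $1$, each with independent $\Lap(2/\eps)$ noise. The largest of these noises is about $\frac{2}{\eps}\ln(\beta' n)$, which exceeds $n$ by an unbounded factor once $\ln n\gg \eps n$. Meanwhile the three central noisy counts stay $O(n)$ with high probability. So the argmax lands outside $\set{b-1,b,b+1}$ with probability tending to $1$, and your mechanism does not satisfy the statement as written.

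There are two ways out. One is an extra hypothesis such as $\delta\le 1/n$, the standing assumption in the Karwa--Vadhan stable-histogram bound. Under it, $\log\frac{3n}{\beta}\le \log 3+\log\frac{1}{\beta\delta}$, and your plain union bound over the at most $n$ nonempty bins goes through. The other is a different mechanism. For example, run the histogram with a constant privacy parameter on a Poisson subsample of rate about $\eps$ and appeal to amplification by subsampling. Then the union bound is over $O(\eps n)$ bins with $O(1)$-scale noise.

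Separately, your worry that the margin against a heavy far bin is ``too thin'' comes from bounding $\beta' n$ by $n/4$. Do not shrink $\beta'$, since that changes the statement. Because $\beta'<\frac14$ is a constant, put $\eta=(1-4\beta')/8$. Hoeffding gives $W\le (\beta'+\eta)n$ except with probability $e^{-2\eta^2 n}\le \beta/3$. Then $c_{j^*}\ge (1-\beta'-\eta)n/3$, while every far bin has count at most $(\beta'+\eta)n$, a margin of $\frac{1-4\beta'}{6}n$. With $C$ of order $(1-4\beta')^{-2}$, this margin dominates the noise on all nonempty bins (under the hypothesis above) and keeps $j^*$ above the threshold.
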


    Let $C$ and $\cA_{\eps,\delta}$ be as in \Cref{lem:stablehistogram}. We define $\cM^f(Z)$ as follows: \textit{Partition $Z$ into $k=\frac{C}{\eps}\log\frac{1}{\beta\delta}$ datasets $\wt Z_1,\dots, \wt Z_k$ of equal size. Let $y_i\gets f(\wt Z_i)/n$ for each $i\in[k]$ and let $b^*\gets \cA_{\eps,\delta}(y_1,\dots, y_k)$ run with buckets $B_i$ defined by a partition of $\R$ into intervals of size $\alpha$. Release the midpoint of $B_{b^*}$} 

    Privacy of $\cM$ follows immediately from the privacy guarantees of $\cA_{\eps,\delta}$. To see why accuracy holds, observe that by our setting of $n$, each bucket $B_i$ consists of at least $N(\alpha) + \frac{\log 5}{\alpha^2}$ i.i.d.\ samples from $\cD$. Hence, by our assumption on $f$ and $\cD$, there exists a $b\in \N$ such that $\nu\in B^* = B_{b-1}\cup B_b \cup B_{b+1}$ and $\Pr\brackets{y_i\not\in B^*}\leq 1/5$ for each $i\in[k]$, and therefore, $\cA_{\eps,\delta}(y_1,\dots, y_k)$ returns a $b^*\in\set{b-1,b,b+1}$ with probability at least $1-\beta$. Since $\nu\in B^*$, the midpoint of $B_{b^*}$ is at distance at most $3\alpha$ from $\nu$, and $\abs{\cM^f(Z)-\nu}\leq 3\alpha$ with probability at least $1-\beta$. Runtime follows by inspection of $\cM$ and algorithm 3 in Appendix C of \cite{BrownGSUZ21}.
\end{proof}

\subsection{Eigenvalue estimation}

We compare with \cref{thm:eigenvalue_est}, which with probability $1-\beta$ gives a $1\pm \alpha$ approximation to the $i$-th eigenvalue of the covariance given 
\[
n=\Omega\paren{\frac{1}{\alpha^2}\paren{\frac{d + \log(1/\beta)}{p} + \frac{\log(1/\delta)}{\eps} + \frac{\log(p^{-1}\log 1/\delta)}{p}}}
\]
samples from a $d$-dimensional subgaussian distribution.
Here $p\in (0,1/4)$ is the subsampling hyperparameter
(Throughout this section, we suppress the subgaussian constant $K_{\mathcal{D}}$.)
The algorithm runs in time $\exp\paren{O\paren{\frac{p\log1/\delta}{\eps}}}\poly\paren{\frac{\log1/\delta}{p}}$.

Subsample-and-aggregate yields the same guarantee with
\[
    n \gtrsim \Omega\paren{\frac{ d\log(1/\delta\beta)}{\alpha^2\eps}}.
\]

Off-the-shelf, any spectral approximation $\hat\Sigma$ of the true covariance $\Sigma$ immediately leads to multiplicative eigenvalue approximations.
if $(1-\alpha) \Sigma \preceq \hat\Sigma \preceq (1+\alpha) \Sigma$ then for all $i\in [d]$
\[
    (1-\alpha) \lambda_i(\Sigma) \le \lambda_i(\hat\Sigma)\le (1+\alpha)\lambda_i(\Sigma).
\]
This is because the L{\"o}wner order ``$\preceq$'' is monotone under $\lambda_i$.
It is beyond the scope of this work to survey the landscape of differentially private covariance estimation.
In the setting we consider, producing such a private approximation to the entire covariance requires $n\gtrsim d^{3/2}$ samples, a polynomial-in-$d$ overhead. 
See \cite{DworkTTZ14,Narayanan24,portella2025lower} for lower bounds and some discussion of existing algorithms.

The StableCovariance estimator of \cite{BrownHS23} gives a direct way to estimate a single eigenvalue.
Once $n\gtrsim \frac{\log(1/\delta)}{\eps}$, on adjacent datasets $Z$ and $Z'$ the algorithm returns\footnote{Formally, the algorithm either fails or returns covariance estimates which satisfy this guarantee. On data from a subgaussian distribution, with high probability it does not fail.} non-private covariance estimates $\Sigma_1$ and $\Sigma_2$ such that $(1-\gamma) \Sigma_1 \preceq \Sigma_2 \preceq (1+\gamma) \Sigma_1$ for $\gamma = O(d/n)$.
Once $n$ is large enough to make $\gamma$ a small constant, this leads to 
\[
    \abs{\log\lambda_i(\Sigma_1) - \log\lambda_i(\Sigma_2)} \le O(d/n).
\]
Therefore, we can add Laplace noise to $\log(\lambda_i(\Sigma_1))$ and obtain a multiplicative $1\pm \alpha$ approximation to the underlying $i$-th eigenvalue as long as 
\[
n \gtrsim \frac{d}{\alpha^2} + \frac{d}{\alpha \eps} + \frac{d \log(1/\delta)}{\eps}.
\]

Finally, we note that under additional assumptions on the data one can directly bound the global sensitivity of an individual eigenvalue. 
For example, if all data points are clipped to enforce $\norm{Z_i}{2}\le R$ then for datasets $Z$ and $Z'$ differing in index $1$ we can relate their covariance matrices as
\[
    \opnorm{\frac 1 n \sum_i Z_i Z_i^T - \frac 1 n \sum_i Z_i' Z_i^{'T}}
        = \frac 1 n \opnorm{Z_1 Z_1^T - Z'_1 Z_1^{'T}} \le \frac{2R^2}{n}
\]
(using the triangle inequality and the fact that $\opnorm{Z_i Z_i} = \norm{Z_i}{2}^2 \le R^2$).
Thus, Weyl's inequality (\cref{lem:weyl}) gives 
\[
    \abs{\lambda_i(\Sigma_Z) - \lambda_i(\Sigma_{Z'})} \le \opnorm{\Sigma_Z - \Sigma_{Z'}} \le \frac{2R^2}{n}.
\]
Thus, we can add Laplace noise with scale $2R^2/\eps n$ for a pure-DP additive approximation to the eigenvalue.
This is an important step in multiple algorithms for private covariance estimation: see \cite{mangoubi2022private,dong2022differentially} and references therein.

\subsection{Testing problems}

\cref{thm:loss_testing,thm:param_testing} concern deciding between two or three alternate hypotheses.
In this setting, subsample-and-aggregate takes a simple form as one can aggregate with a DP histogram, where each ``bin'' is a hypothesis.
Adding Laplace noise to each bin with scale $1/\eps$ yields pure DP and, for a constant number of hypotheses, every noisy count will be within $O(\log (1/\beta)/\eps)$ with probability at least $1-\beta$.
One can use any variation (such as Stable Histogram as above) with subsample-and-aggregate, but all approaches yield at least a $1/\eps$ factor blowup in sample complexity beyond the non-private cost.

There is substantial work on hypothesis testing and selection under differential privacy \cite[see, e.g.,][]{CanonneKMSU19,BunKSW19,kazan2023test,asi2024universally}. 
There are several approaches among these which bypass the $1/\eps$ blowup for specific testing problems, but (to the best of our knowledge) these all rely on detailed distributional information that is not available in the settings we consider.

Several approaches consider hypothesis testing specifically for linear regression \cite{Sheffet17,FerrandoWS22,alabi2023differentially,pena2022differentially}.
These are analyzed in various settings but, to the best of our knowledge, when applied to the problems we consider all would incur at least a $\sqrt{d}$ or $1/\eps$ factor increase above the non-private sample complexity.

\subsection{Loss estimation}

\cref{thm:loss_estimation} gives our result for estimating the minimum achievable loss on a learning task under \cref{ass:loss_concentration}, a loss-concentration assumption.
Under this assumption, \cref{prop:ssa_baseline} gives us the sample complexity for subsample-and-aggregate.

The other off-the-shelf approach, for any loss estimation task, is to privately learn a near-optimal parameter vector; this is a central topic in differentially private learning and for many learning tasks there are efficient algorithms that avoid the $\frac{\log 1/\delta}{\eps}$ penalty of subsample-and-aggregate.
For any fixed parameter vector, one can expect to estimate the loss using very few additional samples.

One alternate approach comes from the observation that, to estimate the loss of a near-optimal model, we only need its predictions on fresh data.
The paradigm of \emph{private prediction} \cite{dwork2018privacy} enables exactly that, giving predictions which are close to those of the best-possible predictor.
Work in this area shows how to accomplish this with significantly fewer samples than required for learning.
For applications in the style we consider, one would need results for agnostic PAC learning \cite{nandi2020privately,dagan2020pac}, as in the realizable setting the optimal loss is zero by assumption. 

Beyond this, we are aware of few techniques that permit estimating the minimum loss of a learning task with a sample complexity that beats subsample-and-aggregate, under any assumptions on the learning task.
\cite{edmonds2022learning} considers the problem of \emph{refutation} under local differential privacy. In the context of binary classification, refutation consists of determining if the loss is $1/2$ or $1/2-\Omega(1)$.
\cite{varshney2022nearly} provides a private gradient descent method for well-specified linear regression and the algorithm tracks $\sigma^2$, the variance of the label noise, implicitly.

\subsection{Single-parameter estimation}

We compare with the single-parameter estimation result of \cref{thm:theta1_estimation}, which relies on \Cref{ass:id}.
To provide comparable guarantees, subsample-and-aggregate requires only the first part of that assumption, which is itself a reformulation of \Cref{ass:intro_concentration}.
\cref{prop:ssa_baseline} tells us that, under that assumption, subsample-and-aggregate can return an estimate $\tilde\theta_1$ such that,
if 
\[
    n = \Omega\left(\frac{N(\alpha) \log(1/\beta\delta)}{\eps} + \frac{\log(1/\beta \delta)}{\alpha^2 \eps}\right)
\]
then with probability at least $1-\beta$ we have $|\tilde\theta_1 - \theta(\cD)_1|\le 3\alpha$.

We are not aware of any algorithms that rely on (any subset of) \Cref{ass:id} for this task.
The closest work, from which our work draws an example, is that of \cite{AsiDT25}.
They provide a sophisticated algorithm for privately estimating a single parameter within a large parametric model.
It is not immediately clear the exact sample complexity their algorithm would obtain in our setting (their work focuses on achieving the optimal error rate as $n\to \infty$ under different, but related, assumptions), but it appears unable to provide accurate estimates at a sample size smaller than that of subsample-and-aggregate.

Our primary motivation is estimating a single parameter in a least squares model. 
As with covariance estimation, any private estimator for the full least-squares model can be used for single-parameter estimation.
And, as with covariance estimation, the landscape of DP least squares is too broad to survey here.
We mention a few key themes with pointers to recent work. 
Many techniques perturb the sufficient statistics $X^TX, X^Ty$; accurate estimation here requires as many samples as private covariance estimation, incurring a $d^{3/2}$ in the sample complexity \cite{Wang18,Sheffet19,tang2024improved}.
Work building on robust statistics requires either exponential time \cite{liu2022differential} or $d^2$ samples \cite{AndersonBMT25}, and this is believed to be inherent.
A line of work analyzes gradient descent under distributional assumptions \cite{varshney2022nearly,liu2023label,bombari2026high}, and some of these arguments characterize output noise in a way that may be amenable to direct calculation about per-coordinate error \cite{brown2024private}. 
However, all are first-order methods and have an dependence on the condition number of the covariance that \cref{thm:theta1_estimation} does not.
The approach of \cite{BrownHHKLOPS24} admits clean reasoning about single parameters (see Claim 3), but requires $n\ge \frac{d}{\eps^2} \log^2(1/\delta)$ samples to produce a non-trivial estimate, a larger requirement than subsample-and-aggregate.

\section{Analysis of random-design linear regression}
\label{sec:regression_facts}

In this section, we provide the formal claims necessary to show that \Cref{ass:id} applies to the task of well-specified random-design linear regression with Gaussian covariates and Gaussian noise.

\begin{definition}[Random-design linear regression task]\label{def:random_design}
Fix dimension $d\in\mathbb{N}$. 
Examples are $z= (x,y)\in \cZ = \R^d \times \R$.
We define $\Theta = \R^d$ and let $\ell_\theta(z) = (\langle x,\theta\rangle - y)^2$ be the standard squared loss.
The distribution $\cD$ over $\cZ$ is defined by the symmetric positive definite matrix $\Sigma\in\R^{d\times d}$, noise variance $s^2>0$, and true parameter $\theta\in\R^d$.
To sample $(x,y)\sim \cD$ we draw $x_i\sim \cN(0,\Sigma)$, $\eta_i \sim \cN(0,s^2)$, and set $y_i = \langle x_i,\theta^*\rangle + \eta_i$.
\end{definition}

\begin{proposition}\label{prop:regression_works}
    Let $(\Theta,\ell,\cD)$ be a learning task as above, with $\cD$ defined by $\Sigma\succ 0, s^2>0$, and $\theta\in\R^d$.
    Then $(\Theta, \ell, \cD)$ satisfies \Cref{ass:id} with 
    \begin{itemize}
        \item $\Nid = \Mid = 4d$ (independent of $\alpha$).
        \item $\Kid = 144 \max\{1, s^2\}$
        \item $\lambda = 2 s^2$
        \item $\mu = \frac{1}{(\Sigma^{-1})_{11}}$
        \item $\sigma = 2s$.
    \end{itemize}
\end{proposition}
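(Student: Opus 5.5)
The plan is to verify the four parts of \Cref{ass:id} one at a time. For Gaussian design with Gaussian noise, every quantity involved has an exact expression in terms of chi-squared random variables, so each part reduces to a standard chi-squared tail bound. Throughout, write $X\in\R^{n\times d}$ for the design matrix, $\eta$ for the noise vector, and $\wh\theta=\theta(Z)=(X^TX)^{-1}X^Ty$ for the empirical minimizer; $X^TX$ is invertible almost surely once $n\geq d$. Let $\mathrm{RSS}(Z)=\norm{y-X\wh\theta}{2}^2$, so that $L(Z)=\mathrm{RSS}(Z)/n$. Here we use the convention $\ell_\theta(\bot)=0$, under which $L(Z_{-i})$ is the residual sum of squares of $Z_{-i}$, still divided by $n$.

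The central object is $R_1$, the residual sum of squares from regressing the first column $X_{\cdot 1}$ on the remaining columns $X_{\cdot,-1}$. By the Schur complement, $((X^TX)^{-1})_{11}=1/R_1$. Since the covariates are Gaussian, conditional on $x_{-1}$ the first coordinate $x_1$ is Gaussian with variance $1/(\Sigma^{-1})_{11}=\mu$. Hence $R_1\sim \mu\,\chi^2_{n-d+1}$. The following tools will be used repeatedly:
\begin{itemize}
\item the bounds $\chi^2_k\in[k-2\sqrt{kx},\,k+2\sqrt{kx}+2x]$, each failing with probability at most $e^{-x}$ (Laurent--Massart);
\item the inequality $n-d+1\geq 3n/4$, which holds when $n\geq 4d$.
\end{itemize}

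\textbf{Part (a), parameter concentration.} Conditional on $X$, we have $\wh\theta_1-\theta_1\sim\cN(0,s^2/R_1)$. On the event $R_1\geq \mu n/2$, which has probability at least $1-\beta/2$ once $n\geq 4d+O(\log(1/\beta))$, a Gaussian tail bound gives $|\wh\theta_1-\theta_1|\leq s\sqrt{4\log(4/\beta)/(\mu n)}$. This is at most $\alpha\mu^{-1/2}$ once $n\gtrsim s^2\log(1/\beta)/\alpha^2$, which fits $\Nid=4d$ and $\Kid=144\max\{1,s^2\}$.

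\textbf{Part (b), the profile-loss identity.} The loss is quadratic in $\theta$. Minimizing it over $\theta_{-1}$ with $\theta_1=w$ held fixed gives the exact identity
\[
L^{(w)}(Z)-L(Z)=\frac{R_1}{n}\,(w-\wh\theta_1)^2 \qquad\text{for all } w\in\R.
\]
One way to see this: write $y-Xv=(y-X\wh\theta)+X(\wh\theta-v)$, where the two terms are orthogonal. The problem then reduces to minimizing $(\wh\theta-v)^TX^TX(\wh\theta-v)$ subject to $v_1=w$, and the Schur complement of $X^TX$ gives the value $R_1(w-\wh\theta_1)^2$. It therefore suffices to show that $R_1/n\in[\mu/2,2\mu]$, which follows from the two-sided chi-squared bound with $k=n-d+1\in[3n/4,n]$ and $n\geq 4d+O(\log(1/\beta))$. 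I expect this step to be the main obstacle. The difficulty lies in deriving the identity cleanly and pinning down the conditional distribution of $R_1$; once the identity is in hand, the rest of the argument is routine.

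\textbf{Part (c), leave-one-out stability.} The standard deletion formula gives
\[
\mathrm{RSS}(Z)-\mathrm{RSS}(Z_{-i})=\frac{e_i^2}{1-h_i},
\]
where $e_i$ is the $i$-th residual and $h_i$ is the $i$-th leverage. Conditional on $X$, we have $e_i\sim\cN(0,s^2(1-h_i))$, so $e_i^2/(1-h_i)\sim s^2\chi^2_1$. Consequently $|L(Z)-L(Z_{-i})|\leq 2s^2\log(2/\beta)/n$ with probability at least $1-\beta$, which gives $\lambda=2s^2$. This argument needs $h_i<1$, which holds almost surely when $n>d$.

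\textbf{Part (d), upper tail of the loss.} Since $\mathrm{RSS}(Z)\leq\norm{\eta}{2}^2\sim s^2\chi^2_n$, the upper chi-squared bound together with AM--GM gives $\chi^2_n\leq 2n+3\log(1/\beta)$. Hence $L(Z)\leq 4s^2(1+\log(1/\beta)/n)=\sigma^2(1+\log(1/\beta)/n)$ with $\sigma=2s$.

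The final step is to collect the constants. The sample-size requirements from the chi-squared events are all absorbed by $n\geq 4d+144\max\{1,s^2\}\log(1/\beta)$ (with $\alpha^{-2}$ attached in part (a)). Splitting $\beta$ across the at most two events used in each part completes the argument.
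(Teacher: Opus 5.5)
Your proposal is correct and matches the paper's proof essentially step for step. The paper also checks the four parts separately using Laurent--Massart chi-squared tails. It uses the residualized first column (your $R_1$) for the profile-loss identity, the Sherman--Morrison deletion formula with $r_i/\sqrt{1-h_i}\mid X\sim\cN(0,s^2)$ for the leave-one-out part, and the bound $L(Z)\le L_\theta(Z)=\frac1n\|\eta\|_2^2$ for the upper tail. The only difference is cosmetic and sits in part (a): the paper cites the inverse-Wishart identity $\|v\|_2^2/v^\top W^{-1}v\sim\chi^2_{n-d+1}$, whereas you use the Schur-complement fact $((X^\top X)^{-1})_{11}=1/R_1$, which lets you reuse the same random variable in parts (a) and (b).
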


\Cref{ass:id} has four conditions; we establish them in order in \Cref{lem:part1,lem:part2,lem:part3,lem:part4}.
To begin, we state a few standard concentration inequalities.

\begin{claim}[Gaussian tail]\label{clm:gauss-tail}
If $G\sim\cN(0,1)$ then for all $\beta\in(0,1)$,
$\Pr\!\left(G^2\ge 2\log\frac{2}{\beta}\right)\le \beta$.
\end{claim}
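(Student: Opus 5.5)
The claim is the standard Gaussian tail bound $\Pr(|G|\ge u)\le 2e^{-u^2/2}$ with the threshold $u=\sqrt{2\log(2/\beta)}$ substituted in. The plan is to write
\[
\Pr\!\left(G^2\ge 2\log\tfrac{2}{\beta}\right)=\Pr\!\left(|G|\ge u\right),
\]
bound this two-sided probability by $2e^{-u^2/2}$, and note that $2e^{-u^2/2}=2\cdot\frac{\beta}{2}=\beta$.

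It therefore suffices to establish $\Pr(G\ge u)\le e^{-u^2/2}$ for $u\ge 0$, since symmetry of $\cN(0,1)$ doubles this to the two-sided bound. I would prove the one-sided bound by the Chernoff method. For any $\lambda>0$, Markov's inequality gives
\[
\Pr(G\ge u)\le e^{-\lambda u}\,\Ex\!\left[e^{\lambda G}\right]=e^{\lambda^2/2-\lambda u},
\]
using the Gaussian moment generating function $\Ex[e^{\lambda G}]=e^{\lambda^2/2}$. Taking $\lambda=u$ yields $e^{-u^2/2}$.

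One could alternatively use the Mills-ratio bound $\Pr(G\ge u)\le \frac{\varphi(u)}{u}$, but it degrades for small $u$, which is why I prefer the Chernoff route. Since $\beta\in(0,1)$, we have $\log(2/\beta)>\log 2>0$, so $u$ is a well-defined positive real and the Chernoff argument applies directly. I see no real obstacle: the only points needing care are the two-sided union (the factor $2$) and matching the constant $2$ inside the logarithm.
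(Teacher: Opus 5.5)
Your proof is correct. The paper states this claim without proof as a standard fact, and your argument (the Chernoff/MGF bound $\Pr(G\ge u)\le e^{-u^2/2}$ with $\lambda=u$, doubled by symmetry and evaluated at $u=\sqrt{2\log(2/\beta)}$ to give $2\cdot\tfrac{\beta}{2}=\beta$) is the standard justification it implicitly relies on.
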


We use a slightly weaker (and simpler) version of the standard Laurent--Massart tail bounds for chi-squared random variables.
\begin{claim}[Laurent--Massart]\label{clm:LM}
If $U\sim\chi^2_k$ then for all $t\ge 0$, $\Pr(U \ge k + 2\sqrt{kt} + 2t)$ and $\Pr\paren{U\le k - 2\sqrt{kt}}\le e^{-t}$.
\end{claim}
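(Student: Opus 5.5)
The plan is the Cramér--Chernoff method applied to $U-k=\sum_{i=1}^k (G_i^2-1)$, where $G_1,\dots,G_k$ are i.i.d.\ $\cN(0,1)$. The only ingredient is an explicit bound on the log-moment generating function of a single centered $G^2-1$, handled separately for positive and negative arguments. For $0<\lambda<1/2$ we have $\log\Ex e^{\lambda(G^2-1)} = -\lambda-\tfrac12\log(1-2\lambda)$. For $\lambda>0$ we have $\log\Ex e^{-\lambda(G^2-1)} = \lambda-\tfrac12\log(1+2\lambda)$. By independence, both bounds multiply by $k$ for $U$.

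\textbf{Upper tail.} First show $-\lambda-\tfrac12\log(1-2\lambda)\le \frac{\lambda^2}{1-2\lambda}$ for $\lambda\in(0,1/2)$. Expand the power series: $-\tfrac12\log(1-2\lambda)-\lambda=\sum_{j\ge2}\frac{(2\lambda)^j}{2j}$. Each term is at most $\frac{(2\lambda)^j}{4}=\lambda^2(2\lambda)^{j-2}$, and summing the geometric series gives the claim. Hence $U-k$ is sub-gamma on the right with variance factor $v=2k$ and scale $c=2$, meaning $\log\Ex e^{\lambda(U-k)}\le \frac{v\lambda^2}{2(1-c\lambda)}$. Chernoff then gives $\Pr(U-k\ge x)\le\exp\paren{-\sup_{\lambda\in(0,1/c)}\paren{\lambda x-\frac{v\lambda^2}{2(1-c\lambda)}}}$. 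It remains to check that the supremum is at least $t$ when $x=\sqrt{2vt}+ct=2\sqrt{kt}+2t$. This is the standard Legendre-transform computation for sub-gamma variables (Boucheron--Lugosi--Massart). To verify it directly, take $\lambda=\frac{1}{c}\paren{1-\frac{\sqrt{v}}{\sqrt{v+2ct}}}$ and simplify; the value comes out to exactly $t$.

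\textbf{Lower tail.} Show $h(\lambda):=\lambda^2-\lambda+\tfrac12\log(1+2\lambda)\ge0$ for all $\lambda\ge0$. This holds because $h(0)=0$ and $h'(\lambda)=2\lambda-1+\frac{1}{1+2\lambda}=\frac{4\lambda^2}{1+2\lambda}\ge0$. Therefore $\log\Ex e^{-\lambda(U-k)}\le k\lambda^2$, and Chernoff gives $\Pr(U-k\le -x)\le \exp(-\sup_\lambda(\lambda x-k\lambda^2))=\exp(-x^2/(4k))$. Setting $x=2\sqrt{kt}$ yields $e^{-t}$.

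\textbf{Main obstacle.} The only delicate step is the inversion of the sub-gamma Chernoff exponent in the upper tail, since the optimizing $\lambda$ is not a clean expression. I would handle it with the explicit choice of $\lambda$ above, or equivalently by citing the standard lemma that a sub-gamma$(v,c)$ variable satisfies $\Pr(X\ge\sqrt{2vt}+ct)\le e^{-t}$. Everything else is routine series and derivative checks.
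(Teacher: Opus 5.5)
Your approach is correct and is the standard Cram\'er--Chernoff proof from Laurent and Massart's original lemma, specialized to unit weights. The paper does not prove this claim at all; it only cites it. Your two log-MGF bounds, $-\lambda-\frac12\log(1-2\lambda)\le\frac{\lambda^2}{1-2\lambda}$ and $\lambda-\frac12\log(1+2\lambda)\le\lambda^2$, are right, and so is the entire lower-tail computation.

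There is, however, an error in the one step you flagged as delicate. Your choice $\lambda=\frac1c\bigl(1-\frac{\sqrt v}{\sqrt{v+2ct}}\bigr)$ does not make the Chernoff exponent equal to $t$. Take $k=t=1$, so $v=c=2$ and $x=4$. Then $\lambda=\frac12(1-1/\sqrt3)$, and $\lambda x-\frac{k\lambda^2}{1-2\lambda}=\frac52-\sqrt3\approx0.77<1$. This is not a harmless suboptimality. The objective $\lambda x-\frac{v\lambda^2}{2(1-c\lambda)}$ is strictly concave on $(0,1/c)$, and its supremum at $x=\sqrt{2vt}+ct$ is exactly $t$. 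So any $\lambda$ other than the maximizer gives an exponent strictly below $t$, and the claimed bound would not follow.

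The slip is that you wrote $t$ where $x$ belongs. The maximizer is $\lambda^*=\frac1c\bigl(1-\sqrt{v/(v+2cx)}\bigr)$. Since $v+2cx=(\sqrt v+c\sqrt{2t})^2$, this simplifies to $\lambda^*=\frac{\sqrt{2t}}{\sqrt v+c\sqrt{2t}}$, which is $\lambda^*=\frac{\sqrt t}{\sqrt k+2\sqrt t}$ for $v=2k$, $c=2$. With this choice the check is direct:
\begin{itemize}
\item $1-2\lambda^*=\frac{\sqrt k}{\sqrt k+2\sqrt t}$, so $\frac{k\lambda^{*2}}{1-2\lambda^*}=\frac{t\sqrt k}{\sqrt k+2\sqrt t}$.
\item $\lambda^*(2\sqrt{kt}+2t)=\frac{2t\sqrt k+2t\sqrt t}{\sqrt k+2\sqrt t}$.
\item The difference is exactly $t$.
\end{itemize}
Your fallback of citing the sub-gamma tail bound $\Pr(X\ge\sqrt{2vt}+ct)\le e^{-t}$ from Boucheron--Lugosi--Massart is valid as stated. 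With either fix the proof is complete.
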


\begin{claim}[Inverse-Wishart scalar identity]\label{clm:invwishart-scalar}
Let $G\in\R^{n\times d}$ have i.i.d.\ $\cN(0,1)$ entries and let $W=G^\top G$ (so $W\sim \mathrm{Wishart}(I_d,n)$).
For any fixed nonzero $v\in\R^d$,
\[
\frac{\|v\|_2^2}{v^\top W^{-1}v}\sim \chi^2_{n-d+1}.
\]
\end{claim}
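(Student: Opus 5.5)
The statement is \Cref{clm:invwishart-scalar}, and I would prove it by reducing to a Schur-complement computation. First, use the rotational invariance of the Gaussian matrix. Let $O$ be any orthogonal matrix with $Ov = \|v\|_2 e_1$. Then $GO^\top$ has the same law as $G$, so $OWO^\top$ has the same law as $W$. Moreover,
\[
v^\top W^{-1} v = (Ov)^\top (OWO^\top)^{-1}(Ov) = \|v\|_2^2\, e_1^\top (OWO^\top)^{-1} e_1 .
\]
Hence $\|v\|_2^2/(v^\top W^{-1}v)$ has the same law as $1/(W^{-1})_{11}$, and it suffices to treat $v=e_1$. We may also assume $n\ge d$, so that $W$ is almost surely invertible; this is implicit in the statement.

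Next, I would compute $(W^{-1})_{11}$ via block inversion. Write $G=[g\mid H]$, where $g\in\R^n$ is the first column and $H\in\R^{n\times(d-1)}$ holds the remaining columns. Then
\[
W=\begin{pmatrix} g^\top g & g^\top H\\ H^\top g & H^\top H\end{pmatrix},
\]
and the Schur complement formula gives
\[
\frac{1}{(W^{-1})_{11}} = g^\top g - g^\top H(H^\top H)^{-1}H^\top g = g^\top (I-P_H)\, g ,
\]
where $P_H$ is the orthogonal projection onto the column span of $H$. Almost surely $H$ has full column rank $d-1$, so $I-P_H$ is a projection of rank $n-d+1$.

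Finally, condition on $H$. Since $g\sim\cN(0,I_n)$ is independent of $H$, for a fixed rank-$r$ orthogonal projection $P$ we have $g^\top P g\sim\chi^2_r$: diagonalize $P=U\,\mathrm{diag}(I_r,0)\,U^\top$ and use that $U^\top g$ is again standard Gaussian. Thus, conditionally on $H$, the quantity $g^\top(I-P_H)g$ is $\chi^2_{n-d+1}$. This conditional law does not depend on $H$, so it is also the unconditional law, which proves the claim.

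The only step needing care is the Schur-complement identity together with the almost-sure full-rank facts, namely that $H^\top H$ and $W$ are invertible. These hold because Gaussian matrices with $n\ge d$ have full column rank with probability one. The rest of the argument is routine.
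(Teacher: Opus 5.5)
Your proof is correct. You share the paper's first step, rotating $v$ to a coordinate vector, but then take a different route: the paper simply invokes the Bartlett decomposition of $W$, whereas you compute $1/(W^{-1})_{11}$ by block inversion and use the elementary fact that $g^\top P g\sim\chi^2_r$ for a rank-$r$ projection $P$ independent of $g\sim\cN(0,I_n)$.

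The two routes rest on the same fact. In the Bartlett factorization $W=LL^\top$ with $L$ lower triangular, $(W^{-1})_{dd}=1/L_{dd}^2$. Moreover, $L_{dd}^2$ is exactly the squared Gram--Schmidt residual of the last column of $G$ against the other columns, which is your $g^\top(I-P_H)g$ after relabeling coordinates. So on the paper's route, with the usual ordering, one rotates $v$ to $e_d$ rather than $e_1$.

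The Bartlett route is a one-liner, but it cites a heavier theorem describing the joint law of the whole Cholesky factor, which is far more than needed. Your argument is self-contained and proves only the single entry required.

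Your Schur-complement identity also has a side benefit: it is the same projection computation the paper carries out in \Cref{lem:part2}. Applied directly to $X$, it gives $1/((X^\top X)^{-1})_{11}=x^{(1)\top}Mx^{(1)}$, with $M$ as in \Cref{lem:part2}. Hence \Cref{lem:part1} and \Cref{lem:part2} control the very same random variable, namely $\mu$ times a $\chi^2_{n-d+1}$, and the inverse-Wishart claim could be bypassed altogether.
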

This identity is standard; it can be proved by rotating $v$ to $e_1$ and using the Bartlett decomposition for Wishart distributions.

\begin{lemma}[Concentration of $\hat\theta_1$]\label{lem:part1}
Let $\hat{\theta}$ be the OLS estimator.
In the setting of \Cref{def:random_design},
for all $\alpha,\beta\in(0,1)$, if
\[
n\;\ge\; d + \frac{24 \max\set{1,s^2} }{\alpha^2}\log(4/\beta)
\]
we have
\[
\Pr\!\left(\bigl|\hat\theta_1-\theta_1\bigr|\ge \alpha\,\mu^{-1/2}\right)\le \beta,
\qquad
\text{where }\ \ \mu=\frac{1}{(\Sigma^{-1})_{11}}.
\]
\end{lemma}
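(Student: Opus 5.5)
We prove \Cref{lem:part1} by writing the OLS error in closed form and conditioning on the design. Let $X\in\R^{n\times d}$ be the design matrix and $\eta\in\R^n$ the noise vector. Then $\hat\theta-\theta=(X^\top X)^{-1}X^\top\eta$. Conditioned on $X$, this is Gaussian with covariance $s^2(X^\top X)^{-1}$. In particular,
\[
\hat\theta_1-\theta_1 \mid X \sim \cN\paren{0,\, s^2\, e_1^\top (X^\top X)^{-1} e_1}.
\]
By \Cref{clm:gauss-tail}, with probability at least $1-\beta/2$ over $\eta$,
\[
(\hat\theta_1-\theta_1)^2\le 2s^2\log(4/\beta)\cdot e_1^\top (X^\top X)^{-1}e_1 .
\]
It then suffices to show that $e_1^\top (X^\top X)^{-1} e_1\le \frac{2(\Sigma^{-1})_{11}}{n}$ with probability at least $1-\beta/2$ over $X$. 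Combining the two bounds gives
\[
(\hat\theta_1-\theta_1)^2\le \frac{4s^2\log(4/\beta)}{n}\,\mu^{-1},
\]
which is at most $\alpha^2\mu^{-1}$ as soon as $n\ge 4s^2\log(4/\beta)/\alpha^2$. The hypothesis on $n$ covers this with room to spare.

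For the design term, write $X=G\Sigma^{1/2}$, where $G$ has i.i.d.\ standard Gaussian entries. Then
\[
(X^\top X)^{-1}=\Sigma^{-1/2}W^{-1}\Sigma^{-1/2}, \qquad W=G^\top G .
\]
Set $v=\Sigma^{-1/2}e_1$, so that $\|v\|_2^2=(\Sigma^{-1})_{11}$ and $e_1^\top(X^\top X)^{-1}e_1=v^\top W^{-1}v$. By \Cref{clm:invwishart-scalar}, $U:=\|v\|^2/(v^\top W^{-1}v)\sim\chi^2_k$ with $k=n-d+1$. It therefore suffices to show $U\ge n/2$ with probability at least $1-\beta/2$.

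This lower-tail step is the only real obstacle, because the slack from the $d$ term must be handled carefully. The plan is to apply the lower bound of \Cref{clm:LM} with $t=\log(2/\beta)$, giving
\[
U\ge k-2\sqrt{k\log(2/\beta)}
\]
with probability at least $1-\beta/2$. The condition on $n$ only guarantees $k\ge 1+24\log(4/\beta)$, which is not enough to reach $U\ge n/2$ when $d$ is comparable to $n$. There are two ways around this. The first is to read the hypothesis as requiring $n\ge \Nid$ with $\Nid=4d$, as in \Cref{prop:regression_works}. Then $k\ge 3n/4$, and with $k\ge 24\log(4/\beta)$ we get $2\sqrt{k\log(2/\beta)}\le k/\sqrt 6$. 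Hence $U\ge (1-1/\sqrt6)\cdot\frac{3n}{4}\ge 0.44\,n$, and the constants adjust to give the factor $2$ up to a change in $24$. The second option is to keep $n\ge d+\cdots$ literally and settle for the weaker target $U\ge k/2$. This gives $(\hat\theta_1-\theta_1)^2\le 4s^2\log(4/\beta)\mu^{-1}/k$, and $k\ge 24\max\{1,s^2\}\log(4/\beta)/\alpha^2$ then yields exactly the $\alpha\mu^{-1/2}$ bound. I would take the second route, since it matches the stated hypothesis.

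The proof concludes with a union bound over the two events, each of probability at least $1-\beta/2$.
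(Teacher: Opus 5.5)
Your proof is correct and follows the paper's argument: condition on $X$, whiten via $X=G\Sigma^{1/2}$, use the inverse-Wishart identity to get $U\sim\chi^2_{n-d+1}$, apply the Laurent--Massart lower tail to get $U\ge k/2$, and finish with a Gaussian tail bound and a union bound. Keeping the final bound in terms of $k=n-d+1$ is the more careful choice: the paper's step $\frac{2(\Sigma^{-1})_{11}}{k}\le\frac{4(\Sigma^{-1})_{11}}{n}$ silently requires $n\ge 2d-2$, which the stated hypothesis does not provide, whereas your requirement $k\ge 4s^2\log(4/\beta)/\alpha^2$ follows directly from it.
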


\begin{proof}
Conditioned on $X$, the OLS estimator satisfies
\[
\hat\theta-\theta=(X^\top X)^{-1}X^\top \eta,
\qquad \eta\sim \cN(0,s^2I_n),
\]
hence
\[
(\hat\theta_1-\theta_1)\mid X \sim \cN\!\Bigl(0,\ s^2\,((X^\top X)^{-1})_{11}\Bigr).
\]
Whiten and write $X=G\Sigma^{1/2}$ with i.i.d.\ standard normal $G$, so
\[
(X^\top X)^{-1}=\Sigma^{-1/2}(G^\top G)^{-1}\Sigma^{-1/2}.
\]
For the first standard unit vector $e_1$, let $v=\Sigma^{-1/2}e_1$, so $\|v\|_2^2=e_1^\top\Sigma^{-1}e_1=(\Sigma^{-1})_{11}$ and
\[
((X^\top X)^{-1})_{11} = v^\top (G^\top G)^{-1}v.
\]
By \Cref{clm:invwishart-scalar}, with $k\coloneqq n-d+1$,
\[
\frac{\|v\|_2^2}{v^\top (G^\top G)^{-1}v}\sim \chi^2_k.
\]
Let $U\sim\chi^2_k$ denote this variable; then
\[
v^\top (G^\top G)^{-1}v = \frac{\|v\|_2^2}{U}=\frac{(\Sigma^{-1})_{11}}{U}.
\]
Using \Cref{clm:LM} (lower tail) with $t=\log(2/\beta)$,
\[
\Pr\!\left(U\le k-2\sqrt{k\log\frac{2}{\beta}}\right)\le \frac{\beta}{2}.
\]
If $k\ge 16\log(2/\beta)$, then $2\sqrt{k\log(2/\beta)}\le k/2$ and thus
\[
\Pr\!\left(U\ge \frac{k}{2}\right)\ge 1-\frac{\beta}{2}.
\]
On this event,
\[
((X^\top X)^{-1})_{11} = \frac{(\Sigma^{-1})_{11}}{U}\le \frac{2(\Sigma^{-1})_{11}}{k}\le \frac{4(\Sigma^{-1})_{11}}{n}.
\]
Therefore, on this event,
\[
\mathrm{Var}(\hat\theta_1-\theta_1\mid X)\le s^2\cdot \frac{4(\Sigma^{-1})_{11}}{n}.
\]
Conditioning on $X$ and applying a Gaussian tail bound gives, for any $t>0$, 
\[
\Pr\!\left(|\hat\theta_1-\theta_1|\ge t \mid X\right)
\le 2\exp\!\left(-\frac{t^2 n}{8s^2(\Sigma^{-1})_{11}}\right).
\]
Unconditioning and adding the $\beta/2$ failure probability of the event yields
\[
\Pr\!\left(|\hat\theta_1-\theta_1|\ge t\right)
\le 2\exp\!\left(-\frac{t^2 n}{8s^2(\Sigma^{-1})_{11}}\right) + \frac{\beta}{2}.
\]
Now $\mu=\bigl((\Sigma^{-1})_{11}\bigr)^{-1}$ and choose $t=\alpha\mu^{-1/2}$, so the exponential term becomes $2\exp\paren{-\frac{\alpha^2 n}{8 s^2}}$.
If $n\ge \frac{8 s^2 \log (4/\beta)}{\alpha^2}$ then this is at most $\beta/2$.

Our other restriction on $n$ (via $k=n-d+1$) is that $n\ge d-1 + 16 \log(2/\beta)$.
Taking $n\ge d + \frac{24 \max\set{1,s^2} \log(4/\beta)}{\alpha^2}$ ensures both of these conditions are satisfied.
\end{proof}

\begin{lemma}[Typical Smoothness and Strong Convexity]\label{lem:part2}

In the setting of \Cref{def:random_design}, if 
$n \ge 4d + 18 \log (2/\beta)$
then with probability at least $1-\beta$ we have, for all $w\in \R$, 
\[
    \frac{\mu}{2}(w-\hat\theta_1)^2 \;\le\; L^{(w)}(Z)-L(Z)\;\le\; 2\mu (w-\hat\theta_1)^2,
\]
where $\mu:=\frac{1}{(\Sigma^{-1})_{11}}$.
Recall that $L^{(w)}(Z)$ denotes the minimum achievable loss among parameters whose first coordinate is fixed to $w$.

\end{lemma}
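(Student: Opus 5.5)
The plan is to compute $L^{(w)}(Z)-L(Z)$ in closed form and then control one random scalar with the Inverse-Wishart identity (\Cref{clm:invwishart-scalar}) and the Laurent--Massart bound (\Cref{clm:LM}). We take the empirical loss normalized by $1/n$, $L_\theta(Z)=\frac1n\|X\theta-y\|_2^2$. Since the loss is quadratic in $\theta$, expanding around $\hat\theta$ gives
\[
L_\theta(Z)-L(Z)=\tfrac1n(\theta-\hat\theta)^\top X^\top X(\theta-\hat\theta),
\]
which uses the normal equations $X^\top(X\hat\theta-y)=0$. We then minimize this quadratic form over $\theta$ subject to $\theta_1=w$. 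The constraint is linear, so the minimum is $(w-\hat\theta_1)^2\big/\bigl(e_1^\top (X^\top X/n)^{-1}e_1\bigr)$. This follows from minimizing $u^\top A u$ over $u_1=c$, which gives $c^2/(A^{-1})_{11}$, i.e.\ the Schur complement. Hence, deterministically whenever $X^\top X$ is invertible,
\[
L^{(w)}(Z)-L(Z)=\frac{n\,(w-\hat\theta_1)^2}{((X^\top X)^{-1})_{11}} .
\]
The noise $\eta$ plays no role here, so it suffices to show that the scalar $\frac{1}{n\,((X^\top X)^{-1})_{11}}$ lies in $[\mu/2,2\mu]$ with probability at least $1-\beta$. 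The event does not depend on $w$, so the bound then holds for all $w$ simultaneously.

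Next we whiten as in the proof of \Cref{lem:part1}. Write $X=G\Sigma^{1/2}$ and $v=\Sigma^{-1/2}e_1$. Then $((X^\top X)^{-1})_{11}=v^\top (G^\top G)^{-1}v=(\Sigma^{-1})_{11}/U$ with $U\sim\chi^2_k$ and $k=n-d+1$, by \Cref{clm:invwishart-scalar}. So the scalar equals $\mu\,U/n$, and the claim reduces to showing $n/2\le U\le 2n$ with probability at least $1-\beta$.

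We apply \Cref{clm:LM} with $t=\log(2/\beta)$ to each tail.
\begin{itemize}
\item \textbf{Upper tail.} $U\le k+2\sqrt{kt}+2t\le 2k+3t\le 2n$ needs roughly $n\ge 2(d-1)+3t$. Here we use $2\sqrt{kt}\le k+t$ and $k\le n$. A tighter route is to require $2\sqrt{kt}+2t\le k$ using $k\le n$; either way the hypothesis $n\ge 4d+18\log(2/\beta)$ suffices.
\item \textbf{Lower tail.} We need $k-2\sqrt{kt}\ge n/2$. Since $n\ge 4d$ gives $k\ge n-d\ge 3n/4$, it suffices that $2\sqrt{kt}\le k/3$, i.e.\ $k\ge 36t$. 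This may force the constant $18$ to be checked carefully: with $k\ge \tfrac34(18t)$ one only gets $k\ge 13.5t$ from the log term alone, so I would instead bound $2\sqrt{kt}\le k/8+8t$. That gives $k-2\sqrt{kt}\ge \tfrac78k-8t\ge \tfrac{21}{32}n-8t$, and this is at least $n/2$ when $n\ge 52t$.
\end{itemize}

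The only real obstacle is this constant bookkeeping, namely matching the stated $18\log(2/\beta)$. I expect to split $n=4d+18t$ and use both $d$ and $t$ contributions to $k$, since $k=n-d+1\ge \tfrac34 n+1$ but also $k\ge 3d+18t$. If the stated constants are slightly off, the argument still goes through with a larger absolute constant. A union bound over the two tails gives total failure probability at most $\beta$, and invertibility of $X^\top X$ holds almost surely since $n\ge d$.
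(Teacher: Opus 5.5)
Your reduction is correct and is essentially the paper's argument. The paper residualizes the first column against the others, obtaining $L^{(w)}(Z)-L(Z)=\frac1n x^{(1)\top}Mx^{(1)}\,(w-\hat\theta_1)^2$. It then reads off $\frac1\mu x^{(1)\top}Mx^{(1)}\sim\chi^2_{n-d+1}$ from $Mx^{(1)}\mid X_{-1}\sim\cN(0,\mu M)$. You reach the same scalar through the normal equations and the Schur complement (block inversion gives $1/((X^\top X)^{-1})_{11}=x^{(1)\top}Mx^{(1)}$), and you take its law from the inverse-Wishart identity. Both routes then need $n/2\le U\le 2n$. One typo: your display puts $n$ in the wrong place. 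It should read $(w-\hat\theta_1)^2/\bigl(n((X^\top X)^{-1})_{11}\bigr)$, which is what you actually use afterwards.

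The gap is the step you left open, and it is exactly what the lemma's constants assert. Neither of your lower-tail conditions follows from the hypothesis. For $d=1$ you can have $k=n$ close to $4+18t$, which violates both $k\ge 36t$ and $n\ge 52t$. Falling back to a larger absolute constant would prove a weaker lemma than the one stated.

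The loss comes from combining $k\ge 3n/4$, which is tight only when $d$ is large, with $k\approx 18t$, which is possible only when $d$ is small. Keep the two contributions separate. With $D=d-1$ and $t=\log(2/\beta)$ you have $n/2=(k+D)/2$ and $k\ge 3D+18t$. AM--GM in the form $2\sqrt{kt}\le k/6+6t$ then gives
\[
k-2\sqrt{kt}-\tfrac n2\;\ge\;\tfrac{k}{3}-6t-\tfrac D2\;\ge\;\tfrac D2\;\ge\;0 .
\]
So the constant $18$ works as stated; this is the paper's check $n\ge 4(d-1)+18t$.

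Also discard your first upper-tail route. The condition $2k+3t\le 2n$ is equivalent to $3t\le 2(d-1)$, not to a lower bound on $n$, so it fails for small $d$. Your second route, $k+2\sqrt{kt}+2t\le 2k\le 2n$, is the valid one. It only needs $k\ge(1+\sqrt3)^2t$, which $k\ge 18t$ provides; this is analogous to the paper's $n\ge 16t$ check.
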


\begin{proof}
Write the design matrix as $X=(x^{(1)},X_{-1})$, where $x^{(1)}\in\R^n$ is the first column and $X_{-1}\in\R^{n\times(d-1)}$ are the remaining columns.
For a fixed $w\in\R$, the constrained least-squares problem defining $L^{(w)}(Z)$ is
\[
L^{(w)}(Z)=\min_{\theta_{-1}\in\R^{d-1}}\frac1n\|y-x^{(1)}w-X_{-1}\theta_{-1}\|_2^2.
\]
Let $P$ denote the orthogonal projector onto $\mathrm{col}(X_{-1})$ and set $M:=I-P$ (so $M$ is an orthogonal projector, hence symmetric and idempotent).
Then
\[
L^{(w)}(Z)=\frac1n\|M(y-x^{(1)}w)\|_2^2.
\]
Define $r\coloneqq My$ and $v\coloneqq Mx^{(1)}$, so $L^{(w)}(Z)=\frac1n\|r-vw\|_2^2$.
As a function of $w$, this is a one-dimensional quadratic with minimizer $w=\hat w$, where $\hat w$ is exactly the first coordinate of the unconstrained OLS solution, i.e.\ $\hat w=\hat\theta_1$.
Therefore,
\[
L^{(w)}(Z)-L(Z)
=\frac1n\Big(\|r-vw\|_2^2-\|r-v\hat\theta_1\|_2^2\Big)
=\frac{v^\top v}{n}\,(w-\hat\theta_1)^2.
\]

Thus, it remains to control the random scalar
\[
c_Z\coloneqq \frac{v^\top v}{n}=\frac{1}{n}x^{(1)\top}Mx^{(1)}.
\]

We now show that $c_Z$ concentrates around $\mu\coloneqq 1/(\Sigma^{-1})_{11}$. Because $(x_1,\dots,x_d)$ is jointly Gaussian with covariance $\Sigma$, the conditional variance of the first coordinate given the others is
\[
\mathrm{Var}(x_1 \mid x_{2:d}) = \frac{1}{(\Sigma^{-1})_{11}} = \mu.
\]
Conditioned on $X_{-1}$, the residualized column $v=Mx^{(1)}$ is Gaussian in $\mathrm{range}(M)$ with covariance $\mu M$:
\[
v\mid X_{-1}\sim \cN(0,\mu M).
\]
Since $M$ is an idempotent projector with rank $\rank(M)=n-(d-1)=k$, it follows that
\[
\frac{1}{\mu}v^\top v \ \big|\ X_{-1} \sim \chi^2_k,
\]
and hence unconditionally $(1/\mu)v^\top v\sim \chi^2_k$.
Let $U\sim\chi^2_k$ denote this variable, so $c_Z=\mu U/n$. Apply \Cref{clm:LM} with $t=\log(2/\beta)$:
\[
\Pr\!\left(U\notin\left[k-2\sqrt{k\log\frac{2}{\beta}},\ k+2\sqrt{k\log\frac{2}{\beta}}+2\log\frac{2}{\beta}\right]\right)\le \beta.
\]

Now we need to impose two conditions. First condition we require is $k - 2 \sqrt{k t} \ge n/ 2$, or equivalently, since $k = n - d + 1$, 
\begin{equation*}
n - d + 1 - 2 \sqrt{(n - d + 1) t} \ge n/2 \, .
\end{equation*}
In order for the above bound to hold, it suffices to take $n \ge 4(d-1) + 18t$.

The other constraint we need to impose is that
$k + 2 \sqrt{k t} + 2t \le 2n$, since $k \le n$, in order to ensure it suffices to take $n\ge 16 t$. Therefore, as long as $n \ge 4d + 18 \log(2/\beta)$, we will have 

\begin{equation*}
\frac{1}{2} \le \frac{U}{n} \le 2 \, 
\end{equation*}
and therefore,
\begin{equation*}
\frac{\mu}{2} \le c_Z \le 2 \mu \, 
\end{equation*}
as desired.
\end{proof}

\begin{lemma}[Leave-one-out loss]\label{lem:part3}
In the setting of \Cref{def:random_design}, assume $n\ge d+1$,
fix $i\in[n]$, and let $Z_{-i}$ be the dataset with point $i$ removed. 
Recall $L_\theta(Z_{-i})=\frac1n\sum_{j\ne i}(y_j-x_j^\top\theta)^2$.
Then with probability at least $1-\beta$ we have
\[
\Pr\!\left(|L(Z)-L(Z_{-i})|\ge \frac{\lambda\log(2/\beta)}{n}\right)\le \beta
\qquad\text{with}\qquad \lambda=2s^2.
\]
\end{lemma}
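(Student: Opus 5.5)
We will use the standard leave-one-out identity for least squares. Let $X\in\R^{n\times d}$ be the design and $\hat\theta$ the OLS solution on $Z$. Write $h_i = x_i^\top (X^\top X)^{-1}x_i$ for the leverage of point $i$ and $e_i = y_i - x_i^\top\hat\theta$ for its residual. Since $n\ge d+1$ and $X$ is Gaussian, $X^\top X$ and $X_{-i}^\top X_{-i}$ are almost surely invertible, so $h_i<1$.

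\textbf{Step 1: the exact identity.} The classical deletion formula (Sherman--Morrison) gives
\[
\min_\theta \sum_{j}(y_j-x_j^\top\theta)^2-\min_\theta\sum_{j\neq i}(y_j-x_j^\top\theta)^2=\frac{e_i^2}{1-h_i}=\tilde e_i^2(1-h_i).
\]
Here $\tilde e_i=y_i-x_i^\top\hat\theta_{-i}$ is the predicted residual computed from the fit without point $i$. Since $L(Z_{-i})$ is normalized by $1/n$, as recalled in the statement, we get
\[
0\le L(Z)-L(Z_{-i})=\frac{1}{n}\,\frac{e_i^2}{1-h_i}.
\]
So it suffices to show that $\frac{e_i^2}{1-h_i}\le 2s^2\log(2/\beta)$ with probability at least $1-\beta$.

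\textbf{Step 2: the distribution of the key quantity.} Condition on $X$. Then $e=(I-H)\eta$ with $H=X(X^\top X)^{-1}X^\top$. Hence $e_i\sim\cN\!\big(0,s^2(1-h_i)\big)$, because $(I-H)$ is an idempotent projector with diagonal entry $1-h_i$. Therefore
\[
\frac{e_i^2}{1-h_i}\;\Big|\;X\;\sim\; s^2 G^2,\qquad G\sim\cN(0,1).
\]
This is exact and independent of $X$. The same conclusion can be reached through $\tilde e_i$: its conditional variance is $s^2(1+x_i^\top(X_{-i}^\top X_{-i})^{-1}x_i)=s^2/(1-h_i)$, and multiplying by $(1-h_i)$ again gives $s^2G^2$.

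\textbf{Step 3: the tail bound.} Apply \Cref{clm:gauss-tail}:
\[
\Pr\big(s^2G^2\ge 2s^2\log(2/\beta)\big)\le\beta.
\]
Unconditioning over $X$ then yields
\[
\Pr\big(|L(Z)-L(Z_{-i})|\ge \lambda\log(2/\beta)/n\big)\le\beta
\]
with $\lambda=2s^2$. The phrase ``with probability at least $1-\beta$'' in the statement is then immediate; equivalently, the inner probability is a deterministic quantity already at most $\beta$.

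\textbf{Main obstacle.} The only delicate step is Step 1: deriving the deletion identity cleanly and keeping track of the $1/n$ normalization. We would prove it by writing the full residual sum of squares as the leave-one-out residual sum of squares plus the contribution of point $i$, using
\[
\hat\theta-\hat\theta_{-i}=\frac{(X^\top X)^{-1}x_i e_i}{1-h_i},
\]
and expanding the quadratic. Steps 2 and 3 then follow from the Gaussian computations above with no further difficulty.
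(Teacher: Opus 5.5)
Your proposal is correct and follows the paper's proof essentially step for step. Both use the Sherman--Morrison deletion identity $L(Z)-L(Z_{-i})=\frac{1}{n}\frac{e_i^2}{1-h_i}\ge 0$, the fact that $e_i/\sqrt{1-h_i}$ is $\cN(0,s^2)$ conditionally on $X$ (hence unconditionally), and the Gaussian tail bound of \Cref{clm:gauss-tail}; the only difference is that you justify the distributional step explicitly via $e=(I-H)\eta$ (and alternatively through the predicted residual), whereas the paper simply asserts it.
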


\begin{proof}
By the standard Sherman--Morrison identity for deletions in least-squares, we have
\[
    L(Z)-L(Z_{-i}) = \frac{1}{n}\cdot \frac{r_i^2}{1-h_{ii}},
\]
where $r_i=(y_i - x_i^T \hat\theta(Z))$ is the residual and $h_{i} = x_i^T (X^TX)^{-1}x_i$ the leverage score.
Note that this is always positive.
For any fixed dataset $X$, we use the following distributional fact:
\[
    \frac{r_i}{\sqrt{1-h_{i}}}~\Big\vert~X \sim \cN(0,s^2).
\]
In particular, its distribution is independent of $X$, so the same statement holds unconditionally.

Thus we can apply \Cref{clm:gauss-tail}: for $G\sim \cN(0,1)$ we have
\begin{align*}
    \Pr\left[L(Z) - L(Z_{-i}) \ge \frac{2 s^2 \log(2/\beta)}{n} \right]
        &= \Pr\brackets{\frac{s^2}{n}\cdot G^2 \ge \frac{2 s^2 \log(2/\beta)}{n}} \\
        &= \Pr\brackets{ G^2 \ge 2 \log(2/\beta)} \\
        &\le \beta.
\end{align*}
Thus we can take $\lambda = 2s^2$.
\end{proof}

\begin{lemma}[Tail bound for $L(Z)$]\label{lem:part4}
In the setting of \Cref{def:random_design}, for all $\beta\in(0,1)$ we have
\[
\Pr\!\left(L(Z)\ge \sigma^2\left(1+\frac{\log(1/\beta)}{n}\right)\right)\le \beta
\qquad\text{with}\qquad \sigma^2:=4s^2.
\]
\end{lemma}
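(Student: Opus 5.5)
The plan is to write $L(Z)$ explicitly as a scaled chi-squared variable and then apply the upper tail of \Cref{clm:LM}. In the setting of \Cref{def:random_design}, with $n\ge d$ so that $X$ has full column rank almost surely, the minimum empirical loss is
\[
L(Z)=\frac1n\|(I-H)y\|_2^2=\frac1n\|(I-H)\eta\|_2^2,
\]
where $H=X(X^\top X)^{-1}X^\top$ is the hat matrix. The second equality holds because $(I-H)X\theta=0$.

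The key steps, in order, are as follows.

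\begin{enumerate}
\item Condition on $X$. Since $I-H$ is an orthogonal projector of rank $n-d$ and $\eta\sim\cN(0,s^2I_n)$ is independent of $X$, we get $\|(I-H)\eta\|_2^2/s^2\mid X\sim\chi^2_{n-d}$. This law does not depend on $X$, so $nL(Z)/s^2\sim\chi^2_{k}$ unconditionally, with $k=n-d\le n$. For $n<d$ (or whenever $L(Z)=0$) the bound is trivial, because then $L(Z)=0$.
\item Apply the upper tail of \Cref{clm:LM} with $t=\log(1/\beta)$. With probability at least $1-\beta$,
\[
nL(Z)/s^2\le k+2\sqrt{kt}+2t\le 2k+3t\le 2n+3t.
\]
Here I use AM-GM in the form $2\sqrt{kt}\le k+t$.
\item Divide by $n$ to get $L(Z)\le s^2\bigl(2+3\log(1/\beta)/n\bigr)\le 4s^2\bigl(1+\log(1/\beta)/n\bigr)=\sigma^2\bigl(1+\log(1/\beta)/n\bigr)$, which is the claimed bound.
\end{enumerate}

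The only real care needed is in justifying the distributional identity in step 1, namely that the residual sum of squares is exactly $s^2\chi^2_{n-d}$ independently of the Gaussian design. I would argue this via the projector rank and the rotation invariance of $\eta$, exactly as in the proof of \Cref{lem:part2}. The constant $\sigma^2=4s^2$ leaves ample slack in the final comparison.
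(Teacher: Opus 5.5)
Your proof is correct, but its first step differs from the paper's. You compute the law of $L(Z)$ exactly: via the hat matrix, $nL(Z)/s^2\sim\chi^2_{n-d}$. This requires the full-column-rank bookkeeping and a separate trivial case for $n\le d$.

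The paper never identifies the law of the residual sum of squares. It only uses that the empirical minimizer does at least as well as the true parameter:
\[
L(Z)=L_{\hat\theta}(Z)\le L_{\theta}(Z)=\frac1n\sum_i\eta_i^2,
\]
which is exactly $\frac{s^2}{n}\chi^2_n$. It then applies the same Laurent--Massart upper tail with $t=\log(1/\beta)$, using $n+2\sqrt{nt}+2t\le 2n+4t$.

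The paper's comparison is a one-line stochastic domination that needs no projector or rank argument, and it is the form that generalizes. It requires only a tail bound on the loss at the population minimizer, which matches the paper's remark that the tail condition holds whenever $\ell_{\theta(\cD)}(z)$ has subexponential concentration.

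Your exact computation is sharper: it accounts for the $d$ degrees of freedom absorbed by the fit, so the mean of $L(Z)$ is $s^2(n-d)/n$ rather than merely bounded by $s^2$. However, it leans on the linear-Gaussian structure, and with $\sigma^2=4s^2$ there is so much slack that the extra precision buys nothing here.
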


\begin{proof}
    The loss of the empirical minimizer $\hat\theta$ on $Z$ is at most the loss of the true minimizer $\theta$:
    \begin{align}
        L(Z) = L_{\hat\theta}(Z) &\le L_{\theta}(Z) \nonumber \\
            &= \frac 1 n \sum_{i=1}^n \paren{x_i^T \theta - y_i}^2 \nonumber \\
            &= \frac 1 n \sum_{i=1}^n \eta_i^2,\label{eq:chi_square_loss}
    \end{align}
    where we have used the fact that $y_i = x_i^T \theta + \eta_i$.
    Since each $\eta_i$ is drawn i.i.d.\ from $\cN(0,s^2)$, \cref{eq:chi_square_loss} is distributed as $\frac{s^2}{n} U$ where $U\sim \chi_n^2$. 
    Thus
    \begin{align*}
        \Pr\left[L(Z) \ge 2s^2 + \frac{4 t s^2}{n}\right]
            &\le \Pr\left[ \frac{s^2}{n}U \ge 2s^2 + \frac{4 t s^2}{n}\right] \\
            &= \Pr\left[ U \ge 2n  + 4t\right] \\
            &\le e^{-t}.
    \end{align*}
    Taking $t=\log(1/\beta)$, we are done.
\end{proof}

\ifnum\neurips=1
\newpage
\section*{NeurIPS Paper Checklist}

\begin{enumerate}

\item {\bf Claims}
    \item[] Question: Do the main claims made in the abstract and introduction accurately reflect the paper's contributions and scope?
    \item[] Answer: \answerYes{} %
    \item[] Justification: Yes. Our main claims are statements about the theorems we prove and how they extend and improve upon prior work.  %
    \item[] Guidelines:
    \begin{itemize}
        \item The answer \answerNA{} means that the abstract and introduction do not include the claims made in the paper.
        \item The abstract and/or introduction should clearly state the claims made, including the contributions made in the paper and important assumptions and limitations. A \answerNo{} or \answerNA{} answer to this question will not be perceived well by the reviewers. 
        \item The claims made should match theoretical and experimental results, and reflect how much the results can be expected to generalize to other settings. 
        \item It is fine to include aspirational goals as motivation as long as it is clear that these goals are not attained by the paper. 
    \end{itemize}

\item {\bf Limitations}
    \item[] Question: Does the paper discuss the limitations of the work performed by the authors?
    \item[] Answer: \answerYes{} %
    \item[] Justification: Yes, our work is entirely theoretical and we take care to lay out what our analysis does and does not cover. %
    \item[] Guidelines:
    \begin{itemize}
        \item The answer \answerNA{} means that the paper has no limitation while the answer \answerNo{} means that the paper has limitations, but those are not discussed in the paper. 
        \item The authors are encouraged to create a separate ``Limitations'' section in their paper.
        \item The paper should point out any strong assumptions and how robust the results are to violations of these assumptions (e.g., independence assumptions, noiseless settings, model well-specification, asymptotic approximations only holding locally). The authors should reflect on how these assumptions might be violated in practice and what the implications would be.
        \item The authors should reflect on the scope of the claims made, e.g., if the approach was only tested on a few datasets or with a few runs. In general, empirical results often depend on implicit assumptions, which should be articulated.
        \item The authors should reflect on the factors that influence the performance of the approach. For example, a facial recognition algorithm may perform poorly when image resolution is low or images are taken in low lighting. Or a speech-to-text system might not be used reliably to provide closed captions for online lectures because it fails to handle technical jargon.
        \item The authors should discuss the computational efficiency of the proposed algorithms and how they scale with dataset size.
        \item If applicable, the authors should discuss possible limitations of their approach to address problems of privacy and fairness.
        \item While the authors might fear that complete honesty about limitations might be used by reviewers as grounds for rejection, a worse outcome might be that reviewers discover limitations that aren't acknowledged in the paper. The authors should use their best judgment and recognize that individual actions in favor of transparency play an important role in developing norms that preserve the integrity of the community. Reviewers will be specifically instructed to not penalize honesty concerning limitations.
    \end{itemize}

\item {\bf Theory assumptions and proofs}
    \item[] Question: For each theoretical result, does the paper provide the full set of assumptions and a complete (and correct) proof?
    \item[] Answer: \answerYes{} %
    \item[] Justification: All of our theoretical results state their assumptions and follow the standard theorem-lemma-proof structure.%
    \item[] Guidelines:
    \begin{itemize}
        \item The answer \answerNA{} means that the paper does not include theoretical results. 
        \item All the theorems, formulas, and proofs in the paper should be numbered and cross-referenced.
        \item All assumptions should be clearly stated or referenced in the statement of any theorems.
        \item The proofs can either appear in the main paper or the supplemental material, but if they appear in the supplemental material, the authors are encouraged to provide a short proof sketch to provide intuition. 
        \item Inversely, any informal proof provided in the core of the paper should be complemented by formal proofs provided in appendix or supplemental material.
        \item Theorems and Lemmas that the proof relies upon should be properly referenced. 
    \end{itemize}

    \item {\bf Experimental result reproducibility}
    \item[] Question: Does the paper fully disclose all the information needed to reproduce the main experimental results of the paper to the extent that it affects the main claims and/or conclusions of the paper (regardless of whether the code and data are provided or not)?
    \item[] Answer: \answerNA{} %
    \item[] Justification: We have no experiments. %
    \item[] Guidelines:
    \begin{itemize}
        \item The answer \answerNA{} means that the paper does not include experiments.
        \item If the paper includes experiments, a \answerNo{} answer to this question will not be perceived well by the reviewers: Making the paper reproducible is important, regardless of whether the code and data are provided or not.
        \item If the contribution is a dataset and\slash or model, the authors should describe the steps taken to make their results reproducible or verifiable. 
        \item Depending on the contribution, reproducibility can be accomplished in various ways. For example, if the contribution is a novel architecture, describing the architecture fully might suffice, or if the contribution is a specific model and empirical evaluation, it may be necessary to either make it possible for others to replicate the model with the same dataset, or provide access to the model. In general. releasing code and data is often one good way to accomplish this, but reproducibility can also be provided via detailed instructions for how to replicate the results, access to a hosted model (e.g., in the case of a large language model), releasing of a model checkpoint, or other means that are appropriate to the research performed.
        \item While NeurIPS does not require releasing code, the conference does require all submissions to provide some reasonable avenue for reproducibility, which may depend on the nature of the contribution. For example
        \begin{enumerate}
            \item If the contribution is primarily a new algorithm, the paper should make it clear how to reproduce that algorithm.
            \item If the contribution is primarily a new model architecture, the paper should describe the architecture clearly and fully.
            \item If the contribution is a new model (e.g., a large language model), then there should either be a way to access this model for reproducing the results or a way to reproduce the model (e.g., with an open-source dataset or instructions for how to construct the dataset).
            \item We recognize that reproducibility may be tricky in some cases, in which case authors are welcome to describe the particular way they provide for reproducibility. In the case of closed-source models, it may be that access to the model is limited in some way (e.g., to registered users), but it should be possible for other researchers to have some path to reproducing or verifying the results.
        \end{enumerate}
    \end{itemize}

\item {\bf Open access to data and code}
    \item[] Question: Does the paper provide open access to the data and code, with sufficient instructions to faithfully reproduce the main experimental results, as described in supplemental material?
    \item[] Answer: \answerNA{} %
    \item[] Justification: We have no experiments. %
    \item[] Guidelines:
    \begin{itemize}
        \item The answer \answerNA{} means that paper does not include experiments requiring code.
        \item Please see the NeurIPS code and data submission guidelines (\url{https://neurips.cc/public/guides/CodeSubmissionPolicy}) for more details.
        \item While we encourage the release of code and data, we understand that this might not be possible, so \answerNo{} is an acceptable answer. Papers cannot be rejected simply for not including code, unless this is central to the contribution (e.g., for a new open-source benchmark).
        \item The instructions should contain the exact command and environment needed to run to reproduce the results. See the NeurIPS code and data submission guidelines (\url{https://neurips.cc/public/guides/CodeSubmissionPolicy}) for more details.
        \item The authors should provide instructions on data access and preparation, including how to access the raw data, preprocessed data, intermediate data, and generated data, etc.
        \item The authors should provide scripts to reproduce all experimental results for the new proposed method and baselines. If only a subset of experiments are reproducible, they should state which ones are omitted from the script and why.
        \item At submission time, to preserve anonymity, the authors should release anonymized versions (if applicable).
        \item Providing as much information as possible in supplemental material (appended to the paper) is recommended, but including URLs to data and code is permitted.
    \end{itemize}

\item {\bf Experimental setting/details}
    \item[] Question: Does the paper specify all the training and test details (e.g., data splits, hyperparameters, how they were chosen, type of optimizer) necessary to understand the results?
    \item[] Answer: \answerNA{} %
    \item[] Justification: We have no experiments. %
    \item[] Guidelines:
    \begin{itemize}
        \item The answer \answerNA{} means that the paper does not include experiments.
        \item The experimental setting should be presented in the core of the paper to a level of detail that is necessary to appreciate the results and make sense of them.
        \item The full details can be provided either with the code, in appendix, or as supplemental material.
    \end{itemize}

\item {\bf Experiment statistical significance}
    \item[] Question: Does the paper report error bars suitably and correctly defined or other appropriate information about the statistical significance of the experiments?
    \item[] Answer: \answerNA{} %
    \item[] Justification: We have no experiments. %
    \item[] Guidelines:
    \begin{itemize}
        \item The answer \answerNA{} means that the paper does not include experiments.
        \item The authors should answer \answerYes{} if the results are accompanied by error bars, confidence intervals, or statistical significance tests, at least for the experiments that support the main claims of the paper.
        \item The factors of variability that the error bars are capturing should be clearly stated (for example, train/test split, initialization, random drawing of some parameter, or overall run with given experimental conditions).
        \item The method for calculating the error bars should be explained (closed form formula, call to a library function, bootstrap, etc.)
        \item The assumptions made should be given (e.g., Normally distributed errors).
        \item It should be clear whether the error bar is the standard deviation or the standard error of the mean.
        \item It is OK to report 1-sigma error bars, but one should state it. The authors should preferably report a 2-sigma error bar than state that they have a 96\% CI, if the hypothesis of Normality of errors is not verified.
        \item For asymmetric distributions, the authors should be careful not to show in tables or figures symmetric error bars that would yield results that are out of range (e.g., negative error rates).
        \item If error bars are reported in tables or plots, the authors should explain in the text how they were calculated and reference the corresponding figures or tables in the text.
    \end{itemize}

\item {\bf Experiments compute resources}
    \item[] Question: For each experiment, does the paper provide sufficient information on the computer resources (type of compute workers, memory, time of execution) needed to reproduce the experiments?
    \item[] Answer: \answerNA{} %
    \item[] Justification: We have no experiments. %
    \item[] Guidelines:
    \begin{itemize}
        \item The answer \answerNA{} means that the paper does not include experiments.
        \item The paper should indicate the type of compute workers CPU or GPU, internal cluster, or cloud provider, including relevant memory and storage.
        \item The paper should provide the amount of compute required for each of the individual experimental runs as well as estimate the total compute. 
        \item The paper should disclose whether the full research project required more compute than the experiments reported in the paper (e.g., preliminary or failed experiments that didn't make it into the paper). 
    \end{itemize}
    
\item {\bf Code of ethics}
    \item[] Question: Does the research conducted in the paper conform, in every respect, with the NeurIPS Code of Ethics \url{https://neurips.cc/public/EthicsGuidelines}?
    \item[] Answer: \answerYes{} %
    \item[] Justification: Yes. Our work is entirely theoretical and deals with no human subjects.  Our work is designed to advance the state-of-the-art in private learning, which we expect to have positive social effects. We see no potential social harms. %
    \item[] Guidelines:
    \begin{itemize}
        \item The answer \answerNA{} means that the authors have not reviewed the NeurIPS Code of Ethics.
        \item If the authors answer \answerNo, they should explain the special circumstances that require a deviation from the Code of Ethics.
        \item The authors should make sure to preserve anonymity (e.g., if there is a special consideration due to laws or regulations in their jurisdiction).
    \end{itemize}

\item {\bf Broader impacts}
    \item[] Question: Does the paper discuss both potential positive societal impacts and negative societal impacts of the work performed?
    \item[] Answer: \answerYes{} %
    \item[] Justification: Our work is designed to advance the state-of-the-art in private learning, which we expect to have positive social effects. We see no potential social harms. %
    \item[] Guidelines:
    \begin{itemize}
        \item The answer \answerNA{} means that there is no societal impact of the work performed.
        \item If the authors answer \answerNA{} or \answerNo, they should explain why their work has no societal impact or why the paper does not address societal impact.
        \item Examples of negative societal impacts include potential malicious or unintended uses (e.g., disinformation, generating fake profiles, surveillance), fairness considerations (e.g., deployment of technologies that could make decisions that unfairly impact specific groups), privacy considerations, and security considerations.
        \item The conference expects that many papers will be foundational research and not tied to particular applications, let alone deployments. However, if there is a direct path to any negative applications, the authors should point it out. For example, it is legitimate to point out that an improvement in the quality of generative models could be used to generate Deepfakes for disinformation. On the other hand, it is not needed to point out that a generic algorithm for optimizing neural networks could enable people to train models that generate Deepfakes faster.
        \item The authors should consider possible harms that could arise when the technology is being used as intended and functioning correctly, harms that could arise when the technology is being used as intended but gives incorrect results, and harms following from (intentional or unintentional) misuse of the technology.
        \item If there are negative societal impacts, the authors could also discuss possible mitigation strategies (e.g., gated release of models, providing defenses in addition to attacks, mechanisms for monitoring misuse, mechanisms to monitor how a system learns from feedback over time, improving the efficiency and accessibility of ML).
    \end{itemize}
    
\item {\bf Safeguards}
    \item[] Question: Does the paper describe safeguards that have been put in place for responsible release of data or models that have a high risk for misuse (e.g., pre-trained language models, image generators, or scraped datasets)?
    \item[] Answer: \answerNA{} %
    \item[] Justification: There are no such risks associated with our work. %
    \item[] Guidelines:
    \begin{itemize}
        \item The answer \answerNA{} means that the paper poses no such risks.
        \item Released models that have a high risk for misuse or dual-use should be released with necessary safeguards to allow for controlled use of the model, for example by requiring that users adhere to usage guidelines or restrictions to access the model or implementing safety filters. 
        \item Datasets that have been scraped from the Internet could pose safety risks. The authors should describe how they avoided releasing unsafe images.
        \item We recognize that providing effective safeguards is challenging, and many papers do not require this, but we encourage authors to take this into account and make a best faith effort.
    \end{itemize}

\item {\bf Licenses for existing assets}
    \item[] Question: Are the creators or original owners of assets (e.g., code, data, models), used in the paper, properly credited and are the license and terms of use explicitly mentioned and properly respected?
    \item[] Answer: \answerNA{} %
    \item[] Justification: We did not use any existing assets. %
    \item[] Guidelines:
    \begin{itemize}
        \item The answer \answerNA{} means that the paper does not use existing assets.
        \item The authors should cite the original paper that produced the code package or dataset.
        \item The authors should state which version of the asset is used and, if possible, include a URL.
        \item The name of the license (e.g., CC-BY 4.0) should be included for each asset.
        \item For scraped data from a particular source (e.g., website), the copyright and terms of service of that source should be provided.
        \item If assets are released, the license, copyright information, and terms of use in the package should be provided. For popular datasets, \url{paperswithcode.com/datasets} has curated licenses for some datasets. Their licensing guide can help determine the license of a dataset.
        \item For existing datasets that are re-packaged, both the original license and the license of the derived asset (if it has changed) should be provided.
        \item If this information is not available online, the authors are encouraged to reach out to the asset's creators.
    \end{itemize}

\item {\bf New assets}
    \item[] Question: Are new assets introduced in the paper well documented and is the documentation provided alongside the assets?
    \item[] Answer: \answerNA{} %
    \item[] Justification: We release no new assets. %
    \item[] Guidelines:
    \begin{itemize}
        \item The answer \answerNA{} means that the paper does not release new assets.
        \item Researchers should communicate the details of the dataset\slash code\slash model as part of their submissions via structured templates. This includes details about training, license, limitations, etc. 
        \item The paper should discuss whether and how consent was obtained from people whose asset is used.
        \item At submission time, remember to anonymize your assets (if applicable). You can either create an anonymized URL or include an anonymized zip file.
    \end{itemize}

\item {\bf Crowdsourcing and research with human subjects}
    \item[] Question: For crowdsourcing experiments and research with human subjects, does the paper include the full text of instructions given to participants and screenshots, if applicable, as well as details about compensation (if any)? 
    \item[] Answer: \answerNA{} %
    \item[] Justification: Our work involves no crowdsourcing or human subjects. %
    \item[] Guidelines:
    \begin{itemize}
        \item The answer \answerNA{} means that the paper does not involve crowdsourcing nor research with human subjects.
        \item Including this information in the supplemental material is fine, but if the main contribution of the paper involves human subjects, then as much detail as possible should be included in the main paper. 
        \item According to the NeurIPS Code of Ethics, workers involved in data collection, curation, or other labor should be paid at least the minimum wage in the country of the data collector. 
    \end{itemize}

\item {\bf Institutional review board (IRB) approvals or equivalent for research with human subjects}
    \item[] Question: Does the paper describe potential risks incurred by study participants, whether such risks were disclosed to the subjects, and whether Institutional Review Board (IRB) approvals (or an equivalent approval/review based on the requirements of your country or institution) were obtained?
    \item[] Answer: \answerNA{} %
    \item[] Justification: Our work involves no crowdsourcing or human subjects. %
    \item[] Guidelines:
    \begin{itemize}
        \item The answer \answerNA{} means that the paper does not involve crowdsourcing nor research with human subjects.
        \item Depending on the country in which research is conducted, IRB approval (or equivalent) may be required for any human subjects research. If you obtained IRB approval, you should clearly state this in the paper. 
        \item We recognize that the procedures for this may vary significantly between institutions and locations, and we expect authors to adhere to the NeurIPS Code of Ethics and the guidelines for their institution. 
        \item For initial submissions, do not include any information that would break anonymity (if applicable), such as the institution conducting the review.
    \end{itemize}

\item {\bf Declaration of LLM usage}
    \item[] Question: Does the paper describe the usage of LLMs if it is an important, original, or non-standard component of the core methods in this research? Note that if the LLM is used only for writing, editing, or formatting purposes and does \emph{not} impact the core methodology, scientific rigor, or originality of the research, declaration is not required.
    \item[] Answer: \answerNA{} %
    \item[] Justification: The core methods in this research did not involve LLMs in their development. %
    \item[] Guidelines:
    \begin{itemize}
        \item The answer \answerNA{} means that the core method development in this research does not involve LLMs as any important, original, or non-standard components.
        \item Please refer to our LLM policy in the NeurIPS handbook for what should or should not be described.
    \end{itemize}

\end{enumerate}
\fi

\end{document}